\documentclass{article}

\usepackage[
	persons={Nima, June, Yeganeh, Kiran},
	authors=blocks,
	bibliosources={refs.bib}
]{pomegranate}

\usepackage[utf8]{inputenc}

% force-texpad-dependency: refs.bib
\iffalse \directlua{} \fi

% -- Macros

\DeclareOperator{\diag}
\DeclareOperator{\weight}
\DeclareOperator{\newt}
\DeclareDocumentTextCommand{\SharpP}{}{\Class{\#P}}
\DeclareDocumentMathCommand{\tmix}{}{t_{\text{mix}}}
\DeclareDocumentMathCommand{\dtv}{}{d_{\text{tv}}}
\DeclareDocumentMathCommand{\inflMat}{}{\Psi^{\text{inf}}}
\DeclareDocumentMathCommand{\corMat}{}{\Psi^{\text{cor}}}
\DeclareDocumentMathCommand{\ef}{m m}{{#1\star #2}}
\DeclareDocumentMathCommand{\I}{}{\mathcal{I}}
\DeclareDocumentMathCommand{\B}{}{\mathcal{B}}

% -- Preamble

\title{Fractionally Log-Concave and Sector-Stable Polynomials: Counting Planar Matchings and More}
\author{Yeganeh Alimohammadi}
\author{Nima Anari}
\author{Kirankumar Shiragur}
\author{Thuy-Duong Vuong}

\affil{Stanford University, \texttt{\{yeganeh,anari,shiragur,tdvuong\}@stanford.edu}}

\begin{document}
	\maketitle
	\begin{abstract}
		We show fully polynomial time randomized approximation schemes (FPRAS) for counting matchings of a given size, or more generally sampling/counting monomer-dimer systems in planar, not-necessarily-bipartite, graphs. While perfect matchings on planar graphs can be counted exactly in polynomial time, counting non-perfect matchings was shown by \textcite{Jer87} to be \SharpP-hard, who also raised the question of whether efficient approximate counting is possible. We answer this affirmatively by showing that the multi-site Glauber dynamics on the set of monomers in a monomer-dimer system always mixes rapidly, and that this dynamics can be implemented efficiently on downward-closed families of graphs where counting perfect matchings is tractable. As further applications of our results, we show how to sample efficiently using multi-site Glauber dynamics from partition-constrained strongly Rayleigh distributions, and nonsymmetric determinantal point processes. 
%https://www.overleaf.com/project/5e7570644f07e60001a3df4c		
		In order to analyze mixing properties of the multi-site Glauber dynamics, we establish two notions for generating polynomials of discrete set-valued distributions: sector-stability and fractional log-concavity. These notions generalize well-studied properties like real-stability and log-concavity, but unlike them robustly degrade under useful transformations applied to the distribution. We relate these notions to pairwise correlations in the underlying distribution and the notion of spectral independence introduced by \textcite{ALO20}, providing a new tool for establishing spectral independence based on geometry of polynomials. As a byproduct of our techniques, we show that polynomials avoiding roots in a sector of the complex plane must satisfy what we call fractional log-concavity; this extends a classic result established by \textcite{Gar59} who showed homogeneous polynomials that have no roots in a half-plane must be log-concave over the positive orthant.
	\end{abstract}
	
	\section{Introduction}
\label{sec:intro}

Let $\mu:\binom{[n]}{k}\to\R_{\geq 0}$ be a density function on the family of subsets of size $k$ out of a ground set of $n$ elements, which defines a probability distribution
\[ \P{S}\propto \mu(S). \]
The goal of this work is to establish properties of $\mu$ that translate into efficient algorithms for sampling from this distribution, and by classical equivalences between approximate counting and sampling \cite{JVV86}, to algorithms for approximately computing the normalizing constant, i.e., the partition function:
\[ \sum_{S}\mu(S). \]

We study a family of local Markov chains that can be used to approximately sample from such a distribution.

\begin{definition}[Down-Up Random Walks]\label{def:local-walk}
	For a density $\mu:\binom{[n]}{k}\to\R_{\geq 0}$, and an integer $\l\leq k$, we define the $k\leftrightarrow\l$ down-up random walk as the sequence of random sets $S_0, S_1,\dots$ generated by the following algorithm:
	\begin{Algorithm*}
		\For{$t=0,1,\dots$}{
			Select $T_t$ uniformly at random from subsets of size $\l$ of $S_t$.\;
			Select $S_{t+1}$ with probability $\propto \mu(S_{t+1})$ from supersets of size $k$ of $T_t$.\;
		}
	\end{Algorithm*}
\end{definition}

This random walk is time-reversible, always has $\mu$ as its stationary distribution, and moreover has positive real eigenvalues \cite[see, e.g.,][]{ALO20}. The special case of $\l=k-1$ has received the most attention, especially in the literature on high-dimensional expanders \cite[see, e.g.,][]{LLP17,KO18,DK17,KM16,AL20, ALO20}. Each step of this random walk can be efficiently implemented as long as $k-\l=O(1)$ and we have oracle access to $\mu$. This is because  the number of supersets of $T_t$ is at most $n^{k-\l}=\poly(n)$, so we can enumerate over all in polynomial time.

Our main result establishes a formal connection between roots of the \emph{generating polynomial} of $\mu$, defined below, and rapid mixing of the $k\leftrightarrow \l$ down-up walks.

\begin{definition}[Generating Polynomial]
To a density $\mu:\binom{[n]}{k}\to\R_{\geq 0}$ we associate a multivariate \emph{generating polynomial} $g_\mu$, which encodes $\mu$ in its coefficients:
\[ g_\mu(z_1,\dots,z_n):=\sum_{S} \mu(S)\prod_{i\in S}z_i. \]
\end{definition}

\begin{figure}
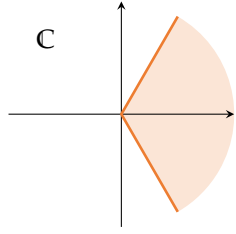

	\Tikz*{
		\fill[LightOrange] (60:1.5) arc(60:-60:1.5) -- (0, 0) -- cycle;
		\draw[-stealth] (-1.5, 0) -- (1.5, 0);
		\draw[-stealth] (0, -1.5) -- (0, 1.5);
		\draw[Orange, line width=1] (60:1.5) -- (0, 0) -- (-60:1.5);
		\node at (-1, 1) {$\C$};
	}
	\caption{A symmetric sector around the positive real axis. Sector-stability of a polynomial means that if all variables are chosen from the interior of the sector, the polynomial does not vanish.}\label{fig:sector}
\end{figure}

Note that $g_\mu$ is a polynomial with nonnegative coefficients, and as such, it has no roots $(z_1,\dots,z_n)\in \R_{>0}^n$. We consider polynomials that not only avoid roots on the positive real axis, but also avoid roots in a neighborhood, that is a sector of the complex plane centered around $\R_{>0}$.

\begin{definition}[Sector-Stability]\label{def:sector-stability}
For an open sector $\Gamma\subseteq \C$ centered around the positive real axis in the complex plane, see \cref{fig:sector}, we call a polynomial $g(z_1,\dots,z_n)$ sector-stable if
\[ z_1,\dots,z_n\in \Gamma\implies g(z_1,\dots,z_n)\neq 0. \]
\end{definition}

Our main result shows that sector-stability where the sector $\Gamma$ has constant aperture, implies rapid mixing of the $k\leftrightarrow \l$ down-up random walk for an appropriately chosen $\l=k-O(1)$.
\begin{theorem}\label{thm:relaxation-time}
	Suppose that the density $\mu:\binom{[n]}{k}\to\R_{\geq 0}$ has a generating polynomial that is sector-stable with respect to a sector $\Gamma$ of aperture $\Omega(1)$. Then for an appropriate value of $\l=k-O(1)$, the $k\leftrightarrow \l$ has relaxation time $k^{O(1)}$.
\end{theorem}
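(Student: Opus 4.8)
The plan is to route through \emph{spectral independence} in the sense of \textcite{ALO20}. In one line: sector-stability of $g_\mu$ with aperture $\Omega(1)$ forces $g_\mu$ — and the generating polynomial of every conditional distribution obtained by pinning a coordinate in or out of $S$ — to be \emph{fractionally log-concave} with a constant exponent; fractional log-concavity, read at the all-ones point, bounds $\lambda_{\max}(\corMat)$ by a constant for all pinnings, i.e.\ $\mu$ is $O(1)$-spectrally independent; and uniform spectral independence upgrades, via the local-to-global / trickle-down machinery of \textcite{ALO20}, to a $k^{O(1)}$ bound on the relaxation time of the $k\leftrightarrow k-1$ walk. A standard comparison between consecutive down-up walks then transfers the bound to the $k\leftrightarrow\ell$ walk for any $\ell=k-O(1)$, which is also the regime in which each step is efficiently implementable.

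\textbf{Step 1 (the crux): sector-stability $\Rightarrow$ fractional log-concavity.} Let $\Gamma$ have half-angle $\phi=\Omega(1)$. The principal power map $w\mapsto w^{\alpha}$ with $\alpha := 2\phi/\pi = \Theta(1)$ is a conformal bijection from the open right half-plane onto the interior of $\Gamma$, so sector-stability of $g_\mu$ is \emph{equivalent} to non-vanishing of the homogeneous function $G(w):=g_\mu(w_1^\alpha,\dots,w_n^\alpha)$ on $(\mathrm{RHP})^n$. I then want to conclude that $G$ — which is positive on $\R_{>0}^n$ since $g_\mu$ has nonnegative coefficients — is log-concave over the positive orthant, i.e.\ that $\mu$ is $\alpha$-fractionally log-concave. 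This is exactly the statement of \textcite{Gar59} (no roots in a half-plane $\implies$ log-concave over $\R_{>0}^n$), \emph{except} that when $\phi<\pi/2$ we have $\alpha<1$ and $G$ is not a polynomial. The main obstacle, and the genuinely new ingredient, is to push Garding's theorem through for this analytic homogeneous class: either adapt his argument (hyperbolicity of univariate restrictions, Newton-type inequalities) to the functions $\{g(w^\alpha)\}$, or, equivalently, extract the required log-concavity / Newton inequalities on the coefficients of $g_\mu$ directly from sector non-vanishing via contour integration that isolates coefficients. Everything downstream is assembly.

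\textbf{Steps 2 and 3: from fractional log-concavity to mixing.} Fractional log-concavity with a fixed exponent is hereditary under pinning: restricting $G$ to the face $w_i=0$ realizes conditioning on $i\notin S$, while the limit $w_i\to\infty$ picks out the coefficient of $w_i^\alpha$ and realizes conditioning on $i\in S$, and a pointwise limit of concave functions is concave; hence every pinned distribution is again $\alpha$-fractionally log-concave. Evaluating $\nabla^2\log G\preceq 0$ at $w=\mathbf 1$ is then a short computation (multilinearity of $g_\mu$ kills the pure second derivatives, and the $w_i^{\alpha-1}$ factor contributes the diagonal term $-\alpha\,\mathrm{diag}(\Pr[i\in S])$): it yields $\Sigma\preceq \alpha^{-1}\mathrm{diag}(\Pr[i\in S])$ where $\Sigma$ is the covariance matrix of the indicator vector $(\mathbf 1[i\in S])_i$, which rearranges to $\lambda_{\max}(\corMat)\le \alpha^{-1}=O(1)$. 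Applying this to every pinning gives that $\mu$ is $\eta$-spectrally independent with $\eta=O(1/\alpha)=O(1)$. Finally, by \textcite{ALO20}, $\eta$-spectral independence at all pinning levels implies the $k\leftrightarrow k-1$ down-up walk has relaxation time at most $\prod_{j=0}^{k-2}\bigl(1+\tfrac{\eta}{k-j-1}\bigr)=k^{O(\eta)}=k^{O(1)}$; passing to the $k\leftrightarrow\ell$ walk for $\ell=k-O(1)$ changes this by only a constant factor, completing the proof of \cref{thm:relaxation-time}.
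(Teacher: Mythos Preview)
Your high-level architecture—sector-stability $\Rightarrow$ $O(1)$ bound on $\lambda_{\max}(\corMat)$ for all pinnings $\Rightarrow$ spectral independence $\Rightarrow$ local-to-global mixing—is exactly the paper's. The gap is in Step~1, and there is also a smaller but real issue in Step~3.

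\textbf{Step 1 is the gap.} You propose to reach fractional log-concavity by applying G{\aa}rding's theorem to $G(w)=g_\mu(w_1^\alpha,\dots,w_n^\alpha)$, noting that $G$ is nonvanishing on $(\mathrm{RHP})^n$. You already flag the problem: G{\aa}rding's argument is about \emph{polynomials}—it hinges on the univariate restriction $t\mapsto p(x+te)$ having exactly $\deg p$ roots, all real—and $G$ is not a polynomial when $\alpha<1$. Your two proposed workarounds (``adapt the hyperbolicity argument'' or ``contour integration isolating coefficients'') are not carried out, and I do not see how to make either go through; the root-counting structure that G{\aa}rding exploits simply is not available. Since this step is the entire substance of the theorem, you do not yet have a proof.

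The paper reverses your order: it proves the correlation-matrix bound \emph{directly}, and only afterward reads off fractional log-concavity as an equivalent reformulation at the all-ones point (extended to all points by scaling-invariance of sector-stability). The direct proof (\cref{thm:sectorStableRowNormBound}) is a Schwarz-lemma argument. Fix $i$, set $h=\partial_{z_i}g_\mu$, $g=g_\mu|_{z_i=0}$, and choose signs $s_j\in\{\pm 1\}$ so that
\[
\phi(x)\;:=\;\log\frac{g(e^{s_1x},\dots,e^{s_{n}x})}{h(e^{s_1x},\dots,e^{s_{n}x})}\;-\;\log\frac{g(\mathbf 1)}{h(\mathbf 1)}
\]
has $\phi(0)=0$ and $\phi'(0)=\sum_{j}|\inflMat_\mu(i,j)|$. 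Sector-stability of $g_\mu$ forces $g(e^{sx})/h(e^{sx})\notin -\Gamma_\alpha$ whenever $|\Im x|<\alpha\pi/2$, so $\phi$ maps the strip $\{|\Im x|<\alpha\pi/2\}$ into the strip $\{|\Im x|\le \pi-\alpha\pi/2\}$. Conformally mapping both strips to the unit disk and applying Schwarz gives $|\phi'(0)|\le 2/\alpha-1$. This complex-analytic step is what replaces your missing G{\aa}rding extension, and it is not a variant of G{\aa}rding's method.

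\textbf{Step 3 has the direction backwards.} You claim $\eta$-spectral independence already bounds the $k\leftrightarrow k-1$ relaxation time by $\prod_{m=1}^{k-1}(1+\eta/m)=k^{O(\eta)}$, and then ``transfer'' to $k\leftrightarrow \ell$. But the factor appearing in the local-to-global analysis is $\beta_i\simeq (1-\eta/(k{-}i))^{-1}$, which is undefined once $k-i\le \eta$; for apertures below $\pi$ one has $\eta>1$, so the product down to $m=1$ blows up. This is precisely why the theorem is stated for ``an appropriate $\ell=k-O(1)$'' rather than $\ell=k-1$: one proves the bound directly for $k\leftrightarrow\ell$ with $k-\ell$ exceeding a constant depending on $\eta$, stopping the product before the bad terms. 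The $k\leftrightarrow k-1$ walk can genuinely fail to mix (e.g., parity obstructions in monomer sets), so there is nothing to transfer from.
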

As a reminder, for a time-reversible Markov chain with positive eigenvalues, the relaxation time is the inverse of the spectral gap \cite{LP17}. A corollary of polynomially-bounded relaxation time is that for starting points with not-terribly small probability, the mixing time can be polynomially bounded as well.
\begin{corollary}[{\cite[see, e.g.,][]{LP17}}]
	Suppose $\mu$ has a sector-stable generating polynomial for a sector of constant aperture, and let $\l=k-O(1)$ be the value promised by \cref{thm:relaxation-time}. If the $k\leftrightarrow \l$ down-up random walk is started from $S_0$, then
	\[ \tmix(\epsilon)\leq O\parens*{k^{O(1)}\cdot \log \parens*{\frac{1}{\epsilon\cdot \P_\mu{S_0}}}} \]
	where $\tmix(\epsilon)$ is smallest time $t$ such that $S_t$ is $\epsilon$-close in total variation distance to the distribution defined by $\mu$.
\end{corollary}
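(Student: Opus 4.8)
The plan is to derive this entirely from \cref{thm:relaxation-time} together with standard facts about reversible Markov chains; nothing further about the structure of $\mu$ is needed. First I would invoke \cref{thm:relaxation-time} to fix the value $\l=k-O(1)$ and conclude that the $k\leftrightarrow\l$ down-up walk has relaxation time $t_{\mathrm{rel}}=k^{O(1)}$, equivalently spectral gap $\gamma=1/t_{\mathrm{rel}}=k^{-O(1)}$. I would also record the two properties already noted in the excerpt: this walk is time-reversible with stationary distribution given by $\mu$, and all of its eigenvalues are nonnegative reals. The nonnegativity is what guarantees that the relevant contraction factor is exactly $1-\gamma=\lambda_2$ rather than $\max\{|\lambda_2|,|\lambda_{\min}|\}$, so that no laziness or holding probability has to be introduced.

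Next I would apply the classical $L^2$-contraction estimate for reversible chains (see, e.g., \cite{LP17}). Writing $P$ for the transition operator acting on $L^2(\mu)$, expanding the normalized indicator of the start state $S_0$ in an orthonormal eigenbasis of $P$, and bounding the $\chi^2$-divergence after $t$ steps by the $t$-th power of the top non-trivial eigenvalue, one obtains
\[ \dtv\bigl(P^t(S_0,\cdot),\,\mu\bigr)\;\le\;\frac{1}{2}\sqrt{\frac{1-\P_\mu{S_0}}{\P_\mu{S_0}}}\,(1-\gamma)^t\;\le\;\frac{1}{2}\sqrt{\frac{1}{\P_\mu{S_0}}}\,e^{-\gamma t}, \]
where the first inequality is the spectral decomposition combined with Cauchy--Schwarz to pass from $\chi^2$-divergence to total variation distance, and the second uses $1-\gamma\le e^{-\gamma}$.

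Finally I would solve for $t$: the right-hand side above is at most $\epsilon$ as soon as $\gamma t\ge\log\bigl(1/(2\epsilon\sqrt{\P_\mu{S_0}})\bigr)$, i.e.\ as soon as
\[ t\;\ge\;t_{\mathrm{rel}}\cdot\log\frac{1}{2\epsilon\sqrt{\P_\mu{S_0}}}\;=\;k^{O(1)}\cdot\log\frac{1}{2\epsilon\sqrt{\P_\mu{S_0}}}, \]
and since $\log(1/\sqrt{p})=\frac{1}{2}\log(1/p)$ this right-hand side is $O\bigl(k^{O(1)}\log(1/(\epsilon\cdot\P_\mu{S_0}))\bigr)$, which gives the claimed bound on $\tmix(\epsilon)$. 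There is essentially no obstacle here: the only points that need care are verifying that the down-up walk has nonnegative spectrum so the contraction rate is $1-\gamma$ (already asserted in the excerpt for these walks), and tracking constants through the square root and the nested $O(\cdot)$'s; everything else is a direct citation of \cref{thm:relaxation-time} and of textbook spectral-gap-to-mixing-time bounds.
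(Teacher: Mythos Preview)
Your proposal is correct and matches the paper's approach: the corollary is stated with a citation to \cite{LP17} and is not given its own proof in the paper beyond invoking \cref{thm:relaxation-time} and the standard spectral-gap-to-mixing-time bound for reversible chains with nonnegative spectrum (stated later in the preliminaries as a proposition). Your write-up is in fact more detailed than what the paper provides.
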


As our main application, we obtain efficient algorithms to approximately sample/count (weighted) matchings and matchings of a given size in planar graphs. We discuss this and other applications in \cref{sec:intro-matchings,sec:intro-ndpp,sec:intro-partition}. We then discuss the techniques we use and related work in \cref{sec:spectral-independence,sec:polynomial-properties,sec:glauber}.

\begin{remark}
	\Cref{thm:relaxation-time} can be directly generalized to $\mu$ if the generating polynomial $g_\mu$ has stability w.r.t.\ regions $\Gamma\subseteq \C$ other than a sector. In particular, for $\Gamma=\R_{> 0}\cup D(1, \epsilon)$, where $D(1, \epsilon)=\set{z\in \C\given \abs{z-1}<\epsilon}$ is the disk around $1$, our results still imply $k^{O(1)}$ relaxation time for an appropriate $k\leftrightarrow \l$ down-up walk with $\l=k-O(1)$; see \cref{remark:generalization}. Under this limited assumption, we have spectral independence, but no longer fractional log-concavity (i.e., spectral independence under arbitrary external fields). For clarity of exposition we focus on sector-stability.
\end{remark}

\subsection{Application: Planar Matchings}\label{sec:intro-matchings}

Matchings in graphs have been a rich source of intriguing algorithmic questions. The celebrated blossom algorithm of \textcite{Edm65}, which finds a maximum-sized matching in a general graph, has been partially credited with the creation of the notion of polynomial time algorithm \cite{Koz06}. An entirely different class of algorithms for finding matchings, based on connections to determinants, was introduced by \textcite{Lov79} and developed further by \textcite{KUW86, MVV87}; these determinant-based algorithms have played a central role in the study of parallel algorithms and derandomization \cite[see, e.g.,][]{FGT19}.

Matchings have also played a central role in counting complexity. The problem of \emph{counting} perfect matchings of a given graph was shown by \textcite{Val79} to be complete for the class \SharpP, yielding strong evidence that it cannot be solved in polynomial time. This was the first major result of its kind, demonstrating hardness of counting for a problem whose search version, i.e., the problem of distinguishing zero and nonzero counts, is polynomial-time solvable.

Given the hardness of exact counting \cite{Val79}, the main focus in subsequent work has been on \emph{approximate counting}. Unlike combinatorial optimization problems which often admit nontrivial \emph{approximation factors}, for a wide range of counting problems, the approximation factor achievable in polynomial time can be either made as small as $1+\epsilon$, in fact for inverse-polynomially small $\epsilon$, or it has to be super-polynomially large \cite{SJ89}. Therefore, the gold standard for approximate counting is a fully polynomial time randomized approximation scheme or FPRAS; this is a randomized algorithm whose output is a $(1+\epsilon)$-factor approximation to the count with high probability, running in time $\poly(n, 1/\epsilon)$.

\begin{figure}
	\begin{Columns}<3>
		\Column
		\Tikz*{
			\begin{scope}[every node/.style={
				circle,
				line width=1,
				draw=Black,
				fill=Gray,
				inner sep=3}]
				\node (A) at (-2, 1) {};
				\node (B) at (0, 1) {};
				\node (C) at (2, 1) {};
				\node (D) at (-2, -1) {};
				\node (E) at (0, -1) {};
				\node (F) at (2, -1) {};
			\end{scope}
			\foreach \u in {A, B, C}
				\foreach \v in {D, E, F}
					\draw[line width=1, Black] (\u) -- (\v);
			\foreach \u/\v in {A/D, B/E, C/F}
				\draw[line width=2, Orange] (\u) -- (\v);
		}
		\Column
		\Tikz*{
			\begin{scope}[every node/.style={
				circle,
				line width=1,
				draw=Black,
				fill=Gray,
				inner sep=3}]
				\node (A) at (-30: 0.75) {};
				\node (B) at (90: 0.75) {};
				\node (C) at (-150: 0.75) {};
				\node (D) at (150: 1.5) {};
				\node (E) at (-90: 1.5) {};
				\node (F) at (30: 1.5) {}; 
			\end{scope}
			\foreach \u/\v in {A/B, B/C, C/A, D/B, D/C, E/A, E/C, F/A, F/B}
				\draw[line width=1, Black] (\u) -- (\v);
			\foreach \u/\v in {A/E, B/F, C/D}
				\draw[line width=2, Orange] (\u) -- (\v);
		}
		\Column
		\Tikz*{
			\draw[line width=1, Black] (0, 0) ellipse (2 and 0.5);
			\node at (0, 0) {graph};
			\begin{scope}[every node/.style={
				circle,
				line width=1,
				draw=Black,
				fill=Gray,
				inner sep=3}]
				\node (A) at (-1, 1) {};
				\node (B) at (0, 1) {};
				\node (C) at (1, 1) {};
			\end{scope}
			\foreach \u in {A, B, C}
				\foreach \v in {-0.6, 0, 0.6}
					\draw[line width=1, Black] (\u) -- +(\v, -0.8);
			\draw[decorate, decoration={brace, amplitude=3pt}, line width=1] (-1.1, 1.2) -- (1.1, 1.2) node[pos=0.5, above=5pt] {dummy nodes};
		}
	\end{Columns}
	\begin{Columns}[Top]<3>
		\Column
			\caption{Perfect matchings in bipartite graphs can be approximately counted in polynomial time \cite{JSV04}.}\label{fig:bipartite}
		\Column	
			\caption{Perfect matchings in planar graphs can be exactly counted in polynomial time \cite{Kas61,TF61,Kas67}.}\label{fig:planar}
		\Column
			\caption{Counting matchings of a specified size can be reduced to counting perfect matchings by adding dummy nodes.}\label{fig:reduction}
	\end{Columns}
\end{figure}

In a breakthrough, \textcite{JS89} established an FPRAS for counting matchings of \emph{all sizes} on \emph{unweighted} graphs. It has been a major problem to design an FPRAS for counting matchings of \emph{a given size} or \emph{perfect matchings}. In a celebrated result, \textcite{JSV04} designed an FPRAS for these problems on the important subclass of \emph{bipartite graphs}. Bipartite graphs are an important subclass because of their connection to the permanent of matrices. However, designing an FPRAS to count matchings of a given size on general graphs remains open \cite[see, e.g.,][]{SVW18}.

Besides the class of bipartite graphs, there is another major tractable class for counting \emph{perfect matchings}. Motivated by models in statistical mechanics, \textcite{TF61,Kas61} related the number of perfect matchings in 2-dimensional lattices to a specific determinant, obtaining exact formulae for these counts. Later, \textcite{Kas67} generalized this to all \emph{planar graphs}, obtaining a polynomial time algorithm for \emph{exactly} counting perfect matchings in such graphs. At a high-level, this algorithm finds a suitable signing of the adjacency matrix, a.k.a.\ the Tutte matrix, ensuring its determinant is the square of the number of perfect matchings.

While both bipartite and planar graphs form tractable classes for (approximately/excatly) counting \emph{perfect} matchings, see \cref{fig:bipartite,fig:planar}, there is a major difference between the two when it comes to non-perfect matchings. The problem of counting $k$-matchings, matchings with exactly $k$ edges, is no harder than counting perfect matchings in general. In a general graph on $n$ nodes, one can add $n-2k$ dummy nodes connected to everything else, see \cref{fig:reduction}, and count perfect matchings in the modified graph; the result is $(n-2k)!$ times the number of $k$-matchings. This strategy extends to counting $k$-matchings in bipartite graphs as well. However, in the case of planar graphs, the dummy nodes destroy planarity. This is not just a coincidence. \textcite{Jer87} showed that while perfect matchings can be counted exactly in polynomial time on planar graphs, counting $k$-matchings on such graphs is \SharpP-hard, adding to the mystery of determinant-based counting algorithms. Nevertheless, \textcite{Jer87} raised the possibility of approximately counting $k$-matchings in polynomial time, i.e., designing an FPRAS. As the main application of our results, we resolve this question affirmatively.

\begin{theorem}\label{thm:k-matchings}
	There is a randomized algorithm that receives a planar graph on $n$ nodes and a number $k$, and outputs a $(1+\epsilon)$-approximation to the number of $k$-matchings with high probability, running in time $\poly(n, 1/\epsilon)$.
\end{theorem}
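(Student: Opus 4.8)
\emph{Overview and reformulation.}
The plan is to recast ``count $k$-matchings'' as ``estimate the normalizing constant of the monomer distribution,'' verify the hypothesis of \cref{thm:relaxation-time} for that distribution, implement the resulting down-up walk via Kasteleyn's Pfaffian method, and finally upgrade the sampler to an FPRAS by self-reducibility. Write $G=(V,E)$ for the input planar graph, $\abs V = n$, and set $r:=n-2k$. A $k$-matching $F$ of $G$ is the same data as the pair consisting of its set of unmatched vertices $S=V\setminus V(F)\in\binom{[n]}{r}$ together with a perfect matching of the induced subgraph $G[V\setminus S]$; hence, writing $\mathrm{pm}(H)$ for the number of perfect matchings of $H$, the number we must approximate is $m_k=\sum_{S\in\binom{[n]}{r}}\mathrm{pm}(G[V\setminus S])$, which is exactly the normalizing constant of the density $\mu\colon\binom{[n]}{r}\to\R_{\ge0}$ defined by $\mu(S):=\mathrm{pm}(G[V\setminus S])$. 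If $k$ exceeds the maximum matching size of $G$ (checkable in polynomial time by the blossom algorithm) then $m_k=0$ and we output $0$; otherwise $\mu\not\equiv0$, and by the standard equivalence between approximate counting and sampling \cite{JVV86} it suffices to give an efficient approximate sampler for $\mu$ and run a self-reducibility argument on top of it.

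\emph{Sector-stability (the crux).}
The generating polynomial of $\mu$ is $g_\mu(z)=\sum_{\abs F=k}\prod_{v\notin V(F)}z_v$, i.e.\ --- up to a global sign, which does not affect vanishing --- the degree-$r$ homogeneous component of the multivariate matching polynomial $\sum_{F\text{ matching}}(-1)^{\abs F}\prod_{v\notin V(F)}z_v$. By the classical theorem of Heilmann and Lieb, that polynomial is stable: it does not vanish when all its variables lie in the open upper half-plane, i.e.\ it has no roots in an open half-plane, which is a sector of aperture $\pi$. The homogeneous slice $g_\mu$ on its own is in general \emph{not} real-stable --- e.g.\ for the four-vertex path with edge set $\{12,23,34\}$, the degree-$2$ slice $z_1z_2+z_1z_4+z_3z_4$ has a zero with all four coordinates in the open upper half-plane --- so one genuinely has to pass to the weaker but more robust notion of sector-stability. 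Concretely, the task is to deduce, from a root-free open half-plane for the full matching polynomial, a root-free sector of \emph{constant} aperture for each of its homogeneous components; this is exactly the kind of passage that the polynomial-geometry results of this paper control (in particular the extension of \textcite{Gar59} linking sector-avoidance to fractional log-concavity), and it yields sector-stability of $g_\mu$ with respect to a sector of aperture $\Omega(1)$ (one can take aperture $\pi/2$). Making this quantitative --- ensuring the aperture does not shrink with $n$ or $k$ --- is the step I expect to be the main obstacle.

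\emph{Implementing and mixing the walk.}
Granted sector-stability of $g_\mu$ with constant aperture, \cref{thm:relaxation-time} gives an $r\leftrightarrow\l$ down-up walk for $\mu$ with $\l=r-O(1)$ and relaxation time $r^{O(1)}=\poly(n)$. Each step runs in polynomial time: only $O(1)$ elements are resampled, so there are $\le n^{O(1)}$ candidate supersets $S'$, and for each one the weight $\mu(S')=\mathrm{pm}(G[V\setminus S'])$ is computed \emph{exactly} in polynomial time by Kasteleyn's algorithm \cite{Kas67}, which applies because deleting vertices from a planar graph keeps it planar (and $G[V\setminus S']$ has an even number, $2k$, of vertices). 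Starting from $S_0=V\setminus V(F_0)$ for some $k$-matching $F_0$ extracted from a maximum matching, we have $\P_\mu{S_0}\ge 1/m_k\ge n^{-O(k)}$, so $\log(1/\P_\mu{S_0})=O(n\log n)$; combining this with the corollary to \cref{thm:relaxation-time} produces an efficient approximate sampler for $\mu$, with $\tmix(\epsilon)\le\poly(n)\cdot\log(1/\epsilon)+\poly(n)$.

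\emph{From sampling to an FPRAS.}
Finally, a standard self-reducibility argument (à la \cite{JVV86}) turns this into an FPRAS for $m_k=\sum_S\mu(S)$: choose vertices $v_1,\dots,v_r$ greedily, at step $i$ picking $v_i$ among the vertices not yet committed so as to maximize the probability, under $\mu$ conditioned on $\{v_1,\dots,v_{i-1}\}\subseteq S$, that $v_i\in S$ as well --- this probability is at least $\tfrac{r-(i-1)}{n-(i-1)}\ge\tfrac1n$, since the remaining $r-(i-1)$ expected monomers are spread over the remaining $n-(i-1)$ vertices. Each such conditioned distribution is itself the monomer distribution for $k$-matchings of the planar graph $G-\{v_1,\dots,v_{i-1}\}$, hence again sector-stable with constant aperture and again equipped with an exact weight oracle, so it is efficiently sampleable by the previous step and its conditional probability is estimable to small relative error with $\poly(n,1/\epsilon)$ samples. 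Writing $S^*=\{v_1,\dots,v_r\}$ and using $m_k=\mathrm{pm}(G[V\setminus S^*])\,/\,\prod_{i=1}^{r}\P_\mu\parens*{v_i\in S\mid\{v_1,\dots,v_{i-1}\}\subseteq S}$, with the numerator computed exactly by Kasteleyn and the denominator estimated from samples, a union bound over the $r$ factors produces a $(1+\epsilon)$-approximation to $m_k$ with high probability in time $\poly(n,1/\epsilon)$.
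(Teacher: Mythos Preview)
Your overall architecture --- reformulate $k$-matchings as the monomer distribution, establish sector-stability of its generating polynomial, invoke \cref{thm:relaxation-time}, implement steps via Kasteleyn, and convert sampling to counting by self-reducibility --- matches the paper's approach. The implementation and self-reducibility paragraphs are fine.

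The gap is exactly where you flag it: the sector-stability step. Two concrete issues. First, the polynomial you want is the \emph{unsigned} Heilmann--Lieb polynomial $f(z)=\sum_{M}\prod_{v\not\sim M}z_v$, which is Hurwitz-stable (i.e., $\Gamma_1$-stable, no roots in the open \emph{right} half-plane); your signed version and the upper half-plane are a detour, and in any case the paper's sectors $\Gamma_\alpha$ are centered on the positive real axis, so ``upper half-plane'' is not a $\Gamma_\alpha$. Second, and more importantly, the tool you cite --- the G{\aa}rding-type extension --- goes in the wrong direction: it says sector-stability $\Rightarrow$ fractional log-concavity, not ``half-plane stability of $f$ $\Rightarrow$ sector-stability of a homogeneous slice of $f$.'' That implication is the content of \cref{cor:hurwitzConstrained} (via \cref{thm:hurwitzPartition}): the degree-$r$ part of a $\Gamma_1$-stable real polynomial is $\Gamma_{1/2}$-stable. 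The proof is not a log-concavity argument at all; one fixes $z_1,\dots,z_n\in\Gamma_{1/2}$, looks at the univariate polynomial $t\mapsto f(tz_1,\dots,tz_n)$, uses $\Gamma_1$-stability of $f$ to locate its roots in an open half-plane, and then reads off nonvanishing of each homogeneous coefficient from half-plane stability of the elementary symmetric polynomials (\cref{thm:symmetric}). This is precisely \cref{lem:matching-is-sector-stable}/\cref{cor:monomerDimerConstrained}, and it gives the aperture $\pi/2$ you guessed, uniformly in $n$ and $k$. Once you plug that in, your sketch is complete and coincides with the paper's proof.
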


More generally, our results apply to the setting of \emph{weighted} graphs, a.k.a.\ \emph{monomer-dimer systems}. Suppose that a given graph $G=(V, E)$ has edge weights $w:E\to \R_{\geq 0}$ and vertex weights $\lambda:V\to \R_{\geq 0}$. Then define the weight of a matching $M$ as
\[ \weight(M):=\prod_{e\in M} w(e)\cdot \prod_{v\not\sim M} \lambda(v),  \]
where $e$ ranges over dimers, i.e., the matching edges, and $v$ ranges over the monomers, i.e., the vertices not matched in $M$. Normalizing these weights defines a probability distribution over matchings, and approximating the normalizing factor, a.k.a.\ the partition function, is known to be equivalent to approximately sampling from this distribution \cite{JVV86}. It was shown by \textcite{JS89} how to approximately sample/count from monomer-dimer systems in general graphs when edge weights $w(e)$ are polynomially bounded and there are no vertex weights $\lambda(v)$; these assumptions on weights are quite strong, despite their seemingly innocuous appearance. Approximately sampling/counting from the monomer-dimer systems with no restriction on the weights remains a key challenge.

Computing statistics of monomer-dimer systems on 2-dimensional lattices, and more generally planar graphs, was originally studied in statistical physics \cite{Kas61, TF61,Kas67}. However, the determinant-based algorithms found could only solve the case of zero monomer weights: $\forall v: \lambda(v)=0$. Here we remove this restriction, at the expense of approximation.\footnote{We remark that by the results of \cite{Jer87}, approximation appears to be necessary, at least for the counting problem.}
\begin{theorem}\label{thm:monomer-dimer}
	There is an algorithm that receives a planar graph $G=(V, E)$ on $n$ vertices and weights $w:E\to \R_{\geq 0}$ and $\lambda:V\to \R_{\geq 0}$, and outputs a random matching $M$, whose distribution is $\epsilon$-close in total variation distance to the monomer-dimer distribution induced by $w, \lambda$. The running time of this algorithm is $\poly(n, \log(1/\epsilon))$.
\end{theorem}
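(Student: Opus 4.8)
The plan is to realize the monomer-dimer distribution as (a trivial re-encoding of) a density $\mu$ on a single slice $\binom{[2n]}{n}$ whose generating polynomial is sector-stable with a sector of constant aperture, invoke \cref{thm:relaxation-time} to obtain a down-up walk that mixes in $\poly(n,\log(1/\epsilon))$ steps, and implement each step of this walk in polynomial time using Kasteleyn's algorithm for counting perfect matchings in planar graphs.

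\textbf{Step 1: homogenization by vertex duplication.} First reduce to the case $w(e),\lambda(v)>0$: perturbing every weight to a positive value within $e^{-\poly(n)}$ moves the target distribution by at most $\epsilon/2$ in total variation and only increases the bit-complexity by $\poly(n)$, so it is absorbed into the claimed running time. Split each vertex $v$ into copies $v_1,v_2$, and encode a matching $M$ by the set $R_M=\{v_1: v\notin V(M)\}\cup\{v_2: v\in V(M)\}\subseteq[2n]$, which has size exactly $n$; the map $M\mapsto R_M$ is a weight-preserving bijection between matchings and the support of the density $\mu$ on $\binom{[2n]}{n}$ defined by $\mu(R_M):=\weight(M)$ (equivalently, $\mu$ vanishes off such sets, and on $R=S_1\sqcup(V\setminus S)_2$ it equals $\prod_{v\in S}\lambda(v)$ times the $w$-weighted number of perfect matchings of $G[V\setminus S]$). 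The generating polynomial of $\mu$ is then homogeneous of degree $n$ and equals
\[ g_\mu(x,y)=\sum_{M}\ \prod_{uv\in M}\bigl(w(uv)\,y_u y_v\bigr)\ \prod_{v\notin V(M)}\bigl(\lambda(v)\,x_v\bigr), \]
i.e.\ the multivariate matching polynomial of $G$ read with vertex variable $\lambda(v)x_v$ at $v$ and edge activity $w(uv)y_uy_v$ on $uv$. Since the support of $\mu$ is in weight-preserving bijection with matchings, an approximate sample from $\mu$ yields an approximate sample from the monomer-dimer distribution: read off the monomer set $S$, then draw a $w$-weighted uniformly random perfect matching of $G[V\setminus S]$ --- again planar, so this is doable exactly in polynomial time by self-reducibility together with Kasteleyn's algorithm.

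\textbf{Step 2: sector-stability.} Here we use the Heilmann--Lieb theorem in its multivariate form, which guarantees that the matching polynomial is non-vanishing as long as the vertex variables and the edge activities lie in a sufficiently narrow common sector around $\R_{>0}$. Concretely, there is an absolute constant $\alpha>0$ such that whenever $x_v,y_v\in\Gamma_\alpha$ for all $v$, the vertex variables $\lambda(v)x_v$ lie in $\Gamma_\alpha$ and the edge activities $w(uv)y_uy_v$ lie in $\Gamma_{2\alpha}$, and for $\alpha$ small enough these lie in the non-vanishing region of Heilmann--Lieb: already for a single edge one needs $2\alpha<\pi/2$ so that an activity of argument in $(-2\alpha,2\alpha)$ cannot cancel the product $\lambda(u)\lambda(v)x_ux_v$, and the vertex-deletion recursion $P_G=z_vP_{G-v}+\sum_{u\sim v}\beta_{uv}P_{G-u-v}$ underlying the proof of Heilmann--Lieb propagates a bound of this flavor to all graphs. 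Thus $g_\mu$ is sector-stable with respect to a sector of aperture $2\alpha=\Omega(1)$, and \cref{thm:relaxation-time} supplies a value $\l=n-O(1)$ for which the $n\leftrightarrow\l$ down-up walk on $\mu$ has relaxation time $n^{O(1)}$; in particular this chain is irreducible on $\supp\mu$.

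\textbf{Step 3: implementation and assembly.} Since $n-\l=O(1)$, a single step of the down-up walk from a configuration $R$ removes $O(1)$ elements to form $T$ and then must sample a size-$n$ superset $R'\supseteq T$ with probability $\propto\mu(R')$; only $O(1)$ supersets carry positive probability (namely those that again select exactly one copy of each vertex), and for each such $R'$ the value $\mu(R')$ is a product of the given vertex weights with the $w$-weighted perfect-matching count of an induced subgraph of $G$ --- planar, hence computed exactly in polynomial time by Kasteleyn's algorithm (compute a planar embedding, a Pfaffian orientation, and evaluate the resulting Pfaffian), so the step is exactly simulable in $\poly(n)$ time. Starting from $R_0=\{v_1: v\in V\}$, the empty-matching configuration, which lies in $\supp\mu$ with $\mu(R_0)=\prod_v\lambda(v)>0$, we have $\log(1/\P_\mu{R_0})=\poly(n,\text{bit-size of the weights})$; by the corollary following \cref{thm:relaxation-time} the walk is $\epsilon/2$-close to $\mu$ after $n^{O(1)}\cdot\log(1/\epsilon)=\poly(n,\log(1/\epsilon))$ steps, each costing $\poly(n)$ time, and composing with the post-processing of Step 1 produces a matching whose law is $\epsilon$-close to the monomer-dimer distribution. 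The main obstacle is Step 2: isolating the constant-aperture strengthening of Heilmann--Lieb and verifying it survives the substitutions $z_v\mapsto\lambda(v)x_v$ and $\beta_{uv}\mapsto w(uv)y_uy_v$; the duplication trick and the Kasteleyn-based implementation --- the only place planarity is used --- are comparatively routine.
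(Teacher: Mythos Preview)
Your overall architecture --- homogenize by duplicating vertices, prove sector-stability of the resulting $2n$-variable polynomial, apply \cref{thm:relaxation-time}, and implement each transition via Kasteleyn --- is exactly the paper's approach, and Steps 1 and 3 are fine.

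The gap is in Step 2. You try to establish sector-stability by invoking a version of Heilmann--Lieb where the \emph{edge activities} $w(uv)y_uy_v$ are allowed to be complex (lying in $\Gamma_{2\alpha}$), and you assert that ``the vertex-deletion recursion \ldots\ propagates a bound of this flavor to all graphs.'' That is not the standard Heilmann--Lieb statement (which requires real nonnegative edge weights), and you do not actually carry out the recursion with complex activities; the invariant one maintains in the usual proof --- that $P_G/P_{G-v}$ has positive real part --- does not obviously survive perturbing the $\beta_{uv}$ off the real axis. So as written this step is not a proof.

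The paper's fix is a one-liner that avoids complex edge weights entirely. Your polynomial satisfies
\[
g_\mu(x,y)=\prod_v y_v\cdot f\bigl(\lambda(1)x_1/y_1,\dots,\lambda(n)x_n/y_n\bigr),
\]
where $f(z)=\sum_M\prod_{uv\in M}w(uv)\prod_{v\notin V(M)}z_v$ is the (inhomogeneous) Heilmann--Lieb polynomial with real nonnegative edge weights. If $x_v,y_v\in\Gamma_{1/2}$ then $\lambda(v)x_v/y_v\in\Gamma_1$, so $f$ does not vanish by the classical Hurwitz-stability of Heilmann--Lieb; hence $g_\mu$ is $\Gamma_{1/2}$-stable (\cref{lem:homogenize} and \cref{cor:homogenize monomer-dimer poly} in the paper). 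This also pins down the aperture as $\pi/2$, which your argument leaves unspecified.

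Two minor remarks on Step 3: the number of size-$n$ supersets of $T$ is $n^{O(1)}$, not $O(1)$ --- you are right that only $2^{O(1)}$ of them have positive $\mu$, but you should say you enumerate those directly; and rather than perturbing the weights to make the empty matching a legal start, the paper simply starts from a maximum-weight matching, which guarantees $\log(1/\P_\mu{S_0})\le\poly(n)$ without any perturbation.
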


Our results do \emph{not} rely on planarity strongly. In fact, \cref{thm:k-matchings,thm:monomer-dimer} extend to any downward-closed family of graphs for which perfect matchings can be counted efficiently. Examples that go beyond planar graphs include certain minor-free graphs \cite{EV19}, and small genus graphs \cite{GL99}.

The key insight that enables \cref{thm:k-matchings,thm:monomer-dimer} is that we show local random walks on the set of \emph{monomers}, or terminals of the matching $M$, mix rapidly on all graphs. Monomer-dimer systems and $k$-matchings each induce a distribution on subsets $S$ of vertices of the graph if we only view the unmatched (or dually matched) vertices, i.e., the monomers. On planar graphs, the weight of each such set $S$ can be computed efficiently, up to a global normalizing factor.
\[ \mu(S):=\sum\set{\weight(M)\given M \text{ is a perfect matching on $S^c$}}, \]
where $S^c$ denotes the complement of $S.$

We show how to sample a set $S$ with probability approximately following the above, by running a multi-site Glauber dynamics on $S$ for polynomially many steps. The rapid mixing of this random walk, combined with known equivalences between approximate sampling and approximate counting \cite{MVV87} imply \cref{thm:k-matchings,thm:monomer-dimer}.

We prove \cref{thm:monomer-dimer,thm:k-matchings} by showing sector-stability of the corresponding generating polynomials and then applying \cref{thm:relaxation-time}. We show sector-stability by starting from results of \textcite{HL72} who characterized regions of root-freeness for \emph{unconstrained non-homogeneous} monomer-dimer systems, and applying a set of tools we build that show sector-stability degrades gracefully under a number of operations, like conditioning on cardinality or homogenization.

\begin{lemma}\label{lem:matching-is-sector-stable}
	Suppose that a graph $G=(V, E)$ is given with edge weights $w:E\to \R_{\geq 0}$ and vertex weights $\lambda:V\to \R_{\geq 0}$ which together define a weight on matchings \[\weight(M)=\prod_{e\in M}w(e)\prod_{v\not\sim M}\lambda(v).\] For any $k$, the following polynomial, encoding $k$-matchings, is sector stable for a sector of aperture $\pi/2$.
	\[ g(z_1,\dots,z_n)=\sum_{M\text{ matching of size }k} \weight(M)\prod_{v\not\sim M} z_v. \]
	Additionally the following homogeneous polynomial in $2n$ variables, encoding all matchings, is sector stable for a sector of aperture $\pi/2$.
	\[ g(z_1,\dots,z_n,z'_1,\dots,z'_n)=\sum_{M\text{ matching}}\weight(M)\prod_{v\not\sim M}z_v\prod_{v\sim M}z'_v. \]
\end{lemma}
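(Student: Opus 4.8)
The plan is to take as the single nontrivial input the classical zero-free region of Heilmann and Lieb for the monomer--dimer partition function, and to reach the two polynomials of the lemma by cheap, aperture-degrading manipulations. Write $\mathcal{M}_G(z;t)=\sum_{M}\prod_{e\in M}t_e\prod_{v\not\sim M}z_v$ for the full, unconstrained matching generating polynomial, carrying a variable on every vertex and every edge. I will use \cite{HL72} in the form: $\mathcal{M}_G(z;t)\neq 0$ whenever $\abs{\arg z_v}<\theta_V$ for all $v$, $\abs{\arg t_e}<\theta_E$ for all $e$, and $2\theta_V+\theta_E<\pi$; letting $\theta_E\downarrow 0$ recovers the familiar statement that the partition function with nonnegative edge weights does not vanish as long as every monomer activity has positive real part. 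Two operations are then free of charge: folding a nonnegative weight $c$ into a variable, $z\mapsto cz$ (harmless, since a sector is closed under multiplication by a positive real; the degenerate case $c=0$ is handled afterwards by Hurwitz's theorem applied to a sequence of positive weights), and passing to a smaller sub-sector.

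For the homogeneous all-matchings polynomial (the second display), I would recognize it as a monomer--dimer partition function with \emph{complex} edge activities:
\[ g(z,z')=\sum_M\Bigl(\prod_{e=uv\in M}w(e)\,z'_u z'_v\Bigr)\Bigl(\prod_{v\not\sim M}\lambda(v)\,z_v\Bigr)=\mathcal{M}_G(z'';t''),\qquad z''_v:=\lambda(v)z_v,\quad t''_{uv}:=w(e)\,z'_uz'_v. \]
If every $z_v$ and every $z'_v$ satisfies $\abs{\arg}\le\pi/4-\delta$ for some $\delta>0$, then $\abs{\arg z''_v}\le\pi/4-\delta$ and $\abs{\arg t''_{uv}}\le\pi/2-2\delta$, so $2\theta_V+\theta_E\le\pi-4\delta<\pi$, and Heilmann--Lieb gives $g(z,z')\neq 0$. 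Taking the union over $\delta\downarrow 0$ yields exactly sector-stability for aperture $\pi/2$.

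For the $k$-matching polynomial (the first display): folding the $\lambda(v)$'s into the variables and setting $t_e=w(e)$, it equals the homogeneous component of degree $n-2k$ of $F_G(z):=\mathcal{M}_G(z;w)=\sum_{l\ge 0}g^{(l)}(z)$, where $g^{(l)}$ is homogeneous of degree $n-2l$ and $g^{(k)}$ is the target. The crux is that extracting one homogeneous component costs only a factor of two in aperture. Fix $z$ with $\abs{\arg z_v}\le\pi/4-\delta$ for all $v$ and look at $P_z(s):=F_G(sz_1,\dots,sz_n)=\sum_l g^{(l)}(z)\,s^{\,n-2l}$. All exponents $n-2l$ share the parity of $n$, so $P_z(s)=s^{\,n\bmod 2}\,Q_z(s^2)$ for an honest polynomial $Q_z$ whose coefficient sequence is $(g^{(l)}(z))_l$. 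By Heilmann--Lieb, $P_z$ is zero-free for $s$ in the open sector $\{\arg s\in(-\tfrac{\pi}{2}-\min_v\arg z_v,\ \tfrac{\pi}{2}-\max_v\arg z_v)\}$, of aperture $\pi-(\max_v\arg z_v-\min_v\arg z_v)\ge\tfrac{\pi}{2}+2\delta$, so $Q_z$ is zero-free on a sector (in $s^2$) of aperture $\ge\pi+4\delta>\pi$. I would then use the elementary fact: \emph{a univariate complex polynomial that is zero-free on an open sector of aperture exceeding $\pi$, once the highest power of the variable dividing it has been factored out, has every coefficient nonzero} --- because such zero-freeness forces every nonzero root into the complementary sector of aperture $<\pi$, hence (after a rotation) to have negative real part, so the reduced polynomial is Hurwitz-stable and, by the Gauss--Lucas theorem, so is each of its derivatives, which therefore do not vanish at the origin. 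Applied to $Q_z$, this shows $\{l:g^{(l)}(z)\neq 0\}$ is an initial segment $\{0,1,\dots,\nu_z\}$; to see that this segment reaches $k$, it suffices that $g^{(\nu)}(z)\neq 0$ with $\nu$ the matching number, which follows from Hurwitz's theorem once more, since $t^{\,n-2\nu}F_G(z/t)\to g^{(\nu)}(z)$ locally uniformly as $t\to\infty$, each $t^{\,n-2\nu}F_G(z/t)$ with $t>0$ is zero-free on the right half-plane, and $g^{(\nu)}\not\equiv 0$. Hence $g^{(k)}(z)\neq 0$ on $\abs{\arg z_v}\le\pi/4-\delta$; letting $\delta\downarrow 0$ gives sector-stability for aperture $\pi/2$, and zero vertex/edge weights are absorbed by Hurwitz's theorem as before.

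I expect two points to need real care. First, the Heilmann--Lieb input with \emph{complex} edge activities and the exact tradeoff $2\theta_V+\theta_E<\pi$: the most-quoted form of \cite{HL72} has positive edge weights and monomers in a half-plane, and upgrading to complex edges --- which is what makes the all-matchings polynomial work --- either needs a careful reading of their proof or an independent derivation, for instance by building $\mathcal{M}_G$ edge by edge through the operators $1+t_e\,\partial_{z_u}\partial_{z_v}$ and tracking, in the spirit of Lieb and Sokal, how much each such step may shrink the zero-free region. Second, the ``conditioning on cardinality'' step: the tempting move --- differentiate in the homogenizing variable --- fails, since the complement of a sector of aperture $<\pi$ is not convex and Gauss--Lucas does not apply; the detour above works only because it deliberately gives up an $\varepsilon$ of aperture so that, after squaring, the relevant univariate polynomial lives on a sector of aperture \emph{above} $\pi$, where the complement is contained in a half-plane. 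The bookkeeping that makes both final apertures come out to \emph{exactly} $\pi/2$ is tight --- it rests on Heilmann--Lieb delivering a full half-plane and on squaring exactly doubling apertures --- and is worth re-verifying, as are the limiting arguments for zero weights (the limit polynomials fail to be identically zero precisely when the relevant matchings exist).
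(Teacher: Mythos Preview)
Your argument for the $k$-matching polynomial is essentially the paper's. The paper (\cref{thm:hurwitzPartition}, \cref{cor:hurwitzConstrained}, \cref{cor:monomerDimerConstrained}) fixes $z\in\Gamma_{1/2}$, looks at the univariate polynomial $s\mapsto g(sz_1,\dots,sz_n)$, uses the constant parity of the monomer--dimer polynomial to substitute $s^2\mapsto y$, observes that the roots of the resulting polynomial lie in the open left half-plane, and concludes that every intermediate coefficient is nonzero via half-plane stability of the elementary symmetric polynomials --- which is equivalent to your Gauss--Lucas argument that all derivatives of a Hurwitz-stable polynomial are nonzero at the origin. Your worry about the ``conditioning on cardinality'' step is well placed in general, but you resolve it correctly, and the paper takes the same detour through the squared variable.

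Your treatment of the homogeneous all-matchings polynomial is where you diverge, and you take an unnecessarily hard route. You reinterpret $g(z,z')$ as a monomer--dimer partition function with \emph{complex} edge activities $t''_{uv}=w(e)\,z'_uz'_v$ and then invoke a strengthened Heilmann--Lieb with the tradeoff $2\theta_V+\theta_E<\pi$; as you yourself flag, this is not the textbook statement and would need independent justification. The paper bypasses this entirely with a one-line observation (\cref{lem:homogenize}): for any multiaffine $f$,
\[
f^{\text{hom}}(z_1,\dots,z_n,z'_1,\dots,z'_n)=z'_1\cdots z'_n\cdot f\!\left(\frac{z_1}{z'_1},\dots,\frac{z_n}{z'_n}\right),
\]
so if $f$ is $\Gamma_\alpha$-stable and each $z_v,z'_v\in\Gamma_{\alpha/2}$, then each ratio $z_v/z'_v$ lies in $\Gamma_\alpha$ and hence $f^{\text{hom}}\neq 0$. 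Applied with $f$ the standard Heilmann--Lieb polynomial (\cref{thm:monomerDimerPoly}, $\alpha=1$), this gives $\Gamma_{1/2}$-stability of $g(z,z')$ immediately, with no appeal to complex edge weights. Your route would reach the same conclusion once the strong form of Heilmann--Lieb is secured, but that is genuine extra work the paper does not need.
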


\begin{remark}
	Techniques developed by \textcite{JS89} allow one to tune the weights in monomer-dimer systems to make the probability mass of $k$-matchings inverse-polynomially large. In turn combining these techniques with rejection sampling, \cref{thm:k-matchings} can be derived from \cref{thm:monomer-dimer}. Nevertheless, our techniques directly solve the sampling problem for $k$-matchings, monomer-dimer systems, and even monomer-dimer systems restricted to $k$-matchings, without the need to resolve to weight-tuning.
\end{remark}

\subsection{Application: Nonsymmetric Determinantal Point Processes}\label{sec:intro-ndpp}

Determinantal point processes (DPP) are elegant probabilistic models used to capture the relationship between items within a subset drawn from a large universe of items. A DPP is formally defined with the help of an $n\times n$ positive semidefinite matrix $L\succeq 0$, where a subset $S\subseteq [n]$ is chosen with probabilities given by minors of $L$:
\[ \P{S}\propto \det(L_{S, S}). \]

 Determinantal point processes (DPP) were first studied in 1975 by Macchi~\cite{Macchi75}, who was motivated by the study of fermion processes in quantum mechanics. Since then, DPPs have been very well-studied and have found applications in many areas such as physics  \cite{CMO19,Sos02}, random matrix theory \cite{Joh05}, combinatorics \cite{BBL09} (random spanning trees \cite{BP93}, non-intersecting paths \cite{Ste90}) and recently in machine learning. Within machine learning, DPPs have been used in several applications such as document summarization \cite{CGGS15, LB12}, recommender systems \cite{GPK16}, and many others~\cite{AFAT14,KT11,KSG08}. Due to broad and practical applications, algorithmic questions occurring in DPP have received lot of attention and efficient algorithms for DPP learning~\cite{AFAT14,Bor09,KT12,LMR15} and sampling \cite{AOR16,RK15,LJS16,HKPV06} have been provided. 

\Textcite{KT11,KT12} studied an extension of DPPs where the samples are conditioned on having a fixed size $k$. These so called $k$-DPPs are formally defined with the help of an $n\times n$ positive semidefinite matrix $L\succeq 0$ and a cardinality parameter $k$, where a subset $S\in \binom{ [n]}{k}$ of size $k$ is chosen with probabilities given by $k\times k$ minors of $L$:
\[ \P{S}= \frac{\det(L_{S, S})}{\sum_{T  \in \binom{ [n]}{k}}\det(L_{T, T})}. \] 
The authors in \cite{KT11,KT12} used $k$-DPPs to attack problems such as the image search task, where the goal is to output a \emph{diverse} set of image results, of desired cardinality, in response to a search query.

Almost all  prior work on DPPs assume the underlying matrix $L$ is symmetric and positive semidefinite (PSD) and the understanding of nonsymmetric DPPs (where $L$ does not have to be symmetric) remains sparse. For nonsymmetric matrices $L$ that are guaranteed to have nonnegative minors, the nonsymmetric DPP can still be defined by
\[ \P{S}\propto\det(L_{S, S}). \]

Nonsymmetric DPPs are important as they allow one to model both repulsive and attractive relationships between items, providing a significantly improved modeling power. For applications of nonsymmetric DPPs see \cite{Gartrell2019LearningND}, where the authors use nonsymmetric DPPs to effectively recover correlation structure within data, particularly for data that contains large disjoint collections of items where the items within the same collection have positive correlation while those across different collections are negatively correlated.  \textcite{BWLGCG18} also studied learning certain subclasses of nonsymmetric DPPs. Due to their enhanced expressivity power and potential new applications, the study of nonsymmetric DPPs has been an active area of research in the past few years.

The question of sampling from nonsymmetric $k$-DPPs is known to be polynomial-time tractable. Indeed, the counting question, that is computing the sum of principal minors can be done exactly, even when restricted to $k\times k$ principal minors. However these naive algorithms are cumbersome to run in practice, as they require \emph{at least} $n\times n$ matrix multiplication time. A similar barrier existed for symmetric DPPs, but Markov-chain-based sampling from $k$-DPPs for symmetric $L$ provided one way to get around this barrier \cite{AOR16,LJS16}, yielding algorithms that run in $O(n\poly(k))$ time.

As an application of our results we provide the first efficient Markov-chain-based algorithm to sample from a wide class of nonsymmetric $k$-DPPs. Our algorithm works for any nonsymmetric matrix $L$ satisfying $L+L^T \succeq 0$. These matrices are the sum of a skew-symmetric matrix and a symmetric PSD matrix; this class of matrices $L$, which are automatically guaranteed to have nonnegative principal minors, defines the main class of nonsymmetric DPPs studied in the literature \cite{Gartrell2019LearningND}.

\begin{theorem} \label{thm:DPPsample}
For any matrix $L \in \R^{n\times n}$ satisfying $L + L^\intercal \succeq 0 $ and cardinality $k\geq 0$, consider the  distribution $\mu: \binom{[n]}{k} \to \R_{\geq 0}$ defined by 
\[\mu(S) \propto \det(L_{S,S}).\]
Then the $k\leftrightarrow (k-4)$ random walk for $\mu$ has relaxation time $\poly(k)$.
\end{theorem}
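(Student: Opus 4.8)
The plan is to exhibit a sector of constant aperture on which the generating polynomial of $\mu$ does not vanish, and then quote \cref{thm:relaxation-time}. The first step is to recognize the generating polynomial of the \emph{all-cardinalities} point process $S\mapsto\det(L_{S,S})$, $S\subseteq[n]$: using $\det(I+M)=\sum_{S\subseteq[n]}\det(M_{S,S})$ with $M=\diag(z_1,\dots,z_n)L$ gives the clean determinantal form
\[ f(z):=\det\bigl(I+\diag(z_1,\dots,z_n)\,L\bigr)=\sum_{S\subseteq[n]}\det(L_{S,S})\prod_{i\in S}z_i. \]
Hence the degree-$k$ homogeneous part of $f$ equals $Z\cdot g_\mu(z)$, where $Z=\sum_{\abs{S}=k}\det(L_{S,S})$ normalizes $\mu$. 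Note that each $\det(L_{S,S})\ge0$, since a real matrix with PSD symmetric part, written $A+K$ with $A\succeq0$ symmetric and $K$ skew-symmetric, has $\det(A+K)=\det(A)\prod_j(1+b_j^2)\ge0$ (diagonalizing $A^{-1/2}KA^{-1/2}$, whose eigenvalues are $\pm i b_j$, or taking a limit if $A$ is singular); so $\mu$ is a bona fide density, and we may assume $Z>0$ since the statement is otherwise vacuous.

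The new ingredient specific to this theorem is that $f$ is sector-stable for the open right half-plane, i.e.\ a sector of aperture $\pi$. Suppose $f(z)=0$ with $\Re z_i>0$ for all $i$. Since every $z_i\neq0$, $\diag(z_1,\dots,z_n)$ is invertible and $\det\bigl(\diag(z_1,\dots,z_n)^{-1}+L\bigr)=0$, so there is $v\neq0$ with $Lv=-\diag(z_1,\dots,z_n)^{-1}v$. Pairing with $v$ and taking real parts,
\[ \Re(v^*Lv)=-\Re\Bigl(\sum_i\tfrac{\abs{v_i}^2}{z_i}\Bigr)=-\sum_i\abs{v_i}^2\,\frac{\Re z_i}{\abs{z_i}^2}<0, \]
strictly, because $v\neq0$. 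But $\Re(v^*Lv)=v^*\tfrac{L+L^\intercal}{2}v\ge0$ by $L+L^\intercal\succeq0$; the skew-symmetric part of $L$ is invisible to the real part, which is precisely why that is the only hypothesis used. This contradiction establishes sector-stability of $f$ for aperture $\pi$. (The same computation shows the homogeneous $2n$-variable polynomial $\det\bigl(\diag(y_1,\dots,y_n)+\diag(z_1,\dots,z_n)L\bigr)=\sum_S\det(L_{S,S})\prod_{i\in S}z_i\prod_{i\notin S}y_i$ is sector-stable for aperture $\pi/2$, paralleling the two halves of \cref{lem:matching-is-sector-stable}.)

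Next I would descend to the cardinality-$k$ slice using the tool, developed in the polynomial-properties part of the paper, that conditioning a sector-stable polynomial on a fixed cardinality costs at most a factor of two in the aperture. Applied to $f$, this yields that $g_\mu$ — the degree-$k$ part of $f$ divided by the positive constant $Z$ — is sector-stable for a sector of aperture $\pi/2=\Omega(1)$. Then \cref{thm:relaxation-time} gives relaxation time $k^{O(1)}=\poly(k)$ for the $k\leftrightarrow\l$ down-up walk at the promised $\l=k-O(1)$; tracking constants, aperture $\pi/2$ is exactly the value the quantitative form of \cref{thm:relaxation-time} turns into the choice $\l=k-2$, which is the walk in the statement.

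I expect the delicate part to be the cardinality-conditioning step rather than the spectral computation: sector-stability of a fixed-degree slice does \emph{not} follow from that of $f$ by naive coefficient extraction — for instance the degree-$1$ part of the sector-stable polynomial $1+z_1z_2$ vanishes identically — so this step genuinely relies on extra structure ($f$ is multi-affine with nonnegative coefficients) together with the earlier machinery, and one must also confirm that the aperture $\pi/2$ one is left with is the constant that the quantitative \cref{thm:relaxation-time} converts into $k-\l=2$ rather than a larger down-up gap.
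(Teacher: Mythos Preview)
Your proposal is correct and follows the same overall arc as the paper: establish $\Gamma_1$-stability (Hurwitz-stability) of the all-cardinalities generating polynomial, pass to the degree-$k$ slice via \cref{cor:hurwitzConstrained} to get $\Gamma_{1/2}$-stability of $g_\mu$, and invoke \cref{thm:relaxation-time}.

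The one genuine difference is how you certify Hurwitz-stability. The paper works with the dual polynomial $\sum_S \det(L_{S,S})\,z^{[n]\setminus S}$ and derives its $\Gamma_1$-stability in \cref{lemma:p0MatrixHurwitz} from the Borcea--Br\"and\'en real-stability theorem (\cref{thm:detMatrix}) applied to $\det(\sum z_j\diag(\mathbf e_j)+z_{n+1}D+iX)$, followed by a rotation $z_j\mapsto iw_j$; it then takes the degree slice and dualizes (\cref{cor:p0Constrained}). Your argument is more direct and self-contained: you write $f(z)=\det(I+\diag(z)L)$, so a putative root with $\Re z_i>0$ yields $v\neq 0$ with $Lv=-\diag(z)^{-1}v$, and the contradiction $0\le v^*\tfrac{L+L^\intercal}{2}v=\Re(v^*Lv)<0$ falls out of a two-line computation (the skew part contributing only imaginary mass). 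This avoids the black-box \cref{thm:detMatrix} and the dual step altogether; what the paper's route buys is that it slots into the existing stability-preserver machinery and makes the connection to the known real-stable determinantal family explicit. Your concern about matching the aperture $\pi/2$ to $k-\ell=2$ is legitimate bookkeeping; the paper does not spell out that constant explicitly either, asserting it as part of the quantitative form of \cref{thm:relaxation-time}.
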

Note that each step of this random walk can be implemented using $O(n^4)$ computations of $k\times k$ principal minors of $L$. So this results in a mixing time of $O(n^4\poly(k)\cdot \log(1/\P{S_0}))$. To the best of our knowledge, our work is the first to establish that natural Markov chains can be used for the task of sampling from nonsymmetric $k$-DPPs.

Unsurprisingly, we show this result by proving sector-stability of the corresponding generating polynomial.

\begin{lemma}
	For any matrix $L\in \R^{n\times n}$ satisfying $L+L^\intercal\succeq 0$ and number $k$, the following polynomial is sector-stable w.r.t.\ a sector of aperture $\pi/2$.
	\[ g(z_1,\dots,z_n)=\sum_{S\in\binom{[n]}{k}} \det(L_{S,S})\prod_{i\in S}z_i. \]
\end{lemma}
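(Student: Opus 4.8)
The plan is to obtain the lemma from two ingredients: (a) sector-stability, for a sector of aperture $\pi$ (equivalently, for the right half-plane), of the \emph{full} generating polynomial of the distribution $S\mapsto\det(L_{S,S})$ on all of $2^{[n]}$; and (b) the operations — conditioning on a fixed cardinality, and homogenization — developed earlier in the paper, which degrade sector-stability only gracefully and in particular take aperture $\pi$ down to aperture $\pi/2$ when one passes to a fixed-cardinality slice. The starting point for (a) is the matrix identity
\[ \sum_{S\subseteq[n]}\det(L_{S,S})\prod_{i\in S}z_i=\det(I+DL),\qquad D:=\diag(z_1,\dots,z_n), \]
together with its $2n$-variable homogenization $\sum_{S}\det(L_{S,S})\prod_{i\in S}z_i\prod_{i\notin S}y_i=\det\bigl(\diag(y_1,\dots,y_n)+DL\bigr)$ (both instances of $\det(\diag(u)+A)=\sum_S\det(A_{S,S})\prod_{i\notin S}u_i$), and the observation that these determinants are multi-affine in the $z_i$'s and in the $y_i$'s, so the polarization-type tools are available.

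For (a) I would argue by the numerical range. Suppose $(I+DL)v=0$ with $v\neq 0$ and $\mathrm{Re}(z_i)>0$ for every $i$; then $D$ is invertible, $Lv=-D^{-1}v$, and
\[ \tfrac12\,v^{*}(L+L^{\intercal})v=\mathrm{Re}(v^{*}Lv)=-\sum_i|v_i|^2\,\mathrm{Re}(1/z_i)=-\sum_i|v_i|^2\,\frac{\mathrm{Re}(z_i)}{|z_i|^2}<0, \]
which contradicts $L+L^{\intercal}\succeq 0$; hence $\det(I+DL)\neq 0$ on the right half-plane. The same computation applied to $\det(\diag(y)+DL)=\bigl(\prod_iz_i\bigr)\det\bigl(\diag(y_i/z_i)+L\bigr)$ shows the $2n$-variable homogeneous polynomial is nonzero whenever $\mathrm{Re}(y_i/z_i)>0$ for all $i$ — in particular whenever all $z_i$ and all $y_i$ lie in a common sector of aperture $\pi/2$. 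This is the exact counterpart, for nonsymmetric kernels with $L+L^{\intercal}\succeq 0$, of the half-plane root-freeness of monomer-dimer polynomials used for matchings, and $L+L^{\intercal}\succeq 0$ is precisely the hypothesis that makes the real-part estimate go through.

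It then remains to descend from the full generating polynomial to its degree-$k$ component $g(z)=\sum_{|S|=k}\det(L_{S,S})\prod_{i\in S}z_i$, i.e.\ to condition on cardinality $k$. Since $\mu(S)\propto\det(L_{S,S})$ is a genuine density ($\sum_{|S|=k}\det(L_{S,S})>0$; individually $\det(L_{S,S})\ge\det\bigl(\tfrac12(L+L^{\intercal})_{S,S}\bigr)\ge 0$, because a principal submatrix of $L$ again has positive semidefinite symmetrization), $g$ is not identically zero, and the conditioning-on-cardinality tool then outputs that $g$ is sector-stable for a sector of aperture $\pi/2$, the ``square root'' of the half-plane. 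I expect this descent to be the main obstacle: it is not a black-box coefficient extraction, because a univariate polynomial whose roots merely avoid a sector can still have vanishing coefficients (e.g.\ $t^2+1$ has no positive real root yet vanishing linear term), so the argument must exploit the geometry behind the half-plane stability rather than sector-avoidance alone. Concretely, for $z$ whose arguments are all equal to some $\phi\in(-\pi/4,\pi/4)$ one checks, via the same numerical-range estimate, that $y\mapsto\det(\diag(y)+DL)$ is stable with respect to the rotated half-plane $e^{i\phi}\{\mathrm{Re}>0\}$; its homogeneous components in $y$ are then sector-stable as well (this is the content of the conditioning-on-cardinality tool), and evaluating the degree-$(n-k)$ component at $y=\mathbf 1$ — which lies in that half-plane since $|\phi|<\pi/2$ — gives $g(z)\neq 0$ after stripping a unimodular factor; the paper's general machinery is what upgrades this to all $z$ in the aperture-$\pi/2$ sector and bookkeeps the aperture arithmetic. (For $k\le 2$ no machinery is needed at all: with $z_i$ in a sector of aperture $\pi/2$ every monomial $\prod_{i\in S}z_i$ with $|S|\le 2$ has positive real part and every $\det(L_{S,S})\ge 0$, so the sum cannot vanish unless all of its coefficients do.)
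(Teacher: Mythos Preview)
Your two-step plan---establish $\Gamma_1$-stability (Hurwitz-stability) of the full polynomial $\sum_{S\subseteq[n]}\det(L_{S,S})\prod_{i\in S}z_i$, then pass to its degree-$k$ slice via the cardinality-conditioning tool (the paper's Corollary~\ref{cor:hurwitzConstrained}) to land in $\Gamma_{1/2}$---is precisely the paper's route. The one genuine difference is in step~(a). The paper (Lemma~\ref{lemma:p0MatrixHurwitz}) splits $L=\tfrac12(L+L^\intercal)+\tfrac12(L-L^\intercal)$ and invokes the Borcea--Br\"and\'en result (Theorem~\ref{thm:detMatrix}) that $\det(\sum_j z_jA_j+B)$ is real-stable for PSD $A_j$ and Hermitian $B$, applied with $A_j=\diag(e_j)$, $A_{n+1}=\tfrac12(L+L^\intercal)$, $B=\tfrac{i}{2}(L-L^\intercal)$, followed by a $\pi/2$ rotation from the upper to the right half-plane. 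Your numerical-range argument---write the polynomial as $\det(I+\diag(z)L)$ and note that a nontrivial kernel vector $v$ would force $\Re(v^*Lv)=-\sum_i|v_i|^2\Re(z_i)/|z_i|^2<0$, contradicting $2\Re(v^*Lv)=v^*(L+L^\intercal)v\ge 0$---is more elementary and self-contained, bypassing the need to quote an external stability theorem. Both are valid and yield the same input to step~(b). Two minor remarks: the $2n$-variable homogenization you discuss is extraneous to this particular lemma (only the degree slice is used), and you need not argue $g\not\equiv 0$, since in the paper's convention the identically zero polynomial is declared stable---and indeed $g$ can vanish identically, e.g.\ when $\mathrm{rank}\,L<k$.
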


\subsection{Application: Partition-Constrained Strongly Rayleigh Distributions}\label{sec:intro-partition}

Suppose that $\mu:\binom{[n]}{k}\to\R_{\geq 0}$ is a density where $g_\mu$ is stable with respect to a half plane in $\C$, i.e., stable w.r.t.\ the sector $\set{z\in \C\given \Re(z)>0}$. Distributions with this property are called \emph{strongly Rayleigh}, and they have been widely studied in the literature \cite[see, e.g.,][]{BBL09}. Strongly Rayleigh distributions include determinantal point processes, certain classes of matroids, results of the symmetric exchange process, and more \cite[see, e.g.,][]{BBL09}. Motivated by the important problems of computing mixed discriminants, and counting intersections of matroids, several works \cite{AO17,SV17,EDKSV16,KD16} have studied the problem of sampling from such $\mu$ subject to a \emph{partition constraint}. That is, given a partition $T_1\cup T_2\cup \cdots \cup T_s=[n]$, and numbers $c_1,\dots,c_s\in \Z_{\geq 0}$, the question is to sample $S\sim \mu$ conditioned on the constraint
\[ \forall i: \card{S\cap T_i}=c_i. \]
If we allow arbitrarily large $s$, this problem becomes as hard as (approximately) computing the mixed discriminant for which no FPRAS is known. If one defines the same problem for distributions $\mu$ that have a log-concave generating polynomial, then partition-constrained sampling is as hard as sampling from the intersection of \emph{two matroids}; this is again an important open problem, which remains unsolved.

Given the importance of partition-constrained distributions mentioned above, a natural question is, are there assumptions on the partitions that allow for an FPRAS or approximate sampling? \textcite{EDKSV16} obtained such a positive result when the number of partitions $s$ is a constant and \emph{importantly} when $g_\mu$ can be computed exactly (as is the case for determinantal distributions). They relied on polynomial interpolation to achieve this result. However, for many strongly Rayleigh distributions $\mu$, we can only approximately compute $g_\mu$.

As a further application of our results, we show how to sample from partition-constrained $\mu$, as long the number of partitions is $O(1)$; our algorithm only requires having access to an oracle for $\mu$, as opposed to $g_\mu$. We do this by showing that the local random walks on the partition-constrained $\mu$ still mix rapidly, by relying on \cref{thm:relaxation-time} and showing sector-stability for the conditioned distribution.
\begin{lemma} \label{lem:partition-is-stable}
	Suppose that $\mu$ has a sector-stable polynomial with respect to the sector $\set{z\in \C\given \Re(z)>0}$. Then the partition-constrained distribution for $O(1)$-many partitions is sector-stable w.r.t.\ a sector of $\Omega(1)$ aperture.	
\end{lemma}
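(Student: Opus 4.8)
The plan is to exhibit the generating polynomial of the partition-constrained distribution as the outcome of a short sequence of coefficient extractions applied to $g_\mu$, and to control the aperture by running each extraction through the tool that fixing the degree of a single variable degrades sector-stability by at most a bounded amount.

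Concretely, write $[n]=T_1\sqcup\dots\sqcup T_s$ with targets $c_1,\dots,c_s$, and let $j(i)$ denote the block containing $i$. The generating polynomial of the partition-constrained distribution is exactly the multigraded piece of $g_\mu$ of degree $c_j$ in the variables $\{z_i : i\in T_j\}$ for every $j$. To reach this piece one variable at a time, introduce fresh variables $w_1,\dots,w_s$ and form
\[
\tilde g(z_1,\dots,z_n,w_1,\dots,w_s) \;:=\; g_\mu\bigl((w_{j(i)}\,z_i)_{i\in[n]}\bigr),
\]
so that $g_{\mu'}(z) = [\,w_1^{c_1}\cdots w_s^{c_s}\,]\,\tilde g(z,w)$ (up to an irrelevant normalizing constant). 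The first claim is that $\tilde g$ is sector-stable for a sector of aperture $\pi/2$: if every $z_i$ and every $w_j$ lies in the sector of half-angle $\pi/4$, then $\arg(w_{j(i)}z_i)=\arg w_{j(i)}+\arg z_i\in(-\pi/2,\pi/2)$, so each argument fed to $g_\mu$ has positive real part, and sector-stability of $g_\mu$ with respect to $\{\Re z>0\}$ gives $\tilde g\neq 0$.

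Next, peel off $w_1,\dots,w_s$ one at a time. Extracting $[w_j^{c_j}]$ of a polynomial is precisely the operation of fixing the degree in the single variable $w_j$ (equivalently, conditioning on cardinality after homogenizing in $w_j$), and by our tool this sends a polynomial that is sector-stable for aperture $\alpha$ to one that is sector-stable for aperture $\Phi(\alpha)$, where $\Phi$ is a fixed function with $\Phi(\alpha)>0$ for all $\alpha>0$. Applying this $s$ times, starting from aperture $\pi/2$, shows $g_{\mu'}$ is sector-stable for aperture $\Phi^{\circ s}(\pi/2)$; since $s=O(1)$ this is $\Omega(1)$. (Using homogeneity of $g_\mu$ one may instead use only $s-1$ helper variables and read off the last block's degree for free, which does not change the asymptotics.) Combined with \cref{thm:relaxation-time}, this yields rapid mixing of the down-up walk on the partition-constrained distribution.

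The crux is the single-variable step invoked above — that fixing one variable's degree costs only a bounded amount of aperture. This is genuinely delicate rather than a formal closure property: a single coefficient of a half-plane-stable (equivalently, real stable up to rotation) polynomial need not be stable at all, so the proof must exploit the sector geometry quantitatively. Two further points require care: the tool must be formulated for polynomials with complex coefficients, since after fixing the $z_i$ to complex values in the sector the polynomial in $w_j$ has complex coefficients; and one should assume the partition constraint is feasible, so that $g_{\mu'}\not\equiv 0$ and the statement is non-vacuous.
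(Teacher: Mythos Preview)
Your high-level plan—introduce auxiliary variables $w_1,\dots,w_s$, observe that $\tilde g(z,w)=g_\mu((w_{j(i)}z_i)_i)$ is $\Gamma_{1/2}$-stable, and then peel off $[w_j^{c_j}]$ one at a time—is morally the right decomposition and is close in spirit to the paper's argument. The gap is in the ``tool'' you invoke: the claim that extracting a single variable's degree sends a $\Gamma_\alpha$-stable polynomial to a $\Gamma_{\Phi(\alpha)}$-stable one with $\Phi(\alpha)>0$ \emph{for every} $\alpha>0$. The paper only proves this for $\alpha=1$ (\cref{cor:hurwitzConstrained}), and the proof genuinely needs half-plane stability: it factors the univariate specialization through the elementary symmetric polynomials of its roots, which are nonvanishing only when the roots lie in an open half-plane. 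For $\alpha<1$ this breaks—e.g.\ the univariate polynomial $w^2+1$ is $\Gamma_{1/2}$-stable (roots at $\pm i$) but has a vanishing middle coefficient—so after your very first extraction you are at aperture $\leq\pi/2$ and cannot iterate.

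The paper's \cref{thm:partition} repairs exactly this by \emph{not} maintaining a single uniform aperture. It tracks a \emph{product} of sectors: variables in already-constrained blocks $T_1\cup\cdots\cup T_i$ are held to the small final sector $\Gamma_{1/2^k}$, while the remaining variables sit in a larger sector $\Gamma_{\beta_i}$ with $\beta_i=1-(2^i-1)/2^k$. In the induction step one forms a homogeneous bivariate polynomial $p(x,y)$ by scaling the next block by $x$ and all remaining blocks by $y$; the product-sector hypothesis gives $p$ is $(\overline{\Gamma_{\beta_i-\alpha}}\times\overline{\Gamma_{\beta_i-\beta_{i+1}}})$-stable, and the choice of parameters ensures the two apertures sum to exactly $1$, which is precisely the threshold needed in \cref{lem:hom} to rule out zero coefficients. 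In short, the crux you correctly flag as ``genuinely delicate'' is resolved not by a black-box degradation function $\Phi$, but by keeping the yet-unprocessed variables in a sector large enough that, at each step, the relevant bivariate reduction is still half-plane stable in the right sense.
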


As a corollary of the ability to approximately compute the partition function for $\mu$ subject on partition constraints, we show how to approximately compute mixed derivatives of real-stable polynomials $g_\mu$, where the number of \emph{distinct} derivatives is $O(1)$. Note that without this restriction of $O(1)$, this problem becomes as hard as computing mixed discriminants.

\begin{corollary}\label{cor:mixDerivative}
Let $g(z_1, \cdots, z_n)$ be a homogeneous real-stable polynomial with nonnegative coefficients.
Suppose we are given oracle access to coefficients of $g$, and we are also given a term with nonzero coefficient. Then there is an FPRAS that can approximately compute mixed derivatives of $g$ along positive directions, as long as the number of \emph{unique} directions is $O(1)$. That is given $v^1, \cdots, v^s, x \in \R_{\geq 0}^n$ with $s=O(1)$ and tuple $(c_1, \cdots, c_s) \in \Z_{\geq 0}^s$, we can efficiently approximate \[\eval{\partial_{v^1}^{c_1} \cdots \partial_{v^s}^{c_s} g}_{z = x}.\]
Here $\partial_v$ is simply the operator $v_1\partial_{z_1}+\dots+v_n\partial_{z_n}$.
\end{corollary}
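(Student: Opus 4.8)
The plan is to realize $\partial_{v^1}^{c_1}\cdots\partial_{v^s}^{c_s}g$ evaluated at $x$ as, up to an explicit constant, the partition function of a partition‑constrained strongly Rayleigh distribution on a polynomial‑size ground set, and then to invoke \cref{lem:partition-is-stable,thm:relaxation-time} together with the standard equivalence between approximate counting and approximate sampling \cite{JVV86}. Write $d=\deg g$; since $g$ is homogeneous, $d$ is the degree of the given nonzero monomial, so it is known. First I would \emph{polarize} $g$: for each $i$ introduce $d$ fresh variables $w_i^{(1)},\dots,w_i^{(d)}$ and replace $z_i^{\gamma_i}$ by $e_{\gamma_i}(w_i^{(1)},\dots,w_i^{(d)})/\binom{d}{\gamma_i}$. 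The result $\hat g$ is multiaffine, homogeneous of degree $d$, real‑stable (polarization preserves stability), has nonnegative coefficients, and each of its coefficients is a single coefficient of $g$ divided by known binomials.

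The key twist is to then \emph{color} the copies: substitute $w_i^{(l)}\mapsto x_i\,u_i^{(l)}[0]+\sum_{j=1}^{s}v^j_i\,u_i^{(l)}[j]$, introducing for each pair $(i,l)$ one ``uncolored'' variable $u_i^{(l)}[0]$ and $s$ ``directional'' variables $u_i^{(l)}[1],\dots,u_i^{(l)}[s]$. Each $w_i^{(l)}$ is replaced by a nonnegative linear combination of fresh variables, so real‑stability is again preserved; and since $\hat g$ is multiaffine, the resulting polynomial $F$ is multiaffine in the $u$‑variables and its support uses at most one color per $(i,l)$. Being multiaffine, homogeneous, real‑stable and with nonnegative coefficients, $F$ is (up to scaling) the generating polynomial of a strongly Rayleigh distribution $\nu$ on the size‑$d$ subsets of $V:=\{u_i^{(l)}[j]\}$, where $|V|=d\,n\,(s+1)=\poly(n)$. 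Now impose the partition constraint ``exactly $c_j$ variables of color $j$ are chosen, for $j=1,\dots,s$'' (the count of color‑$0$ variables is then forced to equal $d-\sum_j c_j$); this has $s=O(1)$ parts, so by \cref{lem:partition-is-stable} the conditioned density $\mu$ has a sector‑stable generating polynomial of aperture $\Omega(1)$. A short combinatorial identity --- expand $\partial_{v^j}=\sum_i v^j_i\partial_{z_i}$, group the terms of $\partial_{v^1}^{c_1}\cdots\partial_{v^s}^{c_s}g$ by the multiset of coordinates the derivatives act on, and match the multinomial counts of color‑profiles against the binomial normalizers of the polarization --- then yields $\sum_S\mu(S)=\bigl(\prod_j c_j!\bigr)^{-1}\eval{\partial_{v^1}^{c_1}\cdots\partial_{v^s}^{c_s}g}_{z=x}$.

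It thus suffices to estimate $\sum_S\mu(S)$. By \cref{thm:relaxation-time} the $d\leftrightarrow(d-O(1))$ down‑up walk on $\mu$ has relaxation time $d^{O(1)}$, and each step is polynomial‑time: there are at most $|V|^{O(1)}$ candidate supersets, and the $\mu$‑weight of each is --- by the coloring construction --- a single coefficient of $g$ (one oracle call) times elementary powers of the entries of $x$ and the $v^j$. Starting from a feasible set (which, by the corollary to \cref{thm:relaxation-time}, suffices for polynomial mixing) and applying the standard reduction from approximate counting to approximate sampling \cite{JVV86} --- telescoping a path of external fields, every intermediate distribution staying sector‑stable and hence rapidly mixing --- gives an FPRAS for $\sum_S\mu(S)$, and multiplying by $\prod_j c_j!$ gives the claimed FPRAS. (If the conditioned support is empty then $\sum_S\mu(S)=0$, and so is the mixed derivative at $x$; otherwise a feasible starting set exists and can be found by decrementing the given monomial of $g$ in the directions $v^j$, or, if that fails, by a short search using that the support of a homogeneous real‑stable polynomial is the set of bases of a matroid.)

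The step I expect to be the real obstacle --- and the reason for the coloring refinement rather than the more obvious reduction ``substitute $z_i\mapsto z_i+\sum_j y_j v^j_i$, polarize the $y_j$'s, and condition on the $y_j$‑degrees'' --- is keeping each Glauber step polynomial‑time. In that naive reduction a single coefficient of the relevant generating polynomial is a sum of roughly $n^{\sum_j c_j}$ coefficients of $g$, not obtainable from $O(1)$ oracle calls; the coloring records inside the ground set which coordinate each derivative acts on, so every weight query collapses to one coefficient lookup. The other ingredients --- that polarization and substitution by nonnegative linear forms preserve real‑stability, and the feasibility and counting‑from‑sampling bookkeeping above --- are routine.
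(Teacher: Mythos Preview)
Your proposal is correct and follows essentially the same route as the paper (stated and proved there as \cref{lem:mixDerivative}): polarize to a multiaffine polynomial, split each variable into ``colored'' copies so that the directional derivatives become a partition constraint with $O(1)$ parts, invoke the partition sector-stability result (\cref{lem:partition-is-stable}/\cref{thm:partition}), and then run the multi-site down-up walk together with sampling-to-counting. Your single substitution $w_i^{(l)}\mapsto x_i\,u_i^{(l)}[0]+\sum_j v^j_i\,u_i^{(l)}[j]$ compactly merges what the paper does in two separate moves --- first absorbing the $v^j_i$ into a scaling of the variables, then the linear change $z_i\mapsto\sum_j z_{i,j}$ --- but the resulting partition-constrained strongly Rayleigh distribution and the proof that each Glauber step needs only one coefficient oracle call are the same in both arguments.
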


\subsection{Techniques and Related Work: Pairwise Correlations and Spectral Independence}\label{sec:spectral-independence}

In order to prove \cref{thm:relaxation-time}, we build on a recent line of work leveraging high-dimensional expanders for sampling problems \cite{ALOV19,AL20,ALO20,CLV20a,FGYZ20,CGSV20,CLV20b}. Specifically, we use the framework dubbed spectral independence by \textcite{ALO20}. In this framework, one views a target distribution $\mu$ as a weighted hypergraph or simplicial complex. Establishing a certain notion of high-dimensional expansion would then imply fast-mixing of natural random walks that converge to $\mu$ \cite{DK17,KM16,LLP17,KO18,AL20}. Reinterpreting the notion of high-dimensional expansion needed for rapid mixing, \textcite{ALO20} showed how properties of \emph{pairwise correlations} in the distribution $\mu$, and certain distributions derived from $\mu$, can imply rapid mixing of natural local random walks, see \cref{def:local-walk}.

The spectral independence framework can be applied to the problem of sampling from a distribution on size $k$ subsets of a ground set of $n$ elements, given up to a global normalizing factor by a function $\mu$:
 \[\mu:\binom{[n]}{k}\to\R_{\geq 0}.\]
In many cases the domain of $\mu$ can be adapted to be of the form $\binom{[n]}{k}$ \cite[see, e.g.,][]{ALO20}. For concreteness, let us look at the distribution of monomers in a monomer-dimer system on the graph $G=(V, E)$. Not all monomer sets have the same size, but we can view each set $S\subseteq V$ as a subset of size $\card{V}$ chosen from $V\times \set{0, 1}$:
\[ S\mapsto \set{(v, 0)\given v\notin S}\cup \set{(v, 1)\given v\in S}. \]
This gives us a distribution $\mu:\binom{V\times\set{0,1}}{\card{V}}\to \R_{\geq 0}$. Note that in the case of $k$-matchings, the monomer set is already of a fixed size, and there is no need for this transformation.

\Textcite{ALO20}, based on earlier work of \textcite{AL20}, showed that rapid mixing of natural local random walks converging to $\mu$ can be established as long as pairwise correlations of $\mu$ (and certain distributions derived from $\mu$) are \emph{spectrally bounded}. More precisely, consider the correlation matrix defined below.
\begin{definition}[Correlation Matrix]\label{def:correlation-matrix}
	For a distribution $\mu$ over subsets $S$ of a ground set $[n]$, define the correlation matrix $\Psi\in \R^{n\times n}$ as the matrix having entries
	\[ \Psi_{i, j}:=\P_{S\sim \mu}{j\in S\given i\in S}-\P_{S\sim \mu}{j\in S}. \]
\end{definition}
The entries of the matrix $\Psi$ measure pairwise correlations or in other words deviations from pairwise independence.\footnote{We remark that in some works using the spectral independence framework, the matrix $\Psi$ is defined slightly differently, with entries of the form $\P_{S\sim \mu}{j\in S\given i\in S}-\P_{S\sim \mu}{j\in S\given i\notin S}$, but these matrices are directly related, and we believe it is more natural to consider the definition presented here.} The key behind the spectral independence framework is to show that the maximum eigenvalue of $\Psi$ is $O(1)$. Note that $\Psi$ is always similar to a symmetric matrix and therefore has real eigenvalues \cite{ALO20}. More precisely, one needs to show this not just for the distribution $\mu$, but also conditioned versions of it. We remark that in earlier work, a variant of the correlation matrix has appeared where the entries are instead given by $\P{j\in S\given i\in S}-\P{j\in S\given i\notin S}$, but these two variants are intimately connected and for homogeneous distributions one can go from eigenvalue bounds of one to the other.
\begin{definition}[Conditioned Distribution]\label{def:conditioned}
	For a distribution $\mu$ defined over subsets of a ground set $[n]$ and $T\subseteq [n]$, define $\mu_T$ to be the distribution of $S\sim \mu$ conditioned on the event $T\subseteq S$.
\end{definition}
One has to show that the correlation matrix has bounded eigenvalues for every $T$ where $\mu_T$ is well-defined. The main challenge in all applications of this framework is bounding these eigenvalues. Roughly speaking, prior work has managed to use three categories of techniques to establish eigenvalue bounds, discussed below:
\paragraph{Trickle-Down.} \textcite{Opp18} showed that an eigenvalue bound on $\Psi$ for $\mu_{\set{1}},\mu_{\set{2}},\dots ,\mu_{\set{n}}$ also implies an eigenvalue bound for $\Psi$ for the distribution $\mu$, under some mild additional conditions. This enables an inductive approach to bounding the eigenvalues of $\Psi$, starting from $\mu_T$ for large sets $T$ (i.e., of size $k-2$). The main challenge here has been that in most cases, the eigenvalue bound \emph{deteriorates}, and the induction cannot be completed. A notable exception to this deterioration of the bounds are distributions related to matroids \cite{ALOV19}, but as was observed by \textcite{AL20}, for almost any distribution beyond matroids and matroid-related ones, one has to employ additional tricks to make this induction useful for sampling.
\paragraph{Negative Correlation.} Some distributions have negative entries in $\Psi$, everywhere except on the diagonal; this property is known as negative correlation \cite[see, e.g.,][]{BBL09}. Most notably, the uniform distribution on spanning trees, balanced matroids, and determinantal point processes, all have negative correlation \cite{FM92,BBL09}. When negative correlations exist, the $\ell_1$ norm of rows of $\Psi$ and consequently its maximum eigenvalue can be bounded by $O(1)$ \cite{ALO20}. For non-homogeneous distributions that satisfy negative correlation, related statements hold, as was shown recently by \textcite{ES20}.

\paragraph{Correlation Decay.} When $\mu$ is a distribution defined on an underlying graph, e.g., spin systems which are distributions on random assignments $\sigma:V\to [q]$ of $q$ spins to vertices of a graph, one can define a class of properties under the umbrella term ``correlation decay''. Informally, these properties imply that for distant vertices $u, v$, the values of $\sigma(u), \sigma(v)$ are almost independent of each other. Naturally this is very useful for bounding the entries and consequently the eigenvalues of the matrix $\Psi$. While correlation decay properties were known to yield efficient sampling/counting algorithms, when combined with the spectral independence framework, they resulted in algorithms with truly polynomial running times (compared to prior results which often needed extra assumptions such as boundedness of the degrees in the graph) for several problems like the hardcore model \cite{ALO20}, two-spin systems \cite{CLV20a}, and random colorings \cite{CGSV20, FGT19}.

\begin{figure}
	\begin{Columns}<3>
		\Column
		\Tikz*{
			\begin{scope}[every node/.style={
			circle,
			line width=1,
			draw=Black,
			fill=Gray,
			inner sep=3}]
				\node (A) at (0, 2) {};
				\node (B) at (0, -2) {};
			\end{scope}
			\draw[line width=1, Black] (A) -- (B);
			\node (C) at (1.2, 0) {correlated};
			\draw[-stealth, dashed, line width=1] (C) to[bend right] (A);
			\draw[-stealth, dashed, line width=1] (C) to[bend left] (B);
		}
		\Column
		\Tikz*{
			\node (C) at (0, 0) {$\vdots$};
			\begin{scope}[every node/.style={
			circle,
			line width=1,
			draw=Black,
			fill=Gray,
			inner sep=3}]
				\node[label={[left]$1$}] (A) at (0, 2) {};	
				\node[label={[left]$0$}] (B) at (0, 1) {};
				\node[label={[left]$0$}] (D) at (0, -1) {};
				\node[label={[left]$1$}] (E) at (0, -2) {};
			\end{scope}
			\foreach \u/\v in {A/B, B/C, C/D, D/E}
				\draw[line width=1, Black] (\u) -- (\v);
			\node (T) at (1.2, 0) {correlated};
			\draw[-stealth, dashed, line width=1] (T) to[bend right] (A);
			\draw[-stealth, dashed, line width=1] (T) to[bend left] (E);
		}
		\Column
		\Tikz*{
			\begin{scope}[every node/.style={
			circle,
			line width=1,
			draw=Black,
			fill=Gray,
			inner sep=3}]
				\node[fill=Navy] (A) at (-0.5, 0) {};
				\node[fill=Orange] (B) at (1, 2) {};
				\node[fill=Orange] (C) at (1, 1) {};
				\node[fill=Orange] (D) at (1, 0) {};
				\node (E) at (1, -1) {};
				\node (F) at (1, -2) {};
			\end{scope}
			\foreach \u in {B, C, D}
				\draw[stealth-stealth, dashed, line width=1] (A) -- (\u);
			\draw[decorate, decoration={brace, mirror, amplitude=3pt}, line width=1] (1.2, -0.1) -- (1.2, 2.1) node[pos=0.5, rotate=-90, above=5pt] {not many};
		}
	\end{Columns}
	\begin{Columns}[Top]<3>
		\Column
			\caption{The two vertices are either both monomers or neither are. Therefore they are \emph{positively} correlated.}\label{fig:positive-correlation}
		\Column
			\caption{Only two matchings, one with odd edges and one with even edges, appear in the monomer-dimer system. The endpoints have long-range correlation.}\label{fig:long-range-correlation}
		\Column
			\caption{Informally, the number of vertices strongly correlated with any given vertex is bounded.}\label{fig:bounded-correlation}
	\end{Columns}
\end{figure}

Unfortunately, in the case of the monomer distribution in monomer-dimer systems, none of these methods appear to work.\footnote{We remark that for the special case of unweighted monomer-dimer systems, a form of correlation decay does exist \cite{BGKNT07}.} As demonstrated in \cref{fig:positive-correlation,fig:long-range-correlation}, we can have both positive and long-range correlations. Nevertheless, we show that the correlation matrix is still bounded, see \cref{fig:bounded-correlation}.

\begin{theorem}\label{thm:sector-implies-bounded}
	Suppose that $\mu:\binom{[n]}{k}\to\R_{\geq 0}$ is a density whose generating polynomial is sector-stable w.r.t.\ a sector of aperture $\Omega(1)$. Then the $\l_1$ norm of any row in the correlation matrix $\Psi$ is bounded by $O(1)$.
	\[ \forall i:\;\sum_{j} \abs*{\P_{S\sim \mu}{j\in S\given i\in S}-\P_{S\sim \mu}{j\in S}}\leq O(1). \]
\end{theorem}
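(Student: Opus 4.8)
Fix a coordinate $i$ with $p_i := \P_{S\sim\mu}{i\in S}>0$ (rows with $p_i=0$ are vacuous). The diagonal term contributes $\Psi_{ii}=1-p_i\leq 1$, so it suffices to bound $\sum_{j\neq i}\abs{\Psi_{ij}}$. Writing $\Sigma_{ij}:=\P_{S\sim\mu}{i,j\in S}-p_i\P_{S\sim\mu}{j\in S}$ for the covariance of the indicators of $i$ and $j$, one has $\Psi_{ij}=\Sigma_{ij}/p_i$ for $j\neq i$; and since the $\l_1$ norm of a vector equals the maximum of its signed sums over sign vectors, it is enough to prove $\abs{\sum_{j\neq i}v_j\Psi_{ij}}=O(1)$ for an arbitrary fixed $v\in\{+1,-1\}^{[n]\setminus\{i\}}$, the implied constant depending only on the aperture. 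Consider the bivariate restriction of the generating polynomial, $h(x,w):=g_\mu(z)$ with $z_j=1+xv_j$ for $j\neq i$ and $z_i=w$. Since $g_\mu$ is multilinear, $h$ is affine in $w$: $h(x,w)=A(x)+(w-1)B(x)$ with $A(x)=h(x,1)$, $B(x)=\partial_w h$, $A(0)=g_\mu(\mathbf 1)$ and $B(0)=\partial_{z_i}g_\mu(\mathbf 1)=p_i\,g_\mu(\mathbf 1)$. A short computation with $\log h$ (using affineness in $w$; equivalently, $B(x)/A(x)$ is the marginal probability of $i$ under the external field multiplying each $z_j$ by $1+xv_j$, and the left-hand side below is a logarithmic derivative of that marginal) yields the identity
\[ \sum_{j\neq i}v_j\Psi_{ij}\;=\;\eval{\partial_x\log\frac{B(x)}{A(x)}}_{x=0}\;=\;p_i\,\Phi'(0),\qquad \Phi(x):=1-\frac{A(x)}{B(x)}, \]
where $\Phi(x)$ is exactly the root in $w$ of the degree-$\leq1$ polynomial $h(x,\cdot)$, and $\Phi(0)=1-1/p_i\in(-\infty,0]$.

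Now sector-stability enters. Let $\Gamma$ be the root-free sector and $\theta=\Omega(1)$ its half-aperture (opening angle $2\theta$). For $x$ in the lens $\Omega:=\bigcap_{j\neq i}\{x\in\C: 1+xv_j\in\Gamma\}$, every coordinate $z_j=1+xv_j$ lies in $\Gamma$, so sector-stability of $g_\mu$ gives $h(x,w)\neq0$ for all $w\in\Gamma$; moreover $\Omega$ is open and contains a disk $D(0,\rho)$ with $\rho=\Omega(1)$. Hence for $x\in\Omega$ the root $\Phi(x)$ avoids $\Gamma$. If $\Phi$ is constant then $\Phi'(0)=0$ and we are done; otherwise the open mapping theorem forces the open set $\Phi(\Omega)$ to avoid the closure $\overline\Gamma$, and because $\Gamma$ is unbounded this also precludes poles — so $B$ is zero-free on $\Omega$ and $\Phi$ maps $\Omega$ into the simply connected domain $\C\setminus\overline\Gamma$.

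Finally, combine the Schwarz lemma with Koebe's distortion estimate: restricting $\Phi$ to $D(0,\rho)$ and composing with a Riemann map of $\C\setminus\overline\Gamma$ sending $0$ to $\Phi(0)$ gives $\abs{\Phi'(0)}\leq\rho^{-1}\cdot(\text{conformal radius of }\C\setminus\overline\Gamma\text{ at }\Phi(0))\leq 4\rho^{-1}\cdot\mathrm{dist}(\Phi(0),\partial\Gamma)$. Since $\Phi(0)=1-1/p_i$ lies on the ray directly opposite the opening of $\Gamma$, its distance to the boundary rays of $\Gamma$ is at most $\abs{1-1/p_i}=1/p_i-1$. Therefore $\abs{\sum_{j\neq i}v_j\Psi_{ij}}=p_i\abs{\Phi'(0)}\leq 4(1-p_i)/\rho=O(1)$, and adding back the diagonal term proves the claim.

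The step I expect to be the crux is the appearance of the factor $p_i$. A crude estimate would bound $\sum_{j\neq i}\abs{\Sigma_{ij}}$ by $O(1)$, whereas what is actually needed is $\sum_{j\neq i}\abs{\Sigma_{ij}}=p_i\sum_{j\neq i}\abs{\Psi_{ij}}=O(p_i)$, which is far stronger when $p_i$ is small. The geometry supplies exactly this sharpening: the root $\Phi(0)$ sits on the axis opposite the sector, so it is only $\Theta(1/p_i)$ away from the forbidden region, and Koebe converts this into the matching $\Theta(1/p_i)$ bound on $\abs{\Phi'(0)}$. A secondary subtlety, handled above, is that for apertures below $\pi$ one cannot invoke ``derivatives preserve sector-stability'' (the root-free region $\C\setminus\Gamma$ is non-convex, so Gauss--Lucas does not apply); passing to the root function $\Phi$ and exploiting that $\infty$ is a limit point of the unbounded sector sidesteps this entirely.
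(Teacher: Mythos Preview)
Your proof is correct and takes a genuinely different route from the paper. The paper uses a \emph{multiplicative} perturbation $z_j\mapsto y^{\pm 1}$ (equivalently $z_j\mapsto e^{\pm x}$), forms the log-ratio $\phi(x)=\log\bigl(g_\mu|_{z_i=0}\bigr)-\log\bigl(\partial_{z_i}g_\mu\bigr)$ along that curve, shows $\phi$ maps the strip $\{|\Im x|<\alpha\pi/2\}$ into the strip $\{|\Im x|<\pi-\alpha\pi/2\}$, and applies Schwarz via explicit strip-to-disk conformal maps; this yields the sharp constant $2/\alpha$ (and the paper first bounds the influence matrix $\P{j\mid i}-\P{j\mid\bar i}$, then derives the correlation bound from it). You instead use an \emph{additive} perturbation $z_j\mapsto 1\pm x$, track the \emph{root} $\Phi(x)$ of the affine-in-$z_i$ polynomial, and bound $\Phi'(0)$ by Schwarz plus the Koebe quarter theorem applied to the simply connected region $\C\setminus\overline\Gamma$. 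Your approach is more geometric and handles the correlation matrix directly: the potentially dangerous $1/p_i$ blowup is absorbed because $\Phi(0)=1-1/p_i$ sits at distance $\le 1/p_i-1$ from $\partial\Gamma$, and Koebe converts this into exactly the right cancellation. The price is a looser constant, roughly $4/\sin(\alpha\pi/2)$ versus the paper's $2/\alpha$, but for the $O(1)$ statement this is immaterial. Your handling of possible zeros of $B$ (poles of $\Phi$) via the unboundedness of $\Gamma$ is correct, though it could be stated a touch more explicitly: near a pole, $\Phi$ would take all sufficiently large values and hence hit $\Gamma$.
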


Note that a bound on the $\l_1$ norm of rows, is also a bound on the maximum eigenvalue \cite[see, e.g.,][]{ALO20}. Combining this with sector-stability of various distributions, e.g., the monomer distribution, results in specific bounds on the correlation matrix.

\begin{corollary}\label{cor:monomer-bounded-correlations}
	Let $\mu$ be the distribution of monomers in uniformly random $k$-matchings or more generally monomer-dimer systems with arbitrary weights (possibly restricted to $k$-matchings). Then the $\ell_1$ norm of rows of the correlation matrix $\Psi$ are bounded by a universal constant:
	\[ \forall i:\;\sum_{j} \abs*{\P_{S\sim \mu}{j\in S\given i\in S}-\P_{S\sim \mu}{j\in S}}\leq O(1). \]
\end{corollary}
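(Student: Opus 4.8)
The plan is to obtain this corollary by feeding the sector-stability statements of \cref{lem:matching-is-sector-stable} into \cref{thm:sector-implies-bounded}; the only real work is bookkeeping around how monomer sets are encoded as subsets of a fixed-size ground set.

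First I would handle the case of a fixed matching size, which covers uniformly random $k$-matchings (take $w\equiv\lambda\equiv 1$) as well as arbitrarily weighted monomer-dimer systems restricted to $k$-matchings. Here every matching $M$ being counted has a monomer set of the same size $n-2k$, so the monomer distribution $\mu$ is already a density on $\binom{[n]}{n-2k}$, and since the monomer set of $M$ determines the product $\prod_{v\not\sim M}z_v$, the generating polynomial of $\mu$ is exactly
\[ g_\mu(z_1,\dots,z_n)=\sum_{M\text{ matching of size }k}\weight(M)\prod_{v\not\sim M}z_v. \]
By \cref{lem:matching-is-sector-stable} this polynomial is sector-stable for a sector of aperture $\pi/2=\Omega(1)$, so \cref{thm:sector-implies-bounded} immediately gives the claimed $O(1)$ bound on the $\ell_1$ norm of every row of the correlation matrix $\Psi$.

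Next I would treat the general (non-size-constrained) monomer-dimer system, where monomer sets have varying cardinalities. Following the homogenization used in \cref{sec:spectral-independence}, I identify a monomer set $S\subseteq V$ with the size-$\card{V}$ set $\set{(v,0)\given v\notin S}\cup\set{(v,1)\given v\in S}\subseteq V\times\set{0,1}$. The generating polynomial of the resulting density on $\binom{V\times\set{0,1}}{\card{V}}$ is the second polynomial in \cref{lem:matching-is-sector-stable},
\[ g(z_1,\dots,z_n,z'_1,\dots,z'_n)=\sum_{M\text{ matching}}\weight(M)\prod_{v\not\sim M}z_v\prod_{v\sim M}z'_v, \]
which is sector-stable with aperture $\pi/2$, so by \cref{thm:sector-implies-bounded} every row of the correlation matrix $\Psi^{\text{hom}}$ of this homogenized distribution has $\ell_1$ norm $O(1)$. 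To transfer this back to the monomer distribution $\mu$, I use that for each vertex $v$ exactly one of $(v,0),(v,1)$ appears in the homogenized sample, so the event ``$(v,1)\in S$'' coincides with ``$v$ is a monomer''; hence $\Psi^{\text{hom}}_{(u,1),(v,1)}=\P_{S\sim\mu}{v\in S\given u\in S}-\P_{S\sim\mu}{v\in S}$ is precisely the $(u,v)$ entry of the monomer correlation matrix. Summing only over the coordinates of the form $(v,1)$, which form a subset of the full $(u,1)$ row,
\[ \sum_{v}\abs*{\P_{S\sim\mu}{v\in S\given u\in S}-\P_{S\sim\mu}{v\in S}}=\sum_{v}\abs*{\Psi^{\text{hom}}_{(u,1),(v,1)}}\leq\sum_{j\in V\times\set{0,1}}\abs*{\Psi^{\text{hom}}_{(u,1),j}}=O(1). \]

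I do not expect a genuine obstacle here: \cref{lem:matching-is-sector-stable} and \cref{thm:sector-implies-bounded} do all the heavy lifting, and the corollary merely assembles them. The one point requiring a little care is the homogenization step — specifically the observation that correlations among the ``$(v,1)$'' coordinates of the homogenized distribution are literally the monomer correlations, so discarding the ``$(v,0)$'' coordinates can only shrink the $\ell_1$ row sum, not enlarge it — and the routine check that the generating polynomial of the monomer distribution is exactly the polynomial appearing in \cref{lem:matching-is-sector-stable}.
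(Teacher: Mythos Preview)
Your proposal is correct and follows the paper's approach exactly: the paper states this as an immediate corollary of combining \cref{thm:sector-implies-bounded} with the sector-stability of \cref{lem:matching-is-sector-stable}, and your bookkeeping around the homogenization to $V\times\{0,1\}$ matches the encoding the paper spells out in \cref{sec:spectral-independence}. One small remark: since \cref{thm:sectorStableRowNormBound} (the formal version of \cref{thm:sector-implies-bounded}) actually applies to arbitrary $\mu:2^{[n]}\to\R_{\geq 0}$, in the unrestricted monomer-dimer case you could alternatively apply it directly to the Heilmann--Lieb polynomial (which is $\Gamma_1$-stable by \cref{thm:monomerDimerPoly}) and skip the homogenization step, even obtaining a slightly better constant; but your route via the homogenized $\Gamma_{1/2}$-stable polynomial is equally valid and is the one the paper's exposition sets up.
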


Our main technical contribution is introducing a new technique for establishing spectral independence based on the roots of the partition function in the complex plane.

\subsection{Techniques and Related Work: Sector-Stability and Fractional Log-Concavity}
\label{sec:polynomial-properties}

The study of roots of polynomials associated with distributions has a very long history, most notably in statistical physics. In statistical physics, having roots near the positive real axis is recognized as an indicator of phase transition. This is because roots indicate singularity of $\log g_\mu$, and many physical observables are related to $\log g_\mu$ and its derivatives, which can rapidly change near singularities \cite[see, e.g.,][]{YL52}. For monomer-dimer systems, \textcite{HL72} established a crucial property for roots of the polynomial defined below:
\[ \sum_{M\text{ matching}}\weight(M)\prod_{v\not\sim M}z_v. \]
Here for each matching $M$, we multiply its weight by the variables $z_v$, for $v$ ranging over the monomers. \textcite{HL72} formally showed that if we plug in $z_1,\dots, z_n\in \C$ such that $\Re(z_1),\dots,\Re(z_n)>0$, then the above expression will not result in zero. This is the crucial property that \cref{lem:matching-is-sector-stable} and consequently \cref{thm:k-matchings,thm:monomer-dimer} rely on. This property is also known as Hurwitz-stability \cite[see, e.g.,][]{BB09}.

Note that the polynomial defined by \textcite{HL72} is not homogeneous, i.e., it does not correspond to a distribution on $\binom{[n]}{k}$. Unfortunately, homogenization does not preserve Hurwitz-stability; similarly we do not get Hurwitz-stability if we only include matchings $M$ of a particular size. We establish the weaker, but more robust, notion of sector-stability for these polynomials. Instead, we show that monomer distributions, when homogenized or conditioned on size, cannot have roots in a wide enough sector in \cref{lem:matching-is-sector-stable}.

A special case of sector-stability, when the sector is the entire right-half-plane, is equivalent to Hurwitz-stability. For \emph{homogeneous} polynomials, Hurwitz-stability is the same as another widely studied property called real stability, or more generally, the so-called half-plane property \cite{BBL09}. Under this special notion of sector-stability, the distribution $\mu$ is known to exhibit negative correlations \cite{BBL09}, and rapid mixing of local random walks for $\mu$ had already been established \cite{FM92,AOR16}. Outside of this special case, negative correlation no longer holds. But we show that correlations are still bounded in \cref{thm:sector-implies-bounded}.

As mentioned before, real-stability, a special case of sector-stability for homogeneous polynomials, is a well-studied property of the generating polynomial $g_\mu$ that already implied efficient sampling or counting algorithms for $\mu$ \cite[see, e.g.,][]{AOR16}. However, recent works have shone light on a generalization of real-stability, that does \emph{not} involve root locations. \textcite{ALOV19} established that if $\log g_\mu(z_1,\dots,z_n)$ is concave, viewed as a function over $\R_{\geq 0}^n$, then $k\leftrightarrow (k-1)$ down-up random walks for sampling from $\mu$ would rapidly mix. This class of \emph{log-concave polynomials} have been instrumental in resolving several long-standing questions about matroids \cite{ALOV18,ALOV19,BH19}.

In prior work, \textcite{MS19} established central limit theorems under \emph{univariate} sector-stability of the generating polynomial associated with distributions supported on $\Z$, with extra assumptions on the variance of these distributions. While these results are in the same spirit as bounds we get on correlations, we do not know of a formal connection. A key difference in this work is that we deal with \emph{high-dimensional} distributions and \emph{multivariate} polynomials, and make no assumptions beyond sector-stability; in contrast, to get central limit type theorems, one has to at least make the assumption that the variance grows to infinity.

In other related work, \textcite{Wag09} established generalizations of the result of \textcite{HL72} on Hurwitz-stability of monomer-dimer distributions, showing that certain polynomials enumerating spanning subgraphs with degree constraints are sector-stable. We leave the question of deriving algorithmic applications of this sector-stability to future work.

Log-concave polynomials are a proper superset of real-stable polynomials, at least in the homogeneous case. This was first shown by \cite{Gar59}, and this important result has been instrumental in the development of hyperbolic programming \cite{Gul97}. A natural question that arises is, whether there is an analogous generalization of log-concavity, that is a superset of sector-stable polynomials.

We define a natural property, that we call \emph{fractional log-concavity}. We show that in a ``local sense'', it is actually equivalent to spectral independence of the distribution $\mu$, and then show that sector-stability implies fractional log-concavity, establishing an extension of the result of \textcite{Gar59}.

\begin{definition}[Fractional Log-Concavity]\label{def:fractional-log-concavity}
We call the polynomial $g_\mu(z_1,\dots,z_n)$ fractionally log-concave with parameter $\alpha\in [0, 1]$, if $\log g_\mu(z_1^\alpha,\dots,z_n^\alpha)$ is concave, viewed as a function over $\R_{\geq 0}^n$.
\end{definition}

Note that for $\alpha=1$, this is the same as log-concavity. We show the following \emph{local} equivalence between spectral independence and fractional log-concavity.
\begin{proposition}
	Suppose that $\mu:\binom{[n]}{k}\to\R_{\geq 0}$ is a distribution, and define the $n\times n$ correlation matrix $\Psi$ as 
	\[\Psi_{i, j}:=\P_{S\sim	\mu}{j\in S\given i\in S}-\P_{S\sim \mu}{j\in S}.\]
	Then the maximum eigenvalue of $\Psi$ is bounded by $O(1)$ if and only if the polynomial $g_\mu$ is fractionally log-concave around the point $z=(1,\dots,1)$ for a parameter $\alpha>\Omega(1)$.
\end{proposition}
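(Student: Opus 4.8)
The plan is to reduce the proposition to a single Hessian computation at the point $z=\mathbf 1:=(1,\dots,1)$, followed by a congruence/inertia argument. Throughout I would assume $p_i:=\P_{S\sim\mu}{i\in S}>0$ for every $i$ --- elements with $p_i=0$ do not occur in $g_\mu$ and may be discarded --- and write $D:=\diag(p_1,\dots,p_n)\succ 0$. The one structural feature of $g_\mu$ I exploit is that it is \emph{multilinear} (each $z_i$ appears to degree at most one), so $\partial_i^2 g_\mu\equiv 0$; this is exactly what makes the diagonal of the Hessian collapse to a clean expression.

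First I would compute $H:=\nabla^2\log g_\mu$ at $\mathbf 1$. Starting from $\partial_i\log g_\mu=\partial_i g_\mu/g_\mu$ and using $\partial_i g_\mu(\mathbf 1)/g_\mu(\mathbf 1)=p_i$, $\partial_i\partial_j g_\mu(\mathbf 1)/g_\mu(\mathbf 1)=\P{i,j\in S}$ for $i\ne j$, and $\partial_i^2 g_\mu=0$, one gets $H_{ij}=\P{i,j\in S}-p_ip_j=p_i\Psi_{ij}$ for $i\ne j$ and $H_{ii}=-p_i^2$. Since $\Psi_{ii}=1-p_i$, this is precisely
\[ H=D\Psi-D=D(\Psi-I). \]
Note that $D\Psi$ is symmetric, because $(D\Psi)_{ij}=\P{i,j\in S}-p_ip_j$ is symmetric in $i,j$; equivalently $\Psi$ is self-adjoint with respect to the $D$-weighted inner product, which is the standard reason $\Psi$ has real spectrum (see \cref{def:correlation-matrix}).

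Next I would substitute $z_i\mapsto z_i^\alpha$. Writing $g_\alpha(z):=g_\mu(z_1^\alpha,\dots,z_n^\alpha)$, the chain rule gives $\partial_i\log g_\alpha(z)=\alpha z_i^{\alpha-1}\,(\partial_i\log g_\mu)(z^\alpha)$, and differentiating once more and evaluating at $\mathbf 1$ yields $\nabla^2\log g_\alpha(\mathbf 1)=\alpha^2 H+\alpha(\alpha-1)D$, where the extra $\alpha(\alpha-1)D$ comes from differentiating the prefactor $z_i^{\alpha-1}$ --- this is exactly the mechanism by which the fractional exponent rescales the \emph{correlation} part of the Hessian rather than the whole of it. Substituting $H=D(\Psi-I)$ and simplifying,
\[ \nabla^2\log g_\alpha(\mathbf 1)=\alpha\,D(\alpha\Psi-I)=\alpha\,D^{1/2}\bigl(\alpha\widehat\Psi-I\bigr)D^{1/2}, \]
where $\widehat\Psi:=D^{1/2}\Psi D^{-1/2}$ is the symmetrization of $\Psi$ (symmetric by the $D$-self-adjointness, with the same eigenvalues as $\Psi$). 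Since conjugating by $D^{1/2}\succ 0$ and scaling by $\alpha>0$ preserve inertia,
\[ \nabla^2\log g_\alpha(\mathbf 1)\preceq 0\iff \alpha\widehat\Psi-I\preceq 0\iff \lambda_{\max}(\Psi)\le 1/\alpha. \]
Both directions then follow: if $\lambda_{\max}(\Psi)\le C$, set $\alpha:=\min\{1,1/C\}$, so $\alpha\widehat\Psi\preceq\alpha\,\lambda_{\max}(\Psi)\,I\preceq I$ and $g_\mu$ is fractionally log-concave at $\mathbf 1$ with parameter $\alpha=\Omega(1)$ whenever $C=O(1)$ (the clamp at $1$ matters only in the degenerate regime $\lambda_{\max}(\Psi)<1$, where $g_\mu$ is honestly log-concave at $\mathbf 1$; recall $\mathrm{tr}\,\Psi=n-k\ge 0$, so $\lambda_{\max}(\Psi)\ge 0$ always). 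Conversely, if $g_\mu$ is fractionally log-concave at $\mathbf 1$ with parameter $\alpha\ge\epsilon=\Omega(1)$, negative semidefiniteness of the Hessian forces $\lambda_{\max}(\Psi)\le 1/\alpha\le 1/\epsilon=O(1)$.

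The main obstacle is really just bookkeeping around the asymmetry of $\Psi$: the symmetric object is $D\Psi$, not $\Psi$, so the equivalence has to be phrased through the congruent matrix $\alpha\widehat\Psi-I$ rather than $\alpha\Psi-I$ directly. The remaining technical points are the harmless reduction to $\{i:p_i>0\}$ (so that $D^{1/2}$ is invertible and Sylvester's law applies), and observing that everything above uses only the Hessian of $\log g_\alpha$ at the single point $\mathbf 1$, which is precisely the \emph{local} statement the proposition claims.
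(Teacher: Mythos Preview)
Your proof is correct and follows essentially the same route as the paper: compute the Hessian of $\log g_\mu(z_1^\alpha,\dots,z_n^\alpha)$ at $\mathbf 1$, express it as $\alpha D(\alpha\Psi-I)$ (the paper writes this as $\corMat_\mu=\tfrac{1}{\alpha^2}\diag(1/p_i)H+\tfrac{1}{\alpha}I$ and the statement appears as the remark after \cref{lemma:sectorStableToFracLC}), and read off the equivalence $H\preceq 0\iff\lambda_{\max}(\Psi)\le 1/\alpha$. Your presentation is in fact slightly more careful than the paper's, which writes ``$\tfrac{1}{\alpha}I\geq\corMat_\mu$'' for the non-symmetric $\corMat_\mu$ without spelling out the congruence via $\widehat\Psi=D^{1/2}\Psi D^{-1/2}$.
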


This combined with \cref{thm:sector-implies-bounded} shows that sector-stable polynomials are fractionally log-concave around the point $(1,\dots,1)$. However, sector-stability is preserved under the change of variables $z_i\mapsto \lambda_i z_i$, where $\lambda_1,\dots,\lambda_n$ are positive reals. This is because sectors in the complex plane are preserved under such scalings. This allows us to map any point in $\R_{\geq 0}^n$ to the special point $(1,\dots,1)$. Using this we establish an extension of the result of \textcite{Gar59}.
\begin{theorem}
	Suppose that $g_\mu$ is sector-stable for a sector of aperture $\Omega(1)$. Then $g_\mu$ is fractionally log-concave for a parameter $\alpha\geq\Omega(1)$.
\end{theorem}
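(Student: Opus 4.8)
The plan is to upgrade the \emph{local} equivalence between bounded pairwise correlations and fractional log-concavity at the point $(1,\dots,1)$ into a \emph{global} concavity statement, by exploiting the fact that sector-stability is preserved under positive diagonal rescalings of the variables: this lets us transport an arbitrary point of $\R_{>0}^n$ to $(1,\dots,1)$. Recall that fractional log-concavity with parameter $\alpha$ asks that $f(z):=\log g_\mu(z_1^\alpha,\dots,z_n^\alpha)$ be concave on $\R_{\geq 0}^n$. Since $g_\mu$ has nonnegative coefficients and is not identically zero, it is strictly positive on $\R_{>0}^n$, so $f$ is smooth there; concavity on all of $\R_{\geq 0}^n$ follows by continuity, and concavity on the open orthant is equivalent to $\nabla^2 f\preceq 0$ at every point. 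Hence it suffices to fix an arbitrary $z^\star\in\R_{>0}^n$ and show $\nabla^2 f(z^\star)\preceq 0$, for a value of $\alpha$ to be pinned down uniformly below.

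Fix such a $z^\star$ and set $x_i:=(z^\star_i)^\alpha>0$. Consider the externally tilted density $\tilde\mu$ on $\binom{[n]}{k}$ defined by $\tilde\mu(S):=\mu(S)\prod_{i\in S}x_i$; this is a nonnegative density whose generating polynomial is $g_{\tilde\mu}(y)=g_\mu(x_1y_1,\dots,x_ny_n)$. Because $\Gamma$ is a sector about the positive real axis and each $x_i>0$, multiplication by $x_i$ maps $\Gamma$ into itself, so $g_{\tilde\mu}$ is sector-stable for the \emph{same} sector $\Gamma$, of aperture $\Omega(1)$. By \cref{thm:sector-implies-bounded}, every row of the correlation matrix of $\tilde\mu$ has $\ell_1$ norm at most a universal constant $C$, hence that matrix has maximum eigenvalue at most $C$; crucially $C$ depends only on the aperture of $\Gamma$, not on $z^\star$. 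Applying the local equivalence between $O(1)$ correlation and fractional log-concavity around $(1,\dots,1)$ to $\tilde\mu$, we conclude that $g_{\tilde\mu}$ is fractionally log-concave around $(1,\dots,1)$ for some $\alpha\geq\Omega(1)$ depending only on $C$ — and therefore the same $\alpha$ works for every choice of $z^\star$.

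It remains to unwind the change of variables. Fractional log-concavity of $g_{\tilde\mu}$ around $(1,\dots,1)$ says that $y\mapsto \log g_{\tilde\mu}(y_1^\alpha,\dots,y_n^\alpha)=\log g_\mu(x_1y_1^\alpha,\dots,x_ny_n^\alpha)$ has negative semidefinite Hessian at $y=(1,\dots,1)$. On the other hand, $f(z^\star_1 y_1,\dots,z^\star_n y_n)=\log g_\mu\big((z^\star_1 y_1)^\alpha,\dots,(z^\star_n y_n)^\alpha\big)=\log g_\mu(x_1y_1^\alpha,\dots,x_ny_n^\alpha)$, so $f$ precomposed with the invertible positive-diagonal linear map $y\mapsto(z^\star_1 y_1,\dots,z^\star_n y_n)$ is precisely the function just shown to be concave at $y=(1,\dots,1)$. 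An invertible linear reparametrization preserves negative semidefiniteness of the Hessian, so $\nabla^2 f(z^\star)\preceq 0$. Since $z^\star$ was arbitrary in $\R_{>0}^n$, $f$ is concave there and hence on $\R_{\geq 0}^n$, i.e.\ $g_\mu$ is fractionally log-concave with the uniform parameter $\alpha=\Omega(1)$.

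The one genuinely load-bearing point — and the only place the argument could fail — is the uniformity of $\alpha$: if the parameter produced by the local equivalence degraded as $z^\star$ varied, the pointwise Hessian bounds would not glue into a single fractional log-concavity statement. Scale-invariance is exactly what rescues this: every tilt $\tilde\mu$ inherits the \emph{same} eigenvalue bound $C$ from \cref{thm:sector-implies-bounded} (which depends only on the aperture), and the local equivalence converts a fixed eigenvalue bound into a fixed lower bound on $\alpha$. The remaining ingredients are routine bookkeeping: the tilted polynomial is manifestly the generating polynomial of a nonnegative density on $\binom{[n]}{k}$, sectors centered on the positive real axis are invariant under positive diagonal scalings, and concavity is a pointwise Hessian condition stable under linear changes of variables and under passing to the closure of the domain.
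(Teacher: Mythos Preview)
Your proof is correct and follows essentially the same route as the paper's argument (Lemma~\ref{lemma:sectorStableToFracLC}): reduce to showing negative semidefiniteness of the Hessian at $(1,\dots,1)$ by using that positive diagonal scalings preserve sector-stability, then invoke the correlation bound from \cref{thm:sector-implies-bounded} together with the equivalence between $\lambda_{\max}(\corMat)\leq 1/\alpha$ and $\alpha$-fractional log-concavity at $\vec{1}$. The only presentational wrinkle is that you introduce $x_i=(z_i^\star)^\alpha$ before fixing $\alpha$; since the eigenvalue bound $C$ depends only on the aperture and not on the tilt, you can simply declare $\alpha=1/C$ at the outset and the apparent circularity disappears.
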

As a corollary of this result we prove bounds similar to those obtained by \textcite{AOV18} relating the entropy of fractionally log-concave, and consequently sector-stable, distributions with the sum of their marginal entropies. See \cref{sec:log-concave}.

While fractional log-concavity around the point $(1,\dots,1)$ is equivalent to a bound on the eigenvalues of the correlation matrix $\Psi$, it does \emph{not} imply a bound for the conditioned distributions $\mu_T$. However fractional log-concavity at \emph{all points} in $\R_{\geq 0}^n$ does. This is because the polynomial for conditional distributions $\mu_T$ can be obtained as the following limit:
\[ g_{\mu_T}\propto \lim_{\lambda\to \infty} g_\mu\parens*{\overbrace{\lambda z_1,\lambda z_2,}^{\text{elements in }T}\dots,z_n}/\lambda^{\card{T}}. \]
Scaling the variables or the polynomial, and taking limits all preserve fractional log-concavity.

\begin{corollary}
	If $\mu:\binom{[n]}{k}\to\R_{\geq 0}$ has a fractionally log-concave generating polynomial with parameter $\alpha=\Omega(1)$, or a sector-stable polynomial with a sector of aperture $\Omega(1)$, then for all conditioned distributions $\mu_T$, the correlation matrix has maximum eigenvalue $O(1)$.
\end{corollary}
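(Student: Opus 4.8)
The plan is to show that the two stated hypotheses — fractional log-concavity at all points of $\R_{\geq 0}^n$ with parameter $\alpha=\Omega(1)$, or sector-stability with a sector of aperture $\Omega(1)$ — each descend to every conditioned distribution $\mu_T$, and then invoke the already-established local equivalence between fractional log-concavity around $(1,\dots,1)$ and an $O(1)$ bound on $\lambda_{\max}(\Psi)$. I would treat the two hypotheses in parallel, since the argument has the same shape in both cases: identify a sequence of structure-preserving operations (variable scaling, normalization, and limits) that transforms $g_\mu$ into $g_{\mu_T}$, check that each operation preserves the relevant property, and conclude.

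First I would make precise the limit formula quoted in the excerpt,
\[ g_{\mu_T} \propto \lim_{\lambda\to\infty} g_\mu(\lambda z_i : i\in T,\; z_j : j\notin T)\,/\,\lambda^{\card T}. \]
One checks directly from $g_\mu(z)=\sum_S \mu(S)\prod_{i\in S} z_i$ that scaling the variables indexed by $T$ by $\lambda$ and dividing by $\lambda^{\card T}$ leaves exactly the monomials with $T\subseteq S$ with a $\lambda^0$ coefficient, while every monomial missing at least one element of $T$ carries a strictly negative power of $\lambda$ and hence vanishes in the limit; the surviving polynomial is $\sum_{S\supseteq T}\mu(S)\prod_{i\in S\setminus T} z_i$ (up to the fixed monomial $\prod_{i\in T} z_i$, which is a harmless monomial factor and can be absorbed). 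This is, up to normalization, $g_{\mu_T}$ in the variables $\{z_j : j\notin T\}$.

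Next, the preservation statements. For fractional log-concavity: the map $z\mapsto \log g(z_1^\alpha,\dots,z_n^\alpha)$ being concave on $\R_{\geq 0}^n$ is preserved under (i) positive diagonal scaling $z_i\mapsto c_i z_i$, since this is an affine reparametrization in $\log$-coordinates composed with a shift of $\log g$ by a linear function; (ii) multiplying $g$ by a monomial, which shifts $\log g$ by a linear function and so does not change the Hessian; and (iii) pointwise limits of concave functions, which are concave. Restricting to a subset of the variables (holding the others fixed at any positive value) also preserves concavity. Chaining these through the limit formula above shows $g_{\mu_T}$ is fractionally log-concave at all points with the same parameter $\alpha$, in particular around $(1,\dots,1)$; the cited Proposition then gives $\lambda_{\max}(\Psi^{(\mu_T)})=O(1)$. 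For the sector-stable hypothesis: a diagonal scaling $z_i\mapsto c_i z_i$ with $c_i>0$ maps the sector $\Gamma$ to itself (a sector centered on $\R_{>0}$ is a positive cone, hence scale-invariant), so $g_\mu(\lambda z_i:i\in T, z_j:j\notin T)$ is again sector-stable for the same $\Gamma$; dividing by $\lambda^{\card T}$ does not introduce zeros. The one subtlety is the limit: a limit of nonvanishing polynomials can acquire zeros (and indeed the degree drops), so I would instead argue via Theorem~\ref{thm:sector-implies-bounded} applied \emph{before} the limit. Concretely: sector-stability of $g_\mu$ gives, by Theorem~\ref{thm:sector-implies-bounded}, an $O(1)$ row-$\l_1$ bound on $\Psi$ not just for $\mu$ but — since the conditioned distribution $\mu_T$ is itself realized by a sector-stable polynomial once we show sector-stability is inherited — for $\mu_T$. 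The cleanest route is: sector-stability is preserved under conditioning on $i\in S$ (which is exactly the $\lambda\to\infty$ specialization of one coordinate, and by Hurwitz's theorem the limit polynomial, being a nonzero coefficient extraction, is nonvanishing on $\Gamma$ provided it is not identically zero — and it is not, since $\mu_{\{i\}}$ is well-defined whenever $i$ lies in some set of positive measure); iterating over the elements of $T$ gives sector-stability of $g_{\mu_T}$, and then Theorem~\ref{thm:sector-implies-bounded} applied to $\mu_T$ yields the bound.

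The main obstacle is this last point: justifying that the limit defining $g_{\mu_T}$ does not destroy sector-stability. Naively ``limits preserve non-vanishing'' is false. The right tool is Hurwitz's theorem on the continuity of zeros: if a sequence of polynomials nonvanishing on a connected open set $U$ converges locally uniformly on $U$ to a limit $f$, then $f$ is either nonvanishing on $U$ or identically zero. Here $U$ is the open sector product $\Gamma^{n-\card T}$ (connected, open), the convergence is locally uniform because it is convergence of coefficients of polynomials of bounded degree, and $f=g_{\mu_T}$ is not identically zero because $\mu_T$ is a genuine (nonzero) distribution. One must be slightly careful that after extracting coefficients the ambient dimension drops, but $\Gamma$ in the remaining variables is still an open sector centered on $\R_{>0}$, so Theorem~\ref{thm:sector-implies-bounded} applies verbatim to $\mu_T$. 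Assembling: in either case $\mu_T$ has an $O(1)$ eigenvalue bound on its correlation matrix, which is the claim.
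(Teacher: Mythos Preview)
Your proposal is correct and, for the fractional log-concavity hypothesis, matches the paper's argument essentially verbatim: realize $g_{\mu_T}$ as a scaling-and-limit of $g_\mu$, observe that each of these operations preserves fractional log-concavity (affine reparametrizations and pointwise limits of concave functions), and then invoke the local equivalence between fractional log-concavity at $(1,\dots,1)$ and the eigenvalue bound on $\Psi$.

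For the sector-stability hypothesis you take a slightly different route than the paper. The paper's intended argument, given its placement of the corollary immediately after the theorem ``sector-stability $\Rightarrow$ fractional log-concavity'', is to reduce to the first case: sector-stability gives fractional log-concavity at all points, and then proceed as above. You instead argue directly that sector-stability is inherited by $g_{\mu_T}$ via Hurwitz's theorem (this is exactly the paper's \cref{lemma:limitRoot}, and indeed the same mechanism underlying specialization in \cref{prop: ssProperties}), and then apply \cref{thm:sector-implies-bounded} to $\mu_T$ itself. Both routes are equally short and use only tools already in the paper; your route is arguably more direct since it avoids passing through fractional log-concavity, while the paper's route has the advantage of handling both hypotheses uniformly once sector-stability is upgraded to fractional log-concavity. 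Your identification of the Hurwitz subtlety (limits of nonvanishing functions can vanish) and its resolution (the limit is not identically zero because $\mu_T$ is a genuine distribution) is exactly right.
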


This work establishes a number of examples of fractionally log-concave polynomials, but all of our examples are also sector-stable. We leave the question of finding other examples of fractionally log-concave polynomials that go beyond sector-stability to future work. However, we make the following concrete conjecture, in line with a conjecture of \textcite{MV89} on the expansion of $0/1$ polytopes.

\begin{conjecture}
	Suppose that $\mu$ is the uniform distribution on a subset of the hypercube $F\subseteq \set{0,1}^n$, such that the convex hull $\conv(F)$ has edges of bounded length $O(1)$. Then we conjecture that the polynomial
	\[ \sum_{S\in F}\mu(S)\prod_{i\in S}z_i \]
	is fractionally log-concave for a parameter $\alpha>\Omega(1)$.
\end{conjecture}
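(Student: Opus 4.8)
The plan is to bypass sector-stability: the bounded-edge hypothesis does not, as far as we can see, force root-freeness of the generating polynomial in any scale-invariant region of $\C$, so the argument behind \cref{thm:sector-implies-bounded} is not available here, and instead I would attack fractional log-concavity directly through its equivalence with spectral independence. Because fractional log-concavity is needed at \emph{every} point of $\R_{\geq 0}^n$ (not only at $(1,\dots,1)$), the proposition above, combined with the invariance of both the hypothesis and of fractional log-concavity under the change of variables $z_i\mapsto\lambda_i z_i$ that the paper already exploits, reduces the conjecture to a single uniform claim: for every $0/1$ polytope $P$ whose $1$-skeleton has all edges of length $O(1)$, and every product-weighted distribution $\nu(S)\propto\prod_{i\in S}\lambda_i$ on the vertex set of $P$, the correlation matrix $\Psi$ has maximum eigenvalue $O(1)$. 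Three facts justify the reduction: the field $z_i\mapsto\lambda_i z_i$ turns $g_\mu$ (for $\mu$ uniform on $F$) into the generating polynomial of such a $\nu$ on the same support and moves any point of $\R_{\geq 0}^n$ to $(1,\dots,1)$; conditioning $\nu$ on $i\in S$ or on $i\notin S$ gives another product-weighted distribution, now on the vertex set of a face of $\conv(F)$, whose edges are among the edges of $\conv(F)$ and hence still of length $O(1)$, so the family is closed under conditioning and the bound propagates to every $\mu_T$; and a uniform $O(1)$ eigenvalue bound then yields a parameter $\alpha=1/(1+O(1))=\Omega(1)$.

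To bound $\Psi$ I would bound the $\ell_1$ norm of each row, which dominates the spectral radius. Fixing $i$ and writing $p=\P_\nu{i\in S}$, one has $\Psi_{i,j}=(1-p)\bigl(\P_\nu{j\in S\given i\in S}-\P_\nu{j\in S\given i\notin S}\bigr)$, so $\sum_j\abs{\Psi_{i,j}}$ is at most the $\ell_1$ distance between the marginal vectors of $\nu^{+}$ and $\nu^{-}$, the distributions obtained from $\nu$ by conditioning on $i\in S$ and on $i\notin S$ respectively. Any coupling $(S^{+},S^{-})$ of $\nu^{+}$ with $\nu^{-}$ bounds that distance by the expected value of $\card{S^{+}\triangle S^{-}}$. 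So the conjecture reduces to the following combinatorial statement: under the bounded-edge hypothesis, for every product weighting and every $i$, one can couple the two sides of $F$ across coordinate $i$ so that a typical pair of sets differs in only $O(1)$ coordinates.

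The natural way to build such a coupling is through the $1$-skeleton of $P$: each polytope edge is a move changing only $O(1)$ coordinates, so one would transport the mass of $\nu^{+}$ onto $\nu^{-}$ along a flow supported on the edges of $\conv(F)$, paying, per unit of mass, roughly $O(1)$ times the number of edges traversed. The hard part will be controlling this transport globally. A shortest path in the polytope graph from an $\{x_i=1\}$ vertex to an $\{x_i=0\}$ vertex may have length $\omega(1)$, so Hamming distance accumulates along it and a path-by-path coupling is not enough; what is needed is a genuinely global flow with bounded congestion, or a potential-function argument showing that forcing the value of $x_i$ perturbs the marginals of only $O(1)$ \emph{other} coordinates in total. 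Controlling this appears to require expansion-type properties of $0/1$ polytope graphs in the spirit of the conjecture of \textcite{MV89} --- which is exactly why the statement is only a conjecture. A second avenue is an inductive trickle-down argument, following \textcite{Opp18}: the bottom-level links of $\nu$ are tiny and trivially spectrally bounded, and one hopes the bounded-edge geometry supplies a local positive-semidefiniteness ``miracle'' --- as occurs for matroid polytopes \cite{ALOV19} --- that prevents the usual deterioration of the eigenvalue bound while trickling up; isolating such a local certificate, perhaps a combinatorial-atlas-type object attached to each edge, is the obstacle there. As a sanity check, the matroid case (edge length $\sqrt{2}$) must come out, and indeed matroid polytopes satisfy the hypothesis and are even fully log-concave.
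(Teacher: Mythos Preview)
The statement is a \emph{conjecture}; the paper offers no proof, so there is nothing to compare your proposal against. Your write-up is a research programme, not a proof, and you say so yourself: the coupling (equivalently, the low-congestion flow along the $1$-skeleton, or the trickle-down certificate) that would bound $\E{\card{S^+\triangle S^-}}$ uniformly over all product weightings is precisely the missing ingredient, and you correctly tie that gap to the Mihail--Vazirani conjecture, which is the same connection the paper draws when stating the conjecture.

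Two technical comments on the reduction itself. First, your fact (2) --- that conditioning on $i\in S$ or $i\notin S$ lands you on a face of $\conv(F)$ and hence preserves the bounded-edge hypothesis --- is correct, but it is not needed to reduce \emph{fractional log-concavity} to a spectral bound. By the equivalence in the remark following \cref{lemma:sectorStableToFracLC}, $\alpha$-fractional log-concavity of $g_\mu$ at every point of $\R_{>0}^n$ is exactly the statement that $\lambda_{\max}(\corMat_{\ef{\lambda}{\mu}})\leq 1/\alpha$ for every $\lambda\in\R_{>0}^n$; only external fields enter, not conditionings. Closure under conditioning is what you would need \emph{downstream}, to deduce spectral independence of all links and hence rapid mixing, and it does hold for the reason you give --- but for the conjecture as stated it is redundant (and in any case the conditionings arise as the $\lambda_i\to 0,\infty$ limits of the external-field family). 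Second, your coupling inequality is clean and correct: $\sum_j\abs{\Psi_{i,j}}$ is controlled by the $\ell_1$ distance between the marginal vectors of $\nu^+$ and $\nu^-$, which any coupling bounds by $\E{\card{S^+\triangle S^-}}$. That reformulation is a sensible target; the difficulty, as you note, is that shortest-path couplings on the polytope graph can accumulate $\omega(1)$ Hamming distance, so a genuinely global argument is required, and none is currently known.
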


Matroids are a special case of this conjecture, and their log-concavity has already been established \cite{AOV18}. However this conjecture is widely more general, encompassing combinatorial objects such as delta-matroids, Coxeter matroids, and more \cite{BGW03}.

\subsection{Techniques and Related Work: Multi-Site Glauber Dynamics}\label{sec:glauber}
All of our sampling algorithms are obtained as instantiations of the $k\leftrightarrow \l$ down-up random walk for some $\l=k-O(1)$ applied to an appropriate formulation of the target distribution $\mu$, see \cref{def:local-walk}.

\begin{figure}
	\Tikz*{
		\tikzset{nd-/.style={fill=White, draw=Black, line width=1}}
		\tikzset{nd0/.style={fill=Gray, draw=Black, line width=1}}
		\tikzset{nd1/.style={fill=Orange, draw=Black, line width=1}}
		\tikzset{pics/graph/.style args={#1#2#3#4#5#6}{code={
			\begin{scope}[scale=0.7, every node/.style={circle, fill=White, inner sep=3}]
				\node[#1] (A) at (-30: 0.5) {};
				\node[#2] (B) at (90: 0.5) {};
				\node[#3] (C) at (-150: 0.5) {};
				\node[#4] (D) at (150: 1) {};
				\node[#5] (E) at (-90: 1) {};
				\node[#6] (F) at (30: 1) {}; 
			\end{scope}
			\foreach \u/\v in {A/B, B/C, C/A, D/B, D/C, E/A, E/C, F/A, F/B}
				\draw[line width=1, Black] (\u) -- (\v);
		}}}
		
		\tikzset{pics/graph short/.style args={#1#2#3#4#5#6}{code={\pic {graph={nd#1}{nd#2}{nd#3}{nd#4}{nd#5}{nd#6}};
		}}}
		
		\tikzset{bubble/.style={
			draw=Black, circle, fill=LightGray, line width=1
		}}
		
		\pgfmathsetmacro{\scl}{0.7}
		\pgfmathsetmacro{\l}{1}
		\pgfmathsetmacro{\v}{2}
		
		\node[bubble] (n0) at (-3*\l, 0) {\scalebox{\scl}{\Tikz{\pic{graph short=110000};}}};
		\node[bubble] (m0) at (-1.5*\l, -\v) {\scalebox{\scl}{\Tikz{\pic{graph short=1--000};}}};
		\node[bubble] (n1) at (0, 0) {\scalebox{\scl}{\Tikz{\pic{graph short=101000};}}};
		\node[bubble] (m1) at (1.5*\l, -\v) {\scalebox{\scl}{\Tikz{\pic{graph short=1-1-00};}}};
		\node[bubble] (n2) at (3*\l, 0) {\scalebox{\scl}{\Tikz{\pic{graph short=111100};}}};
		\node at (4.5*\l, 0) {$\dots$};
		\node at (4.5*\l, -\v) {$\dots$};
		\foreach \u/\v in {0/0, 1/0, 1/1, 2/1}
			\draw[line width=1] (n\u) -- (m\v);
	}
	\caption{The $k\leftrightarrow(k-2)$ random walk on monomers avoids the parity issue. In each round two vertices can change their membership in the monomer set. This is an instance of the \emph{multi-site} Glauber dynamics.}\label{fig:down-up}
\end{figure}

Unlike prior applications of spectral independence, we have to consider the $k\leftrightarrow \l$ random walk when $k-\l>1$. For example, consider the distribution of monomers in a monomer-dimer system. As we have established, we view this distribution on the ground set $\binom{V\times\set{0, 1}}{\card{V}}$, where $V$ is the set of vertices. The $k\leftrightarrow(k-1)$ random walk then becomes the following procedure, known as the (single-site) Glauber dynamics:
\begin{Algorithm*}
	Start with monomer set $S_0$\;
	\For{$t=0,1,\dots$}{
		Select vertex $v\in V$ uniformly at random\;
		Select $S_{t+1}$ between $S_t-\set{v}$ and $S_t\cup\set{v}$ randomly with probability $\propto \mu(\text{resulting set})$\;
	}
\end{Algorithm*}
It is not hard to see that cardinality of all monomer sets in a graph has a constant parity. This means that there is no transition possible from a monomer set $S$ to another set $S'$ that differs in exactly one vertex from it. Therefore the $k\leftrightarrow (k-1)$ walk produces a constant sequence $S_0, S_1=S_0,\dots$ and obviously does not mix. Note, however, that considering a higher value of $k-\l$ gets around this parity issue, see \cref{fig:down-up}.

We show that fractional log-concavity, and consequently, sector-stability, imply rapid mixing of the $k\leftrightarrow \l$ random walk for \emph{some} $\l=k-O(1)$. The following is the result of slight modifications of arguments by \textcite{AL20}.
\begin{theorem}
	Suppose that $\mu:\binom{[n]}{k}\to\R_{\geq 0}$ has a fractionally log-concave polynomial with parameter $\alpha=\Omega(1)$. Then for some $\l=k-O(1)$, the $k\leftrightarrow \l$ random walk started at the set $S_0$, gets $\epsilon$-close in total variation distance to the distribution $\mu$ in time
	\[ \tmix(\epsilon)=O\parens*{k^{O(1)}\cdot \log\parens*{\frac{1}{\epsilon\cdot \P_{\mu}{S_0}}}}. \]
\end{theorem}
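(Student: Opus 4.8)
The plan is to funnel fractional log-concavity through spectral independence and the ``local-to-global'' machinery for higher-order random walks, and then translate the resulting relaxation-time bound into the claimed mixing-time bound. The one twist beyond the now-standard single-site ($k\leftrightarrow(k-1)$) argument is that we deliberately descend several levels at once, $\l=k-r$ with $r=O(1)$: this is forced on us by irreducibility (the parity obstruction of \cref{fig:down-up}), but it also conveniently lets us avoid the handful of ``top'' links where the naive local-spectral estimate is vacuous. Concretely, the first step is to invoke the corollary already proved --- fractional log-concavity with $\alpha=\Omega(1)$, equivalently sector-stability for a sector of aperture $\Omega(1)$, implies $\lambda_{\max}(\Psi_{\mu_T})\le C$ for a constant $C=O(1)$ and every $T$ for which $\mu_T$ is defined. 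Viewing $\mu$ as the top level of a pure weighted $(k-1)$-dimensional simplicial complex, the standard translation of \textcite{ALO20} turns this into the quantitative statement that the link of each face $T$ is a local spectral expander: the second eigenvalue of its one-step (``local'') walk is at most $C/(k-\card{T}-1)$.

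Next, I would fix $r=k-\l$ with $r>C$ (and, see below, also large enough to restore irreducibility), so $r=O(1)$, and apply the higher-order local-to-global inequality of \textcite{AL20} (refining Kaufman--Oppenheim): the spectral gap of the $k\leftrightarrow(k-r)$ down-up walk is at least $\poly(1/k)\cdot\prod_{i=0}^{k-r-1}(1-\lambda_i)$, where $\lambda_i$ bounds the local-walk second eigenvalue at level $i$. The crucial point is that the product stops at level $k-r-1$, so every factor uses $\lambda_i\le C/(k-i-1)\le C/r<1$ and stays strictly positive; telescoping, $\prod_{i=0}^{k-r-1}\bigl(1-\tfrac{C}{k-i-1}\bigr)=\prod_{j=r}^{k-1}\bigl(1-\tfrac{C}{j}\bigr)\ge k^{-O(C)}=k^{-O(1)}$, whence the relaxation time of the $k\leftrightarrow\l$ walk is $k^{O(1)}$ on each of its connected components. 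For $r=1$ the bottom levels $j\le C$ would give non-positive factors $1-C/j$, which is exactly what makes the single-site walk delicate; descending past them in one shot sidesteps it, and this is the ``slight modification'' of \textcite{AL20}. Since $r=O(1)$, each step still enumerates only $n^{r}=\poly(n)$ supersets of $T_t$ in \cref{def:local-walk}, so it remains polynomial-time implementable.

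It remains to choose $\l$ so that the walk actually converges to $\mu$, i.e.\ is irreducible on $\mathrm{supp}(\mu)$ --- the monomer--dimer parity in \cref{fig:down-up} is precisely where $r=1$ fails --- and then to convert the relaxation bound into a mixing bound. For irreducibility I would argue that the smallest $r$ making $k\leftrightarrow(k-r)$ connected on $\mathrm{supp}(\mu)$ is itself $O(1)$: the bound $\lambda_{\max}(\Psi_{\mu_T})=O(1)$ already rules out the ``rigid'' obstructions that would force this $r$ to be large (morally, a residue-mod-$p$ obstruction spanning a set $A$ makes $\Omega(p)$ coordinates of $A$ mutually correlated at the $\Theta(1)$ level, so $\lambda_{\max}(\Psi)=\Omega(p)$ and hence $p=O(1)$). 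Taking $r$ to be the maximum of this quantity and $\lceil C\rceil+1$ gives the promised $\l=k-O(1)$, for which the previous paragraph applies. Finally, the standard $L^2$-contraction bound for a reversible chain with stationary distribution $\mu$, relaxation time $t_{\mathrm{rel}}$, and start $S_0$ gives $\dtv(\text{law of }S_t,\ \mu)\le\tfrac12\,(1-1/t_{\mathrm{rel}})^{t}/\sqrt{\P_\mu{S_0}}$, which drops below $\epsilon$ once $t\ge t_{\mathrm{rel}}\cdot\log\bigl(1/(\epsilon\sqrt{\P_\mu{S_0}})\bigr)$; plugging in $t_{\mathrm{rel}}=k^{O(1)}$ yields exactly $\tmix(\epsilon)=O\bigl(k^{O(1)}\cdot\log(1/(\epsilon\,\P_\mu{S_0}))\bigr)$.

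I expect the main obstacle to be the irreducibility step: making ``$\l=k-O(1)$'' a theorem rather than a convenience, i.e.\ giving a clean combinatorial argument that, for every fractionally log-concave $\mu$, a constant number of down-up levels restores connectivity of $\mathrm{supp}(\mu)$ --- and pinning down its interaction with the previous step, so that a single value of $\l$ both connects the chain and keeps every local-spectral factor $1-\lambda_i$ safely below $1$. In the concrete applications of \cref{lem:matching-is-sector-stable} and its analogues this is immediate, since there the period is an absolute constant, so the generality is the only sticking point. Everything else is a careful re-run of known higher-order random walk estimates in the $k-\l=O(1)$ regime, together with the $L^2$ bookkeeping of the last step.
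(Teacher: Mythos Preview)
Your plan is essentially the paper's own argument: fractional log-concavity gives $\lambda_{\max}(\Psi_{\mu_T})\le C=O(1)$ for every conditioning $T$, this becomes a local contraction bound at each link, and the local-to-global machinery of \textcite{AL20} (restated by the paper in $f$-divergence language) telescopes to a global gap of $k^{-O(1)}$ for the $k\leftrightarrow(k-r)$ walk once $r$ exceeds the constant $C$, exactly as you compute. The paper's $\beta_0\cdots\beta_{\ell-1}\le k^{O(1)}$ and your $\prod_{j=r}^{k-1}(1-C/j)\ge k^{-O(C)}$ are the same estimate.

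The one thing to drop is your separate irreducibility step. The local-to-global inequality is a \emph{global} statement: it upper-bounds the second-largest eigenvalue of $D_{k\to\ell}U_{\ell\to k}$ (equivalently, for $f(x)=x^2$, it is strict $\chi^2$-contraction for every starting $\nu$), so a strictly positive gap already forces the eigenvalue $1$ to be simple and hence the chain to be irreducible on $\mathrm{supp}(\mu)$. There is no per-component caveat; your phrase ``$k^{O(1)}$ on each of its connected components'' undersells what you have actually proved. Consequently there is no need to bound an ``irreducibility radius'' $r_1$ and then take $r=\max(r_1,\lceil C\rceil+1)$: the single choice $r>C$ already delivers both a positive gap and connectivity. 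The paper accordingly never argues irreducibility separately, and the ``main obstacle'' you anticipate is not there.
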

One has to be careful that $\log(1/\P_\mu{S_0})$ is not too large in applications. This is achieved by making sure that $S_0$ has at least a $2^{-\poly(n)}$ probability under $\mu$. In all distributions we study in this paper, this can be achieved easily. For example, in the case of monomer-dimer distributions, by running a maximum-weight matching algorithm, we can find a matching $M$ having the maximum possible weight under the monomer-dimer distribution. Because the number of matchings is at most $2^{\poly(n)}$, we can safely use the monomer set of this matching as the starting point $S_0$.

\subsection{Acknowledgements}

We thank Micha\l{} Derezi\'{n}ski and Paul Liu for illuminating discussions about existing results on determinantal point processes. We also thank Alexander Barvinok for pointing us to existing results related to sector-stability.

Nima Anari and Thuy-Duong Vuong are supported by NSF grant CCF-2045354.

	\section{Preliminaries}
\label{sec:prelim}

We use $\Z_{\geq 0}$ to denote the set of nonnegative integers $\set{0,1,\dots}$. For a subset $S$ of $\R^n$, we use $\conv(S)$ to denote the convex hull of $S$.

We use $[n]$ to denote $\set{1,\dots,n}$. For a set $U$ we let $\binom{U}{k}$ denote the family of $k$-element subsets of $U$. When $n$ is clear from context, we use $\1_S\in \R^n$ to denote the indicator vector of the set $S\subseteq [n]$, having a coordinate of $0$ everywhere, except for elements of $S$, where the coordinate is $1$.

\subsection{Markov Chains}

For two measures $\mu, \nu$ defined on the same state space $\Omega$, we define their total variation distance as
\[ \dtv(\mu, \nu)=\frac{1}{2}\sum_{\omega\in \Omega}\abs{\mu(\omega)-\nu(\omega)}=\max\set{\P_\mu{S}-\P_\nu{S}\given S\subseteq \Omega}. \]

The total variation distance is a special case of a more general class of ``distance measures'' called $f$-divergences.
\begin{definition}[$f$-Divergence]
	For a convex function $f:\R_{\geq 0}\to \R$, define the $f$-divergence between two distributions $\mu$ and $\nu$ on the same state space as follows:
	\[ \D_f{\nu\river \mu}=\E*_{\omega\sim \mu}{f\parens*{\frac{\nu(\omega)}{\mu(\omega)}}}-f\parens*{\E*_{\omega\sim \mu}{\frac{\nu(\omega)}{\mu(\omega)}}}. \]
\end{definition}
Note that by Jensen's inequality this quantity is always nonnegative. Also notice that if $\mu$ and $\nu$ are normalized distributions the second term is just $f(1)$. In this work we will mostly deal with the case of $f(x)=x^2$, where $\D_f{\cdot\river\cdot}$ is also known as the variance. However, we state some results in full generality in terms of arbitrary $f$-divergences, in the hope that they fill find use in future work.

A Markov chain on a state space $\Omega$ is defined by a row-stochastic matrix $P\in \R^{\Omega\times \Omega}$. We view distributions $\mu$ on $\Omega$ as row vectors, and as such $\mu P$ would be the distribution after one transition according to $P$, if we started from a sample of $\mu$. A stationary distribution $\mu$ for the Markov chain $P$ is one that satisfies $\mu P=\mu$. Under mild assumptions on $P$ (ergodicity), stationary distributions are unique and the distribution $\nu P^t$ converges to this stationary distribution as $t\to \infty$ \cite{LP17}. We refer the reader to \cite{LP17} for a detailed treatment of Markov chain analysis.

A popular method for the analysis of Markov chains is via functional inequalities, that are often inequalities relating $f$-divergences before and after one transition of the Markov chain. We are specifically interested in contraction of the $f$-divergence. We state this contraction for (potentially non-square) row-stochastic operators for generality.
\begin{definition}
	We say that a row-stochastic matrix $P\in \R^{\Omega\times \Omega'}$ contracts $f$-divergence w.r.t.\ a background distribution $\mu:\Omega\to\R_{\geq 0}$ by a factor of $\alpha$ if for all other distributions $\nu:\Omega\to\R_{\geq 0}$, we have
	\[ \D_f{\nu P\river \mu P}\leq \alpha\cdot \D_f{\nu\river \mu}. \]
\end{definition}
We remark that all row-stochastic operators $P$ have contraction with factor $1$, and this property is only useful for $\alpha<1$.
\begin{proposition}[Data Processing Inequality]\label{prop:data-processing}
	For all row-stochastic matrices $P\in \R^{\Omega\times \Omega'}$ and all distributions $\mu, \nu:\Omega\to \R_{\geq 0}$, we have
	\[  \D_f{\nu P\river \mu P}\leq \D_f{\nu\river \mu}. \]
\end{proposition}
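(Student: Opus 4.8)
The plan is to prove this classical data-processing inequality by passing to the likelihood ratio $\rho(\omega) := \nu(\omega)/\mu(\omega)$, defined on the support of $\mu$; if $\nu$ fails to be absolutely continuous with respect to $\mu$ the divergence $\D_f{\nu\river\mu}$ is $+\infty$ and there is nothing to prove, so we may restrict attention to that support. The first observation is that the subtracted term in the definition of the $f$-divergence is inert under $P$: we have $\E*_{\omega\sim\mu}{\rho(\omega)}=\sum_\omega\nu(\omega)$, i.e.\ the total mass of $\nu$, and since $P$ is row-stochastic it preserves total mass, so $\E*_{\omega'\sim\mu P}{(\nu P)(\omega')/(\mu P)(\omega')}=\sum_{\omega'}(\nu P)(\omega')=\sum_\omega\nu(\omega)$ as well. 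Hence $f$ of these two quantities coincides, and the claim $\D_f{\nu P\river\mu P}\leq\D_f{\nu\river\mu}$ reduces to showing $\E*_{\omega'\sim\mu P}{f\parens*{\rho'(\omega')}}\leq\E*_{\omega\sim\mu}{f\parens*{\rho(\omega)}}$, where $\rho':=(\nu P)/(\mu P)$ is the transformed ratio on $\Omega'$.

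The heart of the argument is that applying $P$ replaces $\rho$ by an average of itself. For each $\omega'$ with $(\mu P)(\omega')>0$, the numbers $q_{\omega'}(\omega):=\mu(\omega)P(\omega,\omega')/(\mu P)(\omega')$ form a probability distribution over $\omega$, and a one-line computation gives
\[ \rho'(\omega') = \frac{\sum_\omega\mu(\omega)P(\omega,\omega')\,\rho(\omega)}{(\mu P)(\omega')} = \sum_\omega q_{\omega'}(\omega)\,\rho(\omega), \]
so $\rho'(\omega')$ is a convex combination of the values of $\rho$. I would then apply Jensen's inequality to the convex function $f$ for each fixed $\omega'$, giving $f(\rho'(\omega'))\leq\sum_\omega q_{\omega'}(\omega)f(\rho(\omega))$, take the expectation over $\omega'\sim\mu P$, and interchange the two summations using $\sum_{\omega'}P(\omega,\omega')=1$:
\[ \sum_{\omega'}(\mu P)(\omega')\sum_\omega q_{\omega'}(\omega)f(\rho(\omega)) = \sum_\omega\mu(\omega)f(\rho(\omega))\sum_{\omega'}P(\omega,\omega') = \sum_\omega\mu(\omega)f(\rho(\omega)). \]
This is exactly the inequality needed, and combined with the first step it proves the proposition.

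I do not anticipate a genuine obstacle here — this is essentially the assertion, made in the remark just above, that every row-stochastic operator contracts $f$-divergence with factor $1$ — and the only point demanding care is the bookkeeping around supports: one must restrict every sum to $\omega$ with $\mu(\omega)>0$ and $\omega'$ with $(\mu P)(\omega')>0$, and verify that the discarded contributions vanish both in the mass-preservation step and in the final interchange of sums (using that, on the support of $\mu$, the support of $\nu P$ is contained in that of $\mu P$). Conceptually the inequality is just the tower property: $\rho'(\omega')$ is the conditional expectation of $\rho$ given $\omega'$ under the joint law on $\Omega\times\Omega'$ where $\omega\sim\mu$ and then $\omega'\sim P(\omega,\cdot)$, so an equivalent and arguably cleaner write-up is to invoke Jensen's inequality for conditional expectation directly on that joint space. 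Either packaging is routine.
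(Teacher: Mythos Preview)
Your argument is correct and is the standard proof of the data processing inequality via Jensen's inequality applied to the convex function $f$ and the observation that $(\nu P)(\omega')/(\mu P)(\omega')$ is a $q_{\omega'}$-average of $\nu(\omega)/\mu(\omega)$. The paper itself does not supply a proof for this proposition; it is stated as a well-known fact in the preliminaries and used as a black box, so there is nothing to compare against beyond noting that your write-up fills in exactly the classical argument one would expect.
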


For a Markov chain $P$, we define the mixing time from a starting distribution $\nu$ as the first time $t$ such that $\nu P^t$ gets close to the stationary distribution $\mu$.
\[ \tmix(P, \nu, \epsilon)=\min\set{t\given \dtv(\nu P^t, \mu)\leq \epsilon}. \]
We drop $P$ and $\nu$ if they are clear from context. If $\nu$ is the Dirac measure on a single point $\omega$, we write $\tmix(P, \omega, \epsilon)$ for the mixing time. When mixing time is referenced without mentioning $\epsilon$, we imagine that $\epsilon$ is set to a reasonable small constant (such as $1/4$). This is justified by the fact that the growth of the mixing time in terms of $\epsilon$ can be at most logarithmic \cite{LP17}.

Contraction inequalities, combined with companion inequalities relating $\dtv$ and $f$-divergences allow one to bound the mixing time of a Markov chain. In particular for $f(x)=x^2$, one has the relationship
\[ \dtv(\nu, \mu)\leq O\parens*{\sqrt{\D_{x^2}{\nu\river \mu}}}, \]
and as a result we get
\begin{proposition}[{\cite[see, e.g.,][]{LP17}}]
	Suppose that a Markov chain $P$ with stationary distribution $\mu$ has $\alpha$-factor contraction in $\D_{x^2}{\cdot\river\cdot}$. Then the mixing time of $P$ started from a point $\omega$ satisfies
		\begin{align*}
	    \tmix(P, \nu, \epsilon)&\leq O\parens*{\frac{\log(1/\epsilon\P_{\mu}{\omega})}{\log(1/\alpha)}}\\
	    &\leq O\parens*{\frac{1}{1-\alpha}\cdot \log\parens*{\frac{1}{\epsilon\cdot \P_\mu{\omega}}}}.
	\end{align*}
\end{proposition}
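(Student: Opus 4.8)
The plan is to run the chain from the point mass at $\omega$, iterate the $\alpha$-factor contraction hypothesis along the trajectory, and then convert the resulting $\D_{x^2}{\cdot\river\cdot}$ bound into a total-variation bound using the comparison $\dtv(\nu',\mu)\le O(\sqrt{\D_{x^2}{\nu'\river\mu}})$ recorded just above the statement. Throughout, let $\nu=\delta_\omega$ denote the Dirac measure at $\omega$.

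First I would note that, since $\mu$ is stationary, $\mu P=\mu$, hence $\mu P^s=\mu$ for all $s\ge 0$; this lets the contraction inequality be applied repeatedly with the \emph{fixed} background distribution $\mu$:
\[ \D_{x^2}{\nu P^t\river\mu}=\D_{x^2}{(\nu P^{t-1})P\river\mu P}\le\alpha\,\D_{x^2}{\nu P^{t-1}\river\mu}\le\cdots\le\alpha^t\,\D_{x^2}{\nu\river\mu}. \]
Next I would evaluate the initial divergence. The likelihood ratio $\delta_\omega(\omega')/\mu(\omega')$ equals $1/\P_\mu{\omega}$ at $\omega'=\omega$ and $0$ elsewhere, so the first term of $\D_{x^2}{\delta_\omega\river\mu}$ is $\P_\mu{\omega}\cdot(1/\P_\mu{\omega})^2=1/\P_\mu{\omega}$ while the second term is $1$; therefore $\D_{x^2}{\delta_\omega\river\mu}=1/\P_\mu{\omega}-1\le 1/\P_\mu{\omega}$. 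Combining, $\D_{x^2}{\nu P^t\river\mu}\le\alpha^t/\P_\mu{\omega}$, whence $\dtv(\nu P^t,\mu)\le O\parens*{\sqrt{\alpha^t/\P_\mu{\omega}}}$.

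It remains to choose $t$ so that this last quantity is at most $\epsilon$, which holds as soon as $\alpha^t\le\Omega(\epsilon^2\P_\mu{\omega})$, i.e.\ $t\log(1/\alpha)\ge\log\parens*{1/(\epsilon^2\P_\mu{\omega})}-O(1)$. Since $\epsilon\le 1$ and $\P_\mu{\omega}\le 1$ give $\log\parens*{1/(\epsilon^2\P_\mu{\omega})}\le 2\log\parens*{1/(\epsilon\P_\mu{\omega})}$, it suffices to take $t=O\parens*{\log(1/\epsilon\P_\mu{\omega})/\log(1/\alpha)}$, which is the first claimed inequality. For the second, I would use the elementary bound $\log(1/\alpha)=-\log(1-(1-\alpha))\ge 1-\alpha$ for $\alpha\in(0,1)$, so that $1/\log(1/\alpha)\le 1/(1-\alpha)$ and the first bound implies the second.

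There is no genuine obstacle here; the argument is routine. The two points worth a moment's care are that iterating the contraction is legitimate precisely because $\mu$ is a fixed point of $P$ (so the background distribution never changes along the trajectory), and that the factor of two introduced by the square in the $\dtv$-versus-$\D_{x^2}{\cdot\river\cdot}$ comparison, together with the ambient absolute constants, are absorbed harmlessly into the $O(\cdot)$.
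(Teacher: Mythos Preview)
Your argument is correct and is exactly the standard derivation the paper defers to via the citation to \cite{LP17}; the paper itself does not supply a proof of this proposition. The iteration of the contraction against the fixed stationary $\mu$, the explicit computation $\D_{x^2}{\delta_\omega\river\mu}=1/\P_\mu{\omega}-1$, the $\dtv\le O(\sqrt{\D_{x^2}})$ conversion, and the final inequality $\log(1/\alpha)\ge 1-\alpha$ are precisely the ingredients one finds in the referenced text.
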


\subsection{Complex Analysis}

We use the following classic result from elementary complex analysis \cite[see, e.g.,][]{Lan13}.
\begin{lemma}[Schwarz's lemma] \label{lem:schwarz}
	Let $D=\set*{z\in \C\given \abs{z}<1}$ be the open unit disk in the complex plane $\C$ centered at the origin and let $f:D\to \C$ be a holomorphic map such that $f(0)=0$ and $\abs{f(z)}\leq 1$ on $D$. Then
	\[ \abs{f'(0)}\leq 1. \]
\end{lemma}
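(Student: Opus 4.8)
The statement to prove is Schwarz's lemma, and the plan is to follow the classical argument via the maximum modulus principle applied to the ``divided difference'' $f(z)/z$. First I would define $g : D \to \C$ by $g(z) = f(z)/z$ for $z \neq 0$ and $g(0) = f'(0)$. The key preliminary observation is that $g$ is holomorphic on all of $D$: away from the origin this is clear since $f$ is holomorphic and $z \neq 0$, and at the origin the singularity is removable because $f(0) = 0$, so the power series of $f$ around $0$ has the form $f(z) = f'(0) z + a_2 z^2 + \cdots$, and dividing by $z$ gives a convergent power series whose value at $0$ is exactly $f'(0)$.

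Next I would bound $|g|$ on circles. Fix $r$ with $0 < r < 1$. For $|z| = r$ we have $|g(z)| = |f(z)|/|z| \leq 1/r$ using the hypothesis $|f(z)| \leq 1$. By the maximum modulus principle applied to $g$ on the closed disk $\set{z : \abs{z} \leq r}$, the same bound $|g(z)| \leq 1/r$ holds for every $z$ with $|z| \leq r$. Now I would let $r \to 1^-$: for any fixed $z \in D$, once $r > |z|$ the bound $|g(z)| \leq 1/r$ applies, and taking the limit gives $|g(z)| \leq 1$. In particular, evaluating at $z = 0$ yields $|f'(0)| = |g(0)| \leq 1$, which is the desired conclusion.

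There is essentially no main obstacle here — the only point requiring a little care is justifying that $g$ extends holomorphically across the origin, which is exactly where the hypothesis $f(0) = 0$ is used (without it $g$ would have a genuine pole). The maximum modulus principle and the limiting argument are routine. I would also note in passing that the same argument shows $|f(z)| \leq |z|$ throughout $D$, and that equality in either bound forces $f$ to be a rotation $f(z) = \lambda z$ with $|\lambda| = 1$, though only the derivative bound is needed for the application of \cref{lem:schwarz} in what follows.
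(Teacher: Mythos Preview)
Your proof is correct and is the standard textbook argument for Schwarz's lemma. The paper does not actually prove this statement; it simply quotes it as a classical result from complex analysis with a citation, so there is no proof in the paper to compare against.
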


\subsection{Linear Algebra}
	\begin{theorem}[Courant-Fischer Theorem]\label{thm:Courant-Fischer}
		Let $A\in \R^{n\times n} $ be a Hermitian matrix with eigenvalues $\lambda_1 \geq \lambda_2 \geq \cdots \geq \lambda_n.$
		Then
		\begin{align*}
		    \lambda_{k}(A) &= \adjustlimits\min_{U } \max_{v } \ \dotprod{v, Av},
		\end{align*}
		where the minimum is taken over all $(n-k+1)$-dimensional subspaces $U\subseteq \R^n$ and the maximum is 
		taken over all vectors $v\in U$ with $\dotprod{v,v}= 1$.
	\end{theorem}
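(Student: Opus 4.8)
The plan is to invoke the spectral theorem for real symmetric matrices and then run the classical dimension-counting argument, establishing the two inequalities separately. First I would diagonalize: write $A = \sum_{i=1}^{n} \lambda_i u_i u_i^\intercal$ for an orthonormal eigenbasis $u_1, \dots, u_n$ with $A u_i = \lambda_i u_i$. The only structural fact I need is that for a unit vector $v = \sum_i c_i u_i$ (so $\sum_i c_i^2 = 1$) one has $\dotprod{v, Av} = \sum_i \lambda_i c_i^2$, i.e.\ the Rayleigh quotient is a convex combination of the eigenvalues, supported only on the coordinates $i$ for which $c_i \neq 0$.

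For the upper bound $\lambda_k \geq \min_U \max_v \dotprod{v, Av}$ I would exhibit one good competitor subspace: take $U_0 = \operatorname{span}\{u_k, u_{k+1}, \dots, u_n\}$, which has dimension $n - k + 1$. Any unit $v \in U_0$ has $c_i = 0$ for $i < k$, hence $\dotprod{v, Av} = \sum_{i \geq k} \lambda_i c_i^2 \leq \lambda_k \sum_{i \geq k} c_i^2 = \lambda_k$ since $\lambda_i \leq \lambda_k$ for $i \geq k$. Thus $\max_{v \in U_0} \dotprod{v, Av} \leq \lambda_k$, and taking the minimum over all eligible $U$ can only make this smaller.

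For the lower bound $\lambda_k \leq \min_U \max_v \dotprod{v, Av}$ I would argue that every eligible $U$ contains a witness vector achieving at least $\lambda_k$. Fix an $(n-k+1)$-dimensional subspace $U$ and let $W = \operatorname{span}\{u_1, \dots, u_k\}$, of dimension $k$. Because $\dim U + \dim W = (n-k+1) + k = n+1 > n$, the intersection $U \cap W$ is nontrivial; pick a unit vector $v \in U \cap W$. Then $c_i = 0$ for $i > k$, so $\dotprod{v, Av} = \sum_{i \leq k} \lambda_i c_i^2 \geq \lambda_k \sum_{i \leq k} c_i^2 = \lambda_k$. Hence $\max_{v \in U} \dotprod{v, Av} \geq \lambda_k$ for every such $U$, so the minimum over $U$ is at least $\lambda_k$. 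Combining the two inequalities yields the stated equality.

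I do not expect a genuine obstacle; the one step worth stating carefully is the dimension count $\dim(U \cap W) \geq \dim U + \dim W - n = 1$, which is exactly what forces every competing subspace to meet the span of the top $k$ eigenvectors and thereby pins the min-max value at $\lambda_k$. The only bookkeeping subtlety is that the statement is for real symmetric $A$ on $\R^n$, so the eigenbasis and all coefficients $c_i$ are real and no complex conjugation is needed; the same proof goes through verbatim over $\C^n$ with conjugate transposes if desired.
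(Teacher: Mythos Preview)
Your proof is correct and is the standard textbook argument for Courant--Fischer: diagonalize via the spectral theorem, exhibit the span of the bottom $n-k+1$ eigenvectors to get the upper bound, and use the dimension count $\dim U + \dim W > n$ with $W=\operatorname{span}\{u_1,\dots,u_k\}$ to force a nontrivial intersection and obtain the lower bound. Each step is sound, and your remark about the real-versus-complex setting is appropriate given that the statement writes $A\in\R^{n\times n}$ and $U\subseteq\R^n$ but calls $A$ Hermitian.

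There is nothing to compare against: the paper states this theorem as a preliminary fact in the linear algebra subsection and does not supply a proof. Your argument would serve as a complete proof if one were wanted.
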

	\begin{theorem}
	Let $A \in \R^{n\times m}, B \in \R^{m\times n}$ where $m \geq n$. Then the spectrum of BA (as a multiset) is precisely the
union of the spectrum of $AB$ (as a multiset) with $m- n$ copies of $0$.
	\end{theorem}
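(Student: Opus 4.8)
The plan is to reduce the statement to the polynomial identity
\[ t^{m}\,\chi_{AB}(t) \;=\; t^{n}\,\chi_{BA}(t), \]
where $\chi_{M}(t):=\det(tI-M)$ denotes the characteristic polynomial of a square matrix $M$, and where ``the spectrum as a multiset'' is understood as the multiset of roots in $\C$ of the characteristic polynomial, counted with algebraic multiplicity. Granting this identity, the root multiset of the left-hand side is $\operatorname{spec}(AB)$ together with $m$ extra copies of $0$, while that of the right-hand side is $\operatorname{spec}(BA)$ together with $n$ extra copies of $0$. Since $m\ge n$, cancelling $n$ copies of $0$ from both multisets yields $\operatorname{spec}(BA)=\operatorname{spec}(AB)\uplus\{0^{(m-n)}\}$, which is exactly the claim.

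To establish the identity I would use a block-matrix conjugation trick. Consider the two $(n+m)\times(n+m)$ matrices
\[ X=\begin{pmatrix} AB & 0\\ B & 0\end{pmatrix},\qquad Y=\begin{pmatrix} 0 & 0\\ B & BA\end{pmatrix}, \]
together with the invertible block matrix $P=\begin{pmatrix} I_n & A\\ 0 & I_m\end{pmatrix}$, whose inverse is $P^{-1}=\begin{pmatrix} I_n & -A\\ 0 & I_m\end{pmatrix}$. A direct block multiplication shows $P\,Y\,P^{-1}=X$, so $X$ and $Y$ are similar and therefore have the same characteristic polynomial. On the other hand, $tI-X$ and $tI-Y$ are each block triangular, so their determinants factor as products over the diagonal blocks: $\chi_X(t)=\chi_{AB}(t)\cdot t^{m}$ and $\chi_Y(t)=t^{n}\cdot\chi_{BA}(t)$. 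Equating $\chi_X=\chi_Y$ gives the desired identity.

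I do not anticipate a genuine obstacle; the only points needing care are verifying the block computation $PYP^{-1}=X$ and the triangular-block determinant factorization, and ensuring that the zero eigenvalues are tracked \emph{with multiplicity}, so that the cancellation of exactly $n$ zeros is legitimate precisely because $m\ge n$. As an alternative to the conjugation argument, one can obtain the same polynomial identity from Sylvester's determinant identity $\det(I_n - t^{-1}AB)=\det(I_m - t^{-1}BA)$, which holds for every $t\neq 0$: multiplying through by $t^{n}$ and $t^{m}$ appropriately gives $t^{m}\chi_{AB}(t)=t^{n}\chi_{BA}(t)$ for all $t\neq 0$, and since both sides are polynomials the identity extends to all $t$.
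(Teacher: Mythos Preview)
Your proof is correct. The block-matrix conjugation is verified by direct computation, the triangular-block determinant factorizations are valid, and the passage from the polynomial identity $t^{m}\chi_{AB}(t)=t^{n}\chi_{BA}(t)$ to the multiset statement about spectra is handled properly.

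As for comparison: the paper does not actually supply a proof of this theorem. It is listed in the linear algebra preliminaries as a standard background fact, with no argument given. Your write-up (either the conjugation route or the Sylvester determinant alternative) is a clean, self-contained justification of a result the paper simply quotes.
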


\subsection{Polynomials and Sector-Stability}

We use $\F[z_1,\dots,z_n]$ to denote $n$-variate polynomials with coefficients from $\F$, where we usually take $\F$ to be $\R$ or $\C$. We denote the degree of a polynomial $g$ by $\deg(g)$. We call a polynomial homogeneous of degree $k$ if all nonzero terms in it are of degree $k$. We define a $\lambda$-scaling, or an external field of $\lambda\in \F^n$ applied to a polynomial $g$, to be the polynomial $g(\lambda_1z_1,\dots,\lambda_nz_n)$. If $g$ was the generating polynomial of a distribution $\mu$, we denote the same scaling applied to $\mu$ by $\ef{\lambda}{\mu}$.

The main workhorse behind our main results are polynomials that avoid roots in certain regions of the complex plane.

\begin{definition}[Stability]
	For an open subset $U\subseteq \C^n$, we call a polynomial $g\in \C[z_1,\dots,z_n]$ $U$-stable iff
	\[ (z_1,\dots,z_n)\in U\implies g(z_1,\dots,z_n)\neq 0. \]
	We also call the identically $0$ polynomial $U$-stable. This ensures that limits of $U$-stable polynomials are $U$-stable. For convenience, when $n$ is clear from context, we abbreviate stability w.r.t.\ regions of the form $U\times U\times \cdots \times U$ where $U\subseteq \C$ simply as $U$-stability.
\end{definition}

Our choice of the region $U$ in this work is the product of open sectors in the complex plane. 

\begin{definition}[Sectors]
	We name the open sector of aperture $\alpha\pi$ centered around the positive real axis $\Gamma_\alpha$:
	\[ \Gamma_\alpha:=\set{\exp(x+iy)\given x\in \R, y\in (-\alpha\pi/2,\alpha\pi/2)}. \]
\end{definition}

With these definitions \cref{def:sector-stability} is the same as $\Gamma_\alpha$-stability for a suitable parameter $\alpha$.

Note that $\Gamma_1$ is the right-half-plane, and $\Gamma_1$-stability is the same as the classically studied Hurwitz-stability \cite[see, e.g.,][]{Bra07}. Another closely related notion is that of real-stability where the region $U$ is the upper-half-plane $\set{z\given \Im(z)>0}$ \cite[see, e.g.,][]{BBL09}. Note that for \emph{homogeneous} polynomials, stability w.r.t.\ $U$ is the same as stability w.r.t.\ any rotation/scaling of $U$; so Hurwitz-stability and real-stability are the same for \emph{homogeneous} polynomials.

\subsection{Half-Plane Stability}
% 	\begin{definition}[Real-stable]
% 	\end{definition}
% 	\begin{definition}[Hurwitz stable]
% 	\end{definition}
	%

	Consider an open half-plane $H_{\theta} = \set*{e^{-i\theta} z \given \Im(z) > 0} \subseteq \C.$ A polynomial $g(z_1,\cdots, z_n) \in \C[z_1, \cdots, z_n]$ is $H_{\theta}$-stable if $g$ does not have roots in $H_{\theta}^n.$
	We call $H_0$ and $H_{\pi/2}$ the upper half-plane and right half-plane respectively. 
	We say $g$ is Hurwitz-stable if it is $H_{\pi/2}$-stable. We say $g$ is real-stable if it is $H_0$-stable and has real coefficients.
	
	We observe that for homogeneous polynomials, the definition of $H_{\theta}$-stable is equivalent for all angles $\theta.$ %iff it is $H_{\theta'}$-stable, for any $\theta, \theta'.$ 
		\begin{lemma}[Theorem 1.6,  \cite{BB09}]\label{lemma:limitRoot}
	%TODO: Import Lemma 2.3 from \cite{BB09}. 
	Suppose that $f_j \in \C[z_1, \cdots , z_n]$ for all $j \in \N$ is $U$-stable for an
open set $U \subseteq \C^n$ and that $f$ is the limit, uniformly on compact subsets of $U$, of the sequence
$\set*{f_j}_{j\in \N}.$ Then $f$ is either $U$-stable or it is identically equal to 0.

In particular, if $f_j$ has bounded degree for all $j\in \N$, and the sequence $\set*{f_j}_{j\in \N}$ converges to $f$ coefficient-wise, then $f_j$ converge to $f$ uniformly on all compact sets in $\R^n$.
	\end{lemma}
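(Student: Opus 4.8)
This is the several‑variables version of Hurwitz's theorem, and the plan is to reduce it to the classical one‑variable statement by slicing along a complex line and applying Rouché's theorem; the ``in particular'' clause is a routine finite‑dimensional estimate. Assume $f\not\equiv 0$; I want to show $f$ is $U$-stable. Suppose for contradiction that $f(\zeta)=0$ for some $\zeta\in U$. Since $f$ is not identically zero, fix $w\in\C^n$ with $f(w)\neq 0$ and parametrize the complex line through $\zeta$ and $w$ by $t\mapsto \zeta+t(w-\zeta)$. The univariate polynomial $p(t):=f(\zeta+t(w-\zeta))$ has $p(0)=0$ and $p(1)=f(w)\neq 0$, so $p\not\equiv 0$ and $t=0$ is an \emph{isolated} zero of $p$. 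Because $U$ is open, there is $r>0$ with $\{\,\zeta+t(w-\zeta):\abs{t}\le r\,\}\subseteq U$ and $p$ having no zero on $0<\abs{t}\le r$; set $\delta:=\min_{\abs{t}=r}\abs{p(t)}>0$. Writing $p_j(t):=f_j(\zeta+t(w-\zeta))$, uniform convergence of $f_j\to f$ on the compact set $\{\,\zeta+t(w-\zeta):\abs{t}\le r\,\}\subseteq U$ gives $p_j\to p$ uniformly on $\{\abs{t}\le r\}$, so for all large $j$ we have $\abs{p_j(t)-p(t)}<\delta\le \abs{p(t)}$ on $\abs{t}=r$. By Rouché's theorem $p_j$ and $p$ have equally many zeros (with multiplicity) in $\{\abs{t}<r\}$, hence $p_j$ has a zero $t^\ast$ there, so $\zeta+t^\ast(w-\zeta)\in U$ is a root of $f_j$. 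For large $j$, $f_j\not\equiv 0$ (as $f\not\equiv 0$), so this genuinely contradicts $U$-stability of $f_j$. Therefore $f$ has no zero in $U$.

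\textbf{Second part.} Suppose $\deg f_j\le d$ for all $j$ and $f_j\to f$ coefficient‑wise (so also $\deg f\le d$). For any compact $K$ and any polynomial $g=\sum_{\abs{\alpha}\le d}c_\alpha(g)\,z^\alpha$ one has $\sup_{z\in K}\abs{g(z)}\le C_K\sum_{\abs{\alpha}\le d}\abs{c_\alpha(g)}$, where $C_K:=\max_{\abs{\alpha}\le d}\sup_{z\in K}\abs{z^\alpha}<\infty$; applying this to $g=f_j-f$ shows $\sup_K\abs{f_j-f}\to 0$, i.e.\ uniform convergence on $K$. Combined with the first part, a coefficient‑wise limit of bounded‑degree $U$-stable polynomials is $U$-stable or identically $0$, as claimed.

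\textbf{Main obstacle.} The only genuine subtlety is the passage from one variable to many: one must choose the slicing line so that $f$ does not vanish identically on it — handled by routing the line through a point $w$ where $f$ is nonzero, which needs no assumption on connectedness of $U$ — and one must keep the Rouché disc inside $U$, handled by openness of $U$ together with continuity of the parametrization. Once that reduction is in place, the rest is the standard one‑variable Hurwitz/Rouché argument plus the elementary norm comparison on the finite‑dimensional space of polynomials of degree at most $d$.
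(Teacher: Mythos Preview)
The paper does not supply a proof of this lemma; it is quoted verbatim as Theorem~1.6 of \cite{BB09} and used as a black box. Your argument---reducing the multivariate Hurwitz theorem to the classical univariate one by restricting to a complex line through the putative zero and a nonvanishing point, then applying Rouch\'e---is the standard proof and is correct, as is the finite-dimensional norm estimate for the ``in particular'' clause.

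One small remark on your write-up: when you invoke $p(1)=f(w)\neq 0$ to conclude $p\not\equiv 0$, you are implicitly using that $f$ is defined at $w$, i.e., that $f$ is a polynomial (or at least entire). That is harmless in the paper's setting, where the lemma is only ever applied to polynomial limits, but for a bare uniform-on-compacta limit $f$ that is merely holomorphic on $U$ you would instead choose $w\in U$ near $\zeta$ (possible because the zero set of a nonzero holomorphic function is nowhere dense) so that the whole construction stays inside $U$. Your ``main obstacle'' paragraph flags exactly this point about connectedness; the polynomial assumption is what lets you route the line through an arbitrary $w\in\C^n$ and sidestep it.
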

	
	\begin{proposition}[Polarization, \cite{BBL09}] \label{prop:polarization} %Proposition 3.1
    For an element $\kappa $ of $\N^n$ let
    \[\R_{\kappa}[z_1, \cdots, z_n] = \set*{\text{polynomials in $\R[z_i]_{1\leq i\leq n}$ of degree at most $\kappa_i$
in $z_i$ for all $i$}} \]
\[\R_{\kappa}^a[z_{ij}] = \set*{\text{multi-affine polynomials in $\R[z_{ij}]_{1\leq i\leq n, 1 \leq j \leq \kappa_i}$}}\]
    The polarization map $\prod^{\uparrow}_{\kappa}$ is a linear map that sends monomial $z^{\alpha} = \prod_{i=1}^n z_i^{\alpha_i}$ to the product
    \[\frac{1}{\binom{\kappa}{\alpha}} \prod_{i=1}^n (\text{elementary symmetric polynomial of degree $\alpha_i$
in the variables $\set{z_{ij}}_{1\leq j \leq \kappa_i}$}
)\]
where $\binom{\kappa}{\alpha} = \prod_{i=1}^n \binom{\kappa_i}{\alpha_i}.$

A polynomial $g \in \R_{\kappa}[z_i]_{1\leq i\leq n}$ with nonnegative coefficients is real-stable if an only if its polarization $\prod^{\uparrow}_{\kappa}(g)$ is also real-stable. 
\end{proposition}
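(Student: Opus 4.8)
The plan is to derive both implications from a single mechanism: the operations of \emph{polarization} $g\mapsto\prod^{\uparrow}_{\kappa}(g)$ and \emph{diagonalization} (the substitution $z_{ij}\mapsto z_i$ for all $j$) are essentially inverse to one another, in the sense that diagonalizing $\prod^{\uparrow}_{\kappa}(g)$ returns $g$, so a zero of one polynomial with all coordinates in the upper half-plane $H_0$ can be transported to a zero of the other. The transport in the hard direction is exactly what the classical Grace--Walsh--Szeg\H{o} coincidence theorem provides.

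First I would record the elementary computation behind everything. Since $\prod^{\uparrow}_{\kappa}$ sends $z^\alpha$ to $\binom{\kappa}{\alpha}^{-1}\prod_{i=1}^n e_{\alpha_i}(z_{i1},\dots,z_{i\kappa_i})$ and the elementary symmetric polynomial satisfies $e_{\alpha_i}(z_i,\dots,z_i)=\binom{\kappa_i}{\alpha_i}z_i^{\alpha_i}$, the binomial normalization cancels block by block and the substitution $z_{ij}\mapsto z_i$ turns $\prod^{\uparrow}_{\kappa}(g)$ back into $\sum_\alpha c_\alpha z^\alpha=g$. In particular $\prod^{\uparrow}_{\kappa}(g)\not\equiv 0$ whenever $g\not\equiv 0$, and the coefficients of $\prod^{\uparrow}_{\kappa}(g)$ are those of $g$ rescaled by positive rationals, so realness (and nonnegativity) of coefficients is preserved in both directions. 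The easy implication is then immediate: if $\prod^{\uparrow}_{\kappa}(g)$ is real-stable, then $g$, obtained from it by repeatedly identifying variables, is real-stable, since a zero of $g$ with all coordinates in $H_0$ lifts to a zero of $\prod^{\uparrow}_{\kappa}(g)$ with all coordinates in $H_0$ (with the usual convention handling the case where a substitution collapses the polynomial to $0$, which can only happen when $g\equiv 0$).

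For the substantive implication I would argue by contradiction. Assume $g$ is real-stable (so $g\not\equiv 0$, hence $\prod^{\uparrow}_{\kappa}(g)\not\equiv 0$) and that $\prod^{\uparrow}_{\kappa}(g)$ vanishes at a point $(\zeta_{ij})$ with every $\zeta_{ij}\in H_0$. The key structural feature is that $\prod^{\uparrow}_{\kappa}(g)$ is symmetric and affine in each block of variables $\{z_{i1},\dots,z_{i\kappa_i}\}$. I would collapse the blocks $i=1,\dots,n$ one at a time: having already replaced blocks $1,\dots,i-1$ by single values $w_1,\dots,w_{i-1}\in H_0$ while keeping the value equal to $0$, I view the polynomial as a symmetric multiaffine polynomial in the $\kappa_i$ variables of block $i$ with all other variables frozen; its arguments $\zeta_{i1},\dots,\zeta_{i\kappa_i}$ lie in the convex circular region $H_0$, so Grace--Walsh--Szeg\H{o} yields a single $w_i\in H_0$ for which substituting $w_i$ into every variable of block $i$ leaves the value equal to $0$. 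After $n$ steps every block is collapsed, and by the computation above the surviving value is $g(w_1,\dots,w_n)=0$ with $(w_1,\dots,w_n)\in H_0^n$, contradicting real-stability of $g$. Since the coefficients of $\prod^{\uparrow}_{\kappa}(g)$ are real, it is therefore real-stable.

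The main obstacle will be making sure the hypotheses of Grace--Walsh--Szeg\H{o} are satisfied at each collapse step. This works out cleanly: $H_0$ is a half-plane, hence a convex circular region, which is precisely the form of the coincidence theorem that requires no condition on the degree of the (frozen) block polynomial, and the conclusion is vacuously true if that polynomial degenerates to $0$. What remains is bookkeeping --- verifying the binomial normalizations so that diagonalization exactly inverts polarization on $\R_{\kappa}$, and observing that the ``nonnegative coefficients'' hypothesis is inessential to the argument (it only guarantees $g\not\equiv 0$), so the proof goes through verbatim for any real polynomial in $\R_{\kappa}[z_1,\dots,z_n]$.
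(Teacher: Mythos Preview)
The paper does not supply its own proof of this proposition; it is stated with a bare citation to \cite{BBL09} and used as a black box. Your argument is correct and is essentially the standard proof found in that reference: diagonalization inverts polarization, so the ``if'' direction is immediate from closure of real-stability under identifying variables, and the ``only if'' direction follows by collapsing each symmetric multiaffine block via the Grace--Walsh--Szeg\H{o} coincidence theorem applied to the convex circular region $H_0$. Your remark that the nonnegativity hypothesis is inessential is also correct --- the result in \cite{BBL09} is stated for arbitrary real (indeed complex) coefficients, and the paper presumably added the hypothesis only because that is the case it needs.
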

Taking polarization of $z^k$ with $\kappa = n$, we obtain the following well-known result.
	\begin{corollary} \label{thm:symmetric} For $k \leq n$, the $k$-th symmetric polynomial in $n$ variables $e_k(z_1, \cdots, z_n)$ is real-stable/Hurwitz-stable.
	\end{corollary}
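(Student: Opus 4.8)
The plan is to derive this as an immediate consequence of the polarization principle (\cref{prop:polarization}) applied to the simplest possible univariate polynomial. First I would consider the single-variable polynomial $g(z) = z^k \in \R[z]$, regarded as an element of $\R_{\kappa}[z]$ with $\kappa = n$, i.e.\ formally allowing degree up to $n$ in $z$. This $g$ has nonnegative coefficients, and it is real-stable: its only root is $z = 0$, which lies on the real axis and hence outside the open upper half-plane $H_0$, so $g$ has no roots in $H_0$.

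Next I would compute its polarization. By the description in \cref{prop:polarization}, with $\alpha = k$ and $\kappa = n$, the polarization map sends the monomial $z^k$ to $\frac{1}{\binom{n}{k}}$ times the elementary symmetric polynomial of degree $k$ in the $n$ variables $z_{1},\dots,z_{n}$, which is exactly $\frac{1}{\binom{n}{k}} e_k(z_1,\dots,z_n)$. Since $g$ has nonnegative coefficients, \cref{prop:polarization} says that $g$ is real-stable if and only if its polarization $\prod^{\uparrow}_{n}(g)$ is real-stable; as $g$ is real-stable, so is $\frac{1}{\binom{n}{k}} e_k$, and multiplying by the positive scalar $\binom{n}{k}$ does not change the zero set. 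Hence $e_k(z_1,\dots,z_n)$ is real-stable.

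Finally, to obtain Hurwitz-stability I would invoke homogeneity: $e_k$ is homogeneous of degree $k$, and for homogeneous polynomials $H_\theta$-stability does not depend on $\theta$ (the observation recorded just before \cref{lemma:limitRoot}; it holds because the substitution $z_i \mapsto e^{-i\theta} z_i$ carries $H_0^n$ onto $H_\theta^n$ and merely multiplies the polynomial by the nonzero scalar $e^{-ik\theta}$). In particular $H_0$-stability of $e_k$ is equivalent to $H_{\pi/2}$-stability, i.e.\ Hurwitz-stability. I do not anticipate a genuine obstacle here; the only points that call for a moment's care are confirming that the degenerate univariate polynomial $z^k$ really does count as real-stable under the paper's conventions, and correctly matching its polarization with $e_k$ up to the harmless normalizing constant $\binom{n}{k}$.
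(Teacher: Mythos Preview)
Your proposal is correct and matches the paper's approach exactly: the paper derives the corollary in one line by ``Taking polarization of $z^k$ with $\kappa = n$,'' and you have spelled out precisely those details, including the harmless normalization by $\binom{n}{k}$ and the passage from real-stability to Hurwitz-stability via homogeneity.
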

	The following theorems will be useful in the proof of \cref{thm:DPPsample}.
	\begin{theorem} \label{thm:hurwitzSameParityPart}
	Let $g(z_1, \cdots, z_n) \in \R[z_1, \cdots, z_n]$ be Hurwitz-stable. Let $g_e$ ($g_o$) be the even (odd) part of $g$ i.e., the sum of terms $c_{\alpha} z^{\alpha}$ whose total degree $\abs{\alpha}_1$ is even (odd resp.). Then $g_e$ and $g_o$ are either identically $0$ or Hurwitz-stable.
	\end{theorem}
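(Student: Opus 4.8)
The plan is to reduce this multivariate statement to a one‑variable fact about Hurwitz‑stable polynomials by slicing along complex lines, and to reduce the general case to the case where $g$ has no zeros even on the closed right half‑plane, where the slicing argument has no degeneracies, recovering the general case by a limiting argument via \cref{lemma:limitRoot}.

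\textbf{Step 1: a one‑variable inequality.} For $p\in\C[t]$ write $p^\star(t):=\overline{p(-\bar t)}$; if $p(t)=c\prod_j(t-r_j)$ then the zeros of $p^\star$ are the reflections $-\bar r_j$ of those of $p$ across the imaginary axis. I will show that if every zero of $p$ lies in the closed left half‑plane, then
\[ |p(t_0)|\;\geq\;|p^\star(t_0)| \qquad\text{whenever } \Re(t_0)>0, \]
with equality only if every zero of $p$ lies on the imaginary axis. This follows termwise from the elementary identity
\[ |t_0-r|^2-|-\bar t_0-r|^2 = -4\,\Re(r)\,\Re(t_0), \]
which is $\geq 0$ when $\Re(r)\leq 0$ and $\Re(t_0)>0$, and is $>0$ unless $\Re(r)=0$. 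Hence if moreover $p$ is non‑constant and all its zeros lie in the \emph{open} left half‑plane, the inequality is strict, so $p(t_0)\pm p^\star(t_0)\neq 0$; i.e.\ $\tfrac12(p+p^\star)$ and $\tfrac12(p-p^\star)$ have no zeros with positive real part.

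\textbf{Step 2: many variables to one, strict case.} Call $g$ \emph{strictly Hurwitz‑stable} if $g(z)\neq 0$ whenever $\Re(z_i)\geq 0$ for all $i$; assume this (and $g$ non‑constant, the constant case being trivial) and suppose $g_e\not\equiv 0$ vanishes at some $z^\ast$ with $\Re(z^\ast_i)>0$ for all $i$. Write $z^\ast=v+iu$, $v=\Re(z^\ast)\in\R_{>0}^n$, $u=\Im(z^\ast)\in\R^n$, and slice along $t\mapsto vt+iu$. Since $g$ has real coefficients and $v,u$ are real, $g\big({-}(vt+iu)\big)=\overline{g\big(v(-\bar t)+iu\big)}$, so with $p(t):=g(vt+iu)$ one gets $p^\star(t)=g(-vt-iu)$ and therefore $g_e(vt+iu)=\tfrac12(p(t)+p^\star(t))$, $g_o(vt+iu)=\tfrac12(p(t)-p^\star(t))$. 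Because $\Re(v_it)=v_i\Re(t)\geq 0$ whenever $\Re(t)\geq 0$, strict Hurwitz‑stability pushes every zero of $p$ into the open left half‑plane. The leading coefficient of $p$ in $t$ is $g^{\mathrm{top}}(v)$, the top‑degree homogeneous part of $g$ evaluated at $v$; if $g^{\mathrm{top}}(v)\neq 0$ then $p$ is non‑constant, Step 1 at $t_0=1$ gives $g_e(z^\ast)=\tfrac12(p(1)+p^\star(1))\neq 0$, a contradiction. To drop the hypothesis $g^{\mathrm{top}}(v)\neq 0$: the zero set of $g_e$ inside $\{\Re(z_i)>0\ \forall i\}$ is, if nonempty, an open piece of a complex hypersurface, and cannot lie inside the proper real‑algebraic set $\{z:g^{\mathrm{top}}(\Re z)=0\}$ (a holomorphic hypersurface confined there would force $g^{\mathrm{top}}$ to vanish on an open subset of $\R^n$), so $z^\ast$ can be chosen with $g^{\mathrm{top}}(\Re z^\ast)\neq 0$ from the start. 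Running the same argument with $\tfrac12(p-p^\star)$ handles $g_o$. Thus for strictly Hurwitz‑stable $g$, both $g_e$ and $g_o$ are Hurwitz‑stable or identically zero.

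\textbf{Step 3: general case by perturbation.} Let $g\in\R[z_1,\dots,z_n]$ be Hurwitz‑stable and put $g_\epsilon(z):=g(z_1+\epsilon,\dots,z_n+\epsilon)$ for $\epsilon>0$. If $\Re(z_i)\geq 0$ for all $i$ then $\Re(z_i+\epsilon)>0$, so $g_\epsilon(z)\neq 0$; hence $g_\epsilon$ is strictly Hurwitz‑stable and, by Step 2, $(g_\epsilon)_e$ and $(g_\epsilon)_o$ are each Hurwitz‑stable or identically zero. As $\epsilon\to 0^+$ the coefficients of $g_\epsilon$ (polynomials in $\epsilon$) converge to those of $g$ with degree bounded by $\deg g$, and passing to even/odd parts is a coordinate projection on coefficient vectors, so $(g_\epsilon)_e\to g_e$ and $(g_\epsilon)_o\to g_o$ coefficient‑wise. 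By \cref{lemma:limitRoot}, $g_e$ and $g_o$ are each $H_{\pi/2}^n$‑stable or identically zero, as claimed.

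\textbf{Main obstacle.} The delicate point is the slicing in Step 2: each point of the right‑half‑plane polydomain lies on a \emph{unique} admissible line $t\mapsto vt+iu$ (the direction $v$ must be the real part of the point), so if $g$ restricted to that single line happened to be constant the one‑variable lemma would be vacuous. Excluding this — the genericity step "$g^{\mathrm{top}}(\Re z^\ast)\neq 0$" — is where the argument needs the most care; an alternative is to perturb the direction $v$ slightly and invoke continuity of roots together with a normal‑families compactness argument. Everything else (the magnitude inequality, the shift $z\mapsto z+\epsilon\mathbf 1$, the passage to the limit) is routine.
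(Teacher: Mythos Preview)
Your proof is correct and takes a genuinely different route from the paper's. The paper argues by a short reduction: substituting $z_j = i y_j$ converts Hurwitz-stability of $g$ into $H_0$-stability of $h(y) := g(iy)$, and since $g$ has real coefficients one finds $h = h_e + i h_o$ with $h_e(y) = g_e(iy)$ and $h_o(y) = i^{-1} g_o(iy)$ both having real coefficients; then \cite[Lemma~1.8(2)]{BB09} asserts that the real and imaginary polynomial parts of an $H_0$-stable polynomial are each real-stable or zero, which transfers back to $g_e, g_o$. Your argument, by contrast, is self-contained: the one-variable Schur-type inequality $|p(t_0)| > |p^\star(t_0)|$, the slicing $t \mapsto vt + iu$, and the shift $z \mapsto z + \epsilon\mathbf{1}$ together do the job without external black boxes. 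The paper's proof is shorter but leans on cited machinery; yours is longer but elementary and gives some additional insight into why the even and odd parts inherit stability.

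One simplification you missed: the genericity detour in Step~2 is unnecessary. The top-degree homogeneous part $g^{\mathrm{top}}$ is itself Hurwitz-stable, because $g^{\mathrm{top}}(z) = \lim_{\lambda\to\infty} \lambda^{-d} g(\lambda z)$ is a coefficient-wise limit of Hurwitz-stable polynomials and is not identically zero (apply \cref{lemma:limitRoot}). Hence $g^{\mathrm{top}}(v) \neq 0$ automatically for every $v = \Re(z^\ast) \in \R_{>0}^n$, so $p(t)=g(vt+iu)$ always has full degree $d = \deg g \geq 1$, and the complex-hypersurface-versus-real-hypersurface argument can simply be dropped.
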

	\begin{proof}
	We have $g= g_e + g_o.$
	Replace $z_j$ with $iy_j$ with $y_j \in \mathcal{H}_0.$ Let $h(\set{y_j}_{j=1}^n) : = g(\set{i y_j}_{j=1}^n), h_e (\set{y_j}_{j=1}^n) := g_e (\set{i y_j}_{j=1}^n)$ and $h_o(\set{y_j}_{j=1}^n) := i^{-1} g_o (\set{i y_j}_{j=1}^n) $  then $h_e, h_o$ are polynomials with real coefficients, and $h$ is upper half-plane stable.
	
	We have $h = h_e + i h_o  $, and this is the unique way to write $h$ as $h_1 + ih_2$ where $h_j$ are polynomial with real coefficients, for $j \in \set*{1,2}.$ By \cite[Lemma 1.8, part (2)]{BB09}, $h_e$ and $h_o$ are real-stable or identically 0. Thus $g_e$, $g_o$ are Hurwitz-stable or identically 0. 
	%If $g_o \equiv 0$ then $g \equiv g_e$ so the claim is trivially true. Similarly, if $g_e\equiv 0$ then we are done.
	\end{proof}
	\begin{theorem}[\cite{BBL09}, Proposition 3.2] \label{thm:detMatrix}
	Let $A_1, \cdots, A_n$ be (complex) positive semi-definite matrices and
let $B$ be a (complex) Hermitian matrix, all matrices being of the same size $m\times m$.
\begin{enumerate}%[label = (\roman*)]
    \item 
    The polynomial
\[f(z_1 , \cdots, z_n) = \det(z_1 A_1 + \cdots + z_n A_n + B)\] 
is either identically zero or real-stable;
\item 
If $B$ is also positive semi-definite then $f$ has all non-negative coefficients.
\end{enumerate}
% In particular, if $Z = \diag{z_1, \cdots , z_n}$ and $A$ is a positive semi-definite $n\times n$
% matrix then $det(A+Z)$ is a multi-affine real-stable polynomial with all non-negative
% coefficients
	\end{theorem}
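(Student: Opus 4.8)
I would split the proof into its two parts, each by a short elementary argument. \emph{Part (1).} First I would record that $f$ has real coefficients: since each $A_j$ and $B$ is Hermitian, $\overline{A_j} = A_j^{\intercal}$ and $\overline{B} = B^{\intercal}$, so $\overline{f(\bar z_1,\dots,\bar z_n)} = \det\big(\sum_j z_j \overline{A_j} + \overline{B}\big) = \det\big((\sum_j z_j A_j + B)^{\intercal}\big) = f(z_1,\dots,z_n)$, forcing every coefficient of $f$ to be real. It then remains to show $f$ is $H_0$-stable or identically zero. Fix $z$ with every coordinate in $H_0$, say $z_j = x_j + i y_j$ with $y_j > 0$, and write $M(z) = C + iD$ where $C := \sum_j x_j A_j + B$ is Hermitian and $D := \sum_j y_j A_j \succeq 0$. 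The key observation is that $M(z) v = 0$ implies $0 = \dotprod{v, M(z) v} = \dotprod{v, C v} + i\,\dotprod{v, D v}$ with both summands real (as $C, D$ are Hermitian), so $\dotprod{v, D v} = \sum_j y_j \dotprod{v, A_j v} = 0$; since $y_j > 0$ and $A_j \succeq 0$ this forces $A_j v = 0$ for all $j$, whence $M(z) v = B v$ and hence $B v = 0$ too. Therefore either $\bigcap_j \ker A_j \cap \ker B = 0$ --- in which case $M(z)$ is invertible for every $z \in H_0^n$, so $f$ is $H_0$-stable and, having real coefficients, real-stable --- or there is a nonzero $v \in \bigcap_j \ker A_j \cap \ker B$, in which case $M(z) v = 0$ for \emph{every} $z$ and $f \equiv 0$. (Equivalently, one can first replace each $A_j$ by $A_j + \epsilon I$ so that $D \succ 0$, observe that $C + iD$ is then invertible, conclude each perturbed polynomial is real-stable, and pass to the coefficientwise limit using \cref{lemma:limitRoot}.)

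\emph{Part (2).} Here I would fold $B$ into the list of matrices: set $A_{n+1} := B \succeq 0$ with a fresh variable $z_{n+1}$, so $f(z_1,\dots,z_n) = \det\big(\sum_{j=1}^{n+1} z_j A_j\big)\big|_{z_{n+1} = 1}$; since substituting $z_{n+1} = 1$ into a polynomial with nonnegative coefficients leaves the coefficients nonnegative, it suffices to prove that $\det\big(\sum_{j=1}^{N} z_j A_j\big)$ has nonnegative coefficients whenever all $A_j \succeq 0$. Factor $A_j = B_j B_j^*$ (via the spectral decomposition), let $W := [\,B_1 \mid \cdots \mid B_N\,]$ be the $m \times L$ concatenation of the blocks $B_j$, and let $\phi(\ell)$ denote the index of the block containing column $\ell$ of $W$, so that $\sum_j z_j A_j = W\, D_z\, W^*$ with $D_z$ diagonal and $(D_z)_{\ell\ell} = z_{\phi(\ell)}$. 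The Cauchy--Binet formula then gives
\[ \det\Big(\sum_j z_j A_j\Big) = \det(W D_z W^*) = \sum_{S \in \binom{[L]}{m}} \abs*{\det W_{[m],S}}^2 \prod_{\ell \in S} z_{\phi(\ell)}, \]
and grouping the right-hand side by monomial exhibits each coefficient of $f$ as a sum of the nonnegative quantities $\abs{\det W_{[m],S}}^2$.

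Both substantive steps are short: in Part (1), splitting $\dotprod{v, M(z) v}$ into its real and imaginary parts; in Part (2), recognizing $\sum_j z_j A_j$ as $W D_z W^*$ and invoking Cauchy--Binet. The only point I expect to need care is the degenerate case of Part (1), where the $A_j$ and $B$ share a common kernel vector --- precisely what forces the ``identically zero'' alternative into the statement --- and the perturbation-and-limit route via \cref{lemma:limitRoot} is a tidy way to package it.
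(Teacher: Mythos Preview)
The paper does not prove this theorem; it is quoted from \cite{BBL09} (Proposition 3.2) and used as a black box, so there is no in-paper proof to compare against.

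Your argument is correct on both parts. In Part~(1), the kernel dichotomy is clean: splitting $\dotprod{v, M(z)v}$ into its Hermitian ``real'' and ``imaginary'' pieces forces any null vector of $M(z)$ into $\bigcap_j \ker A_j \cap \ker B$, and this common kernel being trivial or not is exactly the stable/identically-zero split. The perturbation-and-limit variant you mention is indeed the route taken in \cite{BBL09}. In Part~(2), the Cauchy--Binet expansion via $A_j = B_j B_j^*$ and $\sum_j z_j A_j = W D_z W^*$ correctly exhibits every coefficient as a sum of squared moduli $\abs{\det W_{[m],S}}^2$, and the reduction to $B = 0$ by introducing $z_{n+1}$ is valid. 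Nothing is missing.
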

	%The following statements will be useful for proof of \cref{thm:DPPsample} and \cref{thm:monomerDimer}.
	
	\begin{lemma}%[determinant process by $p_0$ matrix is Hurwitz-stable] 
	\label{lemma:p0MatrixHurwitz}
	Consider $A \in \R^{n\times n}$ satisfying $A+A^T$ is positive semi-definite. %$A = D + X $ where $D\in \R^{n\times n}$ is positive semi-definite and $X\in \R^n$ is skew symmetric. %[$p_0$-matrix definition, see Lemma 1, \cite{Gartrell2019LearningND}]
	Let $f (z_1, \cdots, z_n) = \sum_{S \subseteq [n]} z^{[n] \setminus S} \det(A_{S,S}).$ Then $f$ has non-negative coefficients, and is either identically $0$ or Hurwitz-stable.
	\end{lemma}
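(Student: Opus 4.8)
The plan is to reduce the claim to the determinant construction of \cref{thm:detMatrix} via a doubling of variables. First I would write $f(z_1,\dots,z_n) = \sum_{S\subseteq[n]} z^{[n]\setminus S}\det(A_{S,S})$ in the homogeneous form
\[ F(z_1,\dots,z_n,w_1,\dots,w_n) = \sum_{S\subseteq[n]} \Bigl(\prod_{i\notin S} z_i\Bigr)\Bigl(\prod_{i\in S} w_i\Bigr)\det(A_{S,S}), \]
so that $f(z) = F(z,\mathbf{1})$. The key algebraic identity is that $F$ equals $\det\bigl(D_z + D_w^{1/2} A\, D_w^{1/2}\bigr)$ up to interpreting things correctly — more precisely, expanding $\det(\operatorname{diag}(z) + \operatorname{diag}(w)\cdot A)$ along principal minors gives exactly $\sum_{S} \prod_{i\notin S} z_i \cdot \prod_{i\in S} w_i \cdot \det(A_{S,S})$, by the standard multilinear (Cauchy–Binet-type) expansion of a determinant of a sum of a diagonal matrix and a matrix with scaled rows. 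So I would verify this identity first; it is a routine expansion but it is the linchpin.

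Next I would establish stability of $\det(\operatorname{diag}(z) + \operatorname{diag}(w) A)$. The obstacle is that $\operatorname{diag}(w)A$ is not symmetric and not of the form covered by \cref{thm:detMatrix} directly. The fix: since $A + A^\intercal \succeq 0$, write $A = P + Q$ where $P = \tfrac12(A+A^\intercal)\succeq 0$ is symmetric PSD and $Q = \tfrac12(A - A^\intercal)$ is skew-symmetric. One then needs to argue that $\det(\operatorname{diag}(z) + B)$ with $B$ having the right structure is Hurwitz-stable. Here I would substitute $z_j \mapsto i y_j$ to pass to the upper-half-plane picture and then invoke \cref{thm:detMatrix}(1) together with the observation that adding a fixed Hermitian matrix $B$ — or in the skew case $iQ$, which is Hermitian — keeps $\det(\sum y_j A_j + B)$ real-stable. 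Concretely, after the substitution the polynomial in the $y_j$ should look like $\det(\sum_j y_j E_{jj} + \text{(Hermitian)})$ where $E_{jj}$ is the $j$-th coordinate projection (PSD), so \cref{thm:detMatrix}(1) applies and gives real-stability in the $y$ variables, hence Hurwitz-stability in the $z$ variables. Then $F(z,w)$ for fixed $w$ in the right half-plane is a stable polynomial in $z$; specializing $w = \mathbf{1}$ (which lies in the closed right half-plane, so one takes a limit using \cref{lemma:limitRoot}) yields Hurwitz-stability or vanishing of $f$.

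Finally, the nonnegativity of the coefficients of $f$: the coefficient of $z^{[n]\setminus S}$ is $\det(A_{S,S})$, and a matrix $A$ with $A + A^\intercal \succeq 0$ has all principal minors nonnegative — this is a classical fact (such $A$ is a so-called $P_0$-matrix), provable by noting $\det(A_{S,S}) = \det(P_{S,S} + Q_{S,S})$ with $P_{S,S}\succeq 0$, $Q_{S,S}$ skew-symmetric, and using that $\det(\text{PSD} + \text{skew})\ge 0$ (e.g.\ via $\det(M) = \det(M^\intercal)$ and a continuity/eigenvalue argument, or by the identity $2\det(P+Q) = \det(P+Q)+\det(P - Q) + (\text{cross terms that are nonnegative})$, or simply citing it). I expect the main obstacle to be cleanly handling the skew-symmetric part inside the determinant stability argument — specifically making sure the substitution $z\mapsto iy$ turns the skew part into a genuine Hermitian additive term so \cref{thm:detMatrix} applies verbatim — and the boundary specialization $w=\mathbf 1$, which requires the limiting argument of \cref{lemma:limitRoot} rather than a direct application.
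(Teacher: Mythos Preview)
Your overall strategy is close to the paper's, and the determinant identity $f(z)=\det(\diag(z)+A)$ together with the $P_0$-matrix argument for nonnegativity are both fine. However, there is a genuine gap in the stability step, and you have the obstacle backwards.

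After the substitution $z_j=-iy_j$ (chosen so that $\Re(z_j)>0\iff\Im(y_j)>0$), the polynomial becomes, up to the scalar $(-i)^n$,
\[
\det\Bigl(\sum_j y_j E_{jj}+iP+iQ\Bigr).
\]
Now $iQ$ \emph{is} Hermitian automatically, since $Q$ is real skew-symmetric: $(iQ)^*=-iQ^\intercal=iQ$. The troublesome piece is $iP$: with $P$ real symmetric, $(iP)^*=-iP$, so $iP$ is \emph{anti}-Hermitian. Hence the additive constant $iP+iQ$ is not Hermitian, and \cref{thm:detMatrix} does not apply as you claimed. Your $w$-variable homogenization does not help here: for generic $w$ the matrix $\diag(w)A$ has neither a symmetric nor a skew decomposition, and for $w=\mathbf{1}$ you are back to exactly the same obstruction.

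The paper's fix is precisely to deal with the symmetric part $P=D$ by giving it its own variable rather than absorbing it into the Hermitian constant. One considers
\[
g(z_1,\dots,z_n,z_{n+1})=\det\Bigl(\sum_{j=1}^n z_j E_{jj}+z_{n+1}D+iX\Bigr),
\]
which \emph{is} of the form in \cref{thm:detMatrix} (PSD matrices $E_{11},\dots,E_{nn},D$ and Hermitian $B=iX$), hence real-stable. Then one specializes $z_{n+1}=i$, which lies in the open upper half-plane, and checks that $g(iw_1,\dots,iw_n,i)=i^n f(w_1,\dots,w_n)$. The single extra variable attached to the PSD part is the missing idea in your proposal.
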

	\begin{proof}
	Clearly, $A + A^T  $ is positive semi-definite, so $A$ is a $P_0$-matrix (see \cite[Lemma 1]{Gartrell2019LearningND}) i.e., all principle minors of $A$ are nonnegative.
	The coefficients of $f$ are principle minors of $A$, and are thus nonnegative.
	
	Let $D = (A + A^T)/2, X = (A - A^T)/2. $ Note that $X$ is skew-symmetric, thus
    $B:= iX$ is a Hermitian matrix, and $D$ is positive semi-definite.
	Apply \cref{thm:detMatrix} with $A_j = \diag{\textbf{e}_j}$ for $j\in [n]$ where $\textbf{e}_j$ is the $j$-th standard basis vector, $A_{n+1} = D$  and $B = i X$, we have
	$g(z_1, \cdots, z_n, z_{n+1}) :=\det(\sum_{i=1}^n z_i A_i + z_{n+1}D + iX) $ is either identically $0$ or real-stable. %as a polynomial in $(z_1, \cdots, z_n, z_{n+1})$.
	
	Let $w_j := i^{-1} z_j$, $Z = \sum_{i=1}^n z_i A_i = \diag{z_1, \cdots, z_n}$ and $W = \diag{w_1, \cdots, w_n }$. We can rewrite
	\begin{align*}
	    g(z_1, \cdots, z_n, i) &= \det(Z + iD +iX) = \det(iW + iA) = i^{n} \det(W +A) \\
	    &=  i^n \sum_{S \subseteq [n]} w^{[n] \setminus S} \det(A_{S,S})  = i^n f(w_1, \cdots, w_n)
	\end{align*}
	If $g\equiv 0$ then so is $f.$ Suppose $g\not\equiv 0.$ 
	Fix arbitrary $w_1, \cdots,w_n $ in the right half plane $H_{\pi/2}$. Observe that $z_j = iw_j$ is in the upper half plane $H_0$. Real-stability of $g$ implies $f(w_1, \cdots, w_n) = g(z_1, \cdots, z_n, i)\neq 0 ,$ hence $f$ is Hurwitz-stable.  
	\end{proof}
	We also need the following for the proof of \cref{thm:monomer-dimer}.

	\begin{theorem}[\cite{HL72}] \label{thm:monomerDimerPoly}
	Consider a graph $G = G(V,E)$ on $n$ vertices with edge weight $w : E \to \R_{\geq 0}$ and vertex weight $\lambda: V \to \R_{\geq 0}$. For $S \subseteq V$, let $m_S := \sum_M \weight(M) =  \sum_M(\prod_{e\in M} w(e) \prod_{v\not\in S} \lambda(v))$ where the sum is taken over all perfect matchings $M$ of $S$. 
	The following polynomial is Hurwitz-stable
	\[f(z_1, \cdots, z_n) = \sum_{S \subseteq V} z^{[n] \setminus S} m_S \]
	\end{theorem}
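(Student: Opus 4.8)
The plan is to reproduce the inductive argument of \textcite{HL72}, organized around the vertex-deletion recursion for the monomer-dimer polynomial together with a suitably strengthened inductive hypothesis. As a preliminary reduction I would first argue that it suffices to treat strictly positive weights $w(e)>0$, $\lambda(v)>0$: given arbitrary nonnegative weights, perturb them to $w(e)+\epsilon$ and $\lambda(v)+\epsilon$; the resulting polynomials have degree at most $n$ and converge coefficient-wise to $f$ as $\epsilon\to 0^+$, hence by \cref{lemma:limitRoot} the limit $f$ is Hurwitz-stable or identically $0$, and both count as Hurwitz-stable under our convention. The point of positive weights is that then every induced subgraph $G'$ of $G$ satisfies $f_{G'}\not\equiv 0$ — the empty matching alone contributes the monomial $\prod_{v\in V(G')}\lambda(v)z_v$ with nonzero coefficient — which makes the ratios appearing below well defined.

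Next I would record the recursion: fixing a vertex $u$ and sorting the matchings of $G$ according to whether $u$ is a monomer or is matched to one of its neighbors $v$ yields
\[ f_G(z) = \lambda(u)\, z_u\, f_{G-u}(z) + \sum_{v\sim u} w(uv)\, f_{G-u-v}(z), \]
where $G-u$ and $G-u-v$ denote the induced subgraphs on $V\setminus\{u\}$ and $V\setminus\{u,v\}$, and we suppress the inessential restriction of the variables. The statement I would prove by induction on $|V|$ is the conjunction of: (a) $f_G$ is Hurwitz-stable; and (b) for every vertex $u$ and every point $z$ with all coordinates in the open right half-plane, $\Re\bigl(f_G(z)/f_{G-u}(z)\bigr)>0$. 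The base case $|V|\le 1$ is immediate, since there $f_G$ is a positive constant or equals $\lambda(u)z_u$.

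For the inductive step, dividing the recursion by $f_{G-u}$ (nonzero on the region by (a) applied to the smaller graph $G-u$) gives
\[ \frac{f_G}{f_{G-u}} = \lambda(u)\, z_u + \sum_{v\sim u} w(uv)\,\frac{f_{(G-u)-v}}{f_{G-u}}. \]
Part (b) applied to the graph $G-u$ and its vertex $v$ says $\Re\bigl(f_{G-u}/f_{(G-u)-v}\bigr)>0$, and since $\Re(1/\zeta)=\Re(\zeta)/\abs{\zeta}^2$ has the same sign as $\Re(\zeta)$, also $\Re\bigl(f_{(G-u)-v}/f_{G-u}\bigr)>0$. As the weights $w(uv)$ are nonnegative, the neighbor sum has nonnegative real part, and since $\lambda(u)\Re(z_u)>0$ we conclude $\Re(f_G/f_{G-u})>0$, which is (b) for $G$. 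In particular $f_G/f_{G-u}\neq 0$, and multiplying by the nonzero $f_{G-u}$ shows $f_G\neq 0$, which is (a) for $G$.

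The crux is choosing the right inductive statement: Hurwitz-stability of $f_G$ by itself does not propagate through the recursion, and one must carry along the auxiliary fact that the ratio $f_G/f_{G-u}$ maps the open right half-plane (in each variable) into itself — this half-plane-mapping property is exactly the insight of \textcite{HL72}. The only other delicate point is that (b) involves dividing by $f_{G-u}$, which can vanish identically for degenerate (zero) weights; the perturbation in the preliminary reduction removes this nuisance. Everything else — the combinatorial recursion and the elementary facts about real parts of reciprocals and of nonnegative combinations — is routine.
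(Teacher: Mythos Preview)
The paper does not prove this theorem; it merely quotes it from \textcite{HL72}. Your proposal is a faithful and correct rendering of the classical Heilmann--Lieb inductive argument: the vertex-deletion recursion together with the strengthened hypothesis that $f_G/f_{G-u}$ maps the product of open right half-planes into the open right half-plane is exactly their idea, and your handling of the degenerate-weight case via perturbation and \cref{lemma:limitRoot} is clean. Nothing needs to be changed.
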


	\subsection{Matroids}
	A matroid $M = (E,\I)$ is a structure consisting of a finite ground set $E$ and a non-empty collection $\I$ of \emph{independent} subsets of $E$ satisfying: 
\begin{enumerate}  
	\item If $S \subseteq T$ and $T \in \I$, then $S \in \I$.  
	\item If $S,T\in \I$ and $\card{T} > \card{S}$, then there exists an element $i \in T\setminus S$ such that $S\cup \set{i} \in \I$. 
\end{enumerate}
The \emph{rank} of a matroid is the size of the largest independent set of that matroid. If $M$ has rank $r$, any set $S\in \I$ of size $r$ is called a \emph{basis} of $M$. Let $\B_M \subset \I$ denote the set of bases of $M$. The set of bases $\B_M$ of a matroid unique define $M.$

We say a matroid $M$ is strongly Rayleigh or satisfies the weak half-plane property if $f(z_1, \cdots, z_n) = \sum_{S \in \B_{M}} z^{S}$ is real-stable. 

For partition $T_1, \cdots, T_s$ of $[n]$, and tuple $(c_1, \cdots, c_s) \in \N^s$, the partition matroid $M$ associated with $(T_1, \cdots, T_s)$ and $(c_1, \cdots, c_s)$ is defined by $\B_M = \set*{S\subseteq [n] \given \abs{S \cap T_i} = c_i \forall i}.$
	
	\section{Down-Up Random Walks and Spectral Independence}

Here we establish sufficient conditions for rapid mixing of the $k\leftrightarrow \l$ down-up random walks as defined in \cref{def:local-walk}.

\begin{remark}
	Our arguments in this section are small tweaks of the local-to-global contraction analyses already found in prior work of \textcite{AL20} and \textcite{CGM19}; the origin of these types of arguments goes back to the study of high-dimensional expanders  \cite{KM16,DK17,KO18}, and more sophisticated variants useful in the context of Markov chain analysis can be found in recent works of \textcite{CLV20a, CLV20b, GM20}. For the mixing time bounds in this work, the analysis of \textcite{AL20} and the framework built on it by \textcite{ALO20} dubbed ``spectral independence'' suffices; however, we choose to state a general local-to-global contraction analysis not found explicitly in prior work, in the hope that it will find use in future applications.
\end{remark}

For a distribution $\mu:\binom{[n]}{k}\to \R_{\geq 0}$, our goal is to analyze the mixing time of the $k\leftrightarrow \l$ down-up random walk. We will do this by establishing contraction of $f$-divergence in these random walks. Similar to prior results on local-to-global analysis of high-dimensional expanders, our goal is to show that ``local'' contraction of $f$-divergence (where the down-up walks are applied to a ``localization'' of $\mu$) implies ``global'' contraction of $f$-divergence.

The down-up walks can be written as the composition of two row-stochastic operators known aptly as the down and up operators.
\begin{definition}[Down Operator]
	For a ground set $[n]$, and cardinalities $k\geq \l$ define the row-stochastic down operator $D_{k\to \l}\in \R^{\binom{[n]}{k}\times \binom{[n]}{\l}}$ as
	\[ 
		D_{k\to \l}(S, T)=\begin{cases}
			\frac{1}{\binom{k}{\l}}&\text{ if }T\subseteq S,\\
			0&\text{ otherwise}.\\
		\end{cases}
	\]
\end{definition}
This operator applied to a random set $S$, produces a uniformly random subset $T$ of size $\l$ out of it. The down operators compose in the way one expects, i.e., $D_{k\to \l}D_{\l\to m}=D_{k\to m}$. Note that the down operator has no dependence on $\mu$. In contrast the up operator as defined below depends on $\mu$ and is actually designed to be the time-reversal of the down operator w.r.t.\ the background measure $\mu$.
\begin{definition}[Up Operator]
	For a ground set $[n]$, cardinalities $k\geq \l$, and density $\mu:\binom{[n]}{k}\to \R_{\geq 0}$, define the up operator $U_{\l \to k}\in \R^{\binom{[n]}{\l}\times \binom{[n]}{k}}$ as
	\[ 
		U_{\l\to k}(T, S)=\begin{cases}
			\frac{\mu(S)}{\sum_{S'\supseteq T}\mu(S')}&\text{ if }T\subseteq S,\\
			0&\text{ otherwise}.\\
		\end{cases}
	\]
\end{definition}
If we name $\mu_k=\mu$ and more generally let $\mu_\l$ be $\mu_k D_{k\to \l}$, then the down and up operators satisfy detailed balance (time-reversibility) w.r.t.\ the $\mu_k, \mu_\l$ operators. In other words we have
\[ \mu_k(S)D_{k\to \l}(S, T)=\mu_\l(T)U_{\l \to k}(T, S). \]
This property ensures that the composition of the down and up operators have the appropriate $\mu$ as a stationary distribution, are time-reversible, and have nonnegative real eigenvalues.
\begin{proposition}[{\cite[see, e.g.,][]{KO18,AL20,ALO20}}]
	The operators $D_{k\to \l}U_{\l\to k}$ and $U_{\l\to k}D_{k\to \l}$ both define Markov chains that are time-reversible and have nonnegative eigenvalues. Moreover $\mu_k$ and $\mu_\l$ are respectively their stationary distributions.
\end{proposition}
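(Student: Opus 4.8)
The plan is to verify the three asserted properties—stationarity, time-reversibility, and nonnegativity of eigenvalues—directly from the detailed balance relation $\mu_k(S)D_{k\to\l}(S,T)=\mu_\l(T)U_{\l\to k}(T,S)$ stated just above. First I would record that detailed balance itself follows by unwinding definitions: for $T\subseteq S$ the left side is $\mu_k(S)/\binom{k}{\l}$, while the right side is $\mu_\l(T)\cdot\mu(S)/\sum_{S'\supseteq T}\mu(S')$, and since $\mu_\l(T)=(\mu_k D_{k\to\l})(T)=\sum_{S'\supseteq T}\mu(S')/\binom{k}{\l}$, the two sides agree (both vanish when $T\not\subseteq S$). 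This is the routine computation underlying everything else.

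Next I would deduce stationarity of $\mu_k$ for $P:=U_{\l\to k}D_{k\to\l}$ and of $\mu_\l$ for $Q:=D_{k\to\l}U_{\l\to k}$. Summing detailed balance over $S$ gives $\mu_\l(T)=\sum_S \mu_k(S)D_{k\to\l}(S,T)$, i.e. $\mu_k D_{k\to\l}=\mu_\l$, and symmetrically $\mu_\l U_{\l\to k}=\mu_k$ (sum over $T$). Hence $\mu_k P=\mu_k U_{\l\to k}D_{k\to\l}=\mu_\l D_{k\to\l}$... more carefully, $\mu_\l D_{k\to\l}$ is not what we want; instead one writes $\mu_\l U_{\l\to k}=\mu_k$ so $\mu_k (U_{\l\to k}D_{k\to\l})=(\mu_k U_{\l\to k}... )$—the cleanest route is: $\mu_k P = \mu_k U_{\l\to k} D_{k\to\l}$, but $\mu_k$ is a distribution on $\binom{[n]}{k}$ and $U_{\l\to k}$ acts on distributions over $\binom{[n]}{\l}$, so the honest statement is $\mu_\l U_{\l\to k}D_{k\to\l}=\mu_k D_{k\to\l}=\mu_\l$, giving $\mu_\l$ stationary for $Q$, and dually $\mu_k D_{k\to\l}U_{\l\to k}=\mu_\l U_{\l\to k}=\mu_k$, giving $\mu_k$ stationary for $P$. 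I will present exactly this chain of equalities, being careful about which space each operator acts on.

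For time-reversibility, detailed balance says $D_{k\to\l}$ and $U_{\l\to k}$ are adjoints of each other with respect to the inner products weighted by $\mu_k$ and $\mu_\l$: concretely, in the Hilbert spaces $L^2(\mu_k)$ and $L^2(\mu_\l)$, the operator $D_{k\to\l}^{*}$ (adjoint) equals $U_{\l\to k}$. Therefore $P=U_{\l\to k}D_{k\to\l}$ acting on $L^2(\mu_k)$ equals $D_{k\to\l}^{*}D_{k\to\l}$, which is self-adjoint, i.e. reversible with respect to $\mu_k$; and $Q=D_{k\to\l}U_{\l\to k}=D_{k\to\l}D_{k\to\l}^{*}$ on $L^2(\mu_\l)$ is self-adjoint, i.e. reversible with respect to $\mu_\l$. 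Writing $M:=D_{k\to\l}$ viewed as a map $L^2(\mu_k)\to L^2(\mu_\l)$, we have $P=M^{*}M$ and $Q=MM^{*}$; both are positive semidefinite operators, so all eigenvalues are nonnegative reals, and by the theorem quoted in the excerpt (spectra of $AB$ and $BA$ agree up to extra zeros) $P$ and $Q$ have the same nonzero spectrum. The only mild subtlety, and the part I would be most careful about, is keeping the two weighted inner products straight so that "adjoint" is unambiguous—once detailed balance is phrased as $\langle f, Mg\rangle_{\mu_\l}=\langle M^{*}f, g\rangle_{\mu_k}$ with $M^{*}=U_{\l\to k}$, the rest is the standard $A^{*}A\succeq 0$ argument and requires no real computation.
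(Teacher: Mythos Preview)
The paper does not actually supply a proof of this proposition; it is stated with citations to \cite{KO18,AL20,ALO20} and then used as a black box. Your proposal is precisely the standard argument one finds in those references: detailed balance makes $D_{k\to\l}$ and $U_{\l\to k}$ adjoints between $L^2(\mu_k)$ and $L^2(\mu_\l)$, whence each composition is of the form $M^*M$ (self-adjoint and positive semidefinite), yielding reversibility and nonnegative spectrum simultaneously; stationarity follows from summing detailed balance. So there is nothing to compare---your plan \emph{is} the intended proof.

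One cosmetic point worth tidying before you write it up: you have swapped the labels $P$ and $Q$ (and the spaces they act on) in a few places. With your definitions $P:=U_{\l\to k}D_{k\to\l}$ is a matrix indexed by $\binom{[n]}{\l}\times\binom{[n]}{\l}$, hence acts on $L^2(\mu_\l)$ and has $\mu_\l$ stationary, while $Q:=D_{k\to\l}U_{\l\to k}$ acts on $L^2(\mu_k)$ with $\mu_k$ stationary; your computations $\mu_\l U_{\l\to k}D_{k\to\l}=\mu_\l$ and $\mu_k D_{k\to\l}U_{\l\to k}=\mu_k$ are correct, but you attribute them to the wrong letter. Likewise, as a right operator on functions, $D_{k\to\l}$ maps $L^2(\mu_\l)\to L^2(\mu_k)$, not the reverse. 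None of this affects the mathematics---the $M^*M$ structure is symmetric in the two compositions---but the final write-up should keep the bookkeeping straight.
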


Our goal is to show that these operators contract $f$-divergence by a multiplicative factor. To this end, it is enough to show contraction of $f$-divergence under $D_{k\to \l}$. This is because, by the data processing inequality, \cref{prop:data-processing}, the operator $U_{\l\to k}$ cannot increase the $f$-divergence.

The key ingredient in local-to-global arguments is the ``local contraction'' assumption. Here, one assumes that $D_{2\to 1}$ contracts $f$-divergences w.r.t.\ the background measure $\mu_2$. The goal is to go from this assumption, and similar ones for conditionings of $\mu$, see \cref{def:conditioned}, to contraction of $f$-divergence for $D_{k\to \l}$. This is the natural ``$f$-divergence'' generalization of the notion of local spectral expansion and its implications for global expansion \cite[see][]{KO18}.

First we define the notion of the link of the distribution $\mu$ w.r.t.\ a set $T$ \cite[see, e.g.,][]{KO18}. This notion is almost the same as the notion of conditioned distributions $\mu_T$, see \cref{def:conditioned}, except we remove the set $T$ as well.
\begin{definition}
	For a distribution $\mu:\binom{[n]}{k}\to\R_{\geq 0}$ and a set $T\subseteq [n]$ of size at most $k$, we define the link of $T$ to be the distribution $\mu_{-T}:\binom{[n]-T}{k}\to \R_{\geq 0}$ which describes the law of the set $S-T$ where $S$ is sampled from $\mu$ conditioned on the event $S\supseteq T$.
\end{definition}

Next we define the notion of local $f$-divergence contraction for a distribution $\mu$.
\begin{definition}[Local $f$-Divergence Contraction]
	For a distribution $\mu:\binom{[n]}{k}\to\R_{\geq 0}$ and a set $T$ of size at most $k-2$, define the local contraction at $T$, to be the smallest number $\alpha(T)\geq 0$ such that $D_{2\to 1}$ contracts $f$-divergences w.r.t.\ $\parens*{\mu_{-T}}_2=\mu_{-T}D_{(k-\card{T})\to 2}$ by a factor of $\alpha(T)$. That is $\alpha(T)$ is the smallest number such that for all $\nu:\binom{[n]-T}{2}\to \R_{\geq 0}$ we have
	\[ \D*_f{\nu D_{2\to 1}\river \pi_T(\mu)D_{(k-\card{T})\to 1}}\leq \alpha(T)\cdot \D*_f{\nu\river \pi_T(\mu)D_{(k-\card{T})\to 2}}. \]
\end{definition}

We now show that local contraction of $f$-divergence results in a bound on the contraction of $D_{k\to \l}$ operators.
\begin{theorem}
	Suppose that $\mu:\binom{[n]}{k}\to \R_{\geq 0}$ has local $f$-divergence contraction with contraction factors $\alpha(T)$. Define $\beta(T)=\min\set{1,\alpha(T)/(1-\alpha(T))}$. For a set $T\subseteq [n]$ define
	\[ \gamma_T:=\E*_{e_1,\dots,e_m\sim \text{uniformly random permutation of }T}{\beta({\emptyset})\beta(\set{e_1})\cdots \beta(\set{e_1,\dots, e_{m}})}. \]	
	Then the operator $D_{k\to \l}$ has contraction factor at least $1-1/\max\set*{k\cdot \gamma_T\given T\in \binom{[n]}{\l-1}}$.
\end{theorem}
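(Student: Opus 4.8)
The plan is to run the local-to-global contraction machinery of \textcite{AL20} and \textcite{CGM19}, transported from variance/entropy to a general $f$-divergence and carried down to level $\l$ rather than to level $k-1$. Since the $k\leftrightarrow\l$ down-up chain is $D_{k\to\l}U_{\l\to k}$ and $U_{\l\to k}$ cannot increase $f$-divergence by the data-processing inequality (\cref{prop:data-processing}), it suffices to prove the stated contraction for $D_{k\to\l}$ alone. I would then pass to functions: writing a test distribution on level $k$ as $\nu=g\cdot\mu_k$ and using that $\nu D_{k\to\l}$ has density $U_{\l\to k}g$ against $\mu_\l$, the claim becomes — abbreviating $\mathcal{E}^f_\pi(h):=\E*_{\omega\sim\pi}{f(h(\omega))}-f\parens*{\E*_{\omega\sim\pi}{h(\omega)}}$, and using that averaging preserves the mean — the lower bound
\[ \E*_{\mu_k}{f(g)}-\E*_{\mu_\l}{f(U_{\l\to k}g)}\ \geq\ \frac{1}{k\gamma}\,\mathcal{E}^f_{\mu_k}(g),\qquad \gamma:=\max\set*{\gamma_T\given T\in\binom{[n]}{\l-1}}. \]

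Next I would record the elementary \emph{divergence-decomposition identity}: writing $g_j:=U_{j\to k}g$ for the level-$j$ conditional expectation and $g_j|_T$ for the function $a\mapsto g_j(T\cup\set{a})$ on the link of $T$, one has for every $j$
\[ \E*_{\mu_j}{f(g_j)}-\E*_{\mu_{j-1}}{f(g_{j-1})}\ =\ \E*_{T\sim\mu_{j-1}}{\mathcal{E}^f_{(\mu_{-T})_1}\!\parens*{g_j|_T}}, \]
i.e.\ the drop in $f$-divergence across one down-step equals the expected ``local variance'' of $g$ over the singleton-links of all faces of size $j-1$. Telescoping this from level $k$ down to level $1$ rewrites $\mathcal{E}^f_{\mu_k}(g)$ as a weighted sum over all faces $R$ of the local quantities $\ell(R):=\mathcal{E}^f_{(\mu_{-R})_1}(g_{\card{R}+1}|_R)$, and $\mathcal{E}^f_{\mu_k}(g)-\mathcal{E}^f_{\mu_\l}(U_{\l\to k}g)$ is the same sum restricted to $\card{R}\geq\l$.

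The heart of the argument is the \emph{local-to-global recursion} extracted from the hypothesis. The assumed local contraction at $R$ reads $\mathcal{E}^f_{(\mu_{-R})_1}(U^R h)\leq\alpha(R)\,\mathcal{E}^f_{(\mu_{-R})_2}(h)$; combined with the two-level split $\mathcal{E}^f_{(\mu_{-R})_2}(g_{\card{R}+2}|_R)=\ell(R)+\E*_{b\sim(\mu_{-R})_1}{\ell(R\cup\set{b})}$ it yields $\ell(R)\leq\alpha(R)\parens*{\ell(R)+\E*_{b}{\ell(R\cup\set{b})}}$, hence $\ell(R)\leq\beta(R)\cdot\E*_{b\sim(\mu_{-R})_1}{\ell(R\cup\set{b})}$ when $\alpha(R)<1/2$ (so $\beta(R)=\alpha(R)/(1-\alpha(R))$), while for $\alpha(R)\geq1/2$ one falls back on $\ell(R)\leq\ell(R)+\E*_{b}{\ell(R\cup\set{b})}$ together with the comparison of these quantities, which is precisely what the truncation $\beta(R)=1$ records. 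Iterating the recursion grows $R$ from $\emptyset$ one element at a time, and the law of the resulting chain is exactly a face of $\mu$ exposed in a uniformly random order; accumulating the products of $\beta$'s over all such chains — each contributing a run of $\beta$-values along a maximal chain of faces, terminated at size $\l-1$ — produces exactly the averages $\gamma_T$ from the statement. Substituting back into the telescoped expansion of $\mathcal{E}^f_{\mu_k}(g)$, and charging the trivial factor $k$ against the top down-operator, gives $\mathcal{E}^f_{\mu_k}(g)\leq k\gamma\parens*{\mathcal{E}^f_{\mu_k}(g)-\mathcal{E}^f_{\mu_\l}(U_{\l\to k}g)}$, which rearranges to the claim.

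The hard part will be the bookkeeping in this last step: identifying the distribution over chain-exposure orders with the permutation average defining $\gamma_T$, keeping the per-face weights consistent while summing over \emph{all} faces and over all levels down to $\l$ (rather than all the way to level $1$), and handling the $\min\set{1,\cdot}$ truncation cleanly in the regime $\alpha(R)\geq1/2$. The two identities above and the data-processing reduction are routine and mirror \textcite{AL20,CGM19}.
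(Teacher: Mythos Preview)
Your outline is essentially the paper's proof, cast in link-local language rather than via an ordered random exposure $X_1,\dots,X_k$ of $S\sim\mu$: the paper's $\E[\Delta_j]$ (with $\Delta_j=\tau_{j+1}-\tau_j$) is precisely your level-$j$ average of $\ell(R)$, the one-step recursion and the telescoping are identical, and the ``identification'' you flag as hard is exactly the paper's decoupling step---conditioned on the unordered set $T=\{X_1,\dots,X_{\l-1}\}$, the $\beta$-product depends only on the ordering of $T$ while $\Delta_\l$ does not, so they factor and the expectation of the $\beta$-product is $\gamma_T$. The factor $k$ arises not from ``charging against the top down-operator'' but by rerunning the same submartingale bound with $\Delta_0$ replaced by each of $\Delta_1,\dots,\Delta_{k-1}$, averaging to get $\E[\Delta_\l]\geq \D_f{\nu\river\mu}/(k\max_T\gamma_T)$, and then using $\sum_{i\geq \l}\E[\Delta_i]\geq \E[\Delta_\l]$.
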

\begin{proof}
	Consider an arbitrary distribution $\nu:\binom{[n]}{k}\to\R_{\geq 0}$. The $f$-divergence $\D_f{\nu\river \mu}$ is a difference of two terms, both involving expectations over samples $S\sim \mu$:
	\[ \D_f{\nu\river \mu}=\E*_{S\sim \mu}{f\parens*{\frac{\nu(S)}{\mu(S)}}}-f\parens*{\E*_{S\sim \mu}{\frac{\nu(S)}{\mu(S)}}}. \]
	Our strategy is to write this difference as a telescoping sum of differences, where elements of $S$ are revealed one-by-one in the sum.
	
	Consider the following process. We sample a set $S\sim \mu$ and uniformly at random permute its elements to obtain $X_1,\dots,X_k$. Define the random variable
	\[ \tau_i=f\parens*{\E*{\frac{\nu(S)}{\mu(S)}\given X_1,\dots,X_i}}=f\parens*{\frac{\sum_{S'\ni X_1,\dots, X_i}\nu(S')}{\sum_{S'\ni X_1,\dots,X_i}\mu(S')}}=f\parens*{\frac{\nu D_{k\to i}(\set{X_1,\dots,X_i})}{\mu D_{k\to i}(\set{X_1,\dots,X_i})}}. \]
	Note that $\tau_i$ is a ``function'' of $X_1,\dots,X_i$. It is not hard to see that 
	\[ \D_f{\nu\river \mu}=\E{\tau_k}-\E{\tau_0}=\sum_{i=0}^{k-1}(\E{\tau_{i+1}}-\E{\tau_{i}}). \]
	A convenient fact about this telescoping sum is that to obtain $\D_f{\nu D_{k\to \l}\river \mu D_{k\to\l}}$, one has to just sum over the first $\l$ terms instead of $k$:
	\[ \D_f{\nu D_{k\to \l}\river \mu D_{k\to \l}}=\E{\tau_\l}-\E{\tau_0}=\sum_{i=0}^{\l-1}(\E{\tau_{i+1}}-\E{\tau_{i}}). \]
	This is because the set $\set{X_1,\dots,X_\l}$ is distributed according to $\mu D_{k\to \l}$. So our goal of showing that $D_{k\to\l}$ has contraction boils down to showing that the last $k-\l$ terms in the telescoping sum are sufficiently large compared to the rest.
	
	Consider applying the assumption of local contraction to the link of the set $T=\set{X_1,\dots,X_i}$. From this one can extract that
	\[ \E{\tau_{i+1}\given X_1,\dots,X_i}-\E{\tau_i\given X_1,\dots,X_i}\leq \alpha(T)\cdot (\E{\tau_{i+2}\given X_1,\dots,X_i}-\E{\tau_i\given X_1,\dots, X_i}). \]
	Defining $\Delta_i=\tau_{i+1}-\tau_i$, the above can be rewritten as
	\[ \E{\Delta_{i}\given X_1,\dots,X_i}\leq \alpha(\set{X_1,\dots,X_i})\cdot \E{\Delta_{i}+\Delta_{i+1}\given X_1,\dots, X_i}. \]
	Rearranging yields
	\[ \E{\Delta_{i}\given X_1,\dots,X_i}\leq \frac{\alpha(\set{X_1,\dots,X_i})}{1-\alpha(\set{X_1,\dots,X_i})}\E{\Delta_{i+1}\given X_1,\dots,X_i}\leq \beta(\set{X_1,\dots,X_i})\E{\Delta_{i+1}\given X_1,\dots,X_i}. \]
	From this we obtain that if we consider the quantities
	\[ \Delta_i\cdot \beta({\emptyset})\cdot \beta(\set{X_1})\cdots \beta(\set{X_1,\dots,X_{i-1}}), \]
	they form a submartingale; this means that we have
	\[ \E*{\Delta_\l \cdot \beta(\emptyset)\cdots \beta(\set{X_1,\dots,X_{\l-1}})}\geq \E*{\Delta_0}. \]
	Now, consider an alternative process for generating the ordering $X_1,X_2,\dots,X_k$. First select $S\sim \mu$, and partition it into two sets, $T$ of size $\l-1$ and $S-T$ of size $k-\l+1$. We then randomly shuffle $T$ and let $X_1,\dots,X_{\l-1}$ be the result, and then randomly shuffle $S-T$ and let $X_{\l},\dots,X_k$ be the result. This process is equivalent to randomly shuffling all elements of $S$.
	
	The key insight is that $\Delta_\l$ is only a function of the \emph{unordered set} $T$ and the ordering of $S-T$. However the other factor $\beta(\emptyset)\cdots \beta(\set{X_1,\dots,X_{\l-1}})$ is only a function of the ordering chosen for $T$ and not $S-T$. This means that conditioned on $T$, these two quantities are independent and we get
	\[ \E*{\Delta_\l \cdot \beta(\emptyset)\cdots \beta(\set{X_1,\dots,X_{\l-1}})}=\E*_T{\E*{\Delta_\l\given T}\cdot \E*{\beta(\emptyset)\cdots \beta(\set{X_1,\dots,X_{\l-1}})\given T}}. \]
	From the definition of $\gamma_T$, we obtain
	\[ \E*{\Delta_\l \cdot \beta(\emptyset)\cdots \beta(\set{X_1,\dots,X_{\l-1}})}\leq \E{\Delta_\l}\cdot \max\set*{\gamma_T\given T\in \binom{[n]}{\l-1}}. \]
	Combining with previous inequalities we obtain
	\[ \E*{\Delta_\l}\geq \frac{\E{\Delta_0}}{\max\set*{\gamma_T\given T\in \binom{[n]}{\l-1}}}. \]
	Similar inequalities can be obtained with $\Delta_0$ replaced by $\Delta_1,\Delta_2,\dots$ in the above arguments (with potentially better factors than $\gamma_T$, but we ignore this potential improvement). Averaging over these $k$ inequalities we obtain
	\[ \E{\Delta_\l}\geq \frac{\E{\Delta_0+\dots+\Delta_{k-1}}}{\max\set*{k\cdot \gamma_T\given T\in \binom{[n]}{\l-1}}}=\frac{\D_f{\nu\river \mu}}{\max\set*{k\cdot \gamma_T\given T\in \binom{[n]}{\l-1}}}. \]
	It just remains to note that
	\[ \D_f{\nu\river \mu}-\D_f{\nu D_{k\to \l}\river \mu D_{k\to \l}}=\E{\Delta_\l+\dots+\Delta_{k-1}}\geq \E{\Delta_\l}. \]
	Here we used nonnegativity of $\E{\Delta_i}$ which follows from convexity of $f$ and Jensen's inequality. Combining the previous two inequalities and rearranging the terms yields the desired result.
\end{proof}

\begin{remark}
We remark that similar to prior works, in this paper we only deal with the case where the $\alpha(T)$ contraction factors only depend on the size $\card{T}$. However, we suspect the more general statement we proved here to be useful in potential future applications of this method, especially to distributions $\mu$ that ``factorize'' into two independent distributions when conditioned on an element; some potential examples include distributions over chains in a poset. In these scenarios, the order of conditioning on the elements matters, and we hope that by having $\E_{\text{orderings}}{\beta({\emptyset})\beta(\set{e_1})\cdots \beta(\set{e_1,\dots, e_{m}})}$ instead of $\max_{\text{orderings}}\set{\beta({\emptyset})\beta(\set{e_1})\cdots \beta(\set{e_1,\dots, e_{m}})}$, we get more tractable results.
\end{remark}

From this point on, we deal with cases where $\alpha(T), \beta(T)$ only depend on the cardinality $\card{T}$, and as such we write them as $\alpha_i, \beta_i$, where $i=\card{T}$. Consequently, the global contraction factor we obtained can be rewritten as
\[ 1-\frac{1}{k\beta_0\beta_1\cdots \beta_{\l-1}}. \]
\begin{remark}
	A similar, slightly better, contraction factor can be obtained when $\beta(T)$ only depend on $\card{T}$. In these cases one can simply use $\E{\Delta_i}\leq \beta_i\cdot \E{\Delta_{i+1}}$ and obtain that the we have contraction
	\[ \frac{\E{\Delta_0+\dots+\Delta_{\l-1}}}{\E{\Delta_0+\dots+\Delta_{k-1}}}\leq \frac{1+1/\beta_0+\dots+1/\beta_0\cdots \beta_{\l-2}}{1+1/\beta_0+\dots+1/\beta_0\cdots \beta_{k-2}}. \]
	This is essentially the same bound found by \textcite{CLV20b,GM20} and the analysis is essentially the same as those in its core. However this slightly better bound does not produce any meaningful improvement in the mixing time bounds we get in this work, and for simplicity we use the more naive bound.
\end{remark}

While it might seem that $\beta_0\cdots \beta_{\l-1}$ can get exponentially large, in the case of distributions that satisfy spectral independence \cite{ALO20}, this product remains polynomially small. In particular, one can show \cite[see, e.g.,][]{ALO20,CLV20b} that if the correlation matrix, see \cref{def:correlation-matrix}, has $O(1)$-bounded eigenvalues for the distribution $\mu$ and all of its conditionings, then $\beta_i\simeq 1/(1-O(1/(k-i)))$. In particular, as long as $k-i$ is larger than a constant (hidden in the $O$-notation), then $\beta_i$ is finite an can be roughly approximated by $e^{O(1/(k-i))}$. Thus for $k-\l$ larger than an appropriate constant, we have the bound
\[ \beta_0\beta_1\cdots \beta_{k-\l}\simeq \exp\parens*{O\parens*{\frac{1}{k}+\frac{1}{k-1}+\dots+\frac{1}{\l}}}\leq \exp(O(\log k))=\poly(k). \]
	
	\section{Sector-Stability Implies Bounded Correlations}

In this section, we prove \cref{thm:sector-implies-bounded}.
	\begin{definition}[Signed Pairwise Influence/Correlation Matrix] \label{def:corr}
	Let $\mu$ be a probability distribution over $2^{[n]} $ with generating polynomial $f (z_1, \cdots, z_n) = \sum _{S \in  2^{[n]}}\mu(S) z^S.$
	
	Let the \textit{signed pairwise influence matrix} $\inflMat_{\mu} \in \R^{n\times n}$  be defined by
\[
\inflMat_{\mu} (i,j)  = \begin{cases}  0 &\text{ if } j=i \\ \P{j\given i} - \P{j \given \bar{i}} &\text{ else} \end{cases} \]
where $\P{j\given i} = \P_{T \sim \mu}{ j \in T \given i \in T}, \P{j} = \P_{T \sim \mu}{j \in T} $ and $ \P{j \given \bar{i}} = \P_{T \sim \mu}{ j \in T \given i \not\in T}.$

	Let the \textit{correlation matrix} $\corMat_{\mu} \in \R^{n\times n}$  be defined by
	\[\corMat_{\mu} (i,j) = \begin{cases} 1 -\P{i} &\text{ if } j=i \\ \P{j \given i} - \P{j} &\text{ else}\end{cases} \]
	\end{definition}
	
	In \cref{def:corr}, we use the convention that the entry $\inflMat(i,j)$ ($\corMat(i,j)$ resp.) is set to $0$ if $\P{j \given i}$ or $\P{j \given \bar{i}}$ ($\P{j \given i}$ resp.) are not well-defined, e.g., $\P{i} = 0$ or $\P{\bar{i}} = 0$ ($\P{i} = 0$ resp.)
	
	Note that the influence matrix, $\Psi^{\inf}_\mu$, was first introduced in \cite{ALO20}. All of  eigenvalues of $\inflMat_{\mu}$ and $\corMat_\mu$ are real \cite[see, e.g.,][]{ALO20}.	
	
 	%\begin{definition}[Correlation matrix]
	%\end{definition}
	We show that $\Omega(1)$-aperture sector-stability of the generating polynomial of $\mu$ implies $O(1)$-bound on the row norms of $\inflMat_{\mu}$ and $\corMat_{\mu}$. The high level idea is to write the $\l_1$-norm of a row of $\inflMat$ as the derivative at $0$ of some holomorphic function that maps the unit disk to itself, and then use Schwarz's Lemma (\cref{lem:schwarz}) to derive a bound.

	\begin{theorem}\label{thm:sectorStableRowNormBound}
Consider a multi-affine $ f \in \R_{\geq 0}[z_1, \cdots, z_n]$ polynomial that is $\Gamma_{\alpha}$-stable with $\alpha \leq 1$. Let $\mu: 2^{[n]} \to \R_{\geq 0}$ be the distribution generated by $f$, then $\inflMat_{\mu}$ and $\corMat_{\mu}$ have bounded row norms. Specifically, 
\[ \sum_{j}\abs{\inflMat_{\mu} (i,j)} \leq 2/\alpha -1,\]
and
\[ \sum_{j}\abs{\corMat_{\mu} (i,j)} \leq 2/\alpha. \]
As a corollary, the same bounds hold for maximum eigenvalues, i.e., $\lambda_{\max}(\inflMat_{\mu}) \leq  2/\alpha -1 $ and $\lambda_{\max}(\corMat_{\mu}) \leq  2/\alpha$.
\end{theorem}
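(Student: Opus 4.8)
The plan is to fix a row index $i$ and compress the entire row sum of $\inflMat_\mu$ into the derivative at $0$ of a single holomorphic function of one complex variable, to which Schwarz's lemma (\cref{lem:schwarz}) applies after a conformal change of coordinates. Since $f$ is multi-affine, write $f = z_i A + B$ with $A = \partial_{z_i} f$ and $B = f|_{z_i = 0}$, both polynomials in the remaining variables; the conditioned laws $\mu(\cdot\given i)$ and $\mu(\cdot\given\bar i)$ have generating polynomials proportional to $A$ and $B$ respectively, so by multi-affineness $\inflMat_\mu(i,j) = \partial_{z_j}\log(A/B)\big|_{z = \mathbf{1}}$ for $j\ne i$ (and $0$ for $j = i$), where $\mathbf 1=(1,\dots,1)$. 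Writing $h := A/B$ and letting $\sigma_j\in\set{-1,0,1}$ be the sign of $\inflMat_\mu(i,j)$, the function $\psi(t) := \log h\bigl((e^{\sigma_j t})_{j\ne i}\bigr)$ satisfies $\psi'(0) = \sum_{j\ne i}\sigma_j\,\inflMat_\mu(i,j) = \sum_j\abs{\inflMat_\mu(i,j)}$, so it suffices to show $\psi'(0)\le 2/\alpha-1$. (If $A\equiv0$ or $B\equiv0$ the relevant conditional is undefined, the row is zero by the convention in \cref{def:corr}, and the bound is trivial; so assume $A,B\not\equiv0$.) The bound on rows of $\corMat_\mu$ is then immediate: from $\P{j} = \P{i}\P{j\given i} + \P{\bar i}\P{j\given\bar i}$ one gets $\corMat_\mu(i,j) = \P{\bar i}\,\inflMat_\mu(i,j)$ for $j\ne i$ and $\corMat_\mu(i,i) = \P{\bar i}$, whence $\sum_j\abs{\corMat_\mu(i,j)} = \P{\bar i}\bigl(1 + \sum_{j\ne i}\abs{\inflMat_\mu(i,j)}\bigr) \le \P{\bar i}\cdot\frac2\alpha \le \frac2\alpha$.

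Next I would pin down the region into which $\psi$ maps. Under $z_j = e^{w_j}$, the sector $\Gamma_\alpha$ becomes the strip $\set{\abs{\Im w}<\alpha\pi/2}$, and the claim is that $\psi$ maps this strip into the wider strip $\set{\abs{\Im w}\le(2-\alpha)\pi/2}$. The key fact is that $A$ and $B$ are themselves $\Gamma_\alpha$-stable: fixing the $i$-th variable to a positive real $s$ keeps $f$ $\Gamma_\alpha$-stable in the remaining variables, and letting $s\to 0^+$ (resp.\ dividing by $s$ and letting $s\to\infty$) exhibits $B$ (resp.\ $A$) as a coefficient-wise, bounded-degree limit of such polynomials, so \cref{lemma:limitRoot} forces each of $A,B$ to be $\Gamma_\alpha$-stable or identically zero, and we have excluded the latter. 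Consequently, for every $z\in\Gamma_\alpha^{\,n-1}$ both $A(z),B(z)\ne0$, and the unique root $-B(z)/A(z)$ of the affine polynomial $w\mapsto A(z)w+B(z)$ lies outside $\Gamma_\alpha$ by $\Gamma_\alpha$-stability of $f$; since the sector around the negative real axis is invariant under reciprocation, $h(z)=A(z)/B(z)$ is a nonzero complex number avoiding that sector, i.e.\ $\abs{\arg h(z)}\le(2-\alpha)\pi/2$. Picking the branch of $\log$ that is real on the positive reals (where $h>0$), $\psi$ is holomorphic on the simply connected strip $\set{\abs{\Im w}<\alpha\pi/2}$, maps into $\set{\abs{\Im w}\le(2-\alpha)\pi/2}$, and satisfies $\psi(0) = \log\bigl(A(\mathbf 1)/B(\mathbf 1)\bigr)\in\R$.

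Finally I would transplant to the unit disk. Map $\set{\abs{\Im w}<\alpha\pi/2}$ conformally onto the disk by $\Lambda_1(w) = \tanh\bigl(w/(2\alpha)\bigr)$, which fixes $0$ with $\Lambda_1'(0) = 1/(2\alpha)$, and map $\set{\abs{\Im w}<(2-\alpha)\pi/2}$ onto the disk by post-composing $w\mapsto\tanh\bigl(w/(2(2-\alpha))\bigr)$ with the disk automorphism that sends $\tanh\bigl(\psi(0)/(2(2-\alpha))\bigr)$ to $0$; using $\tanh' = 1-\tanh^2$ a one-line computation shows the resulting map $\Lambda_2$ has $\Lambda_2'(\psi(0)) = 1/(2(2-\alpha))$. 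Then $\Phi := \Lambda_2\circ\psi\circ\Lambda_1^{-1}$ is a holomorphic self-map of the disk with $\Phi(0)=0$, so $\abs{\Phi'(0)}\le1$ by \cref{lem:schwarz}; the chain rule gives $\Phi'(0) = \Lambda_2'(\psi(0))\cdot\psi'(0)\cdot 2\alpha = \frac{\alpha}{2-\alpha}\,\psi'(0)$, and since $\psi'(0)\ge0$ this yields $\sum_j\abs{\inflMat_\mu(i,j)} = \psi'(0)\le\frac{2-\alpha}{\alpha} = \frac2\alpha-1$. The eigenvalue statements then follow because the maximum $\l_1$ row norm dominates the spectral radius and, as cited, $\inflMat_\mu$ and $\corMat_\mu$ have real spectrum.

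The step I expect to be the crux is the second one: ensuring $h = A/B$ is holomorphic and zero-free on \emph{all} of $\Gamma_\alpha^{\,n-1}$, not merely near $\mathbf 1$, so that $\psi$ is a bona fide holomorphic map of the whole strip and the image statement (hence the Schwarz argument) is legitimate. This is exactly where passing from $\Gamma_\alpha$-stability of $f$ to $\Gamma_\alpha$-stability of its ``coefficients'' $A$ and $B$ via \cref{lemma:limitRoot} does the real work; the remainder is routine, modulo carrying out the conformal derivative computations carefully enough to land on the exact constant $2/\alpha-1$.
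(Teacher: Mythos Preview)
Your proposal is correct and follows essentially the same approach as the paper: express the $\ell_1$ row sum as $\psi'(0)$ for a holomorphic map $\psi$ between horizontal strips determined by the sector aperture, then bound $\psi'(0)$ via Schwarz's lemma after conformally mapping the strips to the unit disk. The only cosmetic differences are that the paper uses the reciprocal ratio $B/A$ and normalizes by subtracting the real constant $\log(B(\mathbf 1)/A(\mathbf 1))$ so that $\phi(0)=0$ (avoiding your disk-automorphism step), and it handles the closed-strip image with an explicit $\epsilon$-enlargement rather than your direct $\tanh$ computation; conversely, your explicit invocation of \cref{lemma:limitRoot} to show $A,B$ are themselves $\Gamma_\alpha$-stable is a cleaner justification that $h=A/B$ is finite and nonzero on $\Gamma_\alpha^{\,n-1}$ than the paper's more implicit treatment.
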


\begin{proof}
If we can show the first statement, the second follows from
\[\P{j\given i} - \P{j} = \P{j\given i} - (\P{j \given i}\P{i} + \P{j \given \bar{i}}\P{\bar{i}}) = (1-\P{i})  (\P{j\given i} -\P{j \given \bar{i}}) \]
\[\sum_{j}\abs{\corMat_{\mu} (i,j)} \leq (1-\P{i}) (1+ \sum_{j\neq i} \abs{\P{j\given i} -\P{j \given \bar{i}}}) \leq 2/\alpha. \]

Fix a row $i$. W.l.o.g., assume $i=n.$ Let $h = \partial_i f, g = f_{z_i=0}.$ We can assume w.l.o.g. that neither $g$ and $h$ are the zero polynomial. If either $g$ or $h$ are the zero polynomial then the row just become identically $0$ and the statement is trivial. %Note that
% \[\P{j} = \frac{1}{d}\frac{\partial_j g (\vec{1}) }{ g(\vec{1}) }, \P{j \given i} = \frac{1}{d-1}\frac{\partial_j h (\vec{1}) }{ h(\vec{1}) }  \]
% We want to bound
% \begin{equation}
%     \begin{split}
%         \abs{M_i}_1 &= \sum_{j\neq i} \abs{\P{j \given i}  - \P{j} } = \sum_{j\neq i}\abs{\frac{1}{d-1}\frac{\partial_j h (\vec{1}) }{ h(\vec{1}) }  - \frac{1}{d}\frac{\partial_j g (\vec{1}) }{ g(\vec{1}) } } \\
%         &\leq \sum_{j\neq i} \frac{1}{d} \abs{\frac{\partial_j h (\vec{1}) }{ h(\vec{1}) }  - \frac{\partial_j g (\vec{1}) }{ g(\vec{1}) } }+ (\frac{1}{d-1} -\frac{1}{d}) \sum_{j\neq i} \frac{\partial_j h (\vec{1}) }{ h(\vec{1}) }   = \sum_{j\neq i} \frac{1}{d} \abs{\frac{\partial_j h (\vec{1}) }{ h(\vec{1}) }  - \frac{\partial_j g (\vec{1}) }{ g(\vec{1}) } } +\frac{1}{d}
%     \end{split}
% \end{equation}
%Let $d = \deg(f)$ then $\deg(g) = d$ and $\deg(h) = d-1.$
%Note that $\sum_{j\neq i} (\P{j \given \bar{i}} -\P{j\given i}) =\deg (g) - \deg(h) = 1$. 
Let $S: =\set*{ j \in [n]\setminus \{i\} \given \P{j \given i} - \P{j \given \bar{i}} < 0 } $ then
\begin{equation}\label{eq:rowsum}
   \sum_{j \neq i} \abs{\inflMat_{\mu}(i,j)} = \sum_{j\in S}   (\P{j \given \bar{i}} - \P{j \given i})  - \sum_{j \not\in S}   (\P{j \given \bar{i}} - \P{j \given i})  .%= 2 \sum_{j \in S} (\P{j \given \bar{i}} - \P{j \given i}) -1 
\end{equation}
Note that $\P{j \given i} = \frac{\partial_j h (\vec{1}) }{ h(\vec{1})}$ and $\P{j \given \bar{i}} = \frac{\partial_j g (\vec{1}) }{ g(\vec{1})}.$

Define $\vec{z} \in \R^{n-1}$ by $z_j =\begin{cases} y &\text{ for } z \in S \\ y^{-1} &\text{ else}\end{cases}.$ 

Let $\bar{h} (y) = h(\vec{z})$ and $\bar{g} (y) = g(\vec{z}).$ Note that  $\sum_{j\in S} \partial_j h (\textbf{1}) - \sum_{j\not\in S} \partial_j h (\textbf{1}) = \bar{h}'(1)$ and the same goes for $\bar{g}.$ This is because for each monomial $ z^{U} = z^{U\cap S} z^{U \setminus S} $, we have \[\left(\sum_{j\in S} \partial_j z^U - \sum_{j\not \in S} \partial_j z^U \right)_{\vec{z}=\textbf{1}}= \abs{U\cap S} -\abs{U \setminus S} = \left(y^{\abs{U\cap S}} (y^{-1})^{\abs{U \setminus S}}\right)'|_{y=1} \]
Substitute into \eqref{eq:rowsum}, we get
\begin{equation} \label{eq: rowsum to derivative}
    \sum_{j=1}^{n-1} \abs{\inflMat_{\mu}(i,j)} =\sum_{j\in S} (\frac{\partial_j g (\vec{1}) }{ g(\vec{1}) }  - \frac{\partial_j h (\vec{1}) }{ h(\vec{1}) }) - \sum_{j\not\in S} (\frac{\partial_j g (\vec{1}) }{ g(\vec{1}) }  - \frac{\partial_j h (\vec{1}) }{ h(\vec{1}) }) = \frac{\bar{g}'(1)}{\bar{g}(1)} - \frac{\bar{h}'(1)}{\bar{h}(1)} = (\log \bar{g} - \log \bar{h})'_{y=1} = \phi'(0)
\end{equation}
% \[\sum_{j=1}^{n-1} \abs{\P{j \given i} - \P{j \given \bar{i}}}=\sum_{j\in S} (\frac{\partial_j g (\vec{1}) }{ g(\vec{1}) }  - \frac{\partial_j h (\vec{1}) }{ h(\vec{1}) }) - \sum_{j\not\in S} (\frac{\partial_j g (\vec{1}) }{ g(\vec{1}) }  - \frac{\partial_j h (\vec{1}) }{ h(\vec{1}) }) = \frac{\bar{g}'(1)}{\bar{g}(1)} - \frac{\bar{h}'(1)}{\bar{h}(1)} = (\log \bar{g} - \log \bar{h})'_{y=1} = \phi'(0) \]
where $\phi(x) = \log \frac{\bar{g}(e^x)}{\bar{h}(e^x)} - \log \frac{\bar{g}(1)}{\bar{h}(1)}.$ Note that $\phi$ maps $0$ to itself.

Let $D, H\subseteq \C$ be the centered (open) unit disk and the (open) right half-plane respectively.  For any set $\Omega \subseteq \C$, we let $\overline{\Omega}$ denote its closure.

The Mobius transformation $T: x \mapsto \frac{x-1}{x+1}$ is a conformal map from $H$ onto $D$. %Moreover, $T$ maps the imaginary axis to the unit circle $\set*{z \in \C\given \abs{z} = 1},$ thus $T$ maps the closed right half plane $\overline{\mathbb{H}}$ to the closed unit disk $\overline{\mathbb{D}}.$

For angle $\theta \in (0,\pi)$ let $\Omega_{\theta}: = \set*{x \in \C \given  \abs{\Im(x)} < \theta } $ and  $\varphi_{\theta}: \Omega_{\theta} \to D, x \mapsto T(\exp(\frac{\pi x}{2\theta})). $  Note that $\varphi_{\theta}(0) = T(1) = 0,$ $\varphi_{\theta}'(0) = T'(1) \frac{\pi}{2 \theta} = \frac{\pi}{4 \theta} $ and $(\varphi_{\theta}^{-1})'(0) = \frac{1}{\varphi'_{\theta}(0) } =\frac{4 \theta}{\pi}. $ %Moreover, $\varphi_{\theta}$ maps the boundary $\set*{x\in \C\given \abs{\Im(x)} = \theta }$ of $\overline{\Omega_{\theta}}$ to the unit circle, thus $\varphi_{\theta}$ is 

To bound $\abs{\phi'(0)}$, we show that $\phi$ maps $\Omega_{\alpha\pi/2}$ to $\overline{\Omega_{\pi-\alpha\pi/2}}$. Now, for all small $\epsilon > 0$, $\tilde{\phi} :=  \varphi_{\pi-\alpha\pi/2 + \epsilon} \circ \phi \circ \varphi_{\alpha\pi/2}^{-1}$ is a holomorphic function that takes the centered unit disk to itself. We use Schwarz Lemma to bound $\abs{\tilde{\phi}'(0)},$ then use this to bound $\abs{\phi'(0)}$. 

Let $\theta :=\alpha \pi/2.$ Consider $x \in \Omega_{\theta}.$
Note that the function $x \mapsto e^x$ maps $\Omega_{\theta}$ to %$\theta$-sector 
$S_{\alpha}.$ Also, $\frac{\bar{g}(e^x)}{\bar{h}(e^x)} \not\in -S_{\alpha}$ else $\bar{g}(e^x) + \bar{h}(e^x) z = 0 $ for some $z \in S_{\alpha}$ i.e., $f(e^x,\cdots, e^x, z) = 0$, which contradicts $S_{\alpha}$-sector-stability of $f.$ In particular, $ \frac{\bar{g}(e^x)}{\bar{h}(e^x)}$ never takes negative real value, thus the function $\log \frac{\bar{g}(e^x)}{\bar{h}(e^x)}$ is holomorphic, and as argued earlier, $\abs{\Im(\log \frac{\bar{g}(e^x)}{\bar{h}(e^x)})} \leq \pi-\theta.$ Additionally, since $g, h $ has non-negative coefficients and are not the zero polynomial, $\bar{g}(1)$ and $\bar{h}(1)$ are positive real and $\log \frac{\bar{g}(1)}{\bar{h}(1)}$ is a real number. Therefore, $\abs{\Im (\phi(x))} = \abs{\Im(\log \frac{\bar{g}(e^x)}{\bar{h}(e^x)})} \leq \pi-\theta.$  Hence, $\phi $ maps $\Omega_{\theta}$ to $\Omega_{\pi-\theta + \epsilon}$ for every $\epsilon > 0.$

Fix $\epsilon > 0.$
Consider the holomorphic map $\tilde{\phi} =  \varphi_{\pi-\theta + \epsilon}\circ \phi\circ\varphi_{\theta}^{-1} $ that takes $D$ to itself. Since $\phi, \varphi_*$ both take $0$ to itself, so is $\tilde{\phi}.$ By Schwarz's Lemma (\cref{lem:schwarz}), $\abs{\tilde{\phi}'(0)} \leq 1.$ On the other hand, $ \tilde{\phi}'(0) = \varphi_{\pi-\theta + \epsilon}' (0)  \times \phi'(0) \times (\varphi_{\theta}^{-1})' (0) = \frac{\pi}{4 (\pi-\theta + \epsilon)} \phi'(0) \frac{4 \theta}{\pi} = \frac{\theta}{\pi-\theta + \epsilon} \phi'(0), $   thus $\abs{\phi'(0)} \leq \frac{\pi + \epsilon}{\theta}-1.$ Taking $\epsilon \to 0$ we get $\abs{\phi'(0)} \leq \frac{\pi}{\theta} -1.$ Substitute back into \eqref{eq: rowsum to derivative} gives the desired bound.
\end{proof}
\begin{remark}
%\cref{thm:sectorStableRowNormBound}'s bounds on $L_{\infty}$-norm and spectral norm are tight for $\inflMat$, but not for $\corMat.$
\cref{thm:sectorStableRowNormBound}'s bounds on  $\norm{\inflMat}_{\infty}$, $\norm{\inflMat_{\mu}}$, and $\norm{\corMat}_{\infty}$ are tight, even for homogeneous $\mu.$

For e.g., consider $f_{\mu} (z_1, \dots, z_{rk}) = \sum_{i=0}^{r-1} \prod_{j=ik+1}^{(i+1)k} z_j $ %z_1 \dots z_k + z_{k+1} \dots z_{2k} + \dots + z_{rk+1} \dots z_{(r+1)k}$, which is $S_{1/k}$-stable. 
For $r=2$, we have
$\inflMat_{\mu} = \begin{bmatrix} J_k & -J_k \\ -J_k & J_k \end{bmatrix} - I_{2k}$ and  $\norm{\inflMat}_{\infty}= \norm{\inflMat_{\mu}} = 2k-1.$
For arbitrary $r$ we get
 $\corMat_{\mu} = \begin{bmatrix} J_k & 0  & \dots & 0 \\ 0 & J_k & \cdots & 0 \\ \vdots  &  & &\vdots \\ 0 &  &\dots  & J_k \end{bmatrix}  - \frac{1}{r} J_{rk}$ with $J$ being the all ones matrix.
 
 \[\norm{\corMat}_{\infty} = k(1- \frac{1}{r}) + (r-1) k \frac{1}{r} = k (1 - \frac{2}{r}) \xrightarrow[r\to \infty]{} 2k .\]
 
 The bound on $\norm{\corMat_{\mu}}$ is tight in general, for e.g. consider $f(z_1, \dots, z_{2k}) = \epsilon z_1 \dots z_{2k} + (1-\epsilon)$ for small $\epsilon > 0,$ but is not tight for homogeneous distribution $\mu.$
 %Clearly, $\norm{\inflMat}_{\infty}$, $\norm{\inflMat_{\mu}}$, $\norm{\corMat}_{\infty}$ and $\norm{\corMat_{\mu}}$ are all $2k-1.$
\end{remark}

\begin{remark}\label{remark:generalization}
	The proof of \cref{thm:sectorStableRowNormBound} can be easily generalized to weaker types of stability. In particular, for the proof we only need to show that $\phi$ maps a ``large enough'' domain $A$ around $0$ to a ``bounded'' region $B$. One can then pre-compose $\phi$ with a map from the disk $D$ to $A$ and post-compose with a map from $B$ to the disk $D$, and apply Schwarz's lemma to the combination of these maps. We then derive a bound on $\abs{\phi'(0)}$ that only depends on the shape of regions $A$ and $B$ and what the derivative of the pre-composed and post-composed maps are at $0$.
	
	In the case of sector-stability $A$ was the strip $\Omega_{\alpha\pi/2}$ and $B$ was $\overline{\Omega_{\pi-\alpha\pi/2}}$. For weaker stability assumptions, one can get a smaller but large-enough region $A$, and a larger but small-enough region $B$. As an example, suppose that $\Gamma$ contains both $\R_{> 0}$ and a disk $D(1, \epsilon)$ around the point $1$, and $g_\mu$ is sector stable w.r.t.\ $\Gamma$. Assume further that $\Gamma$ is closed under inversion $z\mapsto z^{-1}$ (by choosing a potentially smaller $\epsilon$). By plugging in $z_i$ from $\R_{>0}$, we get that any positive linear of combination of $\bar{h}$ and $\bar{g}$ must be $D(1,\epsilon)$-stable. In particular, we still obtain that $\bar{g}(y)/\bar{h}(y)$ does not assume any value in $\R_{\leq 0}$ as long as $y\in D(1,\epsilon)$. This means that we can define a branch of $\log$ here that only takes values in $\Omega_\pi$. So our $A$ region will be the largest domain around $0$ with $\exp(A)\subseteq D(1,\epsilon)$ and $B$ will be $\Omega_{\pi}$. For any constant $\epsilon$, the derivative of a map $\phi$ from this $A$ to this $B$ will be $O(1)$.
\end{remark}

	\section{Sector-Stable Polynomials and Preserving Operations}

In this section, we show how certain natural operations affect the sector-stability of polynomials. In \cref{cor:hurwitzConstrained}, we show that the degree-$k$ part of a Hurwitz-stable (or $\Gamma_1$-stable) polynomial is $\Gamma_{1/2}$-stable. In \cref{thm:partition}, we show that given a homogeneous real-stable polynomial $g$, the sum of terms in $g$ whose $(T_1,\dots,T_k)$-degree is equal to $(c_1,\dots,c_k)$ is $\Gamma_{1/2^k}$-stable. 
These results are important ingredients in the proof of \cref{thm:monomer-dimer,cor:mixDerivative,thm:DPPsample}.

\begin{proposition}\label{prop: ssProperties}
The following operations preserve $\alpha$-sector-stability:
\begin{enumerate}
    \item\label{part: spec}Specialization: $g(z_1,\ldots,z_n)\mapsto g(a,z_2,\ldots,z_n)$, where $a\in \bar \Gamma_\alpha$.
    \item\label{part: scaling} Scaling: $g\mapsto g\star \lambda$, if $\lambda_i \in \R_{\geq 0} \forall i\in [n].$ 
    \item\label{part:dual} Dual: $g\mapsto g^*$, where $g(z)=\sum_{S\subseteq [n]} c_S z^S$ and  $g^* (z_1, \cdots, z_n) :=\sum_{S\subseteq [n]} c_S  z^{[n]\setminus S}$.
\end{enumerate}
\end{proposition}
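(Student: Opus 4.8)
The plan is to reduce all three parts to two elementary closure properties of the sector $\Gamma_\alpha=\set{\exp(x+iy)\given x\in\R,\ y\in(-\alpha\pi/2,\alpha\pi/2)}$: it is closed under multiplication by positive reals (which merely shifts $x$) and under inversion $z\mapsto z^{-1}$ (which negates $x$ and $y$). Both are immediate from the parametrization, and I would record them at the outset. Granting these, the ``interior'' case of each operation — where every argument stays strictly inside $\Gamma_\alpha$ — follows in one line. The only situation requiring genuine care is when a coordinate is pushed to the boundary $\bar\Gamma_\alpha\setminus\Gamma_\alpha$ (including the point $0$), and for this the tool is \cref{lemma:limitRoot}: a coefficient-wise limit of bounded-degree $U$-stable polynomials is $U$-stable or identically zero.

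For \emph{specialization}, if $a\in\Gamma_\alpha$ then $(a,z_2,\dots,z_n)\in\Gamma_\alpha^n$ whenever $z_2,\dots,z_n\in\Gamma_\alpha$, so $g(a,z_2,\dots,z_n)\neq 0$ and we are done. If $a\in\bar\Gamma_\alpha$ lies on the boundary, I would pick $a_j\to a$ with $a_j\in\Gamma_\alpha$; each $g(a_j,z_2,\dots,z_n)$ is $\Gamma_\alpha$-stable by the interior case, has degree at most $\deg g$, and has coefficients (polynomials in $a_j$) converging to those of $g(a,z_2,\dots,z_n)$, so \cref{lemma:limitRoot} applies. For \emph{scaling}, if all $\lambda_i>0$ then $\lambda_i z_i\in\Gamma_\alpha$ whenever $z_i\in\Gamma_\alpha$ by closure under positive scaling, so $g\star\lambda=g(\lambda_1z_1,\dots,\lambda_nz_n)$ is nonvanishing on $\Gamma_\alpha^n$; for a general $\lambda\in\R_{\geq 0}^n$ I would perturb the zero entries to a small $t>0$ (which is $\Gamma_\alpha$-stable by the previous case) and send $t\to 0^+$, again invoking \cref{lemma:limitRoot}. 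Equivalently, since $0\in\bar\Gamma_\alpha$, every zero entry is already covered by the specialization part applied coordinate by coordinate.

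For the \emph{dual}, the key point is that since $g=\sum_{S\subseteq[n]}c_S z^S$ is multi-affine, on $(\C\setminus\set{0})^n$ one has $g^*(z_1,\dots,z_n)=\sum_{S\subseteq[n]}c_S z^{[n]\setminus S}=\bigl(\prod_{i=1}^n z_i\bigr)\,g(z_1^{-1},\dots,z_n^{-1})$. If $z_1,\dots,z_n\in\Gamma_\alpha$ then each $z_i\neq 0$ and $z_i^{-1}\in\Gamma_\alpha$ by closure under inversion, so $g(z_1^{-1},\dots,z_n^{-1})\neq 0$ and the right-hand side is a product of nonzero factors; hence $g^*(z_1,\dots,z_n)\neq 0$.

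I expect the only nonroutine step to be the passage to boundary scalars in the first two parts; the thing to be careful about is checking the hypotheses of \cref{lemma:limitRoot}, namely bounded degree and coefficient-wise convergence, which hold because specializing or scaling never raises the degree and the coefficients depend polynomially, hence continuously, on the scalar being varied. Everything else is a direct consequence of how $\Gamma_\alpha$ behaves under $z\mapsto\lambda z$ with $\lambda>0$ and under $z\mapsto z^{-1}$.
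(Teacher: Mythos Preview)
Your proposal is correct and follows essentially the same approach as the paper: the interior cases of specialization and scaling are immediate from the fact that $\Gamma_\alpha$ is closed under positive scaling, the boundary cases are handled by \cref{lemma:limitRoot}, and the dual part is exactly the paper's argument via the identity $g^*(z)=z_1\cdots z_n\, g(z_1^{-1},\dots,z_n^{-1})$ together with $\Gamma_\alpha$ being closed under inversion. If anything, you are more explicit than the paper about the boundary case of specialization (the paper only gestures at \cref{lemma:limitRoot}) and about the $\lambda_i=0$ case of scaling, which the paper absorbs into ``by definition.''
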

\begin{proof}
Part \ref{part: spec} for $a\in \Gamma_\alpha$ holds by  the definition and for the closed boundary of $\Gamma_\alpha$ we can set $a$ to $0$ or $\infty$ by \cref{lemma:limitRoot}. Part \ref{part: scaling} holds  by the definition of sector-stability. For part \ref{part:dual}, \[g^{*} (z_1, \cdots, z_n)= z_1 \cdots z_n g(z_1^{-1}, \cdots, z_n^{-1}) \neq 0\]
for all $z_1, \cdots, z_n \in \Gamma_{\alpha}$, where we use $\Gamma_{\alpha}$-stability of $g$ and the fact that $z_1^{-1}, \cdots, z_n^{-1}$ are also in $\Gamma_{\alpha}$.
\end{proof}

\begin{lemma}[Homogenization] \label{lem:homogenize}
If multi-affine polynomial $g(z_1, \cdots, z_n) := \sum_{S\subseteq [n]} c_S z^S $ is $\Gamma_{\alpha}$-stable, then its \textit{homogenization} \[g^{\text{hom}} (z_1, \cdots, z_n, w_1, \cdots, w_n):= \sum_{S\subseteq [n]} c_S z^S w^{[n]\setminus S}\] is multi-affine, homogeneous of degree $n$, and $\Gamma_{\alpha/2}$-stable.
\end{lemma}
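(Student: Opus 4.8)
The plan is to reduce the homogeneous polynomial $g^{\text{hom}}(z,w)$ to the original polynomial $g$ by dividing out, i.e. exploiting the dehomogenization identity
\[ g^{\text{hom}}(z_1,\dots,z_n,w_1,\dots,w_n) = \left(\prod_{i=1}^n w_i\right)\, g\!\left(\frac{z_1}{w_1},\dots,\frac{z_n}{w_n}\right), \]
valid whenever the $w_i$ are nonzero. Since multi-affineness and homogeneity of degree $n$ are immediate from the definition, the whole content is the sector-stability claim. So the core question is: if $z_1,\dots,z_n,w_1,\dots,w_n \in \Gamma_{\alpha/2}$, must each ratio $z_i/w_i$ lie in $\Gamma_\alpha$? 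If yes, then $g(z_1/w_1,\dots,z_n/w_n)\neq 0$ by $\Gamma_\alpha$-stability of $g$, and multiplying by the nonzero $\prod w_i$ shows $g^{\text{hom}}(z,w)\neq 0$, as desired.

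First I would record the elementary fact about arguments: $\Gamma_\beta = \{z \in \C \setminus\{0\} : |\arg(z)| < \beta\pi/2\}$ up to the boundary convention, so if $a,b\in\Gamma_{\alpha/2}$ then $|\arg a|<\alpha\pi/4$ and $|\arg b|<\alpha\pi/4$, hence $|\arg(a/b)| = |\arg a - \arg b| < \alpha\pi/4 + \alpha\pi/4 = \alpha\pi/2$, so $a/b\in\Gamma_\alpha$. This is the key geometric observation — the aperture of a sector exactly doubles under division of two elements of the half-aperture sector — and it is genuinely the only nontrivial step. One must be a little careful about the boundary cases (some $w_i$ on the boundary of $\Gamma_{\alpha/2}$, or $w_i=0$), which is where I would invoke \cref{lemma:limitRoot}: the polynomial $g^{\text{hom}}$ is the coefficient-wise limit of nearby polynomials, so it suffices to prove stability on the open region $\Gamma_{\alpha/2}^{2n}$ with all coordinates genuinely interior (hence all $w_i\neq 0$), and then stability on the closure / handling of zero or infinite values follows by the limiting argument. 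Alternatively, I would note that \cref{lem:homogenize} is stated for the open sector $\Gamma_\alpha$ so it is enough to argue for $z_i,w_i$ in the open sector, where $w_i\neq 0$ automatically.

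Putting it together: fix $z_1,\dots,z_n,w_1,\dots,w_n\in\Gamma_{\alpha/2}$ (open sector, so all $w_i\neq 0$). By the argument-addition computation above, $z_i/w_i\in\Gamma_\alpha$ for every $i$. Since $g$ is $\Gamma_\alpha$-stable, $g(z_1/w_1,\dots,z_n/w_n)\neq 0$. Since each $w_i\neq 0$, the product $\prod_i w_i\neq 0$, so by the dehomogenization identity $g^{\text{hom}}(z,w) = (\prod_i w_i)\,g(z_1/w_1,\dots,z_n/w_n) \neq 0$. This proves $\Gamma_{\alpha/2}$-stability. I do not expect a serious obstacle here — the only thing to watch is not to overstate the result (the aperture genuinely halves, which is consistent with \cref{lem:matching-is-sector-stable}, where a $\pi/2$-aperture homogeneous polynomial comes from a Hurwitz-stable, i.e. $\pi$-aperture, non-homogeneous one) and to make sure the boundary/limit bookkeeping via \cref{lemma:limitRoot} is cleanly stated if one wants the closed-sector version.
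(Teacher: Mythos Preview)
Your proposal is correct and matches the paper's proof essentially line for line: the paper also writes $g^{\text{hom}}(z,w) = w_1\cdots w_n\, g(z_1/w_1,\dots,z_n/w_n)$, observes that $z_i,w_i\in\Gamma_{\alpha/2}$ forces $z_i/w_i\in\Gamma_\alpha$, and concludes nonvanishing from $\Gamma_\alpha$-stability of $g$. Your additional care about boundary/limit cases is fine but unnecessary, since $\Gamma_{\alpha/2}$ is open and excludes $0$ by definition.
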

\begin{proof}
One can rewrite $g^{\text{hom}}$ as
\[g^{\text{hom}} (z_1, \cdots, z_n, w_1, \cdots, w_n) = w_1 \cdots w_n g(\frac{z_1}{w_1}, \cdots, \frac{z_n}{w_n}).\]
For any $z_1, \cdots, z_n, w_1 \cdots, w_n \in \Gamma_{\alpha/2}$, we have $\frac{z_i}{w_i} \in \Gamma_{\alpha} \forall i\in [n],$ thus the RHS is nonzero by $\Gamma_{\alpha}$-stability of $g.$
\end{proof}
\begin{lemma} \label{cor:homogenize monomer-dimer poly}
Consider graph $G = G(V,E)$ on $n$ vertices with edge weight $w : E \to \R_{\geq 0}$ and vertex weight $\lambda: V \to \R_{\geq 0}$. For $S \subseteq V$, let $m_S := \sum_M \weight(M) =  \sum_M(\prod_{e\in M} w(e) \prod_{v\not\in S} \lambda(v))$ where the sum is taken over all perfect matching $M$ of $S$. 
	The following polynomial is $\Gamma_{1/2}$ stable
	\[f(z_1, \cdots, z_n, y_1, \cdots, y_n) = \sum_{S \subseteq V} y^S z^{[n] \setminus S} m_S. \]
\end{lemma}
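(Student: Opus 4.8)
The plan is to derive this as an immediate consequence of \cref{thm:monomerDimerPoly} (the Heilmann--Lieb result) together with the homogenization lemma, \cref{lem:homogenize}. First I would recall that \cref{thm:monomerDimerPoly} states precisely that the multi-affine polynomial $\tilde f(z_1,\dots,z_n) = \sum_{S\subseteq V} z^{[n]\setminus S} m_S$ is Hurwitz-stable, i.e.\ $\Gamma_1$-stable in the notation of this section. The polynomial whose stability we want, namely $f(z_1,\dots,z_n,y_1,\dots,y_n) = \sum_{S\subseteq V} y^S z^{[n]\setminus S} m_S$, is visibly the homogenization of $\tilde f$ in the sense of \cref{lem:homogenize} — with the roles of the ``$z$'' and ``$w$'' variable blocks matched up appropriately. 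Concretely, writing $\tilde f(z) = \sum_S c_S z^S$ with $c_S = m_{[n]\setminus S}$ (so that the $z^{[n]\setminus S}$ in $\tilde f$ becomes $z^{S'}$ after reindexing $S' = [n]\setminus S$), the homogenization $\tilde f^{\mathrm{hom}}$ introduces a fresh block of variables, and a direct comparison of monomials shows $\tilde f^{\mathrm{hom}}$ equals $f$ after renaming one block to $y$ and one to $z$.

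The key steps, in order, are: (1) invoke \cref{thm:monomerDimerPoly} to get $\Gamma_1$-stability of $\tilde f$; (2) observe that $\tilde f$ is multi-affine — each variable $z_v$ appears to degree at most one, since $m_S$ does not involve the $z$ variables and each $S$ contributes a single squarefree monomial — so \cref{lem:homogenize} applies; (3) apply \cref{lem:homogenize} with $\alpha = 1$ to conclude that $\tilde f^{\mathrm{hom}}$ is multi-affine, homogeneous of degree $n$, and $\Gamma_{1/2}$-stable; (4) check that $\tilde f^{\mathrm{hom}}$, after the harmless relabeling of the two variable blocks, is literally the polynomial $f$ in the statement. This last identification is just bookkeeping: the term for a subset $S$ in $\tilde f$ is $m_S z^{[n]\setminus S}$, and homogenizing replaces each ``missing'' variable $z_v$ (for $v \in S$) by a companion variable, yielding $m_S z^{[n]\setminus S} y^{S}$, which matches the summand in $f$ exactly.

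The only real subtlety — and the step I would be most careful about — is the variable-matching in step (4): one must make sure the companion variables introduced by homogenization are indexed so that the monomial $y^S$ (over the vertices $v \in S$, i.e.\ the vertices that \emph{are} covered by the perfect matching) lines up with the $z^{[n]\setminus S}$ factor (over the monomers $v \notin S$), rather than being swapped. Since \cref{lem:homogenize} is stated symmetrically in the two blocks and $\Gamma_{\alpha/2}$-stability is a property of a product region $\Gamma_{\alpha/2}^{2n}$, the swap does not affect the conclusion, so this is genuinely just a labeling check rather than a mathematical obstacle. No new estimates are needed; the aperture bookkeeping ($\pi/2$ aperture corresponds to $\Gamma_{1/2}$) is handled entirely by the $\alpha \mapsto \alpha/2$ in \cref{lem:homogenize}.
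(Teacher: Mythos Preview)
Your proposal is correct and matches the paper's intended argument: the lemma is labeled internally as a corollary (\texttt{cor:homogenize monomer-dimer poly}) and is placed immediately after \cref{lem:homogenize}, with no separate proof given, precisely because it follows by applying \cref{lem:homogenize} (with $\alpha=1$) to the Hurwitz-stable polynomial of \cref{thm:monomerDimerPoly}. Your variable-matching check is the only bookkeeping needed, and you handle it correctly.
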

% \begin{lemma}[Complement/Dual] \label{lem:dualPoly}
% If multi-affine polynomial $g(z_1, \cdots, z_n) := \sum_{S\subseteq [n]} c_S z^S $ is $\Gamma_{\alpha}$-stable, then its \textit{complement} or \textit{dual}
% \[g^* (z_1, \cdots, z_n) :=\sum_{S\subseteq [n]} c_S  z^{[n]\setminus S}\] is multi-affine and $\Gamma_{\alpha}$-stable.
% \end{lemma}
% \begin{proof}
% \[g^{*} (z_1, \cdots, z_n)= z_1 \cdots z_n g(z_1^{-1}, \cdots, z_n^{-1}) \neq 0\]
% for all $z_1, \cdots, z_n \in \Gamma_{\alpha}$, where we use $\Gamma_{\alpha}$-stability of $g$ and the fact that $z_1^{-1}, \cdots, z_n^{-1}$ are also in $\Gamma_{\alpha}$
% \end{proof}
% We define some notations that will be useful for the proof of \cref{thm:hurwitzPartition}, \cref{thm:partition}. 
The class of sector-stable polynomials was studied in \cite{sendov_sendov_2019}, where the authors proved that symmetrization preserves sector-stablity of univariate polynomials with nonnegative coefficients.
Given a univariate complex polynomial $p(z)=a_nz^n+\ldots+a_1z+a_0$, its symmetrization with $n$ variables is defined as
\[P(z_1,\ldots, z_n)=\sum_{k=0}^n\frac{a_k}{{n\choose k}}S_k(z_1,\ldots,z_n),\]
where $S_k(z_1,\ldots,z_n)=\sum_{1\leq i_1<\ldots<i_k\leq n}z_{i_1}\ldots z_{i_k}$. By the definition, $P(z,\ldots,z)=p(z)$. We call $(z_1,\ldots,z_n)$ a solution of $p$, if $P(z_1,\ldots,z_n)=0$. 
Define a closed set $\Omega\subseteq \mathbb{C}^*$ the locus holder of $p$, if every solution of $p$ has a point in $\Omega.$ Call a minimal by inclusion locus holder $\Omega$ a locus of $p$. For examples and properties of locus holders see \cite{Sendov2014LociOC}. Note that any polynomial is stable with respect to the complement of its locus. The next result shows that symmetrization of a univariate sector-stable polynomial with non-negative is sector-stable. Note that this result is not true if we drop the assumption of nonnegative coefficients.
\begin{proposition}[Theorem 1.1 \cite{sendov_sendov_2019}]
Let $p(z)$ be a univariate $\Gamma_{\alpha}$-sector-stable polynomial with nonnegative coefficients. Then $\Gamma_\alpha$ is the locus holder of $p(z)$.
\end{proposition}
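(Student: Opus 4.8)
My plan is to argue by contradiction using the Grace--Walsh--Szeg\H{o} coincidence theorem together with the geometry of sectors, in the same spirit as the polarization criterion for real-stability (\cref{prop:polarization}, and \cref{thm:symmetric}) but with a half-plane replaced by a sector. The reverse inclusion, that the univariate $p$ inherits sector-stability from its symmetrization $P$, is trivial since $p(z)=P(z,\dots,z)$, so only the stated direction needs proof. Suppose, then, that $P(z_1,\dots,z_n)=0$ for some $z_1,\dots,z_n\in\Gamma_\alpha$; the goal is to manufacture a zero of $p$ inside $\Gamma_\alpha$, contradicting its sector-stability. The obstruction is that $\Gamma_\alpha$ (for $\alpha\le 1$, the relevant range) is convex but is not a circular region, so the coincidence theorem does not apply to it directly; getting around exactly this is the whole point, and it is where nonnegativity of the coefficients of $p$ must enter.

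For $\alpha\le 1$ one can write $\Gamma_\alpha=H_+\cap H_-$ as the intersection of two open half-planes through the origin, with $H_-$ the reflection of $H_+$ across the real axis. Since $P$ is symmetric and multi-affine, applying Grace--Walsh--Szeg\H{o} with the circular region $H_+$ (which contains all the $z_i$) produces a zero of $p$ in $H_+$, and likewise one in $H_-$; and since $p$ has real coefficients, its zero set is symmetric under complex conjugation. The task is to upgrade this to a zero of $p$ in the intersection $H_+\cap H_-=\Gamma_\alpha$. A more hands-on reformulation of the inductive step is also available: peeling off one variable, the identity $e_k(z_1,\dots,z_{n-1},c)=e_k(z_1,\dots,z_{n-1})+c\,e_{k-1}(z_1,\dots,z_{n-1})$ and linearity of symmetrization identify $P(z_1,\dots,z_{n-1},c)$ with the $(n-1)$-variable symmetrization of $q_c(z):=p(z)+\frac{c-z}{n}\,p'(z)$, a polynomial of degree at most $n-1$ whose coefficient of $z^j$ equals $\frac1n\bigl((n-j)a_j+c(j+1)a_{j+1}\bigr)$ --- a nonnegative-real combination of $1$ and $c$, hence in $\R_{\ge 0}$ when $c\ge 0$ and in the cone $\overline{\Gamma_\alpha}$ in general. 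So the statement would follow by induction on $\deg p$, with the base case $\deg p=1$ trivial since then $P=p$, provided $q_c$ inherits $\Gamma_\alpha$-sector-stability from $p$.

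I expect the main obstacle to be precisely this last point: showing sector-stability passes from $p$ to $q_c$, equivalently, that the two half-plane zeros above can be pinched into $\Gamma_\alpha$. This step cannot be soft, since $q_c$ is assembled from $p$ and $p'$ and Gauss--Lucas only confines its zeros to the convex hull of the zeros of $p$, which can intrude into $\Gamma_\alpha$ for general coefficients; it is the sign pattern that forbids it, as one already sees for $\deg p=2$, where solving $P(z_1,z_2)=0$ gives $z_2=-\frac{a_1z_1+2a_0}{2a_2z_1+a_1}$, and the fraction --- a M\"obius map with nonnegative coefficients --- sends $\Gamma_\alpha$ into itself (using the elementary estimate $\arg(z+s)-\arg(z+t)\in(-\abs{\arg z},\abs{\arg z})$ for $s,t\ge 0$), so that $z_2$ lies in the opposite sector $-\Gamma_\alpha$, disjoint from $\Gamma_\alpha$. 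Carrying such an estimate from linear forms to genuine symmetrizations --- or, alternatively, implementing the pinching through iterated applications of the coincidence theorem in which each merged point is kept inside both $H_+$ and $H_-$ --- is the technical crux; I would lean on the coincidence-theorem route, which also sidesteps the minor nuisance that for complex $c$ the polynomial $q_c$ has coefficients in $\overline{\Gamma_\alpha}$ rather than in $\R_{\ge 0}$. Once $P\ne 0$ on $\Gamma_\alpha^n$ is established, the asserted locus statement follows from the remark that a polynomial is stable with respect to the complement of its locus, with minimality of the sector obtained by taking all $z_i$ equal to a boundary zero of $p$.
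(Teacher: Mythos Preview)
The paper does not prove this proposition; it is cited as Theorem~1.1 of \cite{sendov_sendov_2019} and merely quoted, so there is nothing in the present paper to compare your argument against. (As an aside, the statement as printed seems to contain a slip: by the paper's own definition a locus holder is a \emph{closed} set meeting every solution, so from the surrounding prose the intended conclusion is that $\C^*\setminus\Gamma_\alpha$ is a locus holder --- equivalently, that the symmetrization $P$ is $\Gamma_\alpha$-stable --- which is how you have read it.)

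On its own merits, your proposal is a plan rather than a proof, and you say as much: the step ``$q_c$ inherits $\Gamma_\alpha$-sector-stability from $p$'' is flagged as ``the technical crux'' and is not carried out. Two concrete obstacles block the induction as formulated. First, your inductive hypothesis concerns polynomials with \emph{nonnegative real} coefficients, whereas for $c\in\Gamma_\alpha\setminus\R_{>0}$ the coefficients of $q_c$ lie only in the cone $\overline{\Gamma_\alpha}$; you call this a ``minor nuisance'' but give no fix, and strengthening the hypothesis to that larger coefficient class would itself require proof. Second, and more seriously, the sector-stability of $q_c$ is essentially the content of the theorem itself, so asserting it is close to assuming what is to be proved; your correct degree-$2$ calculation (a M\"obius map with nonnegative entries preserving $\Gamma_\alpha$) does not supply a mechanism that scales to higher degree. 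The Grace--Walsh--Szeg\H{o} picture of writing $\Gamma_\alpha=H_+\cap H_-$ and merging points inside both half-planes is the right geometry, but the actual pinching argument --- why one can arrange the merged point to stay in the intersection --- is the entire difficulty, and it is left as an intention rather than executed.
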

% \begin{corollary}
% Let $p(z)$ be a uni-variate $\alpha$ sector-stable polynomial of degree $n$ with nonnegative coefficients. Then its symmetrization $P(z)$ with $n$ variable is also $\alpha$ sector-stable.
% \end{corollary}

For the following results, we consider degree of a polynomial $g$ with respect to indices in a given set $S$.  When the set $S$ and $g$ is specified let $k_{\max}, k_{\min}$ be the maximum and minimum $S$-degree among monomials in $g$.

\begin{lemma} \label{lem:kmaxKmin}
Let $U := \prod_i \Gamma_{\alpha_i}\subseteq \C,$ $S \subseteq [n].$ 
	If $g\in \C[z_1,\dots,z_n]$ is  $U$-stable, then $g_{k_{\max}^S}^S, g_{k_{\min}^S}^S$ are also $U$-stable.
\end{lemma}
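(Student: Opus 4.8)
The plan is to isolate the extreme-degree part of $g$ with respect to $S$ using a single scaling variable and a limiting argument, so that $U$-stability is preserved under the limit via \cref{lemma:limitRoot}. Write $S=\{i_1,\dots,i_r\}$ and, for a scalar parameter $t>0$, consider the substitution $z_{i}\mapsto t\,z_{i}$ for all $i\in S$, leaving the other variables fixed. Grouping monomials by their $S$-degree gives
\[
	g(t z_S, z_{S^c}) \;=\; \sum_{d=k_{\min}^S}^{k_{\max}^S} t^{d}\, g_{d}^S(z_1,\dots,z_n),
\]
where $g_d^S$ collects exactly the monomials of $S$-degree $d$. The key observation is that for fixed $(z_1,\dots,z_n)\in U$ and any $t\in\R_{>0}$, the point $(tz_S, z_{S^c})$ still lies in $U$, because each $\Gamma_{\alpha_i}$ is closed under multiplication by positive reals (multiplying by $t>0$ shifts the argument by $\log t\in\R$ and does not change the imaginary part of the logarithm). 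Hence $g(tz_S,z_{S^c})\neq 0$ for all such $t$, i.e.\ the univariate polynomial $t\mapsto g(tz_S,z_{S^c})$ is nonzero on $\R_{>0}$.

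Next I would extract $g_{k_{\max}^S}^S$ by dividing by the top power of $t$ and letting $t\to\infty$:
\[
	g_{k_{\max}^S}^S(z) \;=\; \lim_{t\to\infty} \frac{g(tz_S, z_{S^c})}{t^{\,k_{\max}^S}}.
\]
For each fixed $t$, the polynomial $z\mapsto t^{-k_{\max}^S} g(tz_S,z_{S^c})$ is $U$-stable — indeed its nonvanishing on $U$ is equivalent to that of $g$ on $U$ via the scaling map $(z_S,z_{S^c})\mapsto(tz_S,z_{S^c})$, a bijection of $U$ onto itself. These polynomials have bounded degree (at most $\deg g$ in each variable) and converge coefficient-wise to $g_{k_{\max}^S}^S$ as $t\to\infty$. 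By the second part of \cref{lemma:limitRoot}, the convergence is uniform on compact subsets of $U$, so the limit $g_{k_{\max}^S}^S$ is either $U$-stable or identically $0$; and since by definition $k_{\max}^S$ is attained by some monomial in $g$, the limit is not identically $0$, giving $U$-stability. Symmetrically, dividing by $t^{k_{\min}^S}$ and letting $t\to 0^{+}$ extracts $g_{k_{\min}^S}^S$ as a coefficient-wise (hence locally uniform on $U$) limit of $U$-stable bounded-degree polynomials, so $g_{k_{\min}^S}^S$ is $U$-stable as well.

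The only genuinely delicate point is verifying that the rescaling map $z\mapsto (tz_S, z_{S^c})$ maps $U=\prod_i\Gamma_{\alpha_i}$ bijectively onto itself for every $t>0$; this is exactly where the sector shape (as opposed to, say, a disk) is used, and it follows immediately from the description $\Gamma_{\alpha_i}=\{\exp(x+iy)\mid x\in\R,\ y\in(-\alpha_i\pi/2,\alpha_i\pi/2)\}$, since multiplication by $t>0$ adds $\log t$ to the real part $x$ of the exponent and leaves $y$ untouched. Everything else is a routine application of \cref{lemma:limitRoot}. I do not expect any real obstacle beyond being careful that $g_{k_{\max}^S}^S$ and $g_{k_{\min}^S}^S$ are genuinely nonzero, which is built into the definitions of $k_{\max}^S$ and $k_{\min}^S$.
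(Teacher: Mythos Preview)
Your proposal is correct and follows essentially the same approach as the paper: rescale the $S$-variables by a positive real parameter, divide by the appropriate power, and take the limit $t\to\infty$ (resp.\ $t\to 0^+$) to isolate $g_{k_{\max}^S}^S$ (resp.\ $g_{k_{\min}^S}^S$), invoking \cref{lemma:limitRoot} to conclude stability of the limit. You are simply more explicit than the paper about why positive scaling preserves each $\Gamma_{\alpha_i}$ and why coefficient-wise convergence of bounded-degree polynomials gives the needed locally uniform convergence.
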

\begin{proof}
We may re-index $z_i$ so that $S = [t]$ for some $t \leq n.$ W.l.o.g., assume that this is already done.

For simplicity of notation, below we omit the superscript $S.$
% Let $k_{\max}: = k_{\max}^S, k_{\min}:= k_{\min}^S.$
Observe that $U$ is open and $g_{k_{\max}}, g_{k_{\min}}$ are not identically zero, by definition.

For $\lambda \in \R_{>0}$ let \[g^{\lambda}(z_1, \cdots, z_n) := \frac{1}{\lambda^{k_{\max}}} g(\lambda z_1, \cdots,\lambda z_t, z_{t+1}, \cdots, z_n) = g_{k_{\max}}(z_1, \cdots, z_n) + \sum_{k=0}^{k_{\max}-1} \frac{g_k (z_1, \cdots, z_n)}{\lambda^{k_{\max}-k}}\]
	
	Clearly, $g^{\lambda}$ is $U$-stable, and $\lim_{\lambda \to \infty} g^{\lambda} = g_{k_{\max}} $, so by Lemma \ref{lemma:limitRoot}, $g_{k_{\max}}$ is $U$-stable.
	Similarly, $g_{k_{\min}} = \lim_{\lambda \to 0^+} \frac{1}{\lambda^{k_{\min}}}g(\lambda z_1, \cdots,\lambda z_n) $ is $U$-stable.
\end{proof}
As a consequence, we can prove partial derivatives preserve sector stability.
	\begin{corollary}
If $p(z_1,\ldots z_n)$ is a multiaffine polynomial, then the partial derivative of $p$ with respect to any variable $z_i$ in $i \in [n]$, which we denote by $\partial_i p$, is sector stable.
\end{corollary}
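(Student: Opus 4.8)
The plan is to realize $\partial_i p$ as the top-degree-in-$z_i$ part of $p$, up to a factor of $z_i$, and then invoke \cref{lem:kmaxKmin}. So let $p$ be multiaffine and $\Gamma_\alpha$-stable (this is the implicit hypothesis, since we want the derivative to \emph{preserve} sector stability). If $p$ does not involve $z_i$, then $\partial_i p\equiv 0$, which is $\Gamma_\alpha$-stable by the convention in the definition of stability, so we may assume $p$ genuinely depends on $z_i$.

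First I would take $S=\{i\}$ in \cref{lem:kmaxKmin}, so that the $S$-degree of a monomial is just its degree in $z_i$. By multiaffineness this degree is $0$ or $1$, and since $p$ involves $z_i$ we have $k_{\max}^S=1$; moreover $p^S_{k_{\max}^S}$ is exactly the sum of all monomials of $p$ that contain $z_i$, i.e.\ $p^S_{k_{\max}^S}=z_i\cdot\partial_i p$. Applying \cref{lem:kmaxKmin} with $U=\Gamma_\alpha\times\cdots\times\Gamma_\alpha$ then gives that $z_i\cdot\partial_i p$ is $\Gamma_\alpha$-stable. It only remains to strip off the factor $z_i$: for any $z_1,\dots,z_n\in\Gamma_\alpha$ we have $z_i\cdot\partial_i p(z_1,\dots,z_n)\neq 0$, and since $0\notin\Gamma_\alpha$ we have $z_i\neq 0$, hence $\partial_i p(z_1,\dots,z_n)\neq 0$. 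As $\partial_i p$ does not involve $z_i$, this establishes $\Gamma_\alpha$-stability of $\partial_i p$ in the remaining variables as well (any value $z_i\in\Gamma_\alpha$, e.g.\ $z_i=1$, witnesses it).

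There is essentially no obstacle here. The only points needing (minor) care are: the convention that the identically-zero polynomial is declared stable, which disposes of the degenerate case where $p$ does not depend on $z_i$; and the observation that $k_{\max}^S=1$ precisely because $p$ is multiaffine, so that the $k_{\max}^S$-part of $p$ factors cleanly as $z_i\cdot\partial_i p$ rather than as something of higher degree in $z_i$ (which is where multiaffineness is genuinely used, and why the statement is restricted to multiaffine $p$). Equivalently, one could spell out the underlying limiting argument: write $p=z_i\cdot\partial_i p+p|_{z_i=0}$, apply the scaling $z_i\mapsto\lambda z_i$, divide by $\lambda$, let $\lambda\to\infty$, and invoke \cref{lemma:limitRoot} to see that $z_i\cdot\partial_i p$ is a locally uniform limit of $\Gamma_\alpha$-stable polynomials; this is exactly \cref{lem:kmaxKmin} specialized to $S=\{i\}$.
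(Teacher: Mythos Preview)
Your proof is correct and is exactly the argument the paper intends: the corollary is stated immediately after \cref{lem:kmaxKmin} with no proof given, and you have filled in precisely the details (take $S=\{i\}$, use multiaffineness to get $k_{\max}^S=1$ and $p^S_{k_{\max}^S}=z_i\,\partial_i p$, then strip off the nonvanishing factor $z_i$). Your closing remark spelling out the underlying limiting argument also matches the proof of \cref{lem:kmaxKmin} itself.
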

\begin{remark}
In general taking derivatives of non-multiaffine polynomials does not preserve sector stability. For example, let $x, y, z_1, \dots , z_n$ be variables. Look at the polynomial $p = (xz_1+yz_2)(xz_2+yz_3)(xz_3+yz_4) \dots (xz_n+yz_1)$. This is $\Gamma_{1/2}$-sector stable. Now differentiate w.r.t. each $z_i$ once, and then set each $z_i$ it to zero. What
 you end up with is $x^n+y^n$. This is only  $\Gamma_{1/n}$-sector-stable.
\end{remark}
% 	Theorem \ref{thm:hurwitzPartition} => Theorem 1.
	\begin{theorem}[Hurwitz-stable intersected with one partition constraint]
		\label{thm:hurwitzPartition}
		Suppose $g(z_1,\dots,z_n)$ is a $\Gamma_1$-stable polynomial with constant parity (the degree of every monomial is even or odd). Then $g_k$ is $\Gamma_{1/2}$-stable or identically 0.
	
	More precisely, for $k\in [k_{\min}, k_{\max}] $ with $k\equiv k_{\max} \pmod{2}$,  $g_{k}$ is $\Gamma_{1/2}$ stable.
	\end{theorem}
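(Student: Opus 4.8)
The plan is to reduce the multivariate statement to a one-variable fact: a polynomial all of whose roots lie in the open left half-plane has no vanishing coefficients. We may assume $g\not\equiv 0$ (otherwise every $g_k\equiv 0$), and we fix $k\in[k_{\min},k_{\max}]$ with $k\equiv k_{\max}\pmod 2$. It suffices to show $g_k(z_0)\neq 0$ for every $z_0\in\Gamma_{1/2}^n$; this both gives $\Gamma_{1/2}$-stability and, as a byproduct, shows $g_k\not\equiv 0$ so that the ``identically $0$'' alternative only occurs for $k$ outside the stated range. The key global input I would record first is that, by \cref{lem:kmaxKmin} applied with $S=[n]$ to the $\Gamma_1$-stable polynomial $g$, both $g_{k_{\min}}$ and $g_{k_{\max}}$ are $\Gamma_1$-stable; since $\Gamma_{1/2}\subseteq\Gamma_1$, this yields $g_{k_{\min}}(z_0)\neq 0$ and $g_{k_{\max}}(z_0)\neq 0$ for the fixed $z_0$.

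Next I would pass to one variable. Consider $\Phi(\zeta):=g(\zeta z_0)=\sum_j \zeta^{j} g_j(z_0)$, whose exponents all share the parity $\epsilon$ of $k_{\max}$ by the constant-parity hypothesis. Write $\Phi(\zeta)=\zeta^{\epsilon}\,\Psi(\zeta^2)$ with $\Psi(u)=\sum_j g_j(z_0)\,u^{(j-\epsilon)/2}$, and factor $\Psi(u)=u^{A}\,\widetilde\Psi(u)$ where $A=(k_{\min}-\epsilon)/2$, so that $\widetilde\Psi(0)=g_{k_{\min}}(z_0)\neq 0$ and $\deg\widetilde\Psi=(k_{\max}-k_{\min})/2$ (using $g_{k_{\max}}(z_0)\neq 0$). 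The coefficient of $u^{(k-k_{\min})/2}$ in $\widetilde\Psi$ is exactly $g_k(z_0)$, and $0\le (k-k_{\min})/2\le \deg\widetilde\Psi$, so it remains to show $\widetilde\Psi$ has no vanishing coefficient in its degree range.

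Because $z_0$ lies in the \emph{open} sector $\Gamma_{1/2}$, each coordinate has argument strictly inside $(-\pi/4,\pi/4)$; hence the set of $\zeta$ with $\zeta z_0\in\Gamma_1^n$ contains an open arc of arguments of length strictly greater than $\pi/2$, and $\Gamma_1$-stability (Hurwitz-stability) of $g$ makes $\Phi$ root-free there. A short computation with arguments finishes the geometric step: if $u_0\neq 0$ is a root of $\Psi$ then both square roots $\pm\sqrt{u_0}$ are (nonzero) roots of $\Phi$, hence both avoid that open arc; working out the constraint on $\arg u_0$ forces $\arg u_0\in(\pi/2,3\pi/2)$, i.e.\ $\Re(u_0)<0$. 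Thus every root of $\widetilde\Psi$ lies in the open left half-plane. Now invoke the elementary fact (immediate from the Gauss--Lucas theorem): each derivative $\widetilde\Psi^{(m)}$ again has all roots in the convex hull of the roots of $\widetilde\Psi$, hence in the open left half-plane, so $\widetilde\Psi^{(m)}(0)=m!\cdot(\text{coefficient of }u^m)\neq 0$ for every $m\le\deg\widetilde\Psi$. In particular $g_k(z_0)\neq 0$, and since $z_0\in\Gamma_{1/2}^n$ was arbitrary, $g_k$ is $\Gamma_{1/2}$-stable.

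The main obstacle, and the reason this is not a one-line pointwise argument, is the following: for a single $z_0$ the univariate $\Phi$ could be a binomial like $\zeta^{k_{\min}}+\zeta^{k_{\max}}$, which is root-free on a sector of aperture exactly $\pi/2$ yet has vanishing intermediate coefficients. Two ingredients rescue the argument and must be used together. First, \cref{lem:kmaxKmin} guarantees $g_{k_{\min}}(z_0),g_{k_{\max}}(z_0)\neq 0$, which pins the exact degree range of $\widetilde\Psi$ so that the target slot $(k-k_{\min})/2$ is genuinely interior. Second, $z_0$ ranges over the \emph{open} sector, so the root-free arc for $\Phi$ has aperture \emph{strictly} above $\pi/2$, which after the squaring $\zeta\mapsto\zeta^2$ places the roots of $\widetilde\Psi$ \emph{strictly} inside the left half-plane --- precisely where the Gauss--Lucas coefficient argument applies. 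The constant-parity hypothesis is what legitimizes the substitution $\zeta\mapsto\zeta^2$ that doubles the aperture, which is why mixed-parity $g$ must be excluded (and handled separately via \cref{thm:hurwitzSameParityPart}).
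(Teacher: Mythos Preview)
Your proof is correct and follows essentially the same route as the paper: fix $z_0\in\Gamma_{1/2}^n$, pass to the univariate $\Phi(\zeta)=g(\zeta z_0)$, use the constant parity to substitute $u=\zeta^2$, argue the resulting polynomial has all roots in the open left half-plane, and deduce that no intermediate coefficient vanishes. The paper phrases the last step via half-plane stability of elementary symmetric polynomials (\cref{thm:symmetric}) where you invoke Gauss--Lucas; these are equivalent.

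One small caution on the geometric step: the bare fact that the root-free arc for $\Phi$ has length strictly exceeding $\pi/2$ is not what forces $\Re(u_0)<0$ --- an off-center arc of that length would not suffice. What you actually need (and what ``working out the constraint'' yields) is that this arc contains $\overline{\Gamma_{1/2}}$, since $\zeta\in\overline{\Gamma_{1/2}}$ and $z_0\in\Gamma_{1/2}^n$ give $\zeta z_0\in\Gamma_1^n$; then any $u_0\in\overline{\Gamma_1}$ would have its principal square root in $\overline{\Gamma_{1/2}}$, contradicting $\Phi(\sqrt{u_0})=0$. This is exactly how the paper states it.
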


	\begin{proof}
	
	 \cref{lem:kmaxKmin} with $S = [n]$ and $U = \Gamma_1^n$ implies $g_{k_{\max}}, g_{k_{\min}}$ are $\Gamma_1$-stable.  W.l.o.g., we assume $k_{\max} > k_{\min}\geq 0$, otherwise there is nothing to prove.
	 
		Fix arbitrary $z_1,\dots,z_n \in \Gamma_{1/2}.$ Let $h(z)=\frac{1}{z^{k_{\min}}} g(z_1 z,z_2 z,\dots,z_n z)$. Note that $h(0) = g_{k_{\min}} (z_1, \cdots, z_n) \neq 0$ by $\Gamma_1$-stability of $g_{k_{\min}}.$ Note also that all terms in $h$ has even degree in $z$, and the highest degree term is $g_{k_{\max}} (z_1, \cdots, z_n) z^{k_{\max}-k_{\min}}$ with $g_{k_{\max}} (z_1, \cdots, z_n) \neq 0$ by $\Gamma_1$-stability of $g_{k_{\max}}$. By substituting $z = y^{1/2}$ in $h$, we obtain a polynomial $\tilde{h}(y): = h(y^{1/2}) $ that satisfies $\tilde{h}(y)\neq 0$ whenever $y \in \bar{S}_1 \cup \set*{0}$. Indeed, $\tilde{h}(0)= h(0) \neq 0.$ For $y \in \bar{S}_1$, we have $z = y^{1/2} \in \bar{S}_{1/2}$ thus $(z_i z)_{i=1}^n \in \Gamma_1^n$, and $\tilde{h}(y) = g(z_1 z, \cdots, z_n z) \neq 0$ by $\Gamma_1$-stability of $g.$
		
		Let $\lambda_1, \cdots, \lambda_{d}$ be the roots of $\tilde{h}(y)$ where $d:=\deg(\tilde{h}) = \frac{k_{\max} - k_{\min}}{2}.$ As argued earlier, $\lambda_i \in (\C \setminus (\bar{S}_1 \cup \set*{0})) = H_{-\pi/2} .$ Fix $k \in [k_{\min}, k_{\max})$ with $k \equiv k_{\max} \pmod{2}.$ By half-plane stability of symmetric polynomial ( \cref{thm:symmetric}), \[g_k(z_1, \cdots, z_n) = g_{k_{\max}}(z_1, \cdots, z_n) e_{t} (\lambda_1, \cdots, \lambda_d) \neq 0\] where $t: = \frac{k_{\max} - k}{2} \in \N$. 
		
		%Note that $h$ has either odd or even degree terms, and in particular
		%\[ h(z)=\pm h(-z). \]
		%So the roots of $h$ are nonzero, and come in pairs that are negations of each other. %(except for maybe one root at $0$). 
		
% 		Let $\pm \lambda_1,\dots,\pm \lambda_d$ be these roots (there might be an additional zero root). Our goal is to show that $g_k(z_1,\dots,z_n)\neq 0$. Note that if $k\in \set{k_{\max}, k_{\min}}$ then this follows from $\Gamma_1$-stability of $g_{k_{\max}}, g_{k_{\min}}$. For $k\in (k_{\min}, k_{\max})$ and $k\equiv k_{\max}\pmod{2}$, we have \[g_k(z_1,\dots,z_n) = g_{k_{\max}} (z_1, \cdots, z_n) e_{t} (\lambda_1,-\lambda_1,\lambda_2,-\lambda_2,\dots,\lambda_d,-\lambda_d )\]
% 		where $t=k_{\max}-k$ is an even number, and $\pm \lambda_i$ are the nonzero roots of $h.$ 
		
% 		But this is equal to either the $k$-th or the $(k-1)$-st (based on the parity) elementary symmetric polynomial of $\pm \lambda_1,\dots,\pm \lambda_d$. 
% 		So it is enough to show that for every even number $t$, we have
% 		\[ e_t(\lambda_1,-\lambda_1,\lambda_2,-\lambda_2,\dots,\lambda_d,-\lambda_d)\neq 0. \]
% 		But it is an elementary observation that the above is equal to
% 		\[ e_{t/2}(-\lambda_1^2,-\lambda_2^2,\dots,-\lambda_d^2). \]
% 		Now note that $\pm \lambda_i$ is a root of $h$, so it cannot be in $\set{z\given \abs{\arg(z)}<\pi/4}$. So for every $i$, $\abs{\arg(\lambda_i)}\in (\pi/4, 3\pi/4)$. This means that $\abs{\arg(-\lambda_i^2)}<\pi/2$. From Hurwitz-stability of the elementary symmetric polynomials (Theorem \ref{thm:symmetric}), it follows that the above is nonzero.
	\end{proof}
The next corollary results in DPP sampling on $P_0$ matrix $A\in \R^{n\times n}$, where $A + A^T$ is PSD, and the sampling from monomer-dimer of fixed size.
	\begin{corollary} \label{cor:hurwitzConstrained}
	Suppose $g(z_1,\dots,z_n) \in \R[z_1, \cdots, z_n]$ is $\Gamma_1$-stable, then $g_k$ is either identically 0 or $\Gamma_{1/2}$-stable.
	\end{corollary}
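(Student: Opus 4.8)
The plan is to reduce the claim to the constant-parity case, which is already settled by \cref{thm:hurwitzPartition}. The only reason \cref{thm:hurwitzPartition} does not apply verbatim is that a general $\Gamma_1$-stable $g$ need not have all monomials of the same total-degree parity; the remedy is to split off the even and odd parts.

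First I would write $g = g_e + g_o$, where $g_e$ (resp.\ $g_o$) is the sum of monomials of even (resp.\ odd) total degree. Since $g$ has real coefficients, so do $g_e$ and $g_o$, and by \cref{thm:hurwitzSameParityPart} each of $g_e, g_o$ is either identically $0$ or Hurwitz-stable, i.e.\ $\Gamma_1$-stable. By construction each of $g_e, g_o$ has constant parity, so each is a legitimate input to \cref{thm:hurwitzPartition} whenever it is not the zero polynomial.

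Next I would observe that for any fixed $k$, the homogeneous degree-$k$ component satisfies $g_k = (g_e)_k$ when $k$ is even and $g_k = (g_o)_k$ when $k$ is odd, since every monomial of total degree $k$ lies entirely in the part of matching parity. Let $h$ denote whichever of $g_e, g_o$ has the parity of $k$, so that $g_k = h_k$. If $h \equiv 0$, then $g_k \equiv 0$ and we are done. Otherwise $h$ is $\Gamma_1$-stable with constant parity, and its maximal $[n]$-degree $k_{\max}$ has the same parity as $k$ (because all monomials of $h$ have that parity); likewise $k_{\min}$. If $k \notin [k_{\min}, k_{\max}]$ then $h_k \equiv 0$, again giving the conclusion. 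If $k \in [k_{\min}, k_{\max}]$, then since $k \equiv k_{\max} \pmod 2$, \cref{thm:hurwitzPartition} applies to $h$ and yields that $g_k = h_k$ is $\Gamma_{1/2}$-stable (for the endpoint case $k = k_{\max}$ one may also note directly that $h_{k_{\max}}$ is $\Gamma_1$-stable by \cref{lem:kmaxKmin}, hence $\Gamma_{1/2}$-stable as $\Gamma_{1/2} \subseteq \Gamma_1$).

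I do not expect a genuine obstacle here: the substantive work is carried out by \cref{thm:hurwitzSameParityPart} (passing from mixed parity to constant parity while preserving Hurwitz-stability) and by \cref{thm:hurwitzPartition} (extracting a fixed-degree slice of a constant-parity Hurwitz-stable polynomial). The only care needed is the bookkeeping of parities and of the degenerate cases in which the relevant parity-part vanishes or $k$ lies outside the degree range, all of which fall under the "identically $0$" alternative in the statement.
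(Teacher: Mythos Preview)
Your proposal is correct and follows essentially the same approach as the paper: split $g$ into its even and odd parts via \cref{thm:hurwitzSameParityPart}, then apply \cref{thm:hurwitzPartition} to whichever part matches the parity of $k$. Your write-up is in fact slightly more explicit than the paper's in spelling out the degenerate cases (when the relevant parity part vanishes or $k$ falls outside its degree range), but the argument is the same.
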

	\begin{proof}
	Define the even and odd parts $g_e$ and $g_o$ of $g$ as in Theorem \ref{thm:hurwitzSameParityPart}. If $g_e\equiv 0$ or $g_o \equiv 0$ then the claim follows from Theorem \ref{thm:hurwitzPartition}.
	
	Suppose $g_e, g_o \not\equiv 0$, then they are $\Gamma_1$-stable by Theorem \ref{thm:hurwitzSameParityPart}. The claim follows by applying Theorem \ref{thm:hurwitzPartition} to $g_e$ ($g_o$) if $k$ is even (odd resp.)
	\end{proof}
	\cref{lemma:p0MatrixHurwitz} and \cref{cor:hurwitzConstrained} together imply the following corollaries.
 	\begin{corollary} \label{cor:p0Constrained}
	Consider $A\in \R^{n \times n}$ where $A + A^T$ is positive semi-definite, then \[f_k(z_1, \cdots, z_n) = \sum_{S \in \binom{[n]}{k}} \det(A_{S,S}) z^{[n]\setminus S} \] and its dual \[f_k^*(z_1, \cdots, z_n) = \sum_{S \in \binom{[n]}{k}} \det(A_{S,S}) z^{S}  \] are either identically $0$ or $\Gamma_{1/2}$-stable, and has nonnegative real coefficients.
	\end{corollary}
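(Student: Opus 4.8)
The plan is to realize $f_k$ as a single homogeneous component of the unconstrained polynomial attached to $A$, so that \cref{lemma:p0MatrixHurwitz} and \cref{cor:hurwitzConstrained} can be applied in sequence; the dual statement will then follow formally from the dual operation of \cref{prop: ssProperties}.

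First I would apply \cref{lemma:p0MatrixHurwitz} to $A$: since $A+A^\intercal\succeq 0$, the polynomial $f(z_1,\dots,z_n)=\sum_{S\subseteq[n]} z^{[n]\setminus S}\det(A_{S,S})$ has nonnegative coefficients and is either identically $0$ or $\Gamma_1$-stable (Hurwitz-stable). If $f\equiv 0$ then every principal minor of $A$ vanishes, hence $f_k\equiv 0$ and $f_k^*\equiv 0$ and there is nothing left to prove; so assume $f$ is $\Gamma_1$-stable. Because $A$ is a $P_0$-matrix, its principal minors $\det(A_{S,S})$ are nonnegative reals, and these are exactly the coefficients of $f$; in particular the coefficients of $f_k$, being a subcollection of them, are nonnegative reals.

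The next step is the observation that $f_k$ is precisely the degree-$(n-k)$ homogeneous part of $f$: the monomial $z^{[n]\setminus S}$ has total degree $n-\card{S}$, so the terms of $f$ indexed by $\card{S}=k$ are exactly the terms of degree $n-k$. Applying \cref{cor:hurwitzConstrained} to the $\Gamma_1$-stable real polynomial $f$, with the degree parameter taken to be $n-k$ rather than $k$, yields that $f_k$ is either identically $0$ or $\Gamma_{1/2}$-stable. Finally, $f_k$ is multi-affine (each $z_i$ occurs with exponent at most $1$ in every $z^{[n]\setminus S}$), so part \ref{part:dual} of \cref{prop: ssProperties} applies: its dual is $\Gamma_{1/2}$-stable whenever $f_k$ is, and identically $0$ otherwise. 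Replacing $z^{[n]\setminus S}$ by $z^{S}$ term-by-term identifies this dual with $\sum_{S\in\binom{[n]}{k}}\det(A_{S,S})z^S=f_k^*$, whose coefficients equal those of $f_k$ and are therefore nonnegative reals.

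I do not expect a genuine obstacle here: the analytic substance is already contained in \cref{lemma:p0MatrixHurwitz} (which in turn rests on \cref{thm:detMatrix}) and in \cref{cor:hurwitzConstrained} (which rests on \cref{thm:hurwitzPartition} and the even/odd split). The only things that need care are the index bookkeeping — recognizing that $f_k$ sits inside $f$ as the degree-$(n-k)$ part, not the degree-$k$ part — and verifying that the multi-affine dual of \cref{prop: ssProperties} carries $f_k$ to exactly the claimed $z^S$-form.
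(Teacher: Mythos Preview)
Your proposal is correct and follows exactly the approach the paper indicates: it states that \cref{lemma:p0MatrixHurwitz} and \cref{cor:hurwitzConstrained} together imply this corollary, and you have spelled out precisely that combination (plus the dual operation from \cref{prop: ssProperties} for $f_k^*$). The only remark is that the case $f\equiv 0$ you handle is actually vacuous, since the coefficient of $z^{[n]}$ in $f$ is the empty determinant $\det(A_{\emptyset,\emptyset})=1$; but this does not affect the argument.
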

	\begin{corollary}\label{cor:monomerDimerConstrained}
	Consider a graph $G = G(V,E)$ on $n$ vertices. For $S \subseteq V$, let $m_S$ be the number of perfect matching on $S$. Then \[f_k(z_1, \cdots, z_n) = \sum_{S \in \binom{[n]}{k}} m_S z^{[n]\setminus S}\] and its dual $f_k^*$ are either identically $0$ or $\Gamma_{1/2}$-stable.
	\end{corollary}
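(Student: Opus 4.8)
The plan is to obtain \cref{cor:monomerDimerConstrained} as a direct combination of three earlier ingredients: the Heilmann--Lieb stability theorem (\cref{thm:monomerDimerPoly}), the constrained-degree statement for Hurwitz-stable polynomials (\cref{cor:hurwitzConstrained}), and the dual-preservation property recorded in \cref{prop: ssProperties}. There is essentially no new analytic content; the work is in lining up these three facts with the right bookkeeping.

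First I would set up the ambient $\Gamma_1$-stable polynomial. The number $m_S$ of perfect matchings of the induced subgraph $G[S]$ is exactly the specialization of the weighted quantity in \cref{thm:monomerDimerPoly} obtained by taking all edge weights $w(e)=1$ and all vertex weights $\lambda(v)=1$: then $\weight(M)=1$ for every perfect matching $M$ of $S$, so $\sum_M \weight(M)$ counts perfect matchings of $G[S]$. Hence \cref{thm:monomerDimerPoly} gives that $f(z_1,\dots,z_n)=\sum_{S\subseteq V} z^{[n]\setminus S} m_S$ is Hurwitz-stable, i.e.\ $\Gamma_1$-stable, and it has nonnegative integer coefficients, so in particular $f\in\R[z_1,\dots,z_n]$. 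Next I would observe that $f_k$, as defined in the corollary, is a homogeneous component of $f$: the monomial $z^{[n]\setminus S}$ has total degree $n-|S|$, so restricting the defining sum to sets with $|S|=k$ extracts exactly the degree-$(n-k)$ homogeneous part of $f$. Applying \cref{cor:hurwitzConstrained} to $f$ and taking this degree-$(n-k)$ component yields that $f_k$ is either identically $0$ or $\Gamma_{1/2}$-stable. (The case $f_k\equiv 0$ genuinely occurs, e.g.\ when $k$ is odd or when no $k$-element vertex subset induces a perfect matching. One could alternatively invoke \cref{thm:hurwitzPartition} directly after noting that every nonzero monomial of $f$ has degree of the same parity as $n$, but \cref{cor:hurwitzConstrained} already handles arbitrary real $\Gamma_1$-stable polynomials by splitting into even and odd parts, so the parity remark is not needed.)

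Finally, for the dual: every monomial $z^{[n]\setminus S}$ is squarefree, so $f_k$ is multi-affine; writing $f_k=\sum_{U\subseteq[n]} c_U z^U$ with $c_U=m_{[n]\setminus U}$ (supported on $|U|=n-k$), the dual operation of \cref{prop: ssProperties} sends $f_k$ to $f_k^\star=\sum_U c_U z^{[n]\setminus U}=\sum_{S\in\binom{[n]}{k}} m_S z^S$. Since that operation preserves $\Gamma_\alpha$-stability, and clearly $f_k\equiv 0$ if and only if $f_k^\star\equiv 0$, the polynomial $f_k^\star$ is likewise either identically $0$ or $\Gamma_{1/2}$-stable, which completes the argument.

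The only points requiring care are the degree bookkeeping --- the subscript in $f_k$ records the cardinality $|S|$, so it corresponds to the degree-$(n-k)$ homogeneous part of $f$, not the degree-$k$ part --- and confirming that $f_k$ is multi-affine before invoking the dual. I do not expect any substantive obstacle: all the genuine difficulty is upstream, in \cref{thm:monomerDimerPoly} and in the proof of \cref{thm:hurwitzPartition} (the $z\mapsto y^{1/2}$ substitution together with half-plane stability of elementary symmetric polynomials, \cref{thm:symmetric}). The write-up should be only a few lines.
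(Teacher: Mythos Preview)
Your proposal is correct and matches the paper's intended argument: the paper states that \cref{cor:monomerDimerConstrained} follows by combining the Hurwitz-stability of the monomer-dimer polynomial (\cref{thm:monomerDimerPoly}) with \cref{cor:hurwitzConstrained}, and the dual claim is handled exactly as you do via \cref{prop: ssProperties}. Your additional bookkeeping remarks (the $n-k$ vs.\ $k$ degree shift, multi-affineness before dualizing) are accurate and useful but not points of divergence.
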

% 	Consequences: The degree $k$-part of monomer-dimer polynomial is $\Gamma_{1/2}$-stable or $\equiv 0$, and the degree $k$-part of $\det(\diag(z_1, \cdots, z_n) + L )$ 
	\begin{lemma}\label{lem:hom}
		Suppose that $p(x, y)$ is a homogeneous polynomial with coefficients in $\C$, defined as
		\[ p(x,y)=\sum_{i}c_ix^iy^{d-i}. \]
		If $p$ is $(\overline{\Gamma_\alpha}\times \overline{\Gamma_\beta})$-stable for $\alpha+\beta\geq 1$, then the sequence of $c_i$ will have no holes (zeros in between nonzeros).
	\end{lemma}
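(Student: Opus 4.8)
The engine of the argument is homogeneity: writing $q(t):=\sum_i c_i t^i$, for $y\neq 0$ we have $p(x,y)=y^d\,q(x/y)$, and the coefficient sequence of $p$ is literally the coefficient sequence of $q$. So the statement reduces to the following univariate claim: if $q\not\equiv 0$ and $q(t)\neq 0$ for every $t$ in the closed right half-plane $\overline{H_{\pi/2}}$, then the nonzero coefficients of $q$ occupy a set of consecutive indices.

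The first step is to deduce this non-vanishing of $q$ from the hypothesis. Since $\alpha+\beta\geq 1$ we have $\overline{H_{\pi/2}}\subseteq\overline{\Gamma_{\alpha+\beta}}$, so it suffices to show $q$ has no zero in $\overline{\Gamma_{\alpha+\beta}}$. Given a nonzero $t\in\overline{\Gamma_{\alpha+\beta}}$ with argument $\theta$, $\abs{\theta}\leq(\alpha+\beta)\pi/2$, I would split the argument proportionally: put $x=\abs{t}\exp(i\frac{\alpha}{\alpha+\beta}\theta)$ and $y=\exp(-i\frac{\beta}{\alpha+\beta}\theta)$. Then $\abs{\arg x}\leq\alpha\pi/2$ and $\abs{\arg y}\leq\beta\pi/2$, so $x\in\overline{\Gamma_\alpha}$ and $y\in\overline{\Gamma_\beta}$, while $x/y=t$; hence $q(t)=p(x,y)/y^d\neq 0$ by $(\overline{\Gamma_\alpha}\times\overline{\Gamma_\beta})$-stability. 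Boundary cases and the cases $\alpha=0$ or $\beta=0$ go through with the same formula. This is exactly where the closures in the hypothesis are needed: e.g.\ $p=x^2+y^2$ is $\Gamma_\alpha\times\Gamma_\beta$-stable when $\alpha+\beta=1$ yet has a hole.

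Next I would peel off the extreme coefficients. Let $a=\min\set{i:c_i\neq 0}$ and $b=\max\set{i:c_i\neq 0}$ and factor $q(t)=t^a\tilde q(t)$ with $\tilde q(t)=\sum_{i=0}^{d'}c_{a+i}t^i$, where $d':=b-a$; then $\tilde q(0)=c_a\neq 0$ and $\deg\tilde q=d'$. Since $t^a\neq 0$ off the origin, $\tilde q$ is nonzero on $\overline{H_{\pi/2}}\setminus\set{0}$, and $\tilde q(0)=c_a\neq 0$, so $\tilde q$ has all $d'$ of its roots in the open left half-plane: $\tilde q(t)=c_b\prod_{j=1}^{d'}(t-\rho_j)$ with $\Re\rho_j<0$. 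The coefficient of $t^i$ in $\tilde q$ is then $\pm\,c_b\,e_{d'-i}(\rho_1,\dots,\rho_{d'})$, so it remains to check $e_m(\rho_1,\dots,\rho_{d'})\neq 0$ for $0\leq m\leq d'$. Each $-\rho_j$ lies in the open right half-plane, and $e_m$ is Hurwitz-stable by \cref{thm:symmetric}, so $e_m(-\rho_1,\dots,-\rho_{d'})\neq 0$, whence $e_m(\rho_1,\dots,\rho_{d'})=(-1)^m e_m(-\rho_1,\dots,-\rho_{d'})\neq 0$ (and $e_0=1$). Thus $c_{a+i}\neq 0$ for every $0\leq i\leq d'$, i.e.\ $(c_i)$ has no holes.

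The homogeneity reduction and the passage through the roots of $\tilde q$ are routine; the only genuinely delicate point is the first step — that stability on the \emph{closed} product of two sectors whose apertures sum to at least $\pi$ forces univariate non-vanishing on the \emph{closed} right half-plane — and once that is in hand the conclusion is just Hurwitz-stability of $e_k$ combined with the identity $e_m(-\rho)=(-1)^m e_m(\rho)$.
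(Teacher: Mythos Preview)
Your argument is correct and follows essentially the same route as the paper: reduce by homogeneity to a univariate polynomial, show it is $\overline{\Gamma_1}$-stable via the factorization $t=x/y$, strip off the extremal zero coefficients, and conclude by Hurwitz-stability of the elementary symmetric polynomials evaluated at the (left half-plane) roots. Your version is more explicit in giving the argument-splitting formula for $x,y$ and in tracking the sign $(-1)^m$, and your remark that the closures are essential (witness $p=x^2+y^2$) is a nice addition not in the paper.
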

	\begin{proof}
		We may as well assume that $c_0, c_d\neq 0$, otherwise we can factor out extra powers of $x$ and $y$. Let $g(z)=p(z, 1)$.	Then $g$ is $\overline{\Gamma_1}$-stable. This is because every $z\in \overline{\Gamma_1}$ can be written as $x/y$ for $x\in \overline{\Gamma_\alpha}$ and $y\in \overline{\Gamma_\beta}$. So $g(z)=p(x/y,1)=p(x, y)/y^d\neq 0$.
		Since $g$ is $\overline{\Gamma_1}$-stable and has no zero root, its roots must be in the left half-plane $\set{z\given \Re(z)<0}$. But then $c_{d-i}/c_d$ is going to be up to a plus/minus sign the $i$-th elementary symmetric polynomial of the roots of $g$. Since elementary symmetric polynomials are half-plane-stable for every open half plane, all the coefficients of $g$ must be nonzero.
	\end{proof}

    %Theorem \ref{thm:partition} => Theorem  \ref{thm:mixDerivative}
	\begin{theorem}\label{thm:partition}
		Suppose that $g(z_1,\dots,z_n)$ is a homogeneous $\Gamma_1$-stable polynomial. Let $T_1, \dots, T_k$ be a partition of $[n]$ and $(c_1,\dots,c_k)\in\Z_{\geq 0}^n$. Define the $(T_1,\dots,T_k)$-degree of a monomial $z_1^{t_1}\cdots z_n^{t_n}$ as $(\sum_{i\in T_1} t_i, \sum_{i\in T_2} t_i, \dots, \sum_{i\in T_k} t_i)$. Let $h$ be the sum of terms in $g$ whose $(T_1,\dots,T_k)$-degree is equal to $(c_1,\dots,c_k)$. Then $h$ is either identically zero, or is $\Gamma_{1/2^{k}}$-stable.
	\end{theorem}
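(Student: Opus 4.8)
The plan is to induct on $k$, imposing the $k$ block-degree constraints one at a time and paying a factor of $2$ in the aperture for each. The base case $k=1$ is immediate: since $g$ is homogeneous of degree $d$, $h$ equals $g$ when $c_1=d$ and is identically zero otherwise, and $g$ being $\Gamma_1$-stable is in particular $\Gamma_{1/2}$-stable because $\Gamma_{1/2}\subseteq\Gamma_1$.

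For the inductive step the workhorse should be a single-block extraction statement: if $p(z_1,\dots,z_n)$ is a homogeneous $\Gamma_\gamma$-stable polynomial of degree $d$ and $T\subseteq[n]$, then the part $p^T_c$ of $p$ whose $T$-degree equals $c$ is either identically zero or again homogeneous and sector-stable for a suitably smaller aperture. To prove this I would homogenize the block $T$ against its complement: writing $p^T_j$ for the $T$-degree-$j$ part of $p$, set
\[
\tilde p(z,w,u):=\sum_j p^T_j(z)\,w^j u^{d-j}=p\bigl(\{wz_i\}_{i\in T},\{uz_i\}_{i\notin T}\bigr),
\]
which is homogeneous of degree $d$ in $(w,u)$. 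Fixing $z$ in a thin sector $\Gamma_\delta^n$, the substituted arguments $wz_i$ and $uz_i$ remain inside $\Gamma_\gamma$ as long as $w$ and $u$ lie in closed sectors whose apertures (after subtracting $\delta$) still sum to at least $1$; so $\tilde p(z,\cdot,\cdot)$ is $(\overline{\Gamma_\alpha}\times\overline{\Gamma_\beta})$-stable with $\alpha+\beta\ge 1$, and Lemma~\ref{lem:hom} forces the coefficient sequence $\bigl(p^T_j(z)\bigr)_j$ to have no holes. Pairing this with Lemma~\ref{lem:kmaxKmin}, which identifies the endpoint parts $p^T_{k_{\min}}$ and $p^T_{k_{\max}}$ as $\Gamma_\gamma$-stable (hence nonvanishing on $\Gamma_\delta^n$), one concludes $p^T_c(z)\ne 0$ for every $c$ in the degree range, so $p^T_c$ is sector-stable; if $c$ is out of range then $p^T_c\equiv 0$, which matches the identically-zero alternative in the theorem. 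Applying this first with $p=g$, $T=T_k$, $\gamma=1$ and then feeding the resulting homogeneous, smaller-aperture polynomial into the inductive hypothesis for the partition $T_1,\dots,T_{k-1}$ with degrees $c_1,\dots,c_{k-1}$ composes the apertures down to $\Gamma_{1/2^k}$. (A cleaner packaging is to introduce all $k$ auxiliary variables $w_1,\dots,w_k$ at once via $z_i\mapsto w_r z_i$ for $i\in T_r$, so that $G(z,w)$ is homogeneous of degree $d$ in $w$, $h$ is literally the coefficient of $w_1^{c_1}\cdots w_k^{c_k}$, and one peels off the $w_r$ successively.)

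The step I expect to be the main obstacle is the aperture bookkeeping across the iteration. Lemma~\ref{lem:hom} is only available when the two sectors used in a homogenization have apertures summing to at least $1$ — essentially ``a half-plane's worth'' — and the naive single-block extraction above spends that entire budget on the first constraint (two sectors of aperture $\approx 1/2$), leaving no slack for a second pass on a merely $\Gamma_{1/2}$-stable polynomial. Making the induction actually close therefore requires the delicate part of the argument: carefully allocating how thin a sector $\Gamma_\delta$ (with $\delta<1/2^k$) the $z_i$ are confined to at each level, choosing at each homogenization the right split of the remaining aperture between the ``numerator'' and ``denominator'' variables, and — most importantly — verifying that after restricting one block-degree there is still enough room to restrict the next (which is plausible precisely because a block-restriction of a homogeneous half-plane-stable polynomial is better behaved than a generic sector-stable polynomial, or because working with all $k$ blocks simultaneously keeps the object homogeneous of degree $d$ in the auxiliary variables throughout). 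The remaining items — that $p^T_c$ stays homogeneous, that the degree ranges do not depend on the chosen sector point, that stability transfers from ``nonvanishing at every point of $\Gamma_\delta^n$'' to ``$\Gamma_\delta$-stable'', and the reductions to closed sectors via Lemma~\ref{lemma:limitRoot} — are routine.
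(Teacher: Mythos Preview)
Your framework is right --- induct on the number of imposed block constraints, at each step fix the $z$-variables in a thin sector, introduce two auxiliary variables $x,y$ to separate the block being constrained from the rest, and apply \cref{lem:hom} to the resulting homogeneous bivariate polynomial --- and you have correctly put your finger on the obstacle: after one pass the naive bound is only $\Gamma_{1/2}$-stability, and then no choice of apertures for $x$ and $y$ can satisfy \cref{lem:hom}'s requirement $\alpha+\beta\geq 1$.

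The gap is that you stop at diagnosing the budget problem and do not supply the allocation that fixes it. The paper's key idea is to carry an \emph{asymmetric} invariant: at step $i$, one proves that $h_i$ is stable on the product region
\[
\prod_{j\in T_1\cup\cdots\cup T_i}\Gamma_{\alpha}\ \times\ \prod_{j\in T_{i+1}\cup\cdots\cup T_k}\Gamma_{\beta_i},\qquad \alpha=\tfrac{1}{2^k},\quad \beta_i=1-\tfrac{2^i-1}{2^k}.
\]
In other words, already-constrained blocks are confined to the terminal thin sector $\Gamma_{1/2^k}$ from the outset, while not-yet-constrained blocks retain a much wider aperture $\beta_i$ that is spent down as $i$ grows. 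This is stronger than your ``uniform $\Gamma_{\gamma_i}$'' invariant in exactly the direction needed: when you pass from $h_i$ to $h_{i+1}$ you fix all $z_j$ in the product region for step $i{+}1$, set $u_j=z_j$ for $j\in T_1\cup\cdots\cup T_i$, $u_j=xz_j$ for $j\in T_{i+1}$, and $u_j=yz_j$ for $j\in T_{i+2}\cup\cdots\cup T_k$; because $h_i$ is homogeneous in the variables of $T_{i+1}\cup\cdots\cup T_k$, the resulting $p(x,y)$ is homogeneous, and it is $\overline{\Gamma_{\beta_i-\alpha}}\times\overline{\Gamma_{\beta_i-\beta_{i+1}}}$-stable. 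The arithmetic
\[
(\beta_i-\alpha)+(\beta_i-\beta_{i+1})=2\beta_i-\alpha-\beta_{i+1}=1
\]
is precisely what makes \cref{lem:hom} fire at every step. Your proposal gestures at ``choosing the right split'' but never writes down this invariant or verifies the identity; without it the induction does not close.
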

	\begin{proof}
		Let $h_i$ be the polynomial obtained from $g$ by retaining the terms whose $(T_1,\dots,T_i)$-degree is $(c_1,\dots,c_i)$. Then $h_0=g$ and $h_k=h$. If $h_i \equiv 0$ for some $i$, then $h_k \equiv 0$. W.l.o.g., we assume $h_i\not \equiv 0 \forall i.$   We will inductively prove that $h_i$ is $ \parens*{\prod_{j\in T_1\cup\cdots\cup T_i} \Gamma_{\alpha}\times \prod_{j\in T_{i+1}\cup\cdots\cup T_k} \Gamma_{\beta_i}}$-stable for $\alpha=1/2^k$ and $\beta_i=1-(2^i-1)/2^k$. Let $\prod_i := \parens*{\prod_{j\in T_1\cup\cdots\cup T_i} \Gamma_{\alpha}\times \prod_{j\in T_{i+1}\cup\cdots\cup T_k} \Gamma_{\beta_i}}.$ Note that $\prod_{i+1} \subseteq \prod_i \forall i.$
		
		Note that $\beta_0=1$, and by assumption $g=h_0$ is $\Gamma_1$-stable. So it is enough to prove the induction step. Assume the statement is true for $h_i$ and let us prove it for $h_{i+1}$. Fix $(z_1,\dots,z_n) \in \prod_{i+1}.$ We will show $h_{i+1}(z_1,\dots,z_n)\neq 0$. Note that we can get $h_{i+1}$ from $h_i$ by retaining the terms whose $T_{i+1}$- degree is $c_{i+1}$. Take two variables $x$ and $y$, and look at the polynomial $p(x, y)=h_i(u_1,\dots,u_n)$, where
		\[
			u_j:=\begin{cases}
				z_j & \text{if }j\in T_1\cup\cdots\cup T_i,\\
				xz_j & \text{if }j\in T_{i+1},\\
				yz_j & \text{if }j\in T_{i+2}\cup\cdots\cup T_k.\\
			\end{cases}
		\]
		
		Note that $p$ is a homogeneous polynomial (of some degree $d$). This is because $h_i$ is homogeneous in variables from $T_{i+1}\cup\cdots \cup T_k$. %Let $\tilde{h}: = h_i.$ 
		Let $ c_{\max}, c_{\min}$ be the maximum and minimum $T_{i+1}$-degree in $h_i$ respectively. Note that the coefficient of $x^{c}y^{d-c}$ in $p(x, y)$ is exactly $h_{i,c}^{T_{i+1}}(z_1, \cdots, z_n)$, where $h_{i,c}^{T_{i+1}}$ are sum of terms in $h_i$ whose $T_{i+1}$-degree is $c.$
		We will show that  
		the coefficient of $x^{c}y^{d-c}$ in $p(x, y)$ are nonzero, where $c\in [c_{\min}, c_{\max}]$.  This immediately implies stability of $h_{i+1}$, as $c_{i+1} \in [c_{\min}, c_{\max}] $ since $h_{i+1} \not\equiv 0$. For $c\in \set*{c_{\min}, c_{\max}},$ $h_{i,c}^{T_{i+1}}$ is $\prod_i$-stable by inductive assumption on $h_i$ and \cref{lem:kmaxKmin}, thus $h_{i,c}^{T_{i+1}}(z_1, \cdots, z_n) \neq 0$, as $(z_1, \cdots, z_n) \in \prod_{i+1} \subseteq \prod_i.$
		
% 		Note that $h_{i+1}(z_1,\dots,z_n)$ is the coefficient of $x^{c_{i+1}}y^{d-c_{i+1}}$ term in $p(x, y)$. So to prove the stability of $h_{i+1}$ it is enough to show that the coefficients of $p$ are nonzero, and to do that we will use \cref{lem:hom}. 
		For the remaining $c\in (c_{\min}, c_{\max})$ we will use \cref{lem:hom}.
		Let $x\in \overline{\Gamma_{\beta_i-\alpha}}$ and $y\in \overline{\Gamma_{\beta_i-\beta_{i+1}}}$. These choices make sure that $xz_j\in \Gamma_{\beta_i}$ and $yz_j\in \Gamma_{\beta_i}$ for appropriate indices $j$. By the inductive assumption, we have $p(x, y)\neq 0$. So $p$ is $(\overline{\Gamma_{\beta_i-\alpha}}\times \overline{\Gamma_{\beta_{i}-\beta_{i+1}}})$-stable. If this stability satisfies the assumptions of \cref{lem:hom}, we are done. So it is enough to check
		\[ (\beta_i-\alpha)+(\beta_{i}-\beta_{i+1})\geq 1. \]
		An easy calculation shows that the l.h.s.\ is
		\[ 2\beta_i-\alpha-\beta_{i+1}=2-\frac{2^{i+1}-2}{2^k}-\frac{1}{2^k}-1+\frac{2^{i+1}-1}{2^k}=1. \]
	\end{proof}
	\begin{conjecture}
		With the same assumptions as in \cref{thm:partition}, we have $h$ is either identically zero or $\Gamma_{1/k}$-stable.
	\end{conjecture}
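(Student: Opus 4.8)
The plan is to mimic the strategy of \cref{thm:partition} but to impose all partition constraints \emph{at once} rather than one at a time, replacing the geometric loss $2^{-k}$ by a linear loss $1/k$. As there, we may assume $\deg(g) = c_1 + \dots + c_k =: d$, since otherwise no monomial of $g$ has $(T_1,\dots,T_k)$-degree $(c_1,\dots,c_k)$ and $h \equiv 0$. Fix an arbitrary point $z = (z_1,\dots,z_n) \in \Gamma_{1/k}^n$; we must show $h(z) \neq 0$. Introduce fresh variables $x_1,\dots,x_k$ and set
\[ p(x_1,\dots,x_k) := g(u_1,\dots,u_n), \qquad u_j := x_i\, z_j \text{ for } j \in T_i. \]
Since $g$ is homogeneous of degree $d$, the polynomial $p$ is homogeneous of degree $d$ in $x_1,\dots,x_k$, and the coefficient of $x_1^{c_1}\cdots x_k^{c_k}$ in $p$ is exactly $h(z)$. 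Moreover, if $x_i \in \overline{\Gamma_{1-1/k}}$ for all $i$, then each $u_j$ has argument in the open interval $(-\pi/2,\pi/2)$, i.e.\ $u_j \in \Gamma_1$, so $g(u) \neq 0$; hence $p$ is $\bigl(\overline{\Gamma_{1-1/k}}\bigr)^k$-stable. The conjecture thus reduces to showing that \emph{the coefficient of $x_1^{c_1}\cdots x_k^{c_k}$ in $p$ is nonzero}.

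I would deduce this from two facts. \emph{(i)} The point $(c_1,\dots,c_k)$ lies in the Newton polytope $\mathrm{Newt}(p) = \conv(\mathrm{supp}(p))$. Indeed it is a lattice point of the polytope $\Pi := \conv\bigl\{\bigl(\sum_{j\in T_1}\alpha_j,\dots,\sum_{j\in T_k}\alpha_j\bigr) : z^\alpha \text{ appears in } g\bigr\}$ because $h \not\equiv 0$; and for each \emph{vertex} $\beta$ of $\Pi$ the multidegree-$\beta$ part $g_\beta$ of $g$ is an initial form of $g$ with respect to some integral weighting of the variables that is constant on each block $T_i$, so by the scaling-and-limit argument used for \cref{lem:kmaxKmin} (applied with a general integral weighting in place of a coordinate grading) $g_\beta$ is again $\Gamma_1$-stable, hence nonvanishing on $\Gamma_{1/k}^n \subseteq \Gamma_1^n$; since $g_\beta(z)$ is precisely the coefficient of $x^\beta$ in $p$, every vertex of $\Pi$ lies in $\mathrm{supp}(p)$, so $\mathrm{Newt}(p) = \Pi \ni (c_1,\dots,c_k)$. \emph{(ii)} The following multivariate generalization of \cref{lem:hom}: \emph{if $q(x_1,\dots,x_m)$ is homogeneous and $\prod_i \overline{\Gamma_{\gamma_i}}$-stable with $\sum_i (1-\gamma_i) \le 1$, then $\mathrm{supp}(q)$ has no holes, i.e.\ $\mathrm{supp}(q) = \mathrm{Newt}(q) \cap \Z^m$.} Applying \emph{(ii)} to $p$ with $m = k$ and all $\gamma_i = 1 - 1/k$ (so that $\sum_i(1-\gamma_i) = 1$), and combining with \emph{(i)}, gives $h(z) \neq 0$.

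The heart of the matter is \emph{(ii)}, which I would attack by induction on $m$; the base case $m = 2$ is exactly \cref{lem:hom} with $\alpha + \beta \ge 1$. For the inductive step write $q = \sum_e q_e(x_1,\dots,x_{m-1})\, x_m^e$ with each $q_e$ homogeneous, and let $[e_{\min}, e_{\max}]$ be the range of $x_m$-degrees. First, substituting $x_1 = \dots = x_{m-1} = w$ yields a homogeneous bivariate polynomial $\bar q(w,x_m)$ that is $\overline{\Gamma_{\gamma^*}} \times \overline{\Gamma_{\gamma_m}}$-stable with $\gamma^* := \min_{i<m}\gamma_i$; since $\gamma^* \ge 1 - \sum_{i<m}(1-\gamma_i)$, we get $\gamma^* + \gamma_m \ge 2 - \sum_{i\le m}(1-\gamma_i) \ge 1$, so \cref{lem:hom} forces $q_e \not\equiv 0$ for every $e \in [e_{\min},e_{\max}]$. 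Second, the two extreme slices $q_{e_{\min}}, q_{e_{\max}}$ are $\prod_{i<m}\overline{\Gamma_{\gamma_i}}$-stable by \cref{lem:kmaxKmin}, and their co-apertures still sum to at most $1$, so the induction hypothesis makes them hole-free, filling in the two faces of $\mathrm{Newt}(q)$ on which $x_m$ is extremal.

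The main obstacle is the \emph{interior slices}: for $e_{\min} < e < e_{\max}$ the slice $q_e$ is nonzero but need not be $\prod_{i<m}\overline{\Gamma_{\gamma_i}}$-stable, so the induction hypothesis does not apply to it directly, and an interior lattice point $(c',e) \in \mathrm{Newt}(q)$ is not visibly in $\mathrm{supp}(q)$; peeling off a different coordinate first runs into the same problem, since \cref{lem:kmaxKmin} only guarantees stability of the two \emph{extremal} slices in any coordinate direction. I expect that overcoming this will require a genuinely multivariate argument controlling the projective zero set of a homogeneous $\prod \overline{\Gamma_{\gamma_i}}$-stable polynomial — in the bivariate case the statement ultimately rests on the fact that a univariate polynomial with all roots in the open left half-plane has no vanishing coefficients (each elementary symmetric polynomial being itself Hurwitz-stable, \cref{thm:symmetric}), and one wants an analogue valid for the smaller sector regions here, perhaps by adapting the combinatorial arguments showing that the support of a real-stable polynomial is M-convex, or the locus-holder machinery of \cite{sendov_sendov_2019}. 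I note finally that even without \emph{(ii)} one can already improve the exponent in \cref{thm:partition} from $1/2^k$ to $1/2^{k-1}$: since $\deg(g) = \sum_i c_i$, constraining the $(T_1,\dots,T_{k-1})$-degrees to $(c_1,\dots,c_{k-1})$ automatically forces the $T_k$-degree to be $c_k$, so only $k-1$ of the $k$ doublings in the recursion of \cref{thm:partition} are actually needed.
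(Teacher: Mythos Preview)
The statement you are attempting is a \emph{conjecture} in the paper; the authors do not prove it, so there is no paper proof to compare against. Your task was therefore to either prove an open statement or identify the obstruction, and what you have written is an honest outline with an explicitly acknowledged gap.

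Your reduction is sound: for $z\in\Gamma_{1/k}^n$ the auxiliary polynomial $p(x_1,\dots,x_k)=g(x_iz_j:j\in T_i)$ is homogeneous and $\overline{\Gamma_{1-1/k}}^{\,k}$-stable, and the coefficient of $x^{c}$ in $p$ is exactly $h(z)$; your argument for (i), that $(c_1,\dots,c_k)\in\mathrm{Newt}(p)$, via initial forms and \cref{lem:kmaxKmin}, is correct. The entire content of the conjecture is thus packed into your claim (ii), the multivariate ``no holes'' statement for $\prod_i\overline{\Gamma_{\gamma_i}}$-stable homogeneous polynomials with $\sum_i(1-\gamma_i)\le 1$.

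But (ii) is not established. Your induction on $m$ recovers only: (a) every horizontal slice $q_e$ is nonzero (via \cref{lem:hom} after diagonal specialization), and (b) the two \emph{extremal} slices $q_{e_{\min}},q_{e_{\max}}$ are themselves sector-stable and hence hole-free by the inductive hypothesis. For an interior $e$ you have no stability for $q_e$, and ``$q_e\not\equiv 0$'' says nothing about which monomials appear in $q_e$; so an interior lattice point of $\mathrm{Newt}(q)$ is not shown to lie in $\mathrm{supp}(q)$. This is precisely the obstacle that the paper's \cref{thm:partition} avoids by peeling off one constraint at a time, paying a factor of $2$ in aperture at each step; your simultaneous approach would need a genuinely new multivariate ingredient (M-convexity of the support under the weaker sector hypothesis, or a locus-type argument as you suggest), and until such an ingredient is supplied the conjecture remains open. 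In effect you have reformulated the conjecture rather than reduced its difficulty.

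Your closing remark that homogeneity lets one drop the last constraint and hence improve \cref{thm:partition} from $\Gamma_{1/2^{k}}$ to $\Gamma_{1/2^{k-1}}$ is correct and worth noting, but it is orthogonal to the conjecture itself.
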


For any distribution $\mu$, define its Newton polytope, $\newt(\mu)$ as the convex hull of its support,
\[\newt(\mu):=\conv(\{S:\mu(S)>0\}).\]
Next, we show that the $\l_1$ edge lengths of the Newton polytope of a $\Gamma_{1/k}$-sector-stable distribution are $O(k)$.

\begin{lemma} \label{lem:edgeBound}
Let $\mu:2^{[n]}\rightarrow\mathbb{R}$ be a $\Gamma_{1/k}$-sector-stable distribution, then the length of edges of $\newt(\mu)$ is at most $2k$.
\end{lemma}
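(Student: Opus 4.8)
The plan is to reduce the statement about an edge of $\newt(\mu)$ to a one-variable root-location question. Fix an edge $E$ of $\newt(\mu)$. Since $\newt(\mu)$ is a $0/1$-polytope, the two endpoints of $E$ are indicator vectors $\mathbf 1_A$ and $\mathbf 1_B$ with $A,B$ in the support of $\mu$, so $\mu(A),\mu(B)>0$; writing $P=A\setminus B$ and $Q=B\setminus A$, the $\ell_1$-length of $E$ equals $m:=|P|+|Q|=\lVert\mathbf 1_A-\mathbf 1_B\rVert_1$, and we must show $m\le 2k$. First I would extract the ``initial form'' of the generating polynomial $f=g_\mu$ along $E$. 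Choosing a linear functional $c\in\R^n$ and $b\in\R$ with $\langle c,x\rangle\le b$ on $\newt(\mu)$ and equality exactly on $E$, the rescaled polynomials $t^{-b}f(t^{c_1}z_1,\dots,t^{c_n}z_n)$ are $\Gamma_{1/k}$-stable for every $t>0$ (scaling by positive reals preserves sector-stability, \cref{prop: ssProperties}), have bounded degree, and converge coefficient-wise as $t\to\infty$ to $h(z)=\mu(A)z^A+\mu(B)z^B$: here I use that $E$ is a genuine face, not merely a segment joining two support points, and that an edge of a $0/1$-polytope contains no lattice points other than its endpoints, so no other monomial of $f$ survives the limit. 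By \cref{lemma:limitRoot}, $h$ is $\Gamma_{1/k}$-stable (it is not identically $0$ since $\mu(A),\mu(B)>0$), and dividing out the monomial $z^{A\cap B}$, which is nonzero on $\Gamma_{1/k}^n$, yields that $\tilde h(z)=\mu(A)\prod_{i\in P}z_i+\mu(B)\prod_{i\in Q}z_i$ is $\Gamma_{1/k}$-stable.

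Next I would collapse $\tilde h$ to a single variable, exploiting that $\Gamma_{1/k}$ is closed under $z\mapsto z^{-1}$ (the same observation used for the dual operation in \cref{prop: ssProperties}). Substituting $z_i=s$ for $i\in P$ and $z_i=s^{-1}$ for $i\in Q$: whenever $s\in\Gamma_{1/k}$ all substituted values lie in $\Gamma_{1/k}$, so $\mu(A)s^{\lvert P\rvert}+\mu(B)s^{-\lvert Q\rvert}\ne 0$, equivalently $s^{m}\ne-\mu(B)/\mu(A)$. But $-\mu(B)/\mu(A)$ is a negative real number, and among its $m$-th roots there is one with argument exactly $\pi/m$. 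If $m>2k$ then $0<\pi/m<\pi/(2k)$, so this root lies in the open sector $\Gamma_{1/k}$, contradicting the stability just established. Hence $m\le 2k$, which is the claim.

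The ingredients that feel routine are the existence of a linear functional exposing exactly the face $E$, the coefficient-wise convergence of the rescaled generating polynomial, and the closure of sectors under inversion. The step deserving the most care — and the main conceptual point — is the passage from $f$ to its edge initial form $h$: one must verify that the only monomials of $f$ supported on the minimal face in the direction $\mathbf 1_A-\mathbf 1_B$ are $z^A$ and $z^B$, which relies both on $E$ being a face of $\newt(\mu)$ and on the $0/1$ structure forcing $\mathbf 1_A-\mathbf 1_B$ to have a coordinate equal to $\pm 1$. Once $h$, and hence $\tilde h$, is in hand, the remainder is the short root-counting computation above.
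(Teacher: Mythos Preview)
Your proof is correct and follows essentially the same route as the paper: pass to the face polynomial along an edge via positive rescalings and a limit (invoking \cref{lemma:limitRoot}), obtain the two-term polynomial $\mu(A)z^A+\mu(B)z^B$, and then exploit closure of $\Gamma_{1/k}$ under inversion to derive a contradiction when the $\ell_1$-length exceeds $2k$. The only differences are cosmetic---the paper phrases the last step as ``$z^{\alpha'-\alpha}$ can take any value in $\C$'' rather than your explicit $m$-th root computation, and you are a bit more careful than the paper in justifying why exactly two monomials survive on the edge.
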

\begin{proof}
First, we show that for any face $F$ of $\newt(\mu)$ there exists a $\Gamma_{1/k}$-sector-stable polynomial with support equal to the face $F$. Since $F$ is a face of $\newt(\mu)$ there exists some vector $w=(w_1,\ldots,w_n)$ that 
\[F=\arg\max\{\langle w,x\rangle|x\in \newt(\mu)\}.\]
Let $g_\mu$ be the generation polynomial of $\mu$, then
\[g_\mu(t^{w_1}z_1,\ldots,t^{w_n}z_n)=\sum_{\alpha \in\mathbb{Z}^n_{\geq 0}}\text{coeff}_{g(z^\alpha)}t^{\langle w,\alpha \rangle}z^\alpha.\]
Now, if we take the limit $t\rightarrow \infty$ only the coefficients of the terms that are the supports of $F$ remain, i.e.,
\[\frac{1}{t^{\max\langle w,x \rangle }}g(t^{w_1}z_1,\ldots,t^{w_n}z_n)=\sum_{\alpha\in F}\text{coeff}_{g(z^\alpha)}z^\alpha+\sum_{\alpha\not\in F}t^{-\delta\alpha}\text{coeff}_{g(z^\alpha)}z^{\alpha}\]
\[g_F=\lim_{t\rightarrow\infty}\frac{1}{t^{\max\langle w,x\rangle }}g(t^{w_1}z_1,\ldots,t^{w_n}z_n)=\sum_{\alpha\in F}\text{coeff}_{g(z^\alpha)}z^{\alpha}.\]
Note that linear scaling of variables and taking limit with respect to $t$, preserves sector-stability of the function, therefore, $g_F$ is a sector-stable polynomial. By applying the same argument again, we can constraint the Newton polytope to lower dimensional faces and yet still preserves sector-stability, until we get an edge. As a result, the corresponding polynomial for each edge should be also $\Gamma_{1/k}$-sector-stable. Now, assume the the contrary that there exists an edge $(\alpha,\alpha')$ of $\newt(\mu)$ with length more than $2k$. Then we should have that 
\[g_{(\alpha,\alpha')}(z)=az^{\alpha}+bz^{\alpha'} = b z^{\alpha} (a b^{-1} + z^{\alpha'-\alpha}) \]
is $\Gamma_{1/k}$-sector-stable, where $a=\text{coeff}_{g(z^\alpha)}$ and
$b=\text{coeff}_{g(z^{\alpha'})}$ are nonzero. 

If $\abs{\alpha-\alpha'}_1>2k$, we can set $z_i \in \Gamma_{1/k}$ for $i \in \alpha \Delta \alpha'$ so that $z^{\alpha'-\alpha}$ can take any values in $\C.$ Therefore, $g_{(\alpha,\alpha')}(z)$ is not $\Gamma_{1/k}$-stable, a contradiction. 
% Now, let
%  $z=(z_1,\ldots,z_n)$, where $z_i=1$ for $i\in\alpha\cap\alpha'$, $z_j=e^{i\frac{\pi}{2k}}$ for $j\in\alpha\setminus\alpha'$ and $z_l=e^{-i\frac{\pi}{2k}}$ for  $l\in\alpha'\setminus\alpha$. Then $g_{(\alpha,\alpha')}(z)$ has a root in $\frac{\pi}{2k}$ sector when $|\alpha-\alpha'|>2k$, a contradiction.
% \june{this is problematic, $e^{i\pi/2k} \not \in \Gamma_{1/k}.$ Also, you need to rescale them to deal with the coefficients $a, b.$}
\end{proof}

	\section{Fractionally Log-Concave Polynomials}\label{sec:log-concave}

In this section, first, we show that any sector-stable polynomial is a fractionally log-concave polynomial as well. Then by analyzing properties of fractionally log-concave polynomials we show that entropy of marginals gives a constant approximation for the entropy of fractionally log-concave distributions. This leads to a multiplicative-approximation on the \textit{logarithm} of the size of the support of a sector-stable polynomial (see \cref{lem:approximateLogSupport}). Our techniques are a natural generalization of the results obtained by \textcite{AOV18}. See also \cite{ES20} for recent alternative techniques for proving similar entropy-based inequalities. An immediate consequence of our results is a multiplicative-approximation for the \textit{logarithm} of the size of the support of the monomer-dimer model (\cref{cor:2kmatching}).  
		\begin{lemma}\label{lemma:sectorStableToFracLC}
		For $\alpha \in [0,1/2]$, if polynomial $f\in \R_{\geq 0}[z_1, \cdots, z_n]$ is $\Gamma_{2\alpha}$-sector-stable then $f$ is $\alpha$ fractionally log-concave.
		\end{lemma}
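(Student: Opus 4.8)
The goal is to show: if $f \in \R_{\geq 0}[z_1,\dots,z_n]$ is $\Gamma_{2\alpha}$-sector-stable, then $\log f(z_1^\alpha,\dots,z_n^\alpha)$ is concave on $\R_{>0}^n$. Write $F(y_1,\dots,y_n) := \log f(e^{\alpha y_1},\dots,e^{\alpha y_n})$ after the change of variables $z_i = e^{y_i}$; concavity of $\log f(z_1^\alpha,\dots,z_n^\alpha)$ over $\R_{>0}^n$ is equivalent to concavity of $F$ over $\R^n$, since $y\mapsto e^y$ is a diffeomorphism $\R\to\R_{>0}$ and $\log$ is increasing. So it suffices to show $F$ is concave as a function of $y\in\R^n$.

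The plan is to reduce to the two-variable (in fact one-real-parameter) case by restricting to an arbitrary line. Fix $y\in\R^n$ and a direction $v\in\R^n$; I want to show $t\mapsto F(y+tv)$ is concave near $t=0$, i.e. $\frac{d^2}{dt^2}F(y+tv)\big|_{t=0}\le 0$. Consider the univariate function $t \mapsto f(e^{\alpha(y_1+tv_1)},\dots,e^{\alpha(y_n+tv_n)})$ of the complex variable $t$. The key observation: as $t$ ranges over a horizontal strip $\{t : |\Im(t)| < \pi/(2\alpha\max_i|v_i|)\}$ — or more carefully, I should argue using that each argument $e^{\alpha(y_i+tv_i)}$ lies in the sector $\Gamma_{2\alpha}$ when $|\alpha v_i \Im(t)| < \alpha\pi$, i.e. $|\Im(t)| < \pi/|v_i|$ — the function does not vanish. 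Actually the cleanest route: after absorbing the positive reals $e^{\alpha y_i}$ (scaling, which preserves $\Gamma_{2\alpha}$-stability by Proposition \ref{prop: ssProperties}, part \ref{part: scaling}), WLOG $y=0$, and we look at $\psi(t) := f(e^{\alpha v_1 t},\dots,e^{\alpha v_n t})$. When $t = s + i\theta$ with $s\in\R$, the $i$-th argument is $e^{\alpha v_i s} e^{i\alpha v_i\theta}$, which lies in $\Gamma_{2\alpha}$ provided $|\alpha v_i \theta| < \alpha\pi$. Rescaling the line parameter so that $\max_i |v_i| = 1$, this holds for all $i$ as long as $|\theta| < \pi$. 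Hence $\psi$ is a nonvanishing holomorphic function on the strip $\{|\Im t| < \pi\}$, so $\log\psi$ is a well-defined holomorphic function there whose imaginary part is bounded — here is where I should invoke the standard fact (the same Schwarz-lemma / Borel–Carathéodory style argument used in the proof of \cref{thm:sectorStableRowNormBound}, or directly: a holomorphic function on a strip $|\Im t|<\pi$ with $\Re$ of its exponential staying in a half-plane) that pins down $\Im\log\psi$ within width $\pi$, and then conclude that $\Re\log\psi(t) = \log|\psi(t)|$ is... the point is that $\log\psi$ maps the strip of half-width $\pi$ into a strip (of half-width something like $\pi$) in the $\Im$-direction, and I want to conclude that $s\mapsto \Re\log\psi(s) = F(sv)|_{\text{restricted}}$ has nonpositive second derivative.

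The cleanest way to extract concavity: a real-analytic function $u(s) = \Re \Phi(s)$ where $\Phi = \log\psi$ is holomorphic on $|\Im t| < \pi$ and $|\Im\Phi| \le \pi$ there (after the width-$\pi$ argument) satisfies $u''(s) \le 0$. This follows because $u$ is harmonic, and a harmonic function on a strip of half-width $\pi$ whose harmonic conjugate is bounded by $\pi$ has the required one-sided second-derivative bound — quantitatively, by applying the Schwarz lemma to $\varphi_{\pi} \circ (\Phi - \Phi(s)) \circ \varphi_{\pi}^{-1}$ centered at each point $s$ (exactly the conformal maps $\varphi_\theta$ from the proof of \cref{thm:sectorStableRowNormBound}) one bounds $|\Phi'(s)|$, but to get the \emph{second} derivative I instead use the Poisson/Schwarz representation of $\Phi$ on the strip and differentiate twice, or note that $\Phi'$ maps the strip into a disk and apply a derivative estimate again. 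I expect the main obstacle to be precisely this last step — converting "holomorphic, bounded imaginary part on a strip" into the second-derivative sign condition with the right constant matching $\alpha\le 1/2$ — and getting the bookkeeping on the strip widths right so that the hypothesis $\alpha \le 1/2$ (equivalently aperture $2\alpha\pi \le \pi$) is exactly what makes the argument close. Once the one-dimensional restriction statement is proved for every line, concavity of $F$ on $\R^n$, hence fractional log-concavity of $f$ with parameter $\alpha$, follows immediately.
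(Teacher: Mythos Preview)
Your proposal has a fatal error at the very first step. You claim that concavity of $z\mapsto \log f(z_1^\alpha,\ldots,z_n^\alpha)$ on $\R_{>0}^n$ is equivalent to concavity of $y\mapsto F(y)=\log f(e^{\alpha y_1},\ldots,e^{\alpha y_n})$ on $\R^n$, ``since $y\mapsto e^y$ is a diffeomorphism.'' This is false: concavity is preserved only under affine reparametrizations, not under arbitrary diffeomorphisms. Concretely, take $n=1$ and $f(z)=1+z^2$, which is $\Gamma_1$-stable, so $\alpha=1/2$. Then $\log f(z^{1/2})=\log(1+z)$ is concave in $z$, as the lemma predicts, but your $F(y)=\log(1+e^y)$ has $F''(y)=e^y/(1+e^y)^2>0$ and is strictly \emph{convex}. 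The entire downstream strategy is therefore aimed at a false target. The same example also kills your step~6: with $\psi(t)=1+e^t$ one checks that $\psi$ is nonvanishing on $\{|\Im t|<\pi\}$ and $|\Im\log\psi|<\pi$ there, yet $\Re\log\psi$ is convex on $\R$. So ``holomorphic on a strip with bounded imaginary part $\Rightarrow$ real part concave on the real axis'' is simply false (for an even more transparent counterexample to the abstract claim, take $\Phi(t)=\epsilon\sin t$).

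The paper's proof is quite different and short. By scaling-closure of sector-stability (\cref{prop: ssProperties}) it suffices to verify negative semidefiniteness of the Hessian $H=\nabla^2\log f(z_1^\alpha,\ldots,z_n^\alpha)\big|_{z=\vec 1}$. A direct computation writes $H$ in terms of the correlation matrix $\corMat_\mu$ of the distribution $\mu$ generated by $f$, namely $\corMat_\mu=\alpha^{-2}\diag(1/\P{i})_i\,H+\alpha^{-1}I$. The row-sum bound $\sum_j|\corMat_\mu(i,j)|\le 2/(2\alpha)=1/\alpha$ from \cref{thm:sectorStableRowNormBound} then gives $\lambda_{\max}(\corMat_\mu)\le 1/\alpha$, which rearranges to $H\preceq 0$. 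In other words, the Schwarz-lemma input enters \emph{indirectly}, through the first-derivative estimate controlling the correlation matrix, rather than through a direct second-derivative bound along complex lines.
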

		\begin{proof}
		Let $\mu: 2^{[n]} \to  \R_{\geq 0}$ be the distribution generated by $f.$
		
		First we claim that, it is enough to prove fractional log-concavity at $\vec{1}.$ For an arbitrary vector $\vec{v} \in \R_{> 0}$ let $f^{v} (z_i) = f(\set{v_i^{\alpha} z_i}).$ Note that $f^v$ is sector stable, and 
		$$\nabla^2 \log f(\set{z_i^{\alpha}}) |_{\vec{v}} = D_{\vec v}\big( \nabla^2 \log  f^v(\set{z_i^{\alpha}}) |_{\vec{1}}\big) D_{\vec v}, $$
		where  $D_{\vec v}=\diag{\set{v_i^{-1}}}$. So, we proceed by replacing $f$ with $f^v.$
		
Let $H = \nabla^2 \log f(\set{z_i^{\alpha}}) |_{\vec{1}} $ then
\[H_{ij} = \begin{cases} \alpha(\alpha-1) \P{i} - \alpha^2 \P{i}^2 &\text{ if $j=i$}\\
\alpha^2 (\P{i \land j} - \P{i} \P{j}) &\text{ otherwise}\end{cases}.\]
%Let $\Psi'_{ij} =\P{j\given i} -\P{j}$ for $j\neq i$ and $\Psi'_{ii} = 1-\P{i}.$ 
Since the row norm of $\corMat_{\mu}$ is bounded by $\frac{2}{2\alpha}$, its maximum eigenvalue $\lambda_{\max} (\corMat_{\mu})$ is at most $\frac{1}{\alpha}.$ Therefore,
\[\frac{1}{\alpha} I \geq \corMat_{\mu} =\frac{1}{\alpha^2} \diag{(1/\P{i})_i} H + \frac{1}{\alpha} I,  \]
hence, $H 
\leq 0$ and $\log f(\set{z_i^\alpha})$ is concave.

% We can get $1/\alpha-1$ (correct bound for $\alpha=1$) if we define $\tilde{P}_{ii} = -\P{i}$ though.   
\end{proof}
\begin{remark}
    Observe that the proof of \cref{lemma:sectorStableToFracLC} implies $\lambda_{\max} (\corMat_{\mu}) \leq \frac{1}{\alpha} $ is equivalent to $\alpha$-fractional log-concavity of $\mu.$ 
\end{remark}
\cref{lem:hom,lemma:sectorStableToFracLC} imply that for $\Gamma_{\alpha}$-sector stable $\mu: 2^{[n]} \to \R_{\geq 0}$, the homogenization $\mu^{\hom}: \binom{[2n]}{n} \to \R_{\geq 0}$ of $\mu$ is $\Gamma_{\alpha/2}$-sector stable and $\alpha/4$-fractionally log concave. In \cref{lem: homogenization tighter bound}, we prove the stronger statement that $\mu^{\hom}$ is $\alpha/2$-fractionally log concave.
 %We can prove a better bound for $\lambda_{\max} (\corMat_{\mu}) $ the homogenization of a sector-stable distribution $\mu$.
 \begin{lemma} \label{lem: homogenization tighter bound}
 Consider distribution $\mu: 2^{[n]} \to \R_{\geq 0}$
 that is generated by a $\Gamma_{\alpha}$-sector-stable polynomial $f.$ Let $\nu: = \mu^{\hom}$ be the homogenization of $\mu.$ We have $\lambda_{\max}(\corMat_{\nu}) \leq \frac{2}{\alpha} $, or, equivalently, the homogenization $f^{\hom}$ of $f$ is $\alpha/2$-fractionally log-concave.
 \end{lemma}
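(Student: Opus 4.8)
The plan is to write the correlation matrix $\corMat_\nu$ of $\nu=\mu^{\hom}$ explicitly in block form, exhibit it as a product of a $2n\times n$ matrix and an $n\times 2n$ matrix whose reverse product is $\inflMat_\mu+I$, and then invoke \cref{thm:sectorStableRowNormBound}. To set up: the ground set of $\nu$ is $\{z_1,\dots,z_n,w_1,\dots,w_n\}$, and every set in the support of $\nu$ contains exactly one of $z_i,w_i$ for each $i$, namely $\{z_i:i\in S\}\cup\{w_i:i\notin S\}$ carries mass $\mu(S)$. So under $\nu$ the event ``$z_i$ in the sample'' is ``$i\in S$'' and ``$w_i$ in the sample'' is ``$i\notin S$'', and $z_i,w_i$ never co-occur. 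Writing $p_i:=\P_{S\sim\mu}{i\in S}$, $D_p:=\diag(p_1,\dots,p_n)$ and $D_q:=I-D_p$, a block-by-block computation — using $\P{j\given i}-p_j=(1-p_i)\,\inflMat_\mu(i,j)$ and $\P{j\given \bar i}-p_j=-p_i\,\inflMat_\mu(i,j)$, together with the diagonal convention of \cref{def:corr} — yields
\[
\corMat_\nu=\begin{pmatrix}D_q(\inflMat_\mu+I) & -D_q(\inflMat_\mu+I)\\ -D_p(\inflMat_\mu+I) & D_p(\inflMat_\mu+I)\end{pmatrix}.
\]

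This factors as $\corMat_\nu=BA$ with $B=\begin{pmatrix}D_q\\ -D_p\end{pmatrix}\in\R^{2n\times n}$ and $A=\begin{pmatrix}\inflMat_\mu+I & -(\inflMat_\mu+I)\end{pmatrix}\in\R^{n\times 2n}$, while $AB=(\inflMat_\mu+I)(D_q+D_p)=\inflMat_\mu+I$. By the theorem in \cref{sec:prelim} on the spectra of $AB$ and $BA$, the spectrum of $\corMat_\nu$ is the spectrum of $\inflMat_\mu+I$ together with $n$ extra zeros, so $\lambda_{\max}(\corMat_\nu)=\max\{0,\,1+\lambda_{\max}(\inflMat_\mu)\}$. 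Since $f$ is $\Gamma_\alpha$-stable, \cref{thm:sectorStableRowNormBound} gives $\lambda_{\max}(\inflMat_\mu)\le 2/\alpha-1$, hence $\lambda_{\max}(\corMat_\nu)\le 2/\alpha$, which is the first assertion.

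For the ``equivalently'' part I would invoke the remark after \cref{lemma:sectorStableToFracLC}: for a sector-stable polynomial, the correlation matrix of every positive scaling having maximum eigenvalue at most $1/\beta$ is the same as $\beta$-fractional log-concavity. Here $f^{\hom}$ is $\Gamma_{\alpha/2}$-stable by \cref{lem:homogenize}, and every positive scaling of $f^{\hom}$ equals, up to a positive multiplicative constant (hence defines the same distribution), the homogenization of a positive scaling $v\star f$ of $f$, which is again $\Gamma_\alpha$-stable by \cref{prop: ssProperties}; so the bound $\lambda_{\max}(\corMat)\le 2/\alpha$ holds at every point, and $\beta=\alpha/2$ gives $\alpha/2$-fractional log-concavity of $f^{\hom}$. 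The main obstacle is the careful block computation of $\corMat_\nu$ — keeping the diagonal entries and the two marginal identities straight — rather than anything conceptual; once the factorization is in hand the $AB$/$BA$ identity and \cref{thm:sectorStableRowNormBound} finish it immediately, the only remaining subtlety being that fractional log-concavity is a condition at all points and not just at $\vec{1}$, which the scaling reduction above handles precisely because sector-stability survives positive diagonal scaling.
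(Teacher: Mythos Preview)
Your proposal is correct and follows essentially the same route as the paper: both compute the identical block form
\[
\corMat_\nu=\begin{pmatrix}\bar D A & -\bar D A\\ -D A & D A\end{pmatrix},\qquad A=\inflMat_\mu+I,
\]
and then identify the spectrum of $\corMat_\nu$ as $\{\lambda_i(\inflMat_\mu)+1\}_{i=1}^n$ together with $n$ zeros, finishing via \cref{thm:sectorStableRowNormBound}. The one difference is in how the spectrum is extracted: the paper constructs a full basis of left eigenvectors by hand (the vectors $w_i=(v_i,-v_i)$ with eigenvalue $\lambda_i+1$ and the vectors $u_i=(e_i^\top D,\,e_i^\top\bar D)$ with eigenvalue $0$, then checks linear independence), whereas you factor $\corMat_\nu=BA$ and invoke the $AB/BA$ spectrum theorem from \cref{sec:prelim}. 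Your route is shorter and avoids the independence verification; the paper's route is more explicit but otherwise equivalent. Your treatment of the ``equivalently'' clause --- reducing global fractional log-concavity to the eigenvalue bound at every positive scaling, and noting that any positive scaling of $f^{\hom}$ is a scalar multiple of $(v\star f)^{\hom}$ for a $\Gamma_\alpha$-stable $v\star f$ --- is also correct and in fact spells out a step the paper leaves implicit.
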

 \begin{proof}
 Let $\Omega = \set*{1, \dots, n}$ and $\bar{\Omega} = \set*{\bar{1}, \dots, \bar{n}}.$ For set $S \subseteq \Omega$, let $\overline{S}\subseteq \overline{\Omega}: = \set*{\bar{i} \given i \in S}.$ %For e.g., if $S =[n-1]$ then $\bar{S} = \set{\bar{n}}.$ 
 Recall that 
 \[\nu(U) = \begin{cases} \mu( U \cap \Omega) &\text{if $U =  S \cup \overline{(\Omega \setminus S)}$} \\ 0 &\text{else} \end{cases} \]
 %Recall $\nu$ is distributed over $\set*{1, \dots, n, \bar{1}, \dots, \bar{n}}$

 We let $\P{i}: =\P_{U \sim \nu} {i\in U} = \P_{S \sim \mu} {i\in S}$ and $\P{\bar{i}} :=\P_{U \sim \nu} {\bar{i}\in U} = \P_{S \sim \mu} {i\not\in S}.$
 
 Note that $\corMat_{\nu}(i, i) = -\corMat_{\nu}(i, \bar{i}) = \P{\bar{i}}$ and $\corMat_{\nu} (\bar{i}, \bar{i})=  - \corMat_{\nu}(\bar{i}, i) = \P{i}.$ For $i\neq j$, we can write
\begin{align*}
    \corMat_{\nu}(i, j) &= - \corMat_{\nu}(i, \bar{j}) = \P{\bar{i}} \inflMat_{\mu}(i,j)\\
    \corMat_{\nu}(\bar{i}, \bar{j})&= - \corMat_{\nu}(\bar{i}, j)   = \P{i} \inflMat_{\mu}(i,j)\\
\end{align*}
Let $D := \diag{(\P{i})_{i=1}^n}$ and $\overline{D} := \diag{(\P{\bar{i}})_{i=1}^n}.$ We can rewrite $\corMat_{\nu}$ as a block matrix in term of matrix $A : =\inflMat_{\mu} + I$ as follow
\[ \corMat_{\nu} = \begin{bmatrix} \overline{D} A & - \overline{D}A \\ - D   A & DA \end{bmatrix}\]
We consider the left eigenvectors of $\corMat_{\nu}.$ %Let $\lambda_1(A) \geq \dots\geq \lambda_n(A)$ be the eigenvalues of $A$, with corresponding
Recall that all eigenvalues of $\inflMat_{\mu}$ are real.%, and $A = \inflMat_{\mu} + I$ has the same 
Let $ v_1, \cdots, v_n \in \R^n$ be a basis of left eigenvectors of $\inflMat_{\mu}$, with corresponding eigenvalues $\lambda_1(\inflMat_{\mu}) \geq \dots\geq \lambda_n(\inflMat_{\mu}).$ For $i\in [n],$ let $w_i \in \R^{2n}$ be the concatenation of $v_i$ and $-v_i$ i.e. $w_i^t : = \begin{bmatrix} v_i^t & -v_i^t \end{bmatrix}.$ Then $\set*{w_i}$ are linearly independent, and are left eigenvector of $\corMat_{\nu} $ with eigenvalues $\set*{\lambda_i+1}$, since
\[w_i^t \corMat_{\nu} = \begin{bmatrix} v_i^t & -v_i^t \end{bmatrix} \begin{bmatrix} \overline{D} A & - \overline{D}A \\ - D   A & DA \end{bmatrix} =  \begin{bmatrix} v_i^t  (\overline{D} + D) A & -v_i^t (\overline{D} + D)A \end{bmatrix} =\begin{bmatrix} v_i^t  A & -v_i^t A \end{bmatrix} = (\lambda_i+1) w_i^t\]
On the other hand, for $i\in [n]$, consider the vector $u_i\in \R^{2n}$ defined by $u_i^t: = \begin{bmatrix} \textbf{e}_i^t D  & \textbf{e}_i^t \overline{D}\end{bmatrix}  $ where $\textbf{e}_i$ is the $i$-th standard basis vector of $\R^n.$ Observe that $u_i \neq 0$ since either $\P{i}$ or $\P{\bar{i}}$ must be nonzero, and $u_i^t \corMat_{\nu} = 0 .$ Moreover, $\set*{u_i}$ are linearly independent.

Now, let $W , U$ be the $n$-dimensional subspaces of $\R^{2n}$ spanned of $\set*{w_i}$ and $\set*{u_i}$ respectively. We show that $W \cap U = \set*{\textbf{0}},$ then conclude that the vectors $\set*{u_i}\cup \set*{w_i}$ are linearly independent, and form a basis of (left) eigenvectors of $\corMat_{\nu}.$
Hence, the spectrum of $\corMat_{\nu}$ is the union of $\set*{\lambda_i+1}_{i=1}^n$ and $n$ copies of $0.$ In particular, $\lambda_{\max} (\corMat_{\nu})\leq \lambda_1(\inflMat_{\mu}) + 1 = \frac{2}{\alpha} .$

Indeed, suppose $w\in W \cap U.$ We can write $w^t = \begin{bmatrix} y^t & -y^t \end{bmatrix}$ for some $y \in \R^n$ and $w^t = \begin{bmatrix} x^t D & x^t \overline{D} \end{bmatrix}$ for some $x \in \R^n.$Then
\[0=y(i) - y(i) = w(i) + w(i+n)= x(i) \P{i} + x(i) \P{\bar{i}} = x(i)\]
where we use $y(i)$ ($x(i), z(i)$ resp.) to denote the $i$-th entries of vector $y.$ Now, all entries of $x$ are 0, so $w = \textbf{0}.$ 
 \end{proof}
 \begin{remark}
 \cref{lem: homogenization tighter bound} is tight. For example, take $ f_{\mu}(x_1, x_2) = x_1 + x_2$, which is $\Gamma_1$-stable, then $\lambda_{\max} (\corMat_{\mu^{\hom}}) = 2.$
 \end{remark}
 For any probability distribution $\mu$ over a finite set $\Omega$, define its entropy as $\mathcal{H}(\mu)=\sum_{\omega\in\Omega}\mu(\omega)\log\frac{1}{\mu(\omega)}$. Recall the marginal probability of element $i\in\Omega$, $\mu(i)$,  is the probability that $i$ is in a sample from $\mu$, i.e., $\mu(i)=\P_{S\sim \mu}{i\in S}.$ For any probability distribution, by sub-additivity of entropy, we know that entropy of marginals is an upper bound on the entropy,
 \[\sum_i\mathcal{H}(\mu(i))\geq \mathcal{H}(\mu).\]

 The next lemma, which is an analogous of Theorem 5.2. in \cite{AOV18}, leads to a lower bound for the entropy of fractionally log-concave polynomials.

	\begin{lemma}\label{lemma: entropy lower bound}
	For any $\alpha$ fractionally log-concave distribution $\mu:2^{[n]}\rightarrow \mathbb{R}$ with marginal probabilities $\mu_1,\ldots,\mu_n$, we have
	\[\mathcal{H}(\mu)\geq \alpha\sum_i\mu(i)\log(\frac{1}{\mu(i)}).\]
	\end{lemma}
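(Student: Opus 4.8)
The plan is to generalize the entropy argument behind Theorem~5.2 of \textcite{AOV18}: I will use $\alpha$-fractional log-concavity only through a single supporting-hyperplane inequality for $\log g_\mu$, and then extract the claimed bound by optimizing an external field coordinate-by-coordinate and averaging over $S\sim\mu$.

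Write $g=g_\mu$ for the (multilinear) generating polynomial of $\mu$; since $\mu$ is a probability distribution we have $g(\vec{1})=1$ and $\partial_i g(\vec{1})=\P_{S\sim\mu}{i\in S}=\mu(i)$. By \cref{def:fractional-log-concavity}, $G(z):=\log g(z_1^\alpha,\dots,z_n^\alpha)$ is concave on $\R_{\geq 0}^n$, and it is differentiable at the interior point $\vec{1}$ with $\nabla G(\vec{1})=\alpha\cdot(\mu(1),\dots,\mu(n))$. The first step is to record the supporting hyperplane of $G$ at $\vec{1}$: concavity yields $G(z)\le\alpha\sum_i\mu(i)(z_i-1)$ for all $z\in\R_{\geq 0}^n$, which after the substitution $w_i=z_i^\alpha$ reads
\[ g(w)\ \le\ \exp\!\Big(\alpha\sum_i\mu(i)\big(w_i^{1/\alpha}-1\big)\Big)\qquad\text{for all }w\in\R_{\geq 0}^n. \]

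The second step converts this into a pointwise lower bound on $-\log\mu(S)$. Because $g$ has nonnegative coefficients, $g(w)\ge\mu(S)\,w^S$ for every $S$ with $\mu(S)>0$ and every $w>0$; combining with the displayed inequality and taking logarithms,
\[ -\log\mu(S)\ \ge\ \sum_{i\in S}\log w_i\ -\ \alpha\sum_i\mu(i)\,w_i^{1/\alpha}\ +\ \alpha\sum_i\mu(i),\qquad\text{valid for all }w>0. \]
I would then maximize the right-hand side over $w>0$ one coordinate at a time: for $i\notin S$ the relevant term $-\alpha\mu(i)w_i^{1/\alpha}$ has supremum $0$ (let $w_i\to 0^+$), while for $i\in S$---where $\mu(i)\ge\mu(S)>0$---the term $\log w_i-\alpha\mu(i)w_i^{1/\alpha}$ is maximized at $w_i=\mu(i)^{-\alpha}$ with value $\alpha\log(1/\mu(i))-\alpha$. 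This gives
\[ -\log\mu(S)\ \ge\ \alpha\sum_{i\in S}\log\frac{1}{\mu(i)}\ -\ \alpha\,|S|\ +\ \alpha\sum_i\mu(i). \]

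Finally, I would take expectations over $S\sim\mu$. Since $\E*_{S\sim\mu}{\sum_{i\in S}\log(1/\mu(i))}=\sum_i\mu(i)\log(1/\mu(i))$ and $\E*_{S\sim\mu}{|S|}=\sum_i\mu(i)$, the two $\sum_i\mu(i)$ error terms cancel exactly, and we are left with $\mathcal{H}(\mu)=\E*_{S\sim\mu}{-\log\mu(S)}\ge\alpha\sum_i\mu(i)\log(1/\mu(i))$, as claimed. I do not expect a genuine obstacle here; the only point needing care is that fractional log-concavity has to be used in this ``$S$-dependent external field'' fashion---plugging a single fixed field into the supporting-hyperplane bound is too lossy and yields only $\mathcal{H}(\mu)\ge 0$, whereas letting $w$ depend on $S$ and optimizing per coordinate is precisely what makes the linear-in-$\mu(i)$ terms telescope after averaging. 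The remaining technical point, justifying the $w_i\to 0^+$ limits for $i\notin S$, is immediate since the inequality holds for every $w>0$ and its right-hand side is continuous in each $w_i$ at $0$.
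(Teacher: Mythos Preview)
Your proof is correct and is essentially the paper's argument unpacked. The paper applies Jensen's inequality to the concave function $f(z)=\log g_\mu\bigl((z_1/\mu_1)^\alpha,\dots,(z_n/\mu_n)^\alpha\bigr)$ at the random indicator vector $\1_S$ with $S\sim\mu$---so that $f$ at the mean $(\mu_1,\dots,\mu_n)$ equals $0$---and then lower-bounds $f(\1_S)$ by retaining only the $T=S$ term of $g_\mu$; your tangent-line bound at $\vec 1$ averaged over $S$ is precisely this Jensen step, and your per-coordinate optimum ($w_i=\mu(i)^{-\alpha}$ for $i\in S$, $w_i\to 0$ for $i\notin S$) is exactly the evaluation point the paper hard-codes via the $\mu_i^{-\alpha}$ scaling.
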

	\begin{proof}
	Let $g_\mu$ be the polynomial of the distribution $\mu$. Define
	\[f(z_1,\ldots,z_n)=\log g_\mu\big(\frac{z_1^\alpha}{\mu_1^\alpha},\ldots ,\frac{z_n^\alpha}{\mu_n^\alpha}\big).\]
	Since $\mu$ is $\alpha$ fractionally log-concave, $\log g_\mu(z_1^\alpha,\ldots,z_n^\alpha)$ is a concave function. Also, scaling preserves concavity. Therefore, $f(z_1,\ldots,z_n)$ is concave. 

Now, let $X$ be a random variable that indicates a set chosen according to $\mu$, i.e., $\mathbb{P}(X=\mathds{1}_S)=\mu(S)$.	
Then, by Jensen inequality we have that $f(\mathbb{E}[X])\geq \mathbb{E}[f(X)]$.
	Note that,
	\[f(\mathbb{E}(X))=f(\mu_1,\ldots,\mu_n)=\log g_\mu\big(\frac{\mu_1^\alpha}{\mu_1^\alpha},\ldots \frac{\mu_n^\alpha}{\mu_n^\alpha}\big)=\log(g_\mu(1,\ldots,1))=0.\]
	Also, 
	\[f(\mathds{1}_S)=\log(\sum_{T\subseteq S}\mu(T)\prod_{i\in T}\frac{1}{\mu(i)^\alpha})\geq \log(\mu(S)\prod_{i\in S}\frac{1}{\mu(i)^{\alpha}})=\log\mu(S)+\alpha\sum_{i\in S}\log\frac{1}{\mu(i)},\]
	where the inequality is true because of monotonicity of $\log$. Hence,
	\[\mathds{E}[f(X)]=\sum_S\mu(S)f(\mathds{1}_S)\geq \sum_S \left (\mu(S)\log\mu(S)+\alpha\mu(S)\sum_{i\in S}\log\frac{1}{\mu(i)} \right)\]
	\[=-\mathcal{H}(\mu)+\alpha\sum\mu(i)\log\frac{1}{\mu(i)}.\]
 By applying Jensen inequality we have
 \[\mathcal{H}(\mu)\geq \alpha\sum_i\mu(i)\log\frac{1}{\mu(i)}.\]
	\end{proof}

	Given a probability distribution $\mu:2^{[n]}\rightarrow\mathbb{R}_{\geq 0}$, the dual probability distribution  $\mu^*$ is defined so that the probability of occurrence of each set is equal to its complement under $\mu$, i.e., for any set $S\subseteq[n]$, $\mu^*(S)=\mu([n]\setminus S)$.
	\begin{corollary}
	If $\mu$ and its dual $\mu^*$ are $\alpha$-fractionally log-concave  then $\sum_i\mathcal{H}(\mu(i))$ is a $\frac{\alpha}{2}$ approximation for $\mathcal{H}(\mu)$.
	
	In particular, if $\mu$ is $\Gamma_{2\alpha}$-sector-stable, then $\mu$ and its dual $\mu^*$ are $\alpha$-fractionally log-concave (see \cref{prop: ssProperties} Part\ref{part:dual} and \cref{lemma:sectorStableToFracLC}). Therefore, $\sum_i\mathcal{H}(\mu(i))$ is a $\frac{\alpha}{2}$ approximation for $\mathcal{H}(\mu)$.
	\end{corollary}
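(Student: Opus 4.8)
The plan is to combine the subadditivity of entropy with two applications of \cref{lemma: entropy lower bound}, one to $\mu$ and one to its dual $\mu^*$. The upper bound $\mathcal{H}(\mu)\leq \sum_i \mathcal{H}(\mu(i))$ is exactly the subadditivity of entropy already recorded above, so the content is the matching lower bound. First I would record two elementary facts about the dual: complementation $S\mapsto [n]\setminus S$ is a bijection on $2^{[n]}$, so $\mathcal{H}(\mu^*)=\mathcal{H}(\mu)$; and the marginals of $\mu^*$ are $\mu^*(i)=\P_{S\sim\mu^*}{i\in S}=\P_{S\sim\mu}{i\notin S}=1-\mu(i)$. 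I would also use the identity $\mathcal{H}(\mu(i))=\mu(i)\log\frac{1}{\mu(i)}+(1-\mu(i))\log\frac{1}{1-\mu(i)}$ (binary entropy of the marginal), with the usual convention $0\log 0 =0$ for degenerate marginals.

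Next I would apply \cref{lemma: entropy lower bound} to $\mu$, giving $\mathcal{H}(\mu)\geq \alpha\sum_i \mu(i)\log\frac{1}{\mu(i)}$, and to $\mu^*$ (which is $\alpha$-fractionally log-concave by hypothesis), giving
\[
\mathcal{H}(\mu)=\mathcal{H}(\mu^*)\geq \alpha\sum_i \mu^*(i)\log\frac{1}{\mu^*(i)}=\alpha\sum_i (1-\mu(i))\log\frac{1}{1-\mu(i)}.
\]
Adding the two inequalities and regrouping yields
\[
2\,\mathcal{H}(\mu)\geq \alpha\sum_i\left(\mu(i)\log\tfrac{1}{\mu(i)}+(1-\mu(i))\log\tfrac{1}{1-\mu(i)}\right)=\alpha\sum_i\mathcal{H}(\mu(i)),
\]
i.e.\ $\mathcal{H}(\mu)\geq \frac{\alpha}{2}\sum_i \mathcal{H}(\mu(i))$. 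Together with subadditivity this sandwiches $\sum_i\mathcal{H}(\mu(i))$ between $\mathcal{H}(\mu)$ and $\frac{2}{\alpha}\mathcal{H}(\mu)$, which is the claimed $\frac{\alpha}{2}$-approximation.

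For the second assertion I would simply note that $\Gamma_{2\alpha}$-sector-stability passes to $\mu^*$ by \cref{prop: ssProperties} Part \ref{part:dual}, and that $\Gamma_{2\alpha}$-sector-stability of a nonnegative polynomial implies $\alpha$-fractional log-concavity by \cref{lemma:sectorStableToFracLC}; hence both $\mu$ and $\mu^*$ are $\alpha$-fractionally log-concave and the first part applies verbatim. There is no serious obstacle here: the only point requiring a little care is the bookkeeping for the dual's marginals and entropy, and handling the degenerate cases $\mu(i)\in\{0,1\}$, where the corresponding terms vanish under the standard convention and cause no trouble.
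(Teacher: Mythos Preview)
Your proposal is correct and follows essentially the same approach as the paper: apply \cref{lemma: entropy lower bound} to both $\mu$ and $\mu^*$, use $\mathcal{H}(\mu)=\mathcal{H}(\mu^*)$ and $\mu^*(i)=1-\mu(i)$, add the two bounds to recover $\alpha\sum_i\mathcal{H}(\mu(i))$, and combine with subadditivity. The paper's proof is slightly terser, but the logic is identical.
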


	\begin{proof}
	For any probability distribution $\mu$ we have that, $\mathcal{H}(\mu)\leq \sum_i\mathcal{H}(\mu(i))$. So, it is enough to prove $\mathcal{H}(\mu)\geq\frac{\alpha}{2}\sum_i\mathcal{H}(\mu(i))$
	By  \cref{lemma: entropy lower bound} we have that,
	
	\[\mathcal{H}(\mu)\geq\alpha \sum_i\mu(i)\log\frac{1}{\mu(i)},\]
		\[\mathcal{H}(\mu^*)\geq\alpha \sum_i(1-\mu(i))\log\frac{1}{1-\mu(i)}.\]
		Since $\mu$ and $\mu^*$ are duals $\mathcal{H}(\mu)=\mathcal{H}(\mu^*)$. Therefore,
		\[2\mathcal{H}(\mu)=\mathcal{H}(\mu)+\mathcal{H}(\mu^*)\geq \alpha( \sum_i\mu(i)\log\frac{1}{\mu(i)}+ \sum_i(1-\mu(i))\log\frac{1}{1-\mu(i)})=\alpha\sum_i\mathcal{H}(\mu(i)). \]
		
	\end{proof}

Given a distribution $\mu$, let $F_\mu$ be its support. We want to show how to approximate $\log|F_\mu|$ when $\mu$ is fractionally log-concave. Previously, this result was shown for log-concave polynomials in \cite{AOV18}.

\begin{lemma} \label{lem:approximateLogSupport}
Consider $F\subseteq \binom{[n]}{k}.$ Let $F^*: = \set*{[n] \setminus S \given S \in F}.$ Let 
\[\beta := \max \set*{\sum_i p_i\log\frac{1}{p_i}\given p\in \conv(F)},\] and \[\beta^*:= \max\set*{q_i \log \frac{1}{q_i} \given q \in \conv(F^*)}.\] 
Assume there exists an $\alpha$-fractionally log-concave polynomials $g$ and $h$ with $\supp(g)=F$ and $\supp(h) = F^*.$ Then $\beta+\beta^*$ is $\alpha/2$-approximation for $\log \abs{F}$ i.e.
$(\beta+\beta^*) \geq \log \abs{F} \geq  (\beta + \beta^*) \alpha/2.$

In particular, if there exists an $\Gamma_{2\alpha}$-sector-stable polynomial $g$ with $\supp(g) = F$ then $\beta+\beta^*$ is $\alpha/2$-approximation for $\log \abs{F}.$
\end{lemma}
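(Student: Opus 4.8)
The plan is to sandwich $\log\abs{F}$ between $\beta+\beta^*$ and $\tfrac{\alpha}{2}(\beta+\beta^*)$: the upper bound comes from subadditivity of entropy applied to the uniform distribution on $F$, and the lower bound from the entropy inequality of \cref{lemma: entropy lower bound} applied to cleverly tilted copies of $g$ and $h$ whose marginals realize (almost) the maximizers defining $\beta$ and $\beta^*$.

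\emph{Upper bound.} First I would let $\mu_0$ be the uniform distribution on $F$, so $\mathcal{H}(\mu_0)=\log\abs{F}$, and apply subadditivity of entropy: $\mathcal{H}(\mu_0)\le\sum_i\bigl(\mu_0(i)\log\tfrac{1}{\mu_0(i)}+(1-\mu_0(i))\log\tfrac{1}{1-\mu_0(i)}\bigr)$, where $\mu_0(i)$ is the marginal. The marginal vector $(\mu_0(1),\dots,\mu_0(n))$ is the average of $\1_S$ over $S\in F$, so it lies in $\conv(F)$; its complement $(1-\mu_0(1),\dots,1-\mu_0(n))$ is the average of $\1_{[n]\setminus S}$ over $S\in F$, so it lies in $\conv(F^*)$. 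Hence the first sum is at most $\beta$ and the second at most $\beta^*$, giving $\log\abs{F}\le\beta+\beta^*$.

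\emph{Lower bound.} The core is to show $\log\abs{F}\ge\alpha\beta$ (and symmetrically $\log\abs{F^*}\ge\alpha\beta^*$, using $\abs{F^*}=\abs{F}$). Since $p\mapsto\sum_i p_i\log\tfrac1{p_i}$ is continuous on the compact set $\conv(F)$, $\beta$ is attained; perturbing the maximizer slightly toward a relative-interior point of $\conv(F)$, for every $\varepsilon>0$ I can pick $p^\varepsilon$ in the relative interior of $\conv(F)$ with $\sum_i p^\varepsilon_i\log\tfrac1{p^\varepsilon_i}\ge\beta-\varepsilon$. Now I invoke the standard marginal-tuning fact: since $g$ is multi-affine, homogeneous, with nonnegative coefficients and Newton polytope $\conv(F)$, the marginal map $\lambda\mapsto\bigl(\lambda_i\,\partial_{z_i}g(\lambda)/g(\lambda)\bigr)_i$ — which is precisely the gradient of the convex log-partition function $(\log\lambda_i)_i\mapsto\log g(\lambda)$ — maps $\R_{>0}^n$ onto the relative interior of $\conv(F)$. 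So I choose $\lambda\in\R_{>0}^n$ whose marginal vector is $p^\varepsilon$ and let $\mu$ be the distribution with generating polynomial proportional to $g(\lambda_1 z_1,\dots,\lambda_n z_n)$. Because $\alpha$-fractional log-concavity is preserved under positive coordinatewise scalings $z_i\mapsto\lambda_i z_i$ (write $\lambda_i=\nu_i^{\alpha}$ and note that composing the concave function $z\mapsto\log g(z_1^\alpha,\dots,z_n^\alpha)$ with a linear map stays concave) and under normalization, $\mu$ is $\alpha$-fractionally log-concave with $\supp(\mu)\subseteq F$. Then \cref{lemma: entropy lower bound} gives
\[ \log\abs{F}\ \ge\ \mathcal{H}(\mu)\ \ge\ \alpha\sum_i\mu(i)\log\tfrac{1}{\mu(i)}\ =\ \alpha\sum_i p^\varepsilon_i\log\tfrac{1}{p^\varepsilon_i}\ \ge\ \alpha(\beta-\varepsilon), \]
and letting $\varepsilon\to0$ yields $\log\abs{F}\ge\alpha\beta$. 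Running the identical argument with $h$ (support $F^*$) in place of $g$ gives $\log\abs{F}=\log\abs{F^*}\ge\alpha\beta^*$. Averaging, $\log\abs{F}\ge\tfrac{\alpha}{2}(\beta+\beta^*)$, which combined with the upper bound proves the lemma.

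For the final assertion, if $g$ is $\Gamma_{2\alpha}$-sector-stable with $\supp(g)=F$, then Part \ref{part:dual} of \cref{prop: ssProperties} shows its dual $g^*$ is $\Gamma_{2\alpha}$-sector-stable with $\supp(g^*)=F^*$, and \cref{lemma:sectorStableToFracLC} makes both $g$ and $g^*$ be $\alpha$-fractionally log-concave; taking $h:=g^*$ reduces to the case already handled. The one ingredient that is not pure bookkeeping is the surjectivity of the marginal map onto $\mathrm{relint}(\conv(F))$ (equivalently, that the gradient of the log-partition function ranges over the relative interior of the Newton polytope); this is classical — it is the realizability statement for marginals of positive polynomials / exponential families used by \textcite{AOV18} in the log-concave case — but it is where the real content sits, since it is what converts the purely combinatorial quantities $\beta,\beta^*$ into marginals of genuine fractionally log-concave distributions to which \cref{lemma: entropy lower bound} applies.
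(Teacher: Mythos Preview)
Your proof is correct and follows essentially the same route as the paper: the upper bound via subadditivity of entropy for the uniform distribution on $F$, and the lower bound by tilting $g$ (resp.\ $h$) so that its marginals realize the maximizer of $\beta$ (resp.\ $\beta^*$), then invoking \cref{lemma: entropy lower bound}. The only cosmetic difference is that the paper packages the marginal-tuning step as a separate statement (\cref{lem:two max equal}, citing \cite{AOV18}) and allows $\lambda\in\R_{\geq 0}^n$ to hit boundary points directly, whereas you argue it inline with an $\varepsilon$-perturbation into the relative interior before passing to the limit; both are valid and the content is the same.
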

Note that $\beta$ and $\beta^*$ can be efficiently computed via a convex program (see e.g. \cite[Theorem 2.10]{AOV18}. The following lemma states that any point in $\conv(F_\mu)$, where $\mu$ is fractionally log-concave, is  equal to the marginals of some fractionally log-concave distribution.

\begin{lemma} \label{lem:two max equal}%two max are equal.
Consider $F \subseteq \binom{[n]}{k}.$ Suppose there exists an $\alpha$-fractionally log-concave polynomial $g$ with $\supp(g)=F$. For any $(p_1, \cdots, p_n) \in \conv(F),$ there exists $\nu$ with $\supp(\nu) \subseteq F$ such that $\sum_S \nu(S) z^S$ is $\alpha$-fractionally log concave and $\nu(i) = p_i \forall i\in [n].$

Consequently, $\max \set*{\sum_i p_i\log\frac{1}{p_i}\given p\in \conv(F)} = \max\set*{\sum_i \mu(i) \log\frac{1}{\mu(i)} \given \mu \in V }$ where $V$ is the set of $\alpha$-fractionally log concave $\mu$ with $\supp(\mu)\subseteq F.$
\end{lemma}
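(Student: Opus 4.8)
The plan is to exhibit the desired distribution $\nu$ as an external‑field tilt of (a normalization of) $g$, and then appeal to the classical fact that the mean parameters of an exponential family sweep out the relative interior of the convex hull of the support. Write $g=\sum_{S\in F}g_S z^S$ with all $g_S>0$, and let $\mu_0$ be the probability distribution proportional to $g$; scaling by a positive constant does not affect $\alpha$‑fractional log‑concavity, so $\mu_0\in V$ and in particular $V\neq\emptyset$. For $\theta\in\R^n$ let $\nu_\theta$ be the tilted distribution on $F$ given by $\nu_\theta(S)\propto g_S\,e^{\langle\theta,\1_S\rangle}$, i.e.\ $\nu_\theta=\ef{\lambda}{\mu_0}$ with $\lambda_i=e^{\theta_i}$. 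Two routine facts: (i) $\nu_\theta$ is again $\alpha$‑fractionally log‑concave, since $g_{\ef{\lambda}{\mu_0}}(z_1^\alpha,\dots,z_n^\alpha)$ is, up to a positive constant, $G(\lambda_1^{1/\alpha}z_1,\dots,\lambda_n^{1/\alpha}z_n)$ where $G(w)=\log g(w_1^\alpha,\dots,w_n^\alpha)$ is concave on $\R_{\ge 0}^n$ by hypothesis, and composition with a positive diagonal linear self‑map of $\R_{\ge 0}^n$ preserves concavity (this is exactly the Hessian‑congruence observation at the start of the proof of \cref{lemma:sectorStableToFracLC}); and (ii) since each $g_S$ with $S\in F$ stays strictly positive, $\supp(\nu_\theta)=F$, so the marginal vector of $\nu_\theta$ lies in the relative interior of $\conv(F)$.

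Next I would introduce the log‑partition function $\Lambda(\theta):=\log\sum_{S\in F}g_S\,e^{\langle\theta,\1_S\rangle}$, so that $\nabla\Lambda(\theta)$ is precisely the marginal vector of $\nu_\theta$. The one non‑elementary ingredient is the standard exponential‑family statement — the analogue of the corresponding step in \cite{AOV18} — that $\theta\mapsto\nabla\Lambda(\theta)$ maps $\R^n$ \emph{onto} the relative interior of $\conv(F)$, hence has image with closure equal to all of $\conv(F)$. Surjectivity onto the relative interior follows by fixing $p$ there and minimizing the convex function $\theta\mapsto\Lambda(\theta)-\langle\theta,p\rangle$: restricted to the linear span of $\{\1_S-\1_{S'}:S,S'\in F\}$ this function is coercive precisely because $p$ lies in the relative interior (so $\langle\theta,p\rangle$ is strictly beaten by $\max_{S\in F}\langle\theta,\1_S\rangle=\Lambda(\theta)+O(1)$ as $\lVert\theta\rVert\to\infty$ within that subspace), and any minimizer $\theta^\star$ satisfies $\nabla\Lambda(\theta^\star)=p$. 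I would cite this rather than belabor it; it is the main obstacle only in the bookkeeping sense, as the statement itself is classical.

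With this in hand the first claim is immediate: if $p\in\conv(F)$ lies in the relative interior, take $\nu:=\nu_{\theta^\star}$; for a boundary point $p$, choose $p^{(t)}\to p$ with each $p^{(t)}$ in the relative interior, let $\nu^{(t)}$ be the corresponding tilts (each $\alpha$‑fractionally log‑concave, supported on $F$, with marginal vector $p^{(t)}$), and pass to a subsequential limit $\nu$ using compactness of the probability simplex over the finite set $F$. Then $\supp(\nu)\subseteq F$, the marginal vector of $\nu$ equals $\lim_t p^{(t)}=p$ (marginals depend continuously on the distribution), and $\nu$ is $\alpha$‑fractionally log‑concave because this property is inherited by coefficient‑wise limits (arguing as with $U$‑stability in \cref{lemma:limitRoot}: $g_{\nu^{(t)}}\to g_\nu$ coefficient‑wise, hence pointwise, so $\log g_\nu(z_1^\alpha,\dots,z_n^\alpha)$ is a pointwise limit of concave functions and therefore concave, using that $g_\nu$ has nonnegative coefficients and is not identically zero). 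Finally, for the ``consequently'' part: for every $\mu\in V$ the marginal vector lies in $\conv(\supp\mu)\subseteq\conv(F)$, giving $\max_{\mu\in V}\sum_i\mu(i)\log\frac1{\mu(i)}\le\max_{p\in\conv(F)}\sum_i p_i\log\frac1{p_i}$; conversely, taking $p^\star$ attaining the right‑hand maximum (attained by compactness of $\conv(F)$ and continuity) together with the $\nu\in V$ of marginal vector $p^\star$ produced above yields the reverse inequality, so the two maxima coincide.
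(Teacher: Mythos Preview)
Your proposal is correct and follows essentially the same approach as the paper: the paper's proof is just a one-line reference to \cite[Theorem~2.10, Corollary~2.11]{AOV18}, stating that for any $p\in\newt(g)$ one can find an external field $\lambda\in\R_{\ge 0}^n$ so that the distribution generated by $\ef{\lambda}{g}$ has marginals $p$, and your argument is precisely a fleshed-out version of this (external-field tilts, the log-partition gradient hitting the relative interior, and a compactness/limit argument for boundary points together with closure of $\alpha$-fractional log-concavity under coefficient-wise limits).
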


The proof is very similar to
\cite[Theorem 2.10, Corollary 2.11]{AOV18}. They show that given any $\vec{p}=(p_1, \cdots, p_n) \in \newt(g)$, one can find vector $\lambda \in \R^{n}_{\geq 0}$ such that the distribution generated by $g_{\ef{\lambda}{\mu}}$ has the same marginal as $\vec{p}.$ %Now, observe that if $g_{\mu}$ is $\alpha$-fractionally log-concave then so is $g_{\ef{\bmlambda}{\mu}}.$ 
 Now, we are ready to prove \cref{lem:approximateLogSupport}.
\begin{proof}[Proof of \cref{lem:approximateLogSupport}]
Let $\nu$ and $\nu^*$ be the uniform distribution over $F$ and $F^*$ respectively. For a set $F'$, let $V_{F'}$ be set of $\alpha$-fractionally log concave $\mu$ with $\supp(\mu)\subseteq F'$.
Since $V_{F}$ is non empty, by \cref{lem:two max equal}, $\beta = \max_{\mu \in V_F }\set*{\sum_i \mu(i) \log\frac{1}{\mu(i)} }$ and $\beta^* = \max_{\mu \in V_{F*} }\set*{\sum_i \mu(i) \log\frac{1}{\mu(i)} } .$ For $S \in \set*{F, F^*},$ let \[\mu^{\text{argmax}}_S = \text{argmax}_{\mu \in V_S}\sum_i \mu(i) \log\frac{1}{\mu(i)}. \]

We have $ \log (\abs{F}) = \mathcal{H}(\nu) \leq \sum_i \mathcal{H} (\nu(i)) = \sum_i (\nu(i) \log \frac{1}{\nu(i)} +(1-\nu(i)) \log \frac{1}{1-\nu(i)}) \leq \beta + \beta^*, $
 where the inequality follows from the fact that $(\nu(i))_{i=1}^n \in \conv(F)$ and $(1-\nu(i))_{i=1}^n \in \conv(F^*).$

On the other hand, since the uniform distribution over discrete set maximizes entropy, we have
\[\log (\abs{F}) = \mathcal{H}(\nu)  \geq \mathcal{H} (\mu^{\text{argmax}}_F) \geq \alpha \sum_i \mu^{\text{argmax}}_F(i) \log \frac{1}{\mu^{\text{argmax}}_F(i)}  = \alpha \beta,\]
where the second inequality follows from \cref{lemma: entropy lower bound}. Analogously,
\[\log (\abs{F^*}) = \mathcal{H}(\nu)  \geq \mathcal{H} (\mu^{\text{argmax}}_{F*}) \geq \alpha \sum_i \mu^{\text{argmax}}_{F*}(i) \log \frac{1}{\mu^{\text{argmax}}_{F*}(i)} = \alpha \beta^*. \]
Summing the these two inequalities, we get
\[\log (\abs{F}) \geq (\beta + \beta^*)\alpha/2.\]

\end{proof}

\cref{cor:homogenize monomer-dimer poly,cor:monomerDimerConstrained,lem:approximateLogSupport} together imply the following corollary. %an approximation for the number of size-$2k$-subsets of vertices that have a perfect matching. 
\begin{corollary}\label{cor:2kmatching}
Consider graph $G= G(V,E).$ Let $V^M$ be the family of sets of $S \subseteq V$ that have a perfect matching. For $k\leq n/2$, let $V^M_{k}$ be the family of vertices of size $2k$ that have a perfect matching. Then we can efficiently compute a $1/8$-multiplicative-approximation of $\log \abs{V^M}$ and of $\log \abs{V^M_k}.$
\end{corollary}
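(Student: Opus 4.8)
The plan is to realize each family as the support of a $\Gamma_{1/2}$-sector-stable polynomial and then apply \cref{lem:approximateLogSupport} with $2\alpha = 1/2$, i.e. $\alpha = 1/4$; the factor it produces is $\alpha/2 = 1/8$, exactly what is claimed. So all the work is in exhibiting the right polynomials and checking that the resulting entropy-maximization convex programs are efficiently solvable.

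For $\log\abs{V^M}$, I would assign all edge and vertex weights of $G$ the value $1$, so that $m_S$ equals the number of perfect matchings of the induced subgraph $G[S]$, and hence $m_S > 0$ precisely when $S \in V^M$. By \cref{cor:homogenize monomer-dimer poly} the homogenized monomer--dimer polynomial
\[ g(z_1,\dots,z_n,y_1,\dots,y_n) = \sum_{S \subseteq V} y^S z^{[n]\setminus S}\, m_S \]
is $\Gamma_{1/2}$-stable and homogeneous of degree $n$ in $2n$ variables, hence the generating polynomial of a distribution on $\binom{[2n]}{n}$; since distinct $S$ contribute distinct monomials, $\supp(g)$ is in bijection with $V^M$. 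Feeding $g$ into \cref{lem:approximateLogSupport} then produces computable quantities $\beta,\beta^*$ with $\beta+\beta^* \geq \log\abs{V^M} \geq (\beta+\beta^*)/8$. For $\log\abs{V^M_k}$ (assuming $V^M_k \neq \emptyset$, else the statement is vacuous), I would instead invoke \cref{cor:monomerDimerConstrained} with cardinality parameter $2k$: its dual polynomial $f_{2k}^*(z_1,\dots,z_n) = \sum_{S \in \binom{[n]}{2k}} m_S z^S$ is $\Gamma_{1/2}$-stable (and not identically $0$, since $V^M_k \neq \emptyset$) and homogeneous of degree $2k$ in $n$ variables, with support in bijection with $\set{S \in \binom{[n]}{2k} : m_S > 0} = V^M_k$; applying \cref{lem:approximateLogSupport} once more gives a factor-$8$ estimate of $\log\abs{V^M_k}$.

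The step I expect to require the most care --- genuinely an obstacle rather than bookkeeping --- is the efficiency assertion underlying \cref{lem:approximateLogSupport}: computing $\beta$ and $\beta^*$ means maximizing the concave entropy functional over the Newton polytopes $\conv(\supp(g))$ and $\conv(\supp(g))^*$ (and their cardinality-constrained analogues for $f_{2k}^*$), and to run the ellipsoid method one needs a polynomial-time separation oracle for these polytopes. Here I would exploit the structure of matchings: if $S$ is matchable then $\1_S = \1_{V(M)}$ for some matching $M$, where $V(M)$ is its set of matched vertices, so $\conv\set{\1_S : S \in V^M}$ is exactly the image of the matching polytope of $G$ under the linear map $y \mapsto \parens*{\sum_{e \ni v} y_e}_{v \in V}$; Edmonds' description of the matching polytope by nonnegativity, vertex-degree, and odd-set inequalities supplies a separation oracle, which transfers to the linear image and, after reflecting coordinates, to $\conv(\supp(g))^*$; for the $V^M_k$ case one additionally intersects with the hyperplane $\set{x : \sum_v x_v = 2k}$. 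With a separation oracle in hand, the convex-optimization machinery used in \cite{AOV18} computes $\beta$ and $\beta^*$ to arbitrary accuracy in polynomial time, and substituting into \cref{lem:approximateLogSupport} completes the argument.
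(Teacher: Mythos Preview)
Your proposal is correct and follows exactly the route the paper indicates: it simply cites \cref{cor:homogenize monomer-dimer poly}, \cref{cor:monomerDimerConstrained}, and \cref{lem:approximateLogSupport} and states that the corollary follows, which is precisely the combination you unpack (the $\Gamma_{1/2}$-stable homogenized polynomial for $V^M$, the $\Gamma_{1/2}$-stable $f_{2k}^*$ for $V^M_k$, and then \cref{lem:approximateLogSupport} with $\alpha=1/4$). Your additional discussion of the separation oracle via Edmonds' matching polytope is a welcome elaboration of the efficiency claim that the paper leaves implicit through its reference to \cite{AOV18}.
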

Analogously, \cref{cor:p0Constrained,lem:approximateLogSupport} together imply the following
\begin{corollary}
Consider matrix $L\in \R^{n\times n}$ such that $L+L^T$ is positive semi-definite. Let $V^L$ be the family of sets $S \subseteq [n]$ such that $\det(L_{S,S}) \neq 0$. For $k\leq n$, let $V^L_k$
be the family of sets $S \in \binom{[n]}{k}$ such that $\det(L_{S,S}) \neq 0.$ Then we can efficiently compute an $1/8$-multiplicative-approximation of $\log \abs{V^L}$ and of $\log \abs{V^L_k}.$
\end{corollary}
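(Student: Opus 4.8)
The plan is to deduce both statements from \cref{lem:approximateLogSupport} by producing, in each case, a sector-stable generating polynomial whose support is the family whose log-cardinality we want to estimate (after a harmless relabelling in the unconstrained case). The one preliminary fact I would record is that $L+L^\intercal\succeq 0$ forces $L$ to be a $P_0$-matrix, so $\det(L_{S,S})\geq 0$ for every $S\subseteq[n]$; this nonnegativity is what guarantees that the generating polynomials below have nonnegative coefficients, as required by the fractional-log-concavity and entropy tools of \cref{lemma:sectorStableToFracLC} and \cref{lem:approximateLogSupport}, while their supports are by construction the families $V^L$ and $V^L_k$.

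For $V^L_k$: by \cref{cor:p0Constrained} the polynomial $f_k^\ast(z_1,\dots,z_n)=\sum_{S\in\binom{[n]}{k}}\det(L_{S,S})\,z^S$ is either identically zero or $\Gamma_{1/2}$-stable, with nonnegative real coefficients. If $f_k^\ast\equiv 0$ then $V^L_k=\emptyset$ (detectable, e.g., by testing whether $\sum_{\abs{S}=k}\det(L_{S,S})$, which is the coefficient of $t^k$ in $\det(I+tL)$, vanishes) and there is nothing to estimate. Otherwise $\supp(f_k^\ast)=V^L_k$, and I would invoke the ``in particular'' form of \cref{lem:approximateLogSupport} with $\Gamma_{2\alpha}=\Gamma_{1/2}$, i.e.\ $\alpha=1/4$, to obtain $\beta+\beta^\ast$ as an $\alpha/2=1/8$-multiplicative-approximation of $\log\abs{V^L_k}$. (This implicitly uses that $(f_k^\ast)^\ast$ is again $\Gamma_{1/2}$-stable, with support the complementary family, by \cref{prop: ssProperties} part~\ref{part:dual}, so that generating polynomials for both $F$ and $F^\ast$ are at hand.)

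For $V^L$ I would first homogenize. By \cref{lemma:p0MatrixHurwitz} the polynomial $\sum_{S\subseteq[n]}\det(L_{S,S})\,z^{[n]\setminus S}$ is $\Gamma_1$-stable with nonnegative coefficients, hence so is its dual $f(z_1,\dots,z_n):=\sum_{S\subseteq[n]}\det(L_{S,S})\,z^S$ by \cref{prop: ssProperties} part~\ref{part:dual}. Then \cref{lem:homogenize} makes $f^{\hom}(z_1,\dots,z_n,w_1,\dots,w_n)=\sum_{S\subseteq[n]}\det(L_{S,S})\,z^S w^{[n]\setminus S}$ multi-affine, homogeneous of degree $n$ in $2n$ variables, and $\Gamma_{1/2}$-stable; it is not identically zero since $\det(L_{\emptyset,\emptyset})=1$, and its support (the monomial $z^S w^{[n]\setminus S}$ attached to each $S$) is in bijection with $V^L$, so $\log\abs{\supp(f^{\hom})}=\log\abs{V^L}$. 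Applying \cref{lem:approximateLogSupport} to $f^{\hom}$ with $\alpha=1/4$ again yields a $1/8$-multiplicative-approximation of $\log\abs{V^L}$. (Replacing \cref{lemma:sectorStableToFracLC} by the sharper \cref{lem: homogenization tighter bound} would upgrade $f^{\hom}$ to $1/2$-fractional log-concavity and hence give a $1/4$-approximation, but $1/8$ is all that is claimed.)

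For efficiency I would appeal to the convex program behind \cref{lem:approximateLogSupport} (the remark following it, based on \cite[Theorem~2.10]{AOV18}): computing $\beta$ and $\beta^\ast$ reduces to convex optimization needing only evaluation access to a fractionally log-concave polynomial supported on $F$, respectively $F^\ast$, all of which we have. These polynomials are evaluable in polynomial time: $f(z)=\det\big(I+\diag{z_1,\dots,z_n}\,L\big)$, $f^{\hom}$ is obtained from $f$ via $z_i\mapsto z_i/w_i$ followed by multiplication by $w_1\cdots w_n$, $f_k^\ast(z)$ is the coefficient of $t^k$ in $\det\big(I+t\,\diag{z_1,\dots,z_n}\,L\big)$ (extracted by interpolating a degree-$n$ polynomial in $t$ at $n+1$ points), and the duals needed for $F^\ast$ follow from these by $z_i\mapsto z_i^{-1}$ and a monomial rescaling. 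I expect the only genuinely delicate point to be bookkeeping: checking that the hypotheses of \cref{lem:approximateLogSupport} and of that convex program are met with exactly the oracle access available here, in particular that everything remains within $\poly(n)$ time and that the supports really are the combinatorial families named in the statement.
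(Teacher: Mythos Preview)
Your proposal is correct and follows essentially the paper's intended argument: the paper simply records that the corollary follows from \cref{cor:p0Constrained} together with \cref{lem:approximateLogSupport}, and you have spelled this out faithfully, including the homogenization step needed for the unconstrained family $V^L$ (which the paper's one-line citation omits but clearly intends, by analogy with \cref{cor:2kmatching}). One small simplification: you can skip the dual step and apply \cref{lem:homogenize} directly to the Hurwitz-stable polynomial of \cref{lemma:p0MatrixHurwitz}, since its homogenization already has support in bijection with $V^L$.
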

\begin{remark}
In \cref{lem:edgeBound}, we show the convex hull of the support of a $\Gamma_{\alpha}$-sector stable polynomial has edge length bounded by $O(1/\alpha).$ We can show a similar result for $\alpha$-fractionally log-concave polynomial. We leave the problem of characterizing the support of $\alpha$-fractionally log-concave polynomial to future work.
\end{remark}

	\section{Cardinality-Constrained Distributions}
In this section, we state and prove the formal version of \cref{cor:mixDerivative}. This result suggests that there is an efficient algorithm to compute mixed derivatives of a real-stable polynomials. The time complexity of the algorithm depends on the bit complexity of coefficients of the polynomial and the number of partial derivatives. As a result, we have an FPRAS to compute the sum of coefficients of the monomials corresponding to a partition matroid with constantly many parts. By dropping the assumption on the coefficients, the best known result gives an $e^r$-approximation factor where $r$ is the rank of matroid (see \cite{SV17}).
		\begin{lemma} \label{lem:mixDerivative}
		Let $f \in \R_{\geq 0}[z_1, \cdots, z_n]$ be a homogeneous real-stable polynomial whose maximum degree in $z_i$ is $\kappa_i$. Let $\kappa: =\sum_{i=1}^n \kappa_i.$ For $v^1, \cdots, v^k, x \in \R_{\geq 0}^n$ with $k=O(1)$, we can compute $\partial_{v^1}^{c_1} \cdots \partial_{v^k}^{c_k} f \mid_{\vec{z} = x}$ in polynomial time in $\kappa$ and $b$, where $b\geq 0$ is the bit complexity of the coefficients of $f$ and the entries of $v^1, \cdots, v^k, x$ i.e., these entries are in between $[-2^b, 2^b].$
		
		%We can compute $D_{v_1} D_{v_2} \cdots D_{v_k}$ in polynomial time in $d,$ as long as there is only constantly many different directional derivative $D_{v_i}$ with $v_i\in \R_{\geq 0}^n$,

% 		\june{we can polarize (converge homogeneous to multiaffine homogeneous) while preserve real stability}
	\end{lemma}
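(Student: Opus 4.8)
The plan is to express the desired mixed derivative as a single coefficient extraction from a related polynomial obtained by polarization and linear changes of variables, and then to use the fact that coefficients of a polynomial can be recovered exactly by polynomial interpolation along each variable. First I would homogenize and track degrees: since $f$ is homogeneous with $\deg_{z_i} f \le \kappa_i$, each directional derivative $\partial_{v^j}$ lowers the total degree by $1$, so $\partial_{v^1}^{c_1}\cdots\partial_{v^k}^{c_k} f$ is a homogeneous polynomial of degree $\deg(f) - \sum_j c_j$, and in particular $\sum_j c_j \le \kappa$, so the number of differentiations is $\poly(\kappa)$. The key reformulation is to introduce $k = O(1)$ fresh scalar variables $t_1,\dots,t_k$ and consider $F(t_1,\dots,t_k) := f\bigl(x + t_1 v^1 + \dots + t_k v^k\bigr)$, a polynomial in $t_1,\dots,t_k$ of degree at most $\kappa$ in each $t_j$ (indeed of total degree at most $\deg f$). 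By the multivariate Taylor expansion, the coefficient of $t_1^{c_1}\cdots t_k^{c_k}$ in $F$ is exactly $\tfrac{1}{c_1!\cdots c_k!}\,\partial_{v^1}^{c_1}\cdots\partial_{v^k}^{c_k} f\bigm|_{z=x}$, because the $v^j$ are treated as constant direction vectors and mixed partials commute. So it suffices to compute one coefficient of the $k$-variate polynomial $F$, and then multiply by $c_1!\cdots c_k!$.

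Next I would recover that coefficient by interpolation in each $t_j$ separately: choose $\kappa+1$ distinct real evaluation points for each $t_j$ (e.g.\ $0,1,\dots,\kappa$), forming a grid of $(\kappa+1)^k = \poly(\kappa)$ points since $k = O(1)$; at each grid point $(t_1,\dots,t_k)$ we must evaluate $F = f(x + \sum_j t_j v^j)$, which is a single evaluation of $f$ at an explicitly computed real point in $\R^n$, doable in time polynomial in $n$, $\kappa$, and the bit complexity $b$ (the entries of $x + \sum_j t_j v^j$ have bit complexity $O(b + \log\kappa)$, and $f$ has $\poly(\kappa)$ monomials given the degree bounds, though here one should be a bit careful and note we only need oracle/explicit access to $f$ to evaluate it). Then solving the $k$ nested Vandermonde systems — or equivalently applying the Lagrange interpolation formula coordinate-by-coordinate — extracts the coefficient of $t_1^{c_1}\cdots t_k^{c_k}$ exactly using exact rational arithmetic, with all intermediate numbers of bit complexity $\poly(\kappa, b)$. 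Finally, multiplying by $c_1!\cdots c_k!$ (which has $\poly(\kappa)$ bits) yields the answer.

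The main subtlety — and the reason real-stability is hypothesized, even though the interpolation argument above does not obviously use it — is controlling the \emph{bit complexity} and the \emph{number of monomials} so that ``evaluate $f$'' is genuinely a polynomial-time primitive: the degree bound $\deg_{z_i} f \le \kappa_i$ with $\kappa = \sum_i \kappa_i$ is exactly what guarantees $f$ has at most $\binom{\kappa + n}{n}$ — but more usefully, at most $\prod_i(\kappa_i+1)$ — monomials, and real-stability (via \cref{prop:polarization}, polarization) lets one assume without loss of generality that $f$ is multiaffine in $\kappa$ variables, reducing to the case $\kappa_i = 1$ and making the monomial count at most $2^{\kappa}$, hence evaluation and interpolation cost $\poly(\kappa, b)$. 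So the real obstacle is not the interpolation scheme itself but verifying that polarization preserves exact access to the coefficients with only polynomial blow-up in bit size, and that the whole pipeline (polarize $f$, reduce to multiaffine, build the evaluation grid, interpolate, un-polarize by specializing the duplicated variables back to equal values, multiply by the factorials) keeps every intermediate quantity of size $\poly(\kappa, b)$; I expect this bookkeeping, rather than any conceptual difficulty, to be where the work lies.
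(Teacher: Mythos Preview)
Your proposal has a genuine gap at exactly the point you flag as a ``subtlety'': evaluating $f$ at a single point is \emph{not} a polynomial-time primitive in the relevant access model. The lemma (see \cref{cor:mixDerivative}, which it underlies) assumes only a \emph{coefficient oracle} for $f$, and after polarization $f$ is multiaffine in $\kappa$ variables, so it may have up to $2^{\kappa}$ monomials; you yourself write ``monomial count at most $2^{\kappa}$, hence evaluation and interpolation cost $\poly(\kappa,b)$,'' but $2^{\kappa}$ is not $\poly(\kappa)$. Your interpolation scheme needs $(\kappa+1)^k$ evaluations of $f$, each of which costs $2^{\kappa}$ oracle calls, so the overall algorithm is exponential. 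If one had a genuine \emph{evaluation} oracle for $f$, your argument would go through and would in fact be exact and deterministic---but then real-stability would be irrelevant, which is a sign that this is not the intended model.

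The paper's proof takes a completely different route and produces only an FPRAS, not an exact answer. It reduces (via polarization and a variable-splitting trick that makes the direction vectors into indicators of disjoint blocks) to evaluating at $x'$ the polynomial $h$ consisting of those terms of a real-stable $g$ whose $(T_1,\dots,T_{k+1})$-degree equals a prescribed tuple. By \cref{thm:partition} this $h$ is $\Gamma_{1/2^{k+1}}$-stable, hence $\alpha$-fractionally log-concave with $\alpha=\Omega(1)$ since $k=O(1)$; the down-up walk on the distribution generated by $h(x'_1z_1,\dots,x'_{n'}z_{n'})$ then mixes in $\poly(\kappa)$ steps, each step needing only $\poly(\kappa)$ coefficient queries, and approximate sampling yields an approximate value of $h(x')$ by the standard sampling-to-counting reduction. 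So real-stability is used essentially---it is what makes the partition-constrained polynomial sector-stable and the Markov chain rapidly mixing---whereas in your proposal it plays no role.
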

	\begin{proof}%[Proof of \cref{lem: mixDerivative}]
% 	\kiran{This is perfectly written, I will take another pass after writing related work.}
	W.l.o.g., we can assume $f$ is homogeneous multiaffine, else we replace $f$ with its polarization $f^{\uparrow} (z_{ij} )_{i\in [n], j \in [\kappa_i]}$ (see \cref{prop:polarization}). Note that $f^{\uparrow}$ is a homogeneous multi-affine polynomial in $\kappa$ variables, and has same degree as $f$. Moreover, $\partial_v f = (\sum_{i=1}^n \sum_{j=1}^{\kappa_i} v_i\frac{\partial}{\partial z_{i j}}) f^{\uparrow} .$ Each call to the oracle $\mathcal{O}_{f^{\uparrow}}$ for $f^{\uparrow}$ can be implemented using one call to the oracle $\mathcal{O}_{f}$ for $f.$  The bounds on coefficients of $f$ implies that the coefficients of $f^{\uparrow}$ are bounded by $2^{-\kappa^{O(1)}b}$ and $2^{\kappa^{O(1)}b}.$
% 	\kiran{Talk about computing derivative of $g$ in terms of derivative of $g^{\uparrow}$}.
	Therefore, in the remainder of the proof we assume that polynomial $f$ is multiaffine, homogeneous, real-stable, and $\kappa = n.$
	%Without loss of generality we can assume that the vectors $v_{i}$'s are not parallel to each other because %\kiran{write the reason here}.
	
	We divide the proof into two main steps. In the first step, we map the polynomial $f$ to another polynomial $g$ such that:
	\begin{enumerate}
	    \item $g$ is a homogeneous multiaffine real-stable polynomial in $n' = O(n)$ variables, of degree  $d = \deg (g) = \deg(f)$.
	    \item $D_{v^1}^{c_1} \cdots D_{v^k}^{c_k} (f)\mid_{x_1, \cdots, x_n}  = D_{w^1}^{c_1} \cdots D_{w^{k}}^{c_{k}} (g) \mid_{x'_1, \cdots, x'_{n'}}  \in \R$, where $x'_i \geq 0 ~\forall i\in [n']$. The vectors $w^i \in \set{0,1}^n$ and correspond to subsets $T_i \subseteq [n']$; further these sets $T_{i}$ are disjoint. Note that $D_{w^1}^{c_1} \cdots D_{w^{k}}^{c_{k}} (g)$ is exactly $h(x'_1, \cdots, x'_n)$ where $h(z_1, \cdots, z_{n'})$ is the sum of terms in $g$ whose $(T_1, \cdots, T_{k}, T_{k+1})$-degree is $(c_1, \cdots, c_{k}, c_{k+1})$ where $T_{k+1} = [n'] \setminus \bigcup_{i=1}^k T_i$ and $c_{k+1} = n' - \sum_{i=1}^k c_i$. 
	    %\kiran{should these be $\alpha'_{i}$}
	\end{enumerate}
	%In the second step, we compute the derivative of $g$, that is $D_{w^1}^{\alpha_i} \cdots D_{w^{k'}}^{\alpha_{k'}} (g)|_{x_1, \cdots, x_n}$ in polynomial time.
	In the second step, we (approximately) sample from the distribution $\mu $ generated by $h(x'_1 z_1, \cdots, x'_n z_n).$
	A routine sampling to counting argument then allows computing an approximation of $h(x'_1, \cdots, x'_n).$ 
	\cref{thm:partition} and \cref{lemma:sectorStableToFracLC} implies $h$ is $\alpha$-fractionally log-concave for $\alpha = 1/2^{k+2}.$ Let $\Delta = 4(1/\alpha -1),$  $\ell = \lceil\Delta \rceil,$ the $\ell$-steps down-up walk has eigenvalue gap $\geq \frac{1}{n^{\Delta +1} }.$ The bound on coefficients of $f$ implies an upper bound of $\kappa^{O(1)} b$ on $
	\min_{S \in \supp(\mu)}\log (1/\mu(S)),$ thus the random walk starting from any $S \in \supp(\mu)$ mixes in $k^{O(1)} b$ steps. We can use $\mathcal{O}_g$ to obtain a starting state in $\supp(\mu)$. Each step of the random walk can be implemented using polynomially many calls $\mathcal{O}_g.$ 
% 	For homogeneous polynomial, evaluation is the same as taking derivative $(\sum_{i=1}^n x_i \partial_i)^{d} \tilde{f} = \tilde{d}! f(x_1, \cdots, x_n)$ for homogeneous $\tilde{f}$ of degree $\tilde{d}.$ Thus $D_{v^1}^{c_1} \cdots D_{v^k}^{c_k} (f)\mid_{x_1, \cdots, x_n}  = \frac{1}{\tilde{d}!} D_x^{\tilde{d} } D_{v^1}^{c_1} \cdots D_{v^k}^{c_k} (f)$ with $\tilde{d} = d - \sum_{i=1}^k c_i.$

	For $t\leq n$ and $v_i> 0$ for $i\in [t]$, it is easy to see that,
	\[(\sum_{i=1}^t v_i \partial_i)^c f(x_1, \cdots, x_n) = (\sum_{i=1}^t \partial_i)^c  f(v_1 x_1, \cdots, v_t x_t, x_{t+1}, \cdots, x_n). \]
    For $j\in [k]$, consider $w^j\in \set*{0,1}^n$ where $w^j_i = \mathbb{1}(v^j_i\neq 0).$ For $i\in [n]$, let $x'_i := x_i \prod_{j: v^j_i \neq 0} v^j_i \geq 0.$ We have $D_{v^1}^{c_1} \cdots D_{v^k}^{c_k} (f)\mid_{x_1, \cdots, x_n}  = D_{w^1}^{c_1} \cdots D_{w^{k}}^{c_{k}} (f) \mid_{x'_1, \cdots, x'_{n'}} . $
	 
	Consider the linear transformation $T:\R[z_i] \to \R[z_{i,j}]_{i\in [n], j \in [k]}$ obtained by substituting $z_i: = \sum_{j=1}^k z_{i,j}$ for $i\in [n]$ and define $g = T(f)$. Clearly, $g$ is multiaffine, homogeneous and real-stable. %By Lemma \kiran{cite a reference here or some lemma}, it is immediate that $g$ is real-stable polynomial. 
	
	We next show that:
	\[T(D_{w^1}^{c_1} \cdots D_{w^k}^{c_k} f)|_{x_1, \cdots, x_n}  = D_{\tilde{w}^1}^{c_1} \cdots D_{\tilde{w}^{k}}^{c_{k}} (g)|_{\set*{\tilde{x}_{i,j}}_{i\in[n], j\in [k]}},\]
	where $\tilde{x}_{i,j} = x_i \forall j\in [k]$ and $\tilde{w}^j_{i,j'} = \begin{cases} w^j_i &\text{ if } j'=j\\ 0 &\text{ else} \end{cases}.$
	
	Recall that $f$ and $g$ are multiaffine.
	To prove the above equality, we only need to verify it for multiaffine monomials. %Let $r : = \sum_{j=1}^k c_j,$ and $m$ be some monomial. Recall that $f$ and $g$ are multi-affine. 
	Fix a multiaffine monomial $m$ and $j\in [k]$. We check
	\[T(w_{i_1}^{j}\cdots w_{i_{c_j}}^{j} \partial_{i_1}  \cdots \partial_{i_{c_j}} m)  = \tilde{w}_{i_1, j}^{j} \cdots \tilde{w}_{i_{c_j}, j}^{j}  (\prod_{t=1}^{c_j}
	\frac{\partial}{\partial z_{i_t, j}}) T(m). \]
	This immediately implies $T (D_{w^j}^{c_j} \tilde{f}) = D_{\tilde{w}^j}^{c_j} T(\tilde{f}) $ for any multiaffine polynomial $\tilde{f}.$ The desired equality then follows by induction.
	
	First, $w_{i_1}^{j}\cdots w_{i_{c_j}}^{j} = \tilde{w}_{i_1, j}^{j} \cdots \tilde{w}_{i_{c_j}, j}^{j} .$ We can factor them out, and prove $T( \partial_{i_1}  \cdots \partial_{i_{c_j}} m) = \prod_{t=1}^{c_j}
	\frac{\partial}{\partial z_{i_t, j}} T(m).$
	Now, if the $i_t$ are not distinct, then LHS and RHS are both $0$ since $m$ and $T(m)$ are multiaffine. If $m$ does not divide $z_{i_t}$ for some $t$, then both LHS and RHS are $0.$ Now, write $m = m_1\prod_{t=1}^{c_j} z_{i_t}$ for some monomials $m_1$ containing only variables in $[n] \setminus \set*{i_1, \cdots, i_{c_j}}.$ Clearly, the LHS is $T(m_1).$ The RHS is
	\[\prod_{t=1}^{c_j} \frac{\partial}{\partial x_{i_t, j}^{j}} T(m_1)\prod_{t=1}^{c_j} (\sum_{j=1}^k x_{i_t}^j ) = T(m_1) (\prod_{t=1}^ {c_j}\frac{\partial}{\partial x_{i_t,j}^{j}} )\left( \prod_{t=1}^{c_j} (\sum_{j=1}^k x_{i_t}^j )\right) = T(m_1). \]
	
	\end{proof}

	\PrintBibliography
\end{document}